\newtheorem{mydefinition}{Definition}
\newtheorem{myproposition}{Proposition}
\newtheorem{mytheorem}{Theorem}
\newtheorem{mycorollary}{Corollary}
\newtheorem{myexample}{Example}
\newtheorem{mylemma}{Lemma}
\newtheorem{problem}{Problem}
\newtheorem{fact}{Fact}
\newtheorem{myobservation}{Observation}
\newcommand{\core}{C_{k, \Delta}}
\newcommand{\baseline}{\textsf{Na\"ive-span-cores}}
\newcommand{\cores}{\textsf{Span}-\textsf{cores}}
\newcommand{\innermosts}{\textsf{Maximal-span-cores}}
\newcommand{\baselineinnermosts}{\textsf{Na\"ive-maximal-span-cores}}
\newcommand{\innermost}{\textsf{innermost-core}}
\newcommand{\spancore}{span-core\xspace}
\newcommand{\Spancore}{Span-core\xspace}
\newcommand{\spancores}{\spancore{s}\xspace}
\newcommand{\temporalcs}{\textsc{Temporal Community Search}\xspace}
\newcommand{\singletemporalcs}{\textsc{Single Temporal Community Search}\xspace}
\newcommand{\seqseg}{\textsc{Sequence Segmentation}\xspace}
\newcommand{\maxspancoreprob}{\textsc{Maximal \Spancore Mining}\xspace}
\newcommand{\greedy}{\textsf{Greedy-minimum-community-search}\xspace}
\newcommand{\cs}{\textsf{Temporal-community-search}\xspace}
\newcommand{\mcs}{\textsf{Efficient-temporal-community-search}\xspace}
\newcommand{\deepwalk}{\textsf{DeepWalk}\xspace}
\newcommand{\LINE}{\textsf{LINE}\xspace}
\newcommand{\nodevec}{\textsf{node2vec}\xspace}
\newcommand{\tcs}{\textsf{TCS}\xspace}
\newcommand{\coresset}{\mathbf{C}}
\newcommand{\coressetdelta}{\mathbf{C}_\Delta}
\newcommand{\imcores}{\mathbf{C}_M}
\newcommand{\Qdeltahighest}{C^*_{Q, \Delta}}
\newcommand{\Qdeltascore}{v^*_{Q, \Delta}}
\DeclareMathOperator*{\visited}{visited}
\newcommand{\bigO}{\mathcal{O}}
\newcommand{\tdeg}{\mbox{\ensuremath{d}}}
\newcommand{\score}{\ensuremath{score}}
\newcommand{\scorep}{\ensuremath{score^+}}
\newcommand{\scorem}{\ensuremath{score^-}}
\newcommand{\neigh}{\ensuremath{neigh_\Delta}}
\newcommand{\spara}[1]{\smallskip\noindent{\bf #1}}
\newcommand{\squishlist}{
 \begin{list}{$\bullet$}
  {  \setlength{\itemsep}{0pt}
     \setlength{\parsep}{3pt}
     \setlength{\topsep}{3pt}
     \setlength{\partopsep}{0pt}
     \setlength{\leftmargin}{2em}
     \setlength{\labelwidth}{1.5em}
     \setlength{\labelsep}{0.5em}
} }
\newcommand{\squishlisttight}{
 \begin{list}{$\bullet$}
  { \setlength{\itemsep}{0pt}
    \setlength{\parsep}{0pt}
    \setlength{\topsep}{0pt}
    \setlength{\partopsep}{0pt}
    \setlength{\leftmargin}{2em}
    \setlength{\labelwidth}{1.5em}
    \setlength{\labelsep}{0.5em}
} }
\newcommand{\squishdesc}{
 \begin{list}{}
  {  \setlength{\itemsep}{0pt}
     \setlength{\parsep}{3pt}
     \setlength{\topsep}{3pt}
     \setlength{\partopsep}{0pt}
     \setlength{\leftmargin}{1em}
     \setlength{\labelwidth}{1.5em}
     \setlength{\labelsep}{0.5em}
} }
\newcommand{\squishend}{
  \end{list}
}
\begin{document}
\title[Span-core Decomposition for Temporal Networks: Algorithms and Applications]{Span-core Decomposition for Temporal Networks: Algorithms and Applications}

\author{Edoardo Galimberti}
\affiliation{
\institution{ISI Foundation}
\city{Turin}
\country{Italy}}
\affiliation{
\institution{University of Turin}
\city{Turin}
\country{Italy}}
\email{edoardo.galimberti@isi.it}

\author{Martino Ciaperoni}
\affiliation{
\institution{Aalto University}
\city{Aalto}
\country{Finland}}
\email{martino.ciaperoni@aalto.fi}

\author{Alain Barrat}
\affiliation{
\institution{Aix Marseille Univ, Universit\'e de Toulon, CNRS, CPT}
\city{Marseille}
\country{France}}
\affiliation{
\institution{ISI Foundation}
\city{Turin}
\country{Italy}}
\email{alain.barrat@cpt.univ-mrs.fr}

\author{Francesco Bonchi}
\affiliation{
\institution{ISI Foundation}
\city{Turin}
\country{Italy}}
\affiliation{
\institution{Eurecat}
\city{Barcelona}
\country{Spain}}
\email{francesco.bonchi@isi.it}

\author{Ciro Cattuto}
\affiliation{
\institution{University of Turin and ISI Foundation}
\city{Turin}
\country{Italy}}
\email{ciro.cattuto@unito.it}

\author{Francesco Gullo}
\affiliation{
\institution{UniCredit, R\&D Department}
\city{Rome}
\country{Italy}}
\email{gullof@acm.org}

\begin{abstract}
When analyzing temporal networks, a fundamental task is the identification of dense structures (i.e., groups of vertices that exhibit a
large number of links), together with their temporal span (i.e., the period of time for which the high density holds).
In this paper we tackle this task by introducing a notion of temporal core decomposition where each core is associated with
two quantities, its coreness, which quantifies how densely it is connected, and its span, which is a temporal interval:
we call such cores \emph{\spancores}.

For a temporal network defined on a discrete temporal domain $T$, the
total number of time intervals included in $T$ is quadratic in $|T|$, so that
the total number of span-cores is potentially quadratic in $|T|$ as well.
Our first main contribution is an algorithm that, by exploiting containment properties among span-cores, computes all the span-cores efficiently.
Then, we focus on the problem of finding only the \emph{maximal span-cores}, i.e., span-cores that are not dominated by any other span-core
by both their coreness property and their span.
We devise a very efficient algorithm that exploits theoretical findings on the maximality condition to directly extract the maximal ones
without computing all span-cores.

Finally, as a third contribution, we introduce the problem of \emph{temporal community search}, where a set of query vertices is given as input, and the goal is to find a set of densely-connected subgraphs containing the query vertices and covering the whole underlying temporal domain $T$.
We derive a connection between this problem and the problem of finding (maximal) \spancores.
Based on this connection, we show how temporal community search can be solved in polynomial-time via dynamic programming, and how the maximal \spancores can be profitably exploited to significantly speed-up the basic algorithm.

We provide an extensive experimentation on several real-world temporal networks of widely different origins and characteristics.
Our results confirm the efficiency and scalability of the proposed methods.
Moreover, we showcase the practical relevance of our techniques in a number of applications on temporal networks, 
describing face-to-face contacts between individuals in schools. Our experiments highlight the relevance of the 
notion of (maximal) span-core in analyzing social dynamics, detecting/correcting anomalies in the data, and graph-embedding-based 
network classification.
\end{abstract}

\begin{CCSXML}
<ccs2012>
<concept>
<concept_id>10002950.10003624.10003633.10010917</concept_id>
<concept_desc>Mathematics of computing~Graph algorithms</concept_desc>
<concept_significance>500</concept_significance>
</concept>
<concept>
<concept_id>10002950.10003624.10003633</concept_id>
<concept_desc>Mathematics of computing~Graph theory</concept_desc>
<concept_significance>300</concept_significance>
</concept>
<concept>
<concept_id>10002951.10002952.10002953.10010820.10010518</concept_id>
<concept_desc>Information systems~Temporal data</concept_desc>
<concept_significance>500</concept_significance>
</concept>
<concept>
<concept_id>10002951.10003227.10003351</concept_id>
<concept_desc>Information systems~Data mining</concept_desc>
<concept_significance>100</concept_significance>
</concept>
<concept>
<concept_id>10003752.10003809.10011254.10011258</concept_id>
<concept_desc>Theory of computation~Dynamic programming</concept_desc>
<concept_significance>300</concept_significance>
</concept>
</ccs2012>
\end{CCSXML}

\ccsdesc[500]{Mathematics of computing~Graph algorithms}
\ccsdesc[300]{Mathematics of computing~Graph theory}
\ccsdesc[500]{Information systems~Temporal data}
\ccsdesc[100]{Information systems~Data mining}
\ccsdesc[300]{Theory of computation~Dynamic programming}

\keywords{Temporal networks, Core decomposition, Maximal cores, Community search, Face-to-face interaction networks}

\maketitle
\sloppy

\section{Introduction}
\label{sec:intro}

A temporal network\footnote{We use ``network'' and ``graph'' interchangeably throughout the paper.}
is a representation of entities (vertices), their relations (links), and how these relations are established/broken over time.
Notice that here we will consider discrete times, i.e., the temporal networks can be represented as a time-ordered series
of snapshots (instantaneous graphs).
Extracting dense structures (i.e., groups of vertices exhibiting a large number of links with each other), together with their temporal span (i.e., the period of time for which the high density is observed) is a key mining primitive to characterize such temporal networks and extract
relevant structures.
This type of pattern enables fine-grain analysis of the network dynamics and can be a building block towards more complex tasks and applications, such as finding temporally recurring subgraphs or anomalously dense ones.
For instance, they can help in studying contact networks among individuals to quantify the transmission opportunities of respiratory infections
in a population and uncover situations where the risk of transmission is higher, with the goal of designing mitigation strategies~\cite{Gemmetto2014}. Anomalously dense temporal patterns among entities in a co-occurrence graph (e.g., extracted from the Twitter stream) have also been used to identify events and buzzing stories in real time  \cite{Angel,BonchiBGS16,bonchi2019theimportance}.
Another example concerns scientific collaboration and citation networks, where these patterns can help understand the dynamics of collaboration
in successful professional teams, study the evolution of scientific topics, and detect emerging technologies \cite{erdi2013prediction}.

In this paper we adopt as a measure of density of a pattern the \emph{minimum degree} holding among the vertices in the subgraph during
the pattern's span.
The problem of extracting \emph{all} these patterns is tackled by introducing a notion of \emph{temporal core decomposition} in which each core is associated with its \emph{span}, i.e., an interval of \emph{contiguous timestamps}, for which the coreness property holds.
 We term such a notion of temporal core \emph{span-core}.


Moreover, in several application scenarios it is typically required to identify only those dense patterns that contain a given set of query vertices.
We therefore introduce the problem of \emph{temporal community search}, whose goal is to find a set of cohesive temporal
subgraphs containing the input query vertices and covering the whole temporal domain.

To the best of our knowledge, the problems of  (efficient) \emph{span-core} computation and temporal community search
have never been studied so far.

\subsection{Challenges and contributions}
As the number of possible time intervals is quadratic in the size of the input temporal domain $T$, the total number of span-cores is, in the worst case, quadratic in $T$ too.
The naive method to find all \emph{span-cores}, which would be to operate a core decomposition for each of these time intervals, would
therefore be very time-consuming.
This is a major challenge that we tackle by deriving  containment properties between span-cores
and by exploiting them to devise an algorithm for computing all the \emph{span-cores} that is significantly more
efficient than the na\"ive exhaustive method.

We then shift our attention to the problem of finding only the \emph{maximal span-cores}, defined as
the span-cores that are not dominated by any other span-core by both the coreness property and the span.
A straightforward way of approaching
this
problem is to filter out non-maximal \spancores during the execution of an algorithm for computing the whole \spancore decomposition.
However, as the maximal ones are usually much less numerous than the overall span-cores, it would be desirable to have a method that effectively exploits the maximality property and extracts maximal \spancores directly, without computing the complete decomposition.
The design of an algorithm of this kind is an interesting challenge, as it contrasts with the intrinsic conceptual properties of core decomposition, based on which a core of order $k$ can be efficiently computed from the core of order $k\!-\!1$, of which it is a subset. For this reason, at first glance, the computation of the core of the highest order would seem as hard as computing the overall core decomposition. Instead, in this work we derive a number of theoretical properties about the relationship among \spancores of different temporal intervals and, based on these findings, we show how such a challenging goal may be achieved.

Finally, we focus on the problem of \emph{community search in temporal networks}.
Community search has been extensively studied in static graphs.
It requires to find a subgraph containing a given set of query vertices and maximizing a certain density measure~\cite{fang2019survey, HuangLX17}.
Here, we propose a formulation of the community-search problem in temporal networks as follows: given a set $Q$ of query vertices, and a positive integer $h$, find a segmentation of the underlying  temporal domain in $h$ segments $\{\Delta_i\}_{i=1}^h$ and a subgraph $S_i$ for every identified segment $\Delta_i$ such that each $S_i$ contains the query vertices $Q$ and the total density of the subgraphs is maximized.
Following the bulk of the literature in community search on static networks, in our definition of temporal community search we adopt the minimum degree as a density measure.

We show that, with some manipulations, temporal community search can be reformulated  as an instance of the popular \emph{sequence segmentation} problem, which asks for partitioning a sequence of  numerical values into $h$ segments so as to minimize the sum of the penalties (according to some penalty function) on the identified segments~\cite{bellman61approximation}.
Therefore, the classical dynamic-programming algorithm for sequence segmentation by Bellman~\cite{bellman61approximation}  can be easily adapted to solve temporal community search in polynomial time.

A criticality of this approach is that a na\"ive adaptation of the Bellman's algorithm takes quadratic time in the size of the input temporal domain $T$.
As a major contribution in this regard, we prove that the set of maximal span-cores provide a sound and complete basis to still have an optimal solution to temporal community search, while at the same time leading to a significant speed-up with respect to the na\"ive method.
In fact, let $T^* \subseteq T$ be the subset of timestamps that are covered by the span of at least one maximal span-core, together with the timestamps that immediately precede or succeed any of such spans.
We show that considering $T^*$ (instead of $T$)  in the (adaptation of the) Bellman's algorithm is sufficient to optimally solve the underlying temporal-community-search problem instance.
As, typically, $|T^*| \ll |T|$, this finding guarantees a considerable improvement in efficiency (as confirmed by our experiments).

A further challenge in our temporal-community-search problem is a typical one in community-search formulations based on minimum degree,
namely, that the output subgraphs are typically large in size.
We tackle this challenge by devising a method to reduce the size of the output subgraphs without affecting optimality.
The proposed method is inspired by the one devised by Barbieri~\emph{et~al.}~\cite{BarbieriBGG15} for the problem of minimum
community search (in static graphs).

\pagebreak 

To summarize, the main contributions of this paper are as follows:
\begin{itemize}

\item
We introduce the notion of span-core decomposition and maximal span-core in temporal networks, characterizing structure and size of the search space and providing important containment properties  (Section~\ref{sec:problem}).

\item
We devise an algorithm for computing all span-cores that exploits the aforementioned containment properties and is orders of magnitude faster than a na\"{\i}ve method based on traditional core decomposition (Section~\ref{sec:spancores}).

\item
We study the problem of finding only the maximal span-cores.
We derive several theoretical findings about the relationship between maximal span-cores and exploit these findings to devise an algorithm that is more efficient than computing all span-cores and discarding the non-maximal ones (Section~\ref{sec:maximal_spancores}).

\item
We introduce the problem of temporal community search and show how it  can be solved in polynomial time via dynamic programming.
We prove an important connection between temporal community search and maximal span-cores, which allows us to devise an algorithm that is considerably more efficient than the na\"ive dynamic-programming one.
We also propose a method to achieve the critical challenge of having too large communities as output (Section~\ref{sec:community_search}).

\item
We provide a comprehensive experimentation on several real-world temporal networks, with millions of vertices, tens of millions of edges, and hundreds of  timestamps, which attests efficiency and scalability of our methods (Section~\ref{sec:experiments}).

\item
We present applications on face-to-face interaction networks that illustrate the relevance of the notions of (maximal) span-core and temporal community search in real-life analyses and applications (Section~\ref{sec:applications}).

\end{itemize}

The next section provides an overview of the related literature, while Section~\ref{sec:conclusions} discusses future work and concludes the paper.

An abridged version of this work, covering Sections~\ref{sec:spancores}~and~\ref{sec:maximal_spancores}, together with the corresponding experiments (i.e., parts of Sections~\ref{sec:experiments}~and~\ref{sec:applications}), was presented in~\cite{galimberti2018mining}.

\spara{Reproducibility.}
For the sake of reproducibility, all our code and some of the datasets used in this paper are available at \href{https://github.com/egalimberti/span_cores}{github.com/egalimberti/span\_cores}.

\section{Background and related work}
\label{sec:related}

\subsection{Core decomposition}

Given a simple (static) graph $G=(V,E)$, let $d(S,u)$ denote the degree of vertex $u \in V$ in the subgraph induced by vertex set $S \subseteq V$, i.e., $d(S,u) = |\{v \in S \mid (u,v) \in E \}|$.
The notions of $k$\emph{-core} and \emph{core decomposition} are defined as follows:

\begin{mydefinition}[$k$-core and core decomposition~\cite{MatulaB83}]\label{def:kcores}
 The $k$\emph{-core} (or core of order $k$) of $G$ is a
 \emph{maximal} set of vertices $C_k \subseteq V$ such that $\forall u \in C_k: d(C_k,u) \geq k$.
 The set of all $k$-cores $V = C_0 \supseteq C_1 \supseteq \cdots \supseteq C_{k^*}$ ($k^* = \arg\max_{k} C_k \neq \emptyset$) is the
 \emph{core decomposition} of $G$.
\end{mydefinition}

Core decomposition can be computed in linear time by iteratively removing the smallest-degree vertex and setting its core number as equal to its degree at the time of removal~\cite{batagelj2011fast}.
Among the many definitions of dense structures, core decomposition is particularly appealing as, among others, it is fast to compute,
and can speed-up/approximate dense-subgraph extraction according  to various other definitions.
%
For instance, core decomposition allows for finding cliques more efficiently~\cite{EppsteinLS10}, as a $k$-clique is contained into a $(k\!-\!1)$-core, which can be significantly smaller than the original graph.
Moreover, core decomposition is at the basis of  approximation algorithms for the densest-(at-least-$k$-)subgraph problem~\cite{KortsarzP94,AndersenC09}, and  betweenness centrality~\cite{HealyJMA06}.
Core decomposition has also been recognized as an important tool to analyze and visualize complex networks~\cite{DBLP:conf/gd/BatageljMZ99,Alvarez-HamelinDBV05}
in several domains, e.g., bioinformatics~\cite{DBLP:journals/bmcbi/BaderH03,citeulike:298147},
software engineering~\cite{DBLP:journals/tjs/ZhangZCLZ10},
and social networks~\cite{Kitsak2010,GArcia2013}.
It has been studied under various settings, such as distributed~\cite{DistributedCores1}, streaming/maintenance~\cite{StreamingCores,li2014efficient}, and disk-based~\cite{DiskCores}, and generalized to various types of static graphs,
such as uncertain~\cite{bonchi14cores}, directed~\cite{DirectedCores}, weighted~\cite{WeigthedCores,s-core}, bipartite graphs~\cite{liu2019efficient}, or including attributes on the nodes~\cite{zhang2017engagement}.
For a comprehensive survey about theory, algorithms, and applications of core decomposition we refer to~\cite{bonchi2018core,malliaros2019core}.


Two types of extension of core decomposition bear some relation to our work.
First, core decomposition in \emph{multilayer networks} -- i.e., networks that are composed of a superposition of networks -- has been studied in~\cite{GalimbertiBG17,GalimbertiBGL20}.
In the multilayer setting a core is allowed to extend on any subset of layers, thus implying that the total number of multilayer cores is exponential in the number of layers.
Although temporal networks can be viewed as a special case of multilayer networks (where each timestamp is interpreted as a layer),
there is a fundamental difference: in a temporal network the ``layers" are ordered, and
the sequentiality of timestamps represents an important structural constraint.
In other words, in the temporal setting we are interested in cores that span a temporal interval, and not simply any subset of (potentially non-contiguous) timestamps.
This aspect has two critical consequences.
First, the search space and the number of temporal cores are no longer exponential, unlike the multilayer case.
Second,
to guarantee an effective fulfilment of the constraint on temporal sequentiality, the requirements for the edges that contribute to the formation of a temporal core are stricter than the ones at the basis of the multilayer-core definition.
A more detailed technical discussion of the relationship between multilayer core decomposition and the proposed temporal core decomposition is reported in Section~\ref{subsec:spancoresVSmlcores}.

The second extension of core decomposition that shares some relation to the one proposed in this work is due to Wu~{\em et~al.}, who have proposed in~\cite{wu2015core} an alternative definition of temporal core decomposition.
A major difference between Wu~{\em et~al.}'s definition and ours is that the former \emph{does not take any kind of temporal constraint into account}.
Indeed, Wu~{\em et~al.} define the $(k,h)$-core as the largest subgraph in which every vertex has at least $k$ neighbors and there are at least $h$ temporal edges between the vertex and its neighbors,
without any restriction on when these $h$ edges occur: the sequentiality of connections is not taken into account and non-contiguous timestamps can support the same core.
In fact, the $(k,h)$-core decomposition
can be seen as a kind of weighted static core decomposition on the weighted static network resulting from the aggregation of the
temporal network.
In contrast, our temporal cores have each a clear temporal collocation and continuous spans, so that our definition
includes temporality in an explicit way and cannot be reduced to Wu~{\em et~al.}'s one.
As we will see in Section~\ref{sec:applications}, associating a temporal collocation to each core is important in applications.

\subsection{Patterns in temporal networks}
A number of works on extracting dense patterns from a temporal network focus on the well-established notion of \emph{densest subgraph}, i.e., a subgraph maximizing the average-degree density.
Jethava~and~Beerenwinkel~\cite{jethava2015finding} consider as input
a set of graphs sharing the same vertex set, which can thus also be interpreted as a temporal network.
On such an input they study the \emph{densest common subgraph} problem, i.e., the problem of finding a subgraph maximizing the minimum average degree over all graphs (timestamps), and devise a linear-programming formulation and a greedy heuristic algorithm for it.
Further (mostly theoretical) advancements to the densest-common-subgraph problem have been provided by  Reinthal~\emph{et~al.}~\cite{reinthal2016finding} and Charikar~\emph{et~al.}~\cite{charikar2018finding}.
Semertzidis~\emph{et~al.}~\cite{semertzidis2016best} instead introduce two more variants of the problem, where the goal is to maximize the average average degree and the minimum minimum degree, respectively. They show that the average-average variant easily reduces to the traditional densest-subgraph problem, and that the minimum-minimum variant can be exactly solved by a simple adaptation of the classic algorithm for core decomposition.

Complementary works focus on variants of the densest-subgraph-discovery problem. Rozenshtein~\emph{et~al.} study the problem of discovering dense temporal subgraphs whose edges occur in short time intervals considering the exact timestamp of the occurrences~\cite{rozenshtein2017finding},
and the problem of partitioning the timeline of a temporal network into non-overlapping intervals, such that the intervals span subgraphs with maximum total density~\cite{rozenshtein2018finding}.
Epasto~\emph{et~al.}~\cite{epasto2015efficient} deal with the problem of maintaining the densest subgraph in a dynamic setting.

Attention in the literature has also been devoted to densities other than the average degree.
The notion of $\Delta$-clique, as a set of vertices in which each pair is in contact at least every $\Delta$ timestamps, has been proposed in~\cite{viard2016computing, himmel2016enumerating}.
Bentert~\emph{et~al.}~\cite{bentert2018listing} introduce the $\Delta$-$k$-plex, a relaxation of $\Delta$-clique in which each vertex has an edge to all but at most $k-1$ vertices at least once every $\Delta$ consecutive timestamps.
Li~\emph{et~al.}~\cite{li2018persistent}
study the problem of finding the maximum
($\theta$,$\Delta$)-persistent $k$-core in a temporal network, i.e., the largest subgraph that is a connected $k$-core in all the subintervals of duration $\theta$ of a given temporal interval $\Delta$.


A different, but still slightly related body of literature focuses on other definitions of temporal patterns, such as frequent evolution patterns in temporal attributed graphs~\cite{BerlingerioBBG09,inokuchi2010mining, desmier2012cohesive},
link-formation rules in temporal networks~\cite{BringmannBBG10,leung2010mining}, frequency-estimation algorithms for counting temporal motifs~\cite{Kovanen:2011,liu2019sampling}, finding a small vertex set whose removal eliminates all temporal paths connecting two designated terminal vertices~\cite{zschoche2018complexity}, finding a subgraph that maximizes the sum of edge weights in a network whose topology remains fixed but edge weights evolve over time~\cite{bogdanov2011mining,ma2017fast}, and
the discovery of dynamic relationships and events~\cite{das2011dynamic}, or of correlated activity patterns~\cite{Gauvin2014}.

This work  differs from all the above ones as our notions of span-core and temporal core decomposition do not correspond (or are straightforwardly reducible) to any of those temporal patterns.

\subsection{Community search}
Given a static graph and a set of query vertices, the \emph{community search} problem aims at finding a cohesive subgraph \emph{containing the query vertices}.
Community search has attracted a great deal of attention in the last years~\cite{fang2019survey,HuangLX17}.
Sozio~and~Gionis~\cite{Sozio} are the first to introduce this problem by employing the \emph{minimum degree} as a cohesiveness measure.
Their formulation can be solved by a simple (linear-time) greedy algorithm, which resembles the traditional $2$-approximation algorithm for densest subgraph proposed in~\cite{Char00}.
More recently, Cui~\emph{et~al.}~\cite{SozioLocalSIGMOD14} devise a local-search approach to improve the efficiency of the method defined in~\cite{Sozio}, but only for the special case of a single query vertex.
The case of multiple query vertices has instead been addressed by Barbieri~\emph{et~al.}~\cite{BarbieriBGG15}, who exploit core decomposition as a preprocessing step to improve efficiency.
They also tackle the problem of \emph{minimum community search}, i.e., a variant of community search where the size of the output subgraph has to be minimized.

Community search has also been studied under different names and/or settings.
Huang~\emph{et~al.}~\cite{KtrussSIGMOD14} introduce a community-search model based on the $k$-truss notion.
Andersen~and~Lang~\cite{Andersen1} and Kloumann~and~Kleinberg~\cite{Kloumann} study \emph{seed set expansion} in social graphs, in order to find communities with small conductance or that are well-resemblant of the characteristics of the query vertices, respectively.
Other works define connectivity subgraphs based on electricity analogues~\cite{connect}, random walks~\cite{CenterpieceKDD06}, the minimum-description-length principle~\cite{akoglu2013mining}, the Wiener index~\cite{ruchansky2015minimum}
and  network efficiency~\cite{RuchanskyBGGK17}.
Recent approaches also introduce the flexibility of having query vertices belonging to different communities~\cite{bian2018multi,yan2019constrained}.
Finally, community search has been formalized for attributed graphs~\cite{huang2017attribute,fang2017attributed} and spatial graphs~\cite{fang2017spatial} as well.

In this work we study for the first time community search in temporal graphs.
Specifically, we provide a novel definition of the problem by asking for a set of subgraphs containing the given query vertices, along with their corresponding temporal intervals, such that the total minimum-degree density of the identified subgraphs is maximized and the
union of the temporal intervals spanned by those subgraphs covers the whole underlying temporal domain.
{
None of the above works deal with such a definition of temporal community search, not even the works by Rozenshtein~\emph{et~al.}~\cite{rozenshtein2018finding} and Li~\emph{et~al.}~\cite{li2018persistent} discussed in the previous subsection. 
In fact, although Rozenshtein~\emph{et~al.}~\cite{rozenshtein2018finding} and Li~\emph{et~al.}~\cite{li2018persistent} search for cohesive temporal subgraphs,
they do not accept any query vertices in input.
This is a fundamental feature, which makes those works actually solve a problem other than community search. 
Another key difference is that they focus on different notions of density.
}



\section{Temporal core decomposition: problem statement}
\label{sec:problem}

In this section we provide preliminary definitions and the needed notations,
introduce the problem of finding all \spancores and only the maximal ones, and
prove containment properties among \spancores that are at the basis of our efficient algorithms.

\subsection{Span-cores}
We are given a \emph{temporal graph} $G = (V,T,\tau)$, where $V$ is a set of vertices,  $T = [0, 1, \ldots, t_{max}] \subseteq \mathbb{N}$ is a discrete time domain, and $\tau: V  \times V \times  T\rightarrow \{0,1\}$ is a function defining for each pair of vertices $u,v \in V$ and each timestamp $t \in T$ whether edge $(u,v)$ exists in $t$. We denote $E = \{(u,v,t) \mid \tau(u,v,t) = 1 \}$ the set of all temporal edges. Given a timestamp $t \in T$, $E_t = \{(u,v) \mid \tau(u,v,t) = 1 \}$ is the set of edges existing at time $t$.
A temporal interval $\Delta = [t_s, t_e]$ is contained into another temporal interval $\Delta' = [t'_s, t'_e]$, denoted $\Delta \sqsubseteq \Delta'$, if $t'_s \leq t_s$ and $t'_e \geq t_e$.
Given an interval $\Delta \sqsubseteq T$, we denote $E_\Delta = \bigcap_{t \in \Delta} E_t$ the edges existing in 
\emph{all timestamps}\footnote{We remark that this is just one of the possible ways of defining the existence of an edge 
in a temporal domain. There are two basic semantics used in the literature: the ``AND'' semantics we employ here, where an edge 
is required to exist in all the timestamps of an interval, and an ``OR'' semantics, requiring that an edge appears in at least 
 one of the timestamps. Although both semantics can be meaningful and there is no strong a-priori argument to prefer one over the other, 
the types of application and the desired semantics of the data analysis can dictate the choice. 
In this work we are particularly interested in networks of social interactions (contacts, communications, etc.), and in 
exposing structures that are cohesive and stable, together with their duration. It seems 
then natural to consider the AND semantics, as an OR semantics would correspond to an aggregation on a temporal interval
and would not constrain the simultaneity of interactions to define a structure.
This simultaneity is crucial in applications such as the ones in Section~\ref{sec:applications}, in which we show the relevance of our work in the analysis of contact networks among individuals recorded by an RFID-based proximity-sensing infrastructure.} of $\Delta$. Given a subset $S \subseteq V$ of vertices, let $E_{\Delta}[S] = \{(u,v) \in E_{\Delta} \mid u \in S, v \in S\}$ and $G_{\Delta}[S] = (S, E_{\Delta}[S])$.
Finally, the  \emph{temporal degree} of a vertex $u$ within $G_{\Delta}[S]$ is denoted $\tdeg_\Delta(S,u) = |\{v \in S \mid (u,v) \in E_\Delta[S] \}|$.

\begin{mydefinition}[$(k,\Delta)$-core] \label{def:core}
The  $(k,\Delta)$\emph{-core} of a temporal graph $G = (V,T,\tau)$ is (when it exists) a maximal and non-empty set of vertices $ \emptyset \neq C_{k,\Delta} \subseteq V$, such that $\forall u \in C_{k,\Delta} : \tdeg_\Delta(C_{k,\Delta},u) \geq k$, where
 $\Delta \sqsubseteq T$ is a temporal interval and $k \in \mathbb{N}^+$.
\end{mydefinition}

A $(k,\Delta)$-core is thus a set of vertices implicitly defining a cohesive subgraph (where $k$ represents the cohesiveness constraint), together with its \emph{temporal span}, i.e., the interval $\Delta$ for which the subgraph satisfies the cohesiveness constraint.
In the remainder of the paper we refer to this type of temporal pattern as \emph{\spancore}.

The first problem we tackle in this work is to compute the \emph{span-core decomposition} of a temporal graph $G$, i.e., all span-cores of~$G$.

\begin{problem}[Span-core decomposition] \label{pbl:dececomposition}
Given a temporal graph $G$, find the set of all $(k,\Delta)$-cores of $G$.
\end{problem}

Unlike standard cores of simple graphs, \spancores are not all nested into each other, due to their spans.
However, they still exhibit containment properties.
Indeed, it can be observed that a $(k,\Delta)$-core is contained into any other $(k',\Delta')$-core with less restrictive degree  and span conditions, i.e., $k' \leq k$, and $\Delta' \sqsubseteq \Delta$.
This property is depicted in Figure~\ref{fig:searchspace}, and formally stated in the next proposition.

\begin{figure}[t!]
\centering
\includegraphics[width=0.7\columnwidth]{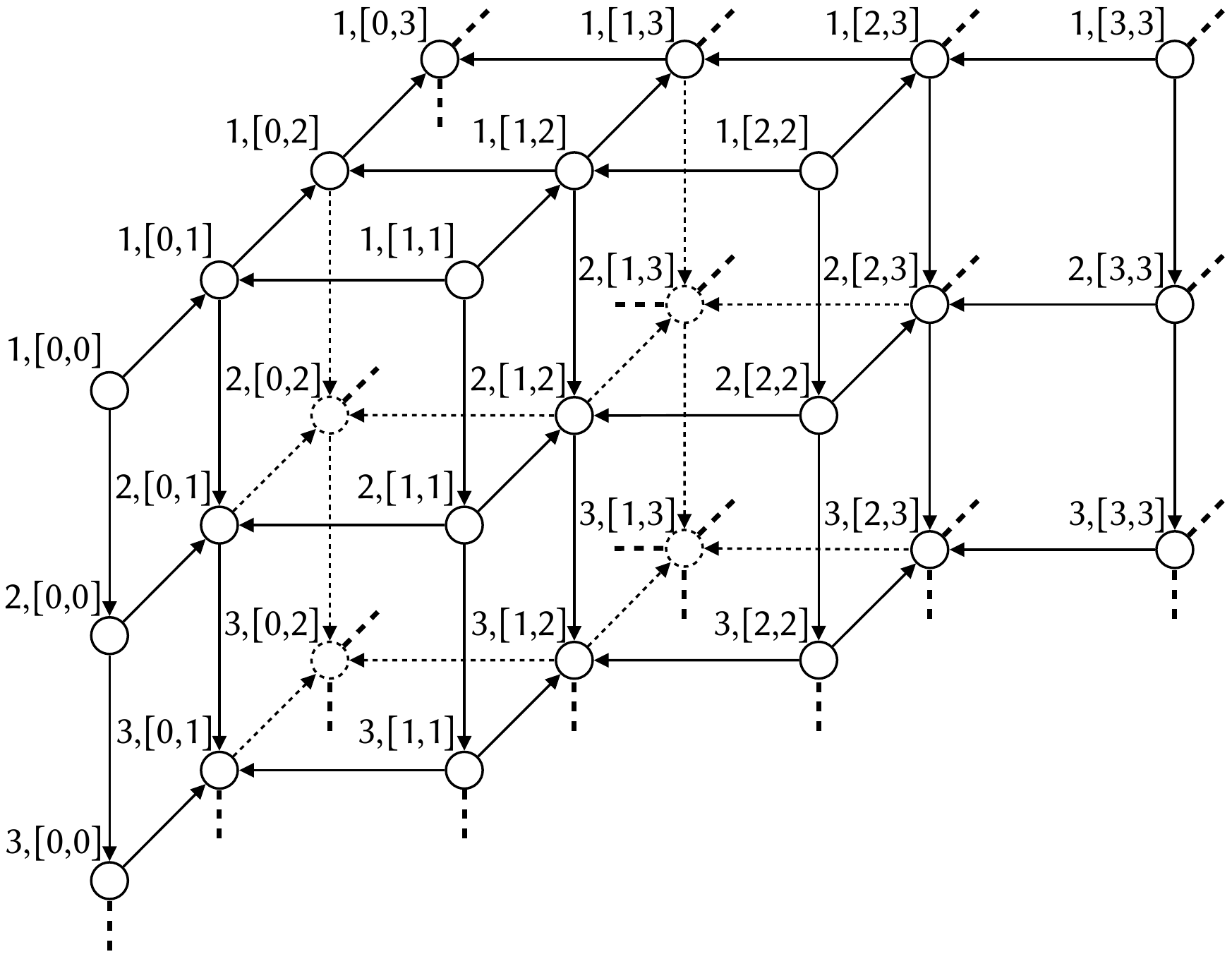}
\caption{\label{fig:searchspace} Search space: for a temporal span $\Delta = [t_s,t_e]$, the $(k,\Delta)$-core is depicted as a node labeled ``$k, [t_s, t_e]$''. An arrow  $C_1 \rightarrow C_2$ denotes $C_1 \supseteq C_2$ (the
distinction between solid and dotted arrows is for visualization sake only).}
\end{figure}

\begin{myproposition}[\Spancore containment]\label{prp:conteinment}
For any two \spancores $C_{k, \Delta}$, $C_{k',\Delta'}$ of a temporal graph $G$ it holds that
$$
k' \leq k \wedge \Delta' \sqsubseteq \Delta \ \Rightarrow \  \core \subseteq C_{k',\Delta'}.
$$
\end{myproposition}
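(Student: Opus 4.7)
The plan is to exploit the two hypotheses $k' \leq k$ and $\Delta' \sqsubseteq \Delta$ one after the other. First I would observe a purely ``edge-set'' monotonicity: since $E_\Delta = \bigcap_{t \in \Delta} E_t$, a strictly larger interval only intersects over more timestamps and hence yields a (weakly) smaller edge set. Concretely, if $\Delta' \sqsubseteq \Delta$ then the set of timestamps of $\Delta'$ is a subset of those of $\Delta$, so $E_\Delta \subseteq E_{\Delta'}$, and consequently for any vertex set $S \subseteq V$ and any $u \in S$ we have $E_\Delta[S] \subseteq E_{\Delta'}[S]$ and therefore $\tdeg_{\Delta'}(S,u) \geq \tdeg_\Delta(S,u)$.

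Next I would plug in $S = \core$. By the definition of $(k,\Delta)$-core, every $u \in \core$ satisfies $\tdeg_\Delta(\core, u) \geq k$. Combining this with the monotonicity above and with $k \geq k'$ gives
\begin{equation*}
\tdeg_{\Delta'}(\core, u) \;\geq\; \tdeg_\Delta(\core, u) \;\geq\; k \;\geq\; k' \qquad \text{for every } u \in \core.
\end{equation*}
Hence $\core$ is itself a non-empty vertex set in which every vertex has $\Delta'$-temporal degree at least $k'$. In particular, the $(k',\Delta')$-core is guaranteed to exist (the potential non-existence of $C_{k',\Delta'}$ was the only thing one had to worry about in the statement).

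Finally, I would invoke the \emph{maximality} clause of Definition~\ref{def:core}: $C_{k',\Delta'}$ is defined as the \emph{maximal} non-empty set of vertices whose members all have $\Delta'$-temporal degree at least $k'$ inside it. Since $\core$ is such a set, maximality forces $\core \subseteq C_{k',\Delta'}$, which is exactly the claim.

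The argument is essentially a two-line monotonicity chase, so there is no real ``hard part''; the only subtlety worth stating carefully is that temporal degree is monotone in the \emph{interval} in the opposite direction from its monotonicity in the vertex set: shrinking $\Delta$ can only add edges to $E_\Delta$ and thus can only increase $\tdeg_\Delta$. Getting the direction of $\sqsubseteq$ versus $\subseteq$ right is the one place where a careless reader might slip, so I would make this inclusion $E_\Delta \subseteq E_{\Delta'}$ explicit before moving on to the maximality step.
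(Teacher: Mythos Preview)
Your proof is correct and follows essentially the same approach as the paper's: both rely on the edge-set monotonicity $\Delta' \sqsubseteq \Delta \Rightarrow E_\Delta \subseteq E_{\Delta'}$ to lift the degree bound from $\Delta$ to $\Delta'$, and then invoke maximality of the target core. The only cosmetic difference is that the paper decomposes the argument into two separate containments (first varying $k$ with $\Delta$ fixed, then varying $\Delta$ with $k$ fixed), whereas you chain both inequalities at once; your version is arguably cleaner since it avoids the intermediate core.
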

\begin{proof}
The result can be proved by separating the two conditions in the hypothesis, i.e., by separately showing that ($i$) $k' \leq k \Rightarrow C_{k, \Delta} \subseteq C_{k', \Delta}$, and ($ii$) $\Delta' \sqsubseteq \Delta \Rightarrow C_{k,\Delta} \subseteq C_{k,\Delta'}$.
The first point holds as, keeping the span $\Delta$ fixed, the maximal set of vertices $C$ for which $\tdeg_\Delta(C,u) \geq k$ is clearly contained in the maximal  set of vertices $C'$ for which $\tdeg_\Delta(C',u) \geq k'$, if $k' \leq k$.
To prove (ii), it can be noted that $\Delta' \sqsubseteq \Delta \Rightarrow E_\Delta \subseteq E_{\Delta'}$, which implies that
$\forall u \in C_{k,\Delta} : \tdeg_\Delta(C_{k,\Delta}, u) \leq \tdeg_{\Delta'}(C_{k,\Delta}, u)$.
Therefore, all vertices within $C_{k,\Delta}$ satisfy the condition to be part of $C_{k,\Delta'}$ too.
\end{proof}

The following observation directly derives from Proposition~\ref{prp:conteinment} and states that finding all the \spancores having a fixed span $\Delta$ corresponds to computing the core decomposition of a simple graph.

\begin{myobservation}\label{observation1}
For a fixed temporal interval $\Delta \sqsubseteq T$, finding all span-cores that have $\Delta$ as their span is equivalent to computing the classic core decomposition~\cite{batagelj2011fast} of the simple graph $G_\Delta = (V, E_\Delta)$.
\end{myobservation}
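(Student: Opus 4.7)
The plan is to show that, once the span $\Delta$ is fixed, the defining condition of a $(k,\Delta)$-core reduces literally to the defining condition of a $k$-core of the static graph $G_\Delta = (V, E_\Delta)$, so that the two decompositions coincide set-for-set as $k$ varies.

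First, I would unpack the definitions and align the notations. Recall $E_\Delta = \bigcap_{t\in\Delta} E_t$ and, for $S\subseteq V$, $E_\Delta[S] = \{(u,v)\in E_\Delta \mid u,v\in S\}$. By Definition~\ref{def:core}, the temporal degree used in the $(k,\Delta)$-core definition is $\tdeg_\Delta(S,u) = |\{v\in S \mid (u,v)\in E_\Delta[S]\}|$. On the other hand, the static degree of $u$ in the subgraph of $G_\Delta$ induced by $S$ is exactly $d(S,u) = |\{v\in S \mid (u,v)\in E_\Delta\}|$. These two quantities are, by inspection, identical: the temporal edge set $E_\Delta$ is precisely the edge set of the simple graph $G_\Delta$, and no timestamp-dependent quantity remains once $\Delta$ is held fixed.

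Next, I would chain the equivalences of the two ``maximal set'' conditions. By Definition~\ref{def:core}, $C_{k,\Delta}$ is the maximal non-empty $S\subseteq V$ such that $\tdeg_\Delta(S,u)\geq k$ for all $u\in S$. Substituting $\tdeg_\Delta(S,u) = d(S,u)$ from the previous step, this is exactly the maximal non-empty $S\subseteq V$ such that $d(S,u)\geq k$ for all $u\in S$, which is the defining condition of the $k$-core of $G_\Delta$ in Definition~\ref{def:kcores}. Hence $C_{k,\Delta} = C_k(G_\Delta)$ whenever either side exists, and they fail to exist simultaneously (both when $k$ exceeds the maximum coreness of $G_\Delta$).

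Finally, I would observe that letting $k$ range over $\mathbb{N}^+$ yields on the span-core side the full collection of span-cores with span $\Delta$, and on the static side the full core decomposition $V\supseteq C_1(G_\Delta)\supseteq\cdots\supseteq C_{k^*}(G_\Delta)$ of $G_\Delta$ (the nesting being guaranteed by Proposition~\ref{prp:conteinment} applied with $\Delta'=\Delta$, or equivalently by the classical property of static core decomposition). Since the two families coincide element-wise, enumerating one is equivalent to enumerating the other, and the observation follows. There is no real obstacle here: the only subtle point is to make explicit that the ``AND'' semantics chosen for $E_\Delta$ is what allows the temporal degree to collapse to an ordinary static degree, so that Batagelj and Zaversnik's linear-time peeling algorithm on $G_\Delta$ produces all and only the span-cores of span $\Delta$.
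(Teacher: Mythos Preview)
Your proposal is correct and aligns with the paper's treatment: the paper states this as an observation that ``directly derives from Proposition~\ref{prp:conteinment}'' without giving a formal proof, and your explicit unpacking of the definitions (showing $\tdeg_\Delta(S,u)$ coincides with the static degree in $G_\Delta$, so the maximality conditions match) is precisely the routine verification the paper leaves implicit.
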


\subsection{Maximal \spancores}
As the total number of temporal intervals that are contained into the whole time domain $T$ is $|T|(|T|\!+\!1)/2$, the total number of \spancores  is
potentially $\mathcal{O}(|T|^2 \times k_{max})$, where $k_{max}$ is the largest value of $k$ for which a  $(k,\Delta)$-core exists.
It is thus quadratic in $|T|$, which may be too large an output for human direct inspection.
In this regard, it may be useful to focus only on the most relevant cores, i.e., the \emph{maximal} ones, as defined next.

\begin{mydefinition}[Maximal \spancore] \label{def:maximal}
A \spancore $\core$ of a temporal graph $G$ is said \emph{maximal} if there does not exist any other \spancore $C_{k',\Delta'}$ of $G$ such that $k \leq k'$ and $\Delta \sqsubseteq \Delta'$.
\end{mydefinition}

Hence, a \spancore is recognized as maximal if it is not dominated by another \spancore both on the order $k$ and the span $\Delta$.
Differently from the \emph{innermost core} (i.e., the core of the highest order) in the classic core decomposition, which is unique,
in our temporal setting the number of maximal \spancores is $\mathcal{O}(|T|^2)$, as, in the worst case,
there may be one maximal \spancore for every temporal interval.
However, as observed in empirical temporal-network data, maximal \spancores are always much less than the overall \spancores: the difference is usually one order of magnitude or more.
The second problem we tackle in this work is to compute the maximal \spancores of a temporal graph.

\begin{problem}[\maxspancoreprob] \label{pbl:maximal}
Given a temporal graph $G$, find the set of all maximal $(k,\Delta)$-cores of $G$.
\end{problem}

Clearly, one could solve Problem~\ref{pbl:maximal} by  solving Problem~\ref{pbl:dececomposition} and  filtering out all the non-maximal \spancores.
However, an interesting yet challenging question is whether one can exploit the maximality condition to develop faster algorithms that can directly extract the maximal ones, without computing all the span-cores.
We provide a positive answer to this question in Section~\ref{sec:maximal_spancores}.

{

\subsection{Relation to multilayer core decomposition~\cite{GalimbertiBG17,GalimbertiBGL20}}\label{subsec:spancoresVSmlcores}


\emph{Multilayer graphs} are a representation paradigm of  complex systems, where multiple relations of different types occur between the same pair of entities~\cite{bonchi2015chromatic,DickisonMagnaniRossi2016,tagarelli2017ensemble}.
A multilayer graph is formally defined as a triple $G = (V, E, L)$, where $V$ is a set of vertices, $L$ is a set of layers, and $E \subseteq V \times V \times L$ is a set of edges.
Given a multilayer graph $G = (V, E, L)$ and an $|L|$-dimensional integer vector $\vec{k} = [k_{\ell}]_{\ell \in L}$,
Galimberti~\emph{et~al.}~\cite{GalimbertiBG17,GalimbertiBGL20} define the notion of \emph{multilayer} $\vec{k}$\emph{-core} of $G$ as a \emph{maximal} set $C \subseteq V$ of vertices such that, for all $\ell \in L$, the minimum degree
of a vertex in $C$ in layer $\ell$ is larger than or equal to $k_{\ell}$.
In other words, a $\vec{k}$-multilayer-core corresponds to a subgraph that satisfies  the $k_\ell$-core definition in layer $\ell$, for all $\ell \in L$.
For instance, for $|L| = 2$, a multilayer $(k_1,k_2)$-core is a subgraph that is simultaneously a $k_1$-core in the first layer and a $k_2$-core in the second layer.

Temporal graphs can be viewed as a special case of multilayer graphs where timestamps correspond to layers.
Therefore, a natural question while introducing a notion of temporal core is how it relates to the definition of a core in the multilayer setting.
A fundamental difference is that, unlike the multilayer context, in a temporal graph the ``layers" are ordered, and the consecutio of timestamps should be taken into account.
As a result, the two definitions are not comparable and have different conceptual and computational properties in the general case.
A major remark in this regard is that the multilayer cores, as defined in \cite{GalimbertiBG17,GalimbertiBGL20}, are exponential in the number $|L|$ of layers $|L|$, \emph{thus the multilayer core decomposition takes (worst-case) exponential time, while temporal core decomposition is computable in polynomial time}.

Once having ascertained such a key difference,
another meaningful investigation would be understanding whether the notion of multilayer core may still be exploited to define/compute  \spancores, even if only in limited circumstances.
In this regard, as formally shown in the next proposition and  illustrated in the example in Figure~\ref{fig:spancoresVSmlcores}, there exists a containment relationship between \spancores and multilayer cores.

\begin{myproposition}\label{prop:spancoresVSmlcores}
Let $C$ be the $(k,\Delta)$-\spancore of a temporal graph $G = (V,T,\tau)$.
Let also $\vec{k} = [k_t]_{t \in T}$ be a $|T|$-dimensional integer vector such that $\forall t \in \Delta : k_t = k$, $\forall t \notin \Delta : k_t = 0$, and let $C'$ be the $\vec{k}$-multilayer-core extracted from $G$ by interpreting it as a multilayer graph where layers correspond to timestamps.
It holds that $C \subseteq C'$.
\end{myproposition}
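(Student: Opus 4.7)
The plan is to exploit the maximality clause in the definition of a multilayer core: once I show that the spancore vertex set $C$ itself satisfies the defining conditions of a $\vec{k}$-multilayer-core, the containment $C \subseteq C'$ follows immediately because $C'$ is \emph{the} (unique) maximal such set for the chosen $\vec{k}$.

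First I would unpack the temporal-degree condition on $C$. Since $C$ is the $(k,\Delta)$-spancore, every $u \in C$ satisfies $\tdeg_\Delta(C,u) \geq k$, which by the AND semantics means there are at least $k$ vertices $v \in C$ for which the edge $(u,v)$ is present in $E_t$ for \emph{every} $t \in \Delta$ simultaneously. The crucial observation is that this is a strictly stronger requirement than what the multilayer core asks for: the multilayer core only needs a per-layer degree bound, allowing the $k$ neighbors to differ from one timestamp to another, whereas the spancore forces the same set of $k$ neighbors across the whole interval $\Delta$.

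Next I would verify the multilayer condition layer by layer for the set $C$ viewed inside the multilayer graph. For every $t \in \Delta$, the $k$ witnesses $v$ provided above are all neighbors of $u$ in layer $t$ and they all lie in $C$, so the degree of $u$ in $C$ restricted to layer $t$ is at least $k = k_t$. For every $t \notin \Delta$, we have $k_t = 0$, which is trivially satisfied. Hence $C$ fulfills the condition ``$\forall \ell \in L,$ every vertex has degree at least $k_\ell$ in $C$ restricted to $\ell$'' that appears in the definition of a $\vec{k}$-multilayer-core.

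Finally I would invoke the maximality in the definition of $C'$. Since $C'$ is defined as the \emph{maximal} vertex set satisfying exactly the conditions that I just checked for $C$, and since the property ``every vertex of $S$ has per-layer degree at least $k_\ell$ in $S$ restricted to $\ell$'' is closed under unions (the union of two such sets still satisfies the inequalities, because adding vertices cannot decrease any vertex's degree), we conclude $C \subseteq C'$. I do not anticipate any serious obstacle: the only subtle point is being explicit that the AND-semantics guarantees the \emph{same} $k$ neighbors across all layers of $\Delta$, which is why the spancore condition implies, rather than equals, the multilayer one.
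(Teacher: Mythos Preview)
Your proposal is correct and follows essentially the same approach as the paper: verify that every vertex of $C$ meets the per-layer degree requirement of the $\vec{k}$-multilayer-core (trivially for $t\notin\Delta$, and via the AND-semantics spancore condition for $t\in\Delta$), then invoke maximality of $C'$ to conclude $C\subseteq C'$. Your write-up is in fact more careful than the paper's, which leaves the $t\notin\Delta$ case and the union-closure justification for maximality implicit.
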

\begin{proof}
According to Definition~\ref{def:core}, every vertex $v \in C$, has a at least $k$ neighbors within $C \setminus \{v\}$, for every timestamp $t \in \Delta$.
This complies with the definition of $\vec{k}$-multilayer-core, $\vec{k} = [k_t]_{t \in T}$, $\forall t \in \Delta : k_t = k$, $\forall t \notin \Delta : k_t = 0$, meaning that all vertices in $C$ are necessarily part of the $\vec{k}$-multilayer-core as well.
\end{proof}

\begin{figure}[t]
	\centering
	\begin{tikzpicture}[scale=0.20]
	\node[circle, draw, thick, scale=0.7] (A) at (0,12) {\textsf{A}};
	\node[circle, draw, thick, scale=0.7] (B) at (12,12) {B};	
	\node[circle, draw, thick, scale=0.7] (C) at (2,6) {C};
	\node[circle, draw, thick, scale=0.7] (D) at (12,0) {D};
	\node[circle, draw, thick, scale=0.7] (E) at (0,0) {E};
	
	\node[scale=0.7,rotate=315] (DA) at (7,4) {2600};
	\node[scale=0.7] (DE) at (6,-1) {1600};
	
	\draw[->, thick, >=stealth, bend left=18] (A) edge (B);
	\draw[->, thick, >=stealth, bend right=18] (A) edge (B);
	\draw[->, thick, >=stealth, bend left=18] (B) edge (D);
	\draw[->, thick, >=stealth, bend right=18] (B) edge (D);
	\draw[->, thick, >=stealth] (D) edge (E);
	\draw[->, thick, >=stealth] (E) edge (A);
	\draw[->, thick, >=stealth, dashed] (E) edge (C);
	\draw[->, thick, >=stealth, dashed, bend right=10] (C) edge (B);
	\draw[->, thick, >=stealth, dashed] (D) edge (A);
\end{tikzpicture}
	\caption{\label{fig:spancoresVSmlcores}{
	Relationship between multilayer cores defined in~\cite{GalimbertiBG17,GalimbertiBGL20} and  \spancores introduced in this work (Proposition~\ref{prop:spancoresVSmlcores}).
	The figure depicts a toy temporal graph $G$, with time domain $T = \{0,1,2\}$. Solid, dashed, and dotted edges refer to timestamp $0$, $1$, and $2$, respectively.
	The $(2,[0,1])$-\spancore of $G$ corresponds to $C = \{\mbox{\textsf{A,D,E}}\}$.
	At the same time, $C' = \{\mbox{\textsf{A,B,D,E}}\}$ corresponds to the $(2,2,0)$-multilayer-core of $G$, when $G$ is interpreted as a (3-layer) multilayer graph with the first, second, and third layer corresponding to timestamps $0$, $1$, and $2$, respectively. While there is no exact correspondence, it can be observed that
\spancore $C$ is contained into multilayer core $C'$.}}
\end{figure}
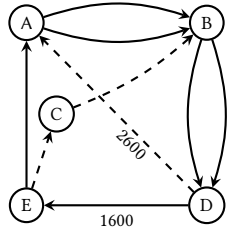

Proposition~\ref{prop:spancoresVSmlcores} suggests that, in principle, to compute \spancores, one may: ($i$) compute all  multilayer cores, ($ii$) among all multilayer cores, retrieve the ones complying with Proposition~\ref{prop:spancoresVSmlcores}, and ($iii$) post-process those multilayer cores in order to extract the actual \spancores.
However, this strategy is not feasible, as, due to the aforementioned exponential-time computation, extracting multilayer cores from a temporal graph would be affordable only for very small values of $|T|$.
In fact, Galimberti~\emph{et~al.}~\cite{GalimbertiBG17,GalimbertiBGL20} show experiments on graphs with at most 10 layers, and in the 10-layer graphs computing the multilayer core decomposition takes more than 28 hours. In the temporal setting we are interested in analyzing long-term, sometimes high-frequency, interactions, thus temporal graphs have typically much more than 10 timestamps. Indeed, all the datasets used in this paper have many more timestamps, and the algorithms from~\cite{GalimbertiBG17,GalimbertiBGL20} cannot run in a reasonable amount of time.
Moreover, even assuming to be able to compute multilayer cores on temporal networks, those cores have still to be filtered and post-processed, which makes this strategy meaningless with respect to  methods that compute \spancores directly, as the ones introduced in the next section.

}

\section{Algorithms: computing all \spancores}
\label{sec:spancores}

In this section we devise algorithms for computing a complete \spancore decomposition of a temporal graph (Problem~\ref{pbl:dececomposition}).

\spara{A na\"ive approach.} As stated in Observation 1, for a fixed temporal interval $\Delta \sqsubseteq T$, mining all span-cores $\core$ is equivalent to computing the classic core decomposition of the graph $G_\Delta = (V, E_\Delta)$.
A na\"{\i}ve strategy is thus to run a core-decomposition subroutine~\cite{batagelj2011fast}
on graph $G_\Delta$ for each temporal interval $\Delta \sqsubseteq T$. Such a method has time complexity $\bigO(\sum_{\Delta \sqsubseteq T} (|\Delta| \times |E|))$, i.e., $\bigO(|T|^2 \times |E|)$.

\spara{A more efficient algorithm.} Looking at Figure \ref{fig:searchspace} one can observe that the  na\"ive algorithm only exploits one dimension of the containment property: it starts from each point on the top level, i.e., from cores of order $1$, and goes down vertically with the classic core decomposition. Based on Proposition~\ref{prp:conteinment}, it is possible to design a more efficient algorithm that
exploits also the ``horizontal containment'' relationships.

\begin{myexample}
Consider core  $C_{1,[0,2]}$  in Figure \ref{fig:searchspace}: by Proposition~\ref{prp:conteinment} it holds that it is a subset of both $C_{1,[0,1]}$ and $C_{1,[1,2]}$. Therefore, to compute $C_{1,[0,2]}$, instead of starting from the whole $V$, one can start from $C_{1,[0,1]} \cap C_{1,[1,2]}$. Starting from a much smaller set of vertices can provide a substantial speed-up to the whole computation.
\end{myexample}

This observation, although simple,  produces a speed-up of orders of magnitude as we will empirically show in Section~\ref{sec:experiments}.
The next straightforward corollary of Proposition~\ref{prp:conteinment} states that, not only $C_{1,[0,2]} \subseteq C_{1,[0,1]} \cap C_{1,[1,2]}$, but this is the best one can get, meaning that
intersecting these two span-cores is equivalent to intersecting all span-cores structurally containing $C_{1,[0,2]}$.

\begin{mycorollary}\label{cor:cor1ofProp1}
Given a temporal graph $G = (V, T, \tau)$, and a temporal interval $\Delta = [t_s,t_e] \sqsubseteq T$, let $\Delta_+ = [\min\{t_s+1,t_e\}, t_e]$ and $\Delta_- = [t_s, \max\{t_e-1,t_s\}]$.
It holds that $$C_{1,\Delta} \ \subseteq \ (C_{1,\Delta_+} \cap C_{1,\Delta_-}) \ = \ \bigcap_{\Delta' \sqsubseteq \Delta} C_{1, \Delta'} .$$
\end{mycorollary}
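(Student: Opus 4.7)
The plan is to verify the two assertions contained in the corollary separately: first the inclusion $C_{1,\Delta} \subseteq C_{1,\Delta_+} \cap C_{1,\Delta_-}$, and then the equality $C_{1,\Delta_+} \cap C_{1,\Delta_-} = \bigcap_{\Delta' \sqsubseteq \Delta} C_{1,\Delta'}$. Throughout, I would appeal only to Proposition~\ref{prp:conteinment}, specialized to $k' = k = 1$.

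First I would argue the inclusion. By construction, $\Delta_+ \sqsubseteq \Delta$ and $\Delta_- \sqsubseteq \Delta$ hold (trivially when $t_s = t_e$, in which case both reduce to $\Delta$, and by inspection otherwise). Proposition~\ref{prp:conteinment} applied to each of these two containments yields $C_{1,\Delta} \subseteq C_{1,\Delta_+}$ and $C_{1,\Delta} \subseteq C_{1,\Delta_-}$, hence $C_{1,\Delta} \subseteq C_{1,\Delta_+} \cap C_{1,\Delta_-}$.

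For the equality, I would prove the two inclusions in turn. The ``$\supseteq$'' direction is immediate: since $\Delta_+, \Delta_- \sqsubseteq \Delta$, the cores $C_{1,\Delta_+}$ and $C_{1,\Delta_-}$ both appear as factors on the right-hand intersection, so that intersection is contained in $C_{1,\Delta_+} \cap C_{1,\Delta_-}$. The ``$\subseteq$'' direction is the substantive part, and it hinges on a small combinatorial observation: every proper sub-interval $\Delta' = [t'_s, t'_e] \sqsubsetneq \Delta$ must satisfy $t'_s > t_s$ or $t'_e < t_e$, so that either $\Delta' \sqsubseteq \Delta_+$ or $\Delta' \sqsubseteq \Delta_-$. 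Another application of Proposition~\ref{prp:conteinment} then gives $C_{1,\Delta_+} \subseteq C_{1,\Delta'}$ in the first case and $C_{1,\Delta_-} \subseteq C_{1,\Delta'}$ in the second, so in either case $C_{1,\Delta_+} \cap C_{1,\Delta_-} \subseteq C_{1,\Delta'}$. The degenerate factor $\Delta' = \Delta$ in the right-hand intersection is already absorbed by the inclusion proved above, so no further work is needed. Taking the intersection over all $\Delta' \sqsubseteq \Delta$ yields the desired inclusion.

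There is no genuinely hard step here. The only content beyond a rote appeal to Proposition~\ref{prp:conteinment} is the observation that $\Delta_+$ and $\Delta_-$ are the two \emph{maximal} proper sub-intervals of $\Delta$ (obtained by trimming one timestamp from either side), so that every other proper sub-interval of $\Delta$ is dominated by one of them. Once this is made explicit, the proof collapses to a short, essentially mechanical verification.
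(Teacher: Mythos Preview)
Your argument is essentially the one the paper has in mind (the paper states the corollary as a ``straightforward'' consequence of Proposition~\ref{prp:conteinment} and gives no separate proof), and for all \emph{proper} sub-intervals $\Delta' \sqsubsetneq \Delta$ your verification is correct and complete.

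There is, however, a genuine slip in your handling of the term $\Delta' = \Delta$. You claim it is ``absorbed by the inclusion proved above'', but the inclusion you proved is $C_{1,\Delta} \subseteq C_{1,\Delta_+} \cap C_{1,\Delta_-}$, whereas what the ``$\subseteq$'' direction of the equality would require at that term is the \emph{reverse} containment $C_{1,\Delta_+} \cap C_{1,\Delta_-} \subseteq C_{1,\Delta}$. That reverse containment is false in general: take $V=\{a,b,c\}$, $T=\{0,1\}$, with edge $(a,b)$ at time $0$ and edge $(a,c)$ at time $1$. For $\Delta=[0,1]$ one has $E_\Delta=\emptyset$, so $C_{1,\Delta}=\emptyset$, yet $C_{1,[0,0]}\cap C_{1,[1,1]}=\{a,b\}\cap\{a,c\}=\{a\}$.

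This is not a flaw in your method but an imprecision in the statement itself: with the reflexive $\sqsubseteq$ in the index set, the displayed equality cannot hold. The surrounding text and Examples~1--2 make clear that the intended range of the big intersection is the set of \emph{proper} sub-intervals of $\Delta$ (``all span-cores structurally containing $C_{1,\Delta}$'' other than $C_{1,\Delta}$ itself). Under that reading your proof is complete once you simply drop the sentence about the ``degenerate factor''.
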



\begin{myexample}
Consider again $C_{1,[0,2]}$  in Figure \ref{fig:searchspace}:  Proposition~\ref{prp:conteinment} states that it is a subset of $C_{1,[0,0]}, C_{1,[0,1]},C_{1,[1,1]},C_{1,[1,2]},C_{1,[2,2]}$.
Corollary~\ref{cor:cor1ofProp1} suggests that there is no need to intersect them all, but only $C_{1,[0,1]}$ and $C_{1,[1,2]}$: in fact, $C_{1,[0,1]} \subseteq C_{1,[0,0]} \cap C_{1,[1,1]}$ and
$C_{1,[1,2]} \subseteq C_{1,[1,1]} \cap C_{1,[2,2]}$.
\end{myexample}

\begin{algorithm}[t]
\DontPrintSemicolon
\KwIn{A temporal graph $G=(V,T,\tau)$.}
\KwOut{The set $\coresset$ of all \spancores of $G$.}
$\coresset \leftarrow \emptyset$; \ \ $Q \leftarrow \emptyset$; \ \ $\mathcal{A} \leftarrow \emptyset$\;
\ForAll{$t \in T$}
{\label{line:decomposition:init:start}
	enqueue $[t,t]$ to $Q$; \ \ $\mathcal{A}[t,t] \leftarrow V$\;\label{line:decomposition:init:end}
}
\While{$Q \neq \emptyset$}
{\label{line:decomposition:whilestart}
	dequeue $\Delta = [t_s, t_e]$ from $Q$\;
	$E_{\Delta}[\mathcal{A}[\Delta]] \leftarrow  \{(u,v) \in E_{\Delta} \mid u \in \mathcal{A}[\Delta], v \in \mathcal{A}[\Delta]\}$\;\label{line:decomposition:subgraph}

	\If{$|E_{\Delta}[\mathcal{A}[\Delta]]| > 0$}
	{
		$\coressetdelta \leftarrow $ \textsf{core-decomposition}$(\mathcal{A}[\Delta],E_{\Delta}[\mathcal{A}[\Delta]])$\;\label{line:decomposition:deltacores}				
		$\coresset \leftarrow \coresset \cup \coressetdelta$\; \label{line:decomposition:solution}
		$\Delta_1 = [\max\{t_s-1,0\}, t_e]$; \ \  $\Delta_2 = [t_s, \min\{t_e+1,t_{max}\}]$\; \label{line:decomposition:fathers}	 
		\ForAll{$\Delta' \in \{\Delta_1, \Delta_2\} \mid \Delta' \neq \Delta$}
		{
			\uIf{$\mathcal{A}[\Delta'] \neq \textsc{null}$} 
			{\label{line:decomposition:alreadyinit}	
				$\mathcal{A}[\Delta'] \leftarrow \mathcal{A}[\Delta'] \cap C_{1,\Delta}$\; \label{line:decomposition:second}	
				enqueue $\Delta'$ to $Q$\; \label{line:decomposition:enqueue}
			}	
			\Else
			{
				$\mathcal{A}[\Delta'] \leftarrow C_{1,\Delta}$\;\label{line:decomposition:first}
			}
		}
	}
}
\caption{\cores}\label{alg:decomposition}
\end{algorithm}

The main idea behind our efficient \cores\ algorithm (whose pseudocode is given as Algorithm~\ref{alg:decomposition}) is to generate temporal intervals of increasing size (starting from size one) and, for each $\Delta$ of width larger than one, to initiate the core decomposition from $(C_{1,\Delta_+} \cap C_{1,\Delta_-})$, i.e., the smallest intersection of cores containing $C_{1,\Delta}$ (Corollary~\ref{cor:cor1ofProp1}).
The intervals to be processed are added to queue $Q$, which is initialized with the intervals of size one (Lines~\ref{line:decomposition:init:start}--\ref{line:decomposition:init:end}): these are the only intervals for which no other interval can be used to reduce the set of vertices from which the core decomposition is started, thus they have to be initialized with the whole vertex set $V$.
The algorithm utilizes a map $\mathcal{A}$ that, given an interval $\Delta$, returns the set of vertices to be used as a starting set of the core decomposition on $\Delta$.
The algorithm processes all intervals stored in $Q$, until $Q$ has become empty (Lines~\ref{line:decomposition:whilestart}--\ref{line:decomposition:first}).
For every temporal interval $\Delta$ extracted from $Q$, the starting set of vertices is retrieved from $\mathcal{A}[\Delta]$  and the corresponding set of edges is identified (Line~\ref{line:decomposition:subgraph}). Unless this is empty, the  classic core-decomposition algorithm~\cite{batagelj2011fast} is invoked over $(\mathcal{A}[\Delta],E_{\Delta}[\mathcal{A}[\Delta]])$ (Line~\ref{line:decomposition:deltacores}) and its output (a set of span-cores of span $\Delta$) is added to the ultimate output set $\coresset$ (Line~\ref{line:decomposition:solution}).

Afterwards, the two intervals, denoted $\Delta_1$ and $\Delta_2$, for which $C_{1,\Delta}$ can be used to obtain the smallest intersections of cores containing them (Corollary~\ref{cor:cor1ofProp1}) are computed at Line~\ref{line:decomposition:fathers}.
For $\Delta_1$ (and analogously $\Delta_2$), we check whether $\mathcal{A}[\Delta_1]$ has already been initialized (Line~\ref{line:decomposition:alreadyinit}): this would mean that previously the other ``father'' (i.e., smallest containing core) of $C_{1,\Delta_1}$ has been computed, thus we can intersect $C_{1,\Delta}$ with $\mathcal{A}[\Delta_1]$ and enqueue $\Delta_1$ to be processed (Lines~\ref{line:decomposition:second}--\ref{line:decomposition:enqueue}). Instead, if $\mathcal{A}[\Delta_1]$ was not yet initialized, we initialize it with $C_{1,\Delta}$ (Line~\ref{line:decomposition:first}): in this case $\Delta_1$ is not enqueued because it still lacks one father to be intersected before being ready for core decomposition.
This procedural update of $Q$ ensures that both fathers of every interval in $Q$ exist and have been previously computed, thus no a-posteriori verification is needed.

\begin{myexample}
Consider again the search space in Figure~\ref{fig:searchspace}.
Algorithm~\ref{alg:decomposition} first processes the intervals $[0,0],[1,1],[2,2],$ and $[3,3]$.
Then, it intersects  $C_{1,[0,0]}$ and $C_{1,[1,1]}$ to initialize $C_{1,[0,1]}$, intersects $C_{1,[1,1]}$ and $C_{1,[2,2]}$ to initialize $C_{1,[1,2]}$, and  intersects $C_{1,[2,2]}$ and $C_{1,[3,3]}$ to initialize $C_{1,[2,3]}$. Then, it continues with the intervals of size 3: it intersects $C_{1,[0,1]}$ and $C_{1,[1,2]}$ to initialize $C_{1,[0,2]}$ and so on.
\end{myexample}
The next theorem formally shows soundness and completeness of our \cores\ algorithm.
\begin{mytheorem}\label{th:correctnessAlg2}
Algorithm~\ref{alg:decomposition} is sound and complete for Problem~\ref{pbl:dececomposition}.
\end{mytheorem}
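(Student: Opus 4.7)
The plan is to prove soundness and completeness separately, both by induction on the length $|\Delta|$ of the temporal interval, hinging on the following loop invariant: whenever $\Delta$ is dequeued at Line~\ref{line:decomposition:whilestart}, the stored set $\mathcal{A}[\Delta]$ is a superset of $C_{1,\Delta}$. For $|\Delta|=1$ this holds by Lines~\ref{line:decomposition:init:start}--\ref{line:decomposition:init:end}, since $\mathcal{A}[\Delta]=V\supseteq C_{1,\Delta}$. For $|\Delta|\geq 2$, writing $\Delta_+=[t_s+1,t_e]$ and $\Delta_-=[t_s,t_e-1]$ for the two sub-intervals of length $|\Delta|-1$, the inductive hypothesis guarantees that $C_{1,\Delta_+}$ and $C_{1,\Delta_-}$ are computed correctly before $\Delta$ is dequeued; Lines~\ref{line:decomposition:alreadyinit}--\ref{line:decomposition:first} then accumulate $\mathcal{A}[\Delta]=C_{1,\Delta_+}\cap C_{1,\Delta_-}$, and Corollary~\ref{cor:cor1ofProp1} gives $\mathcal{A}[\Delta]\supseteq C_{1,\Delta}$.

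Soundness then follows from a short observation about Line~\ref{line:decomposition:deltacores}. Since $C_{1,\Delta}$ is, by definition, the set of non-isolated vertices of $G_\Delta$, every $v\in\mathcal{A}[\Delta]\setminus C_{1,\Delta}$ is isolated in $G_\Delta$, and hence also in $G_\Delta[\mathcal{A}[\Delta]]$; such vertices are peeled off at the outset by the algorithm of~\cite{batagelj2011fast} and do not enter any $(k,\Delta)$-core with $k\geq 1$. Consequently, the output of Line~\ref{line:decomposition:deltacores} coincides with the classical core decomposition of $G_\Delta[C_{1,\Delta}]$, which by Observation~\ref{observation1} is exactly $\coressetdelta$. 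Every set added to $\coresset$ at Line~\ref{line:decomposition:solution} is therefore a genuine \spancore of~$G$.

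For completeness I would show that every interval $\Delta\sqsubseteq T$ admitting at least one \spancore is eventually dequeued. A useful preliminary remark is that $E_\Delta=\emptyset$ forces $E_{\Delta'}\subseteq E_\Delta=\emptyset$ for every super-interval $\Delta'\supseteq\Delta$, so in that case no \spancore exists with span $\Delta$ nor with any extension of it; consequently, skipping the if-body starting at Line~\ref{line:decomposition:deltacores} (and thus not enqueueing the successors of $\Delta$) is harmless. Induction on $|\Delta|$: length-1 intervals are enqueued at initialization; for $|\Delta|\geq 2$ with $E_\Delta\neq\emptyset$, both $\Delta_+$ and $\Delta_-$ have non-empty edge sets and, by induction, are dequeued. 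The main obstacle, and the part deserving most care, is verifying that the branching on $\mathcal{A}[\Delta']\neq\textsc{null}$ correctly implements a ``wait for both predecessors'' synchronization: the first predecessor to be processed initializes $\mathcal{A}[\Delta']$ without enqueueing it (Line~\ref{line:decomposition:first}), while the second intersects and enqueues it exactly once (Lines~\ref{line:decomposition:second}--\ref{line:decomposition:enqueue}); coupled with the guard $\Delta'\neq\Delta$ in the loop at Line~\ref{line:decomposition:fathers} that absorbs the boundary cases $t_s=0$ and $t_e=t_{max}$, this closes the induction and yields completeness.
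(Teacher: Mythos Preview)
Your proof is correct and follows essentially the same approach as the paper's own argument: both rely on Corollary~\ref{cor:cor1ofProp1} to certify that $\mathcal{A}[\Delta]\supseteq C_{1,\Delta}$, invoke Observation~\ref{observation1} for soundness of the core-decomposition call, and argue for completeness that any interval skipped by the queue has an empty predecessor and hence no \spancores. Your version is more explicit---the induction on $|\Delta|$, the stated loop invariant, and the remark that vertices in $\mathcal{A}[\Delta]\setminus C_{1,\Delta}$ are isolated in $G_\Delta$ (and therefore do not affect Line~\ref{line:decomposition:deltacores}) all spell out details the paper leaves implicit---but the underlying structure is the same.
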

\begin{proof}
The algorithm generates and processes a subset of temporal intervals $\mathcal{X} \subseteq \{\Delta \mid \Delta \sqsubseteq T\}$.
For every interval $\Delta \subseteq \mathcal{X}$, it computes \emph{all} \spancores $\mathbf{C}_{\Delta} = \{C_{1,\Delta}, C_{2,\Delta}, \ldots, C_{k_{\Delta},\Delta}\}$ defined on $\Delta$ by means of the \textsf{core-decomposition} subroutine on the graph $(\mathcal{A}[\Delta],E_{\Delta}[\mathcal{A}[\Delta]])$.
The set of vertices $\mathcal{A}[\Delta]$ is equivalent to $(C_{1,\Delta_+} \cap C_{1,\Delta_-})$ because of Line~\ref{line:decomposition:second} (Corollary~\ref{cor:cor1ofProp1}) and the fact that $\Delta$ is enqueued (Line~\ref{line:decomposition:enqueue}) only when both fathers have been processed and the intersection done. The correctness of doing the classic core decomposition is guaranteed by Observation ~\ref{observation1}.

As for completeness, it suffices to show that the intervals $\Delta \notin \mathcal{X}$ that have not been processed by the algorithm do not yield any \spancore.
The algorithm generates all temporal intervals size by size, starting from those of size one and then going to larger sizes. This is done by maintaining the queue $Q$. As said above, an interval $\Delta$ is enqueued as soon as both   $C_{1,\Delta_+}$ and $C_{1,\Delta_-}$ have been processed. Thus, an interval $\Delta$ is not in $\mathcal{X}$ only if either $C_{1,\Delta_+}$ or $C_{1,\Delta_-}$ does not exist. In this case $C_{1,\Delta}$ and all other $C_{k,\Delta}$ do not exist as well.
\end{proof}

\noindent \spara{Discussion.}
Algorithm~\ref{alg:decomposition} exploits the ``horizontal containment'' relationships only at the first level of the search space.
For a given $\Delta$, once the restricted starting set of vertices has been defined for $k = 1$, the traditional core decomposition is started to produce all the span-cores of span $\Delta$.
In other words, for $k > 1$ only the ``vertical containment'' is exploited.
Consider the span-core $C_{3,[1,2]}$ in Figure \ref{fig:searchspace}: we know that it is a subset of  $C_{2,[1,2]}$ (``vertical'' ) and of $C_{3,[1,1]}$ and $C_{3,[2,2]}$  (``horizontal'' ).
One could consider intersecting all these three span-cores before computing $C_{3,[1,2]}$.
We tested this alternative approach, but concluded that the overhead of computing intersections and data-structure maintenance was outweighing the benefit of starting from a smaller vertex set.


The worst-case time complexity of Algorithm~\ref{alg:decomposition} is equal to the na\"{\i}ve approach, however, in practice, it is orders of magnitude faster, as shown in Section~\ref{sec:experiments}.
\begin{figure}[t]
\begin{tabular}{ccc}
\includegraphics[width=0.275\columnwidth]{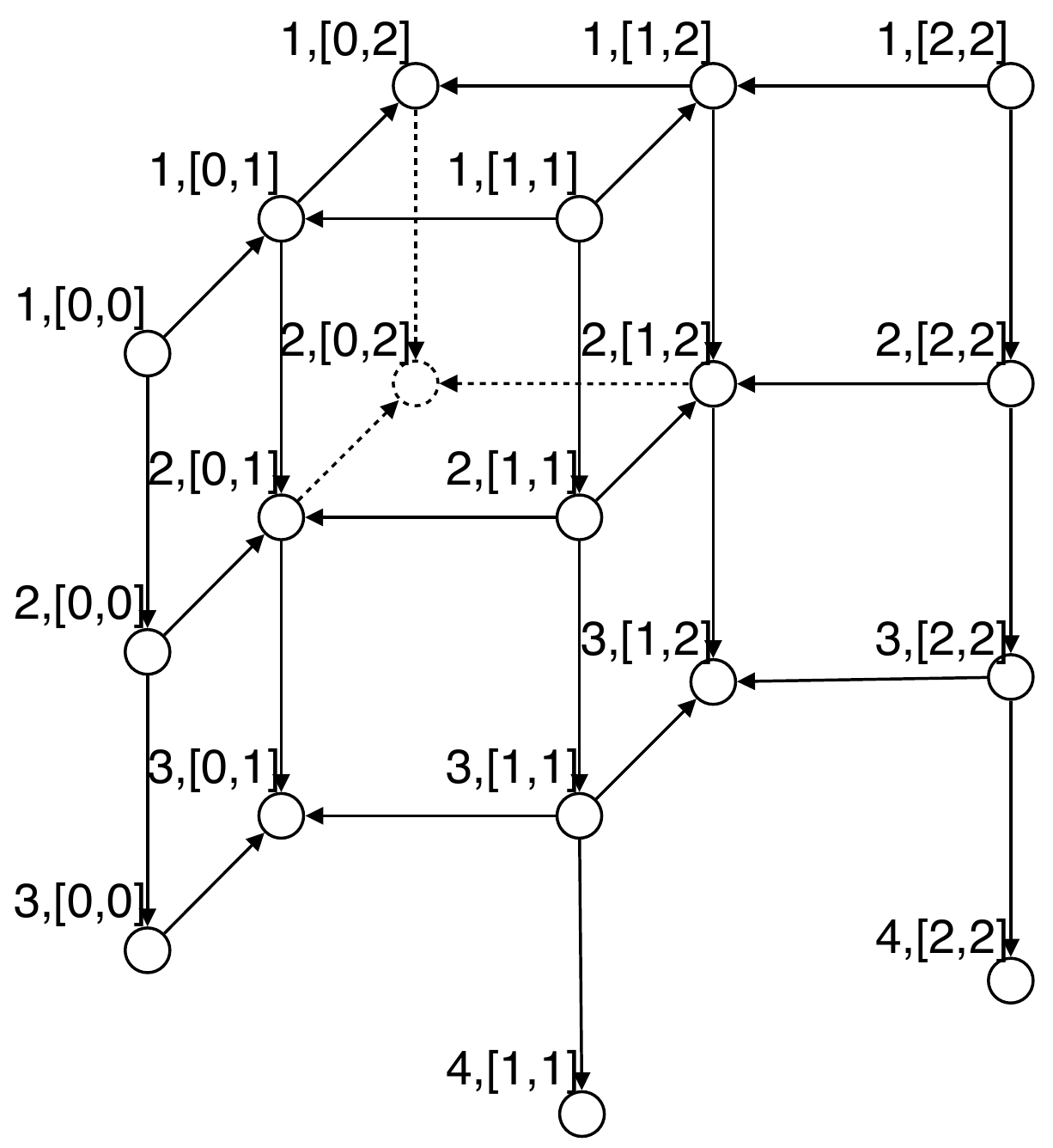} \hspace{0.5cm} & \includegraphics[width=0.275\columnwidth]{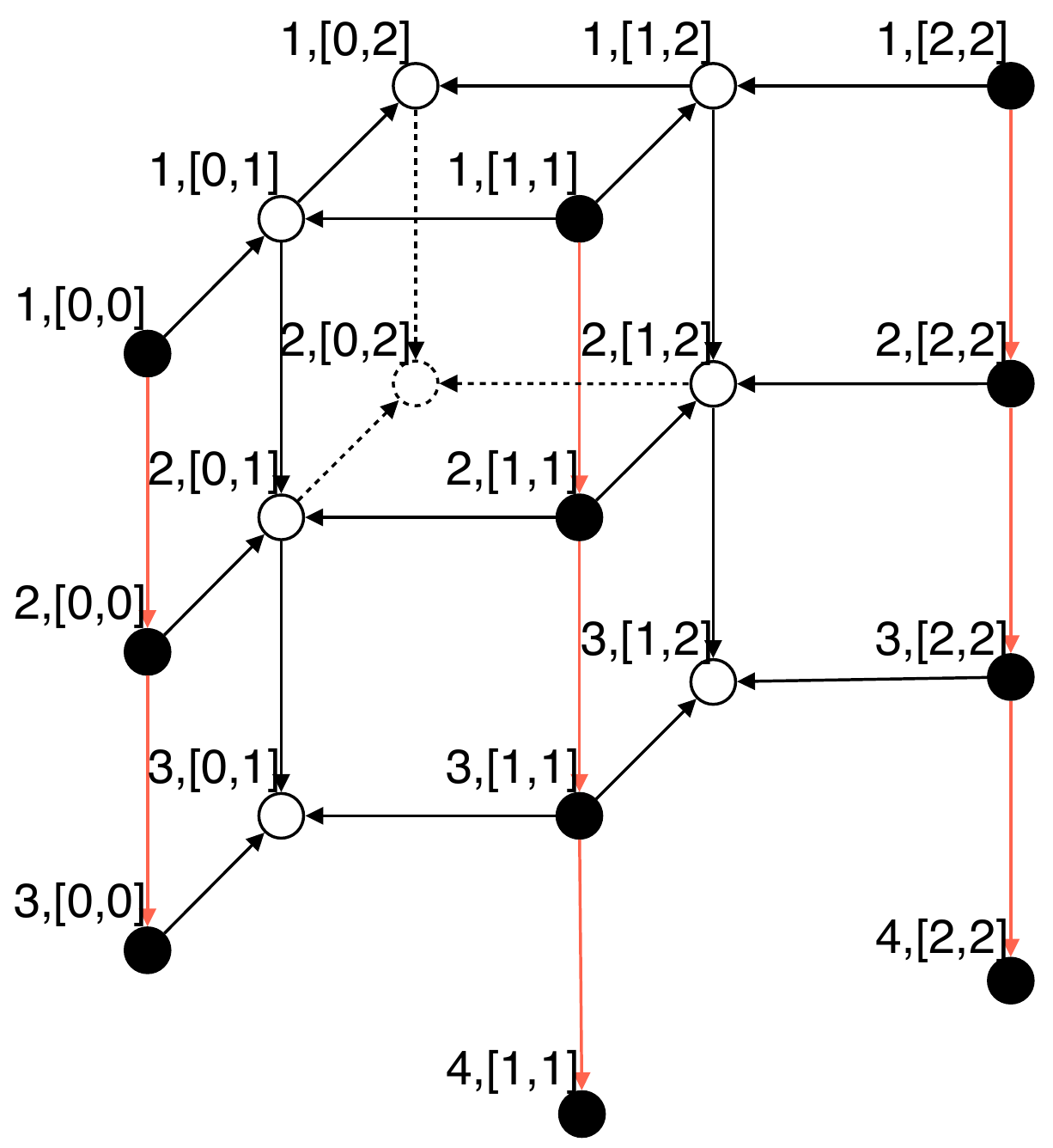} & \hspace{0.5cm} \includegraphics[width=0.275\columnwidth]{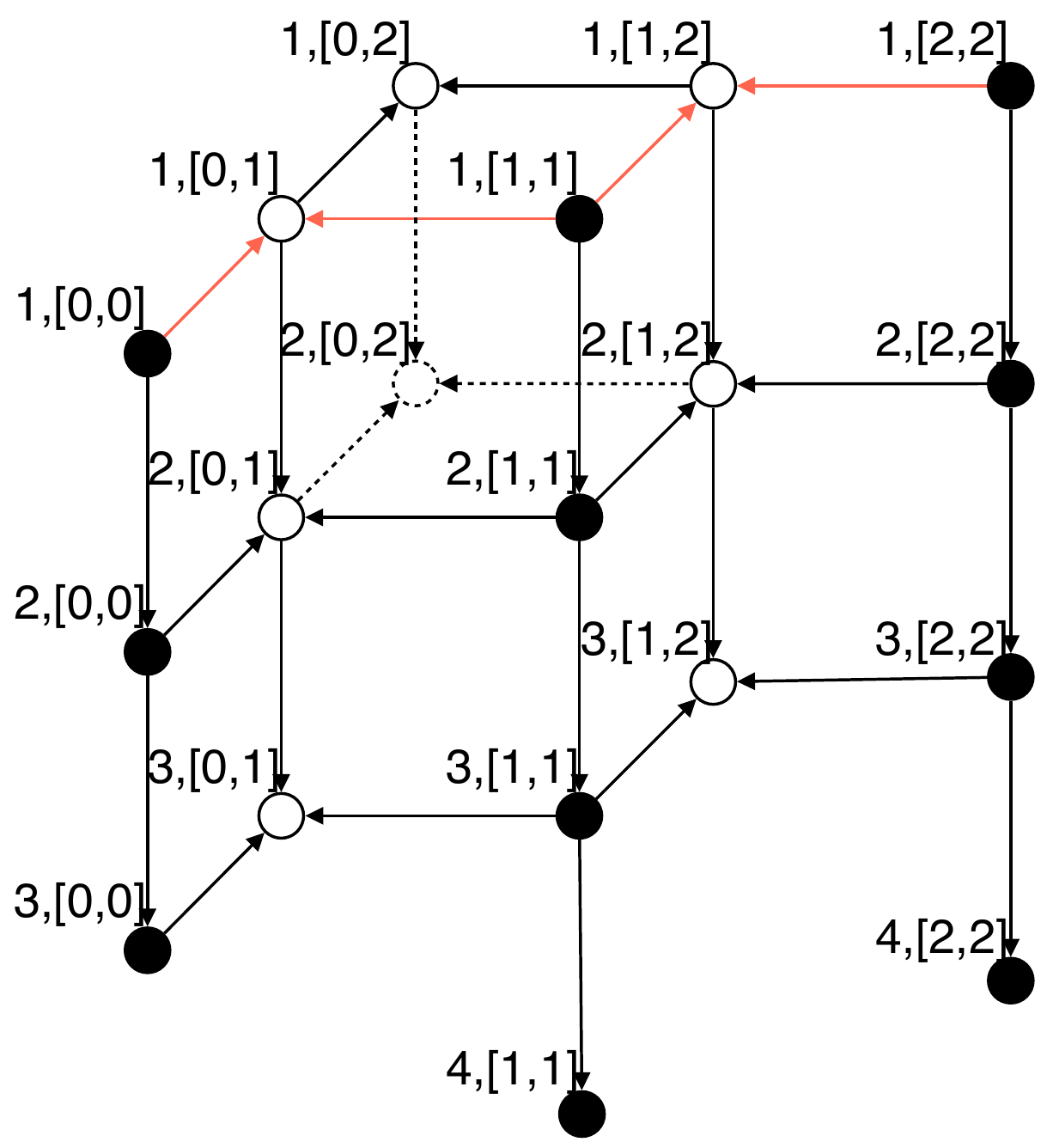}\\
(a) \hspace{0.5cm} & (b) & \hspace{0.5cm} (c)\\
\multicolumn{3}{c}{}\vspace{0.25cm}\\
\includegraphics[width=0.275\columnwidth]{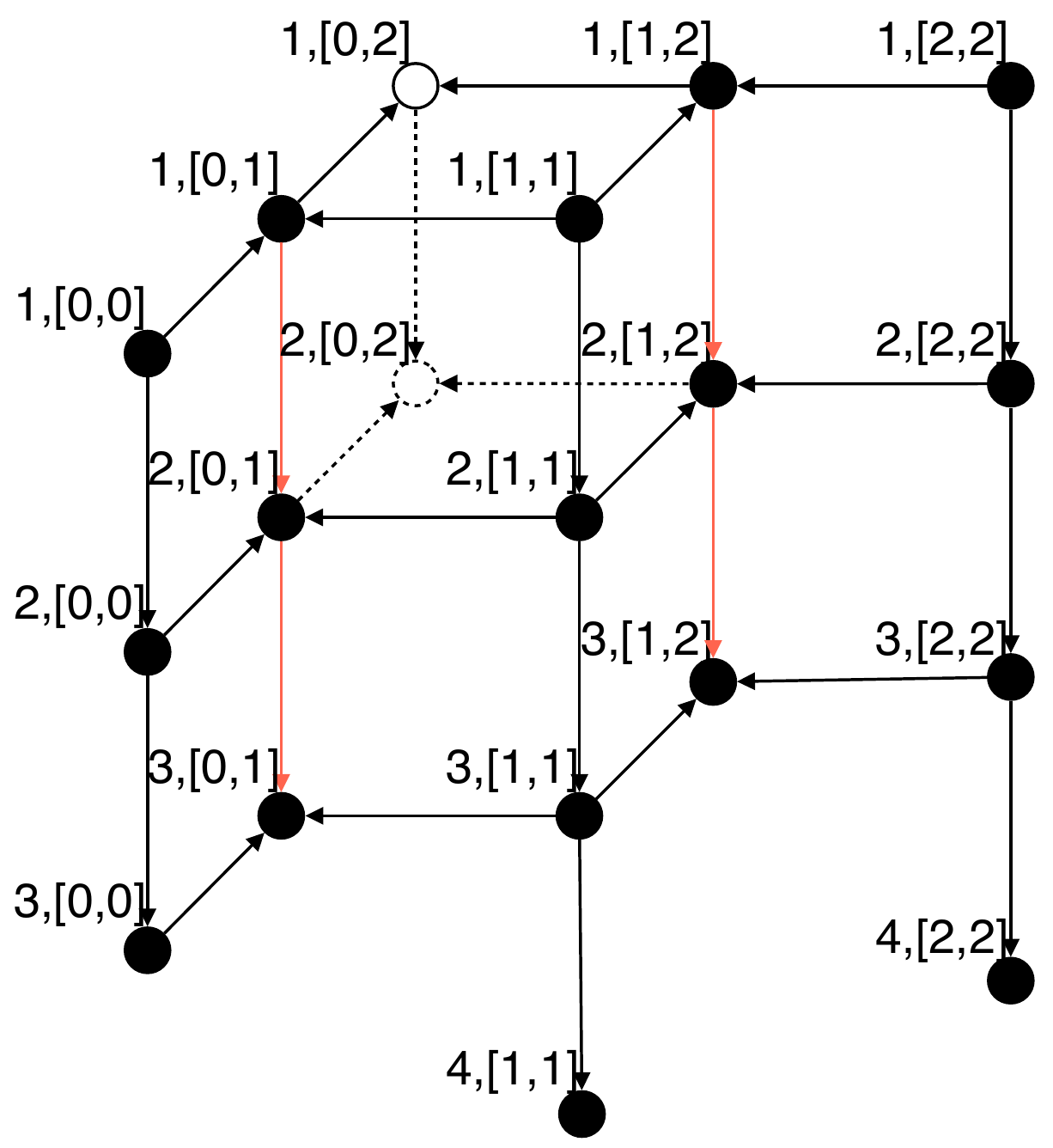} \hspace{0.5cm} & \includegraphics[width=0.275\columnwidth]{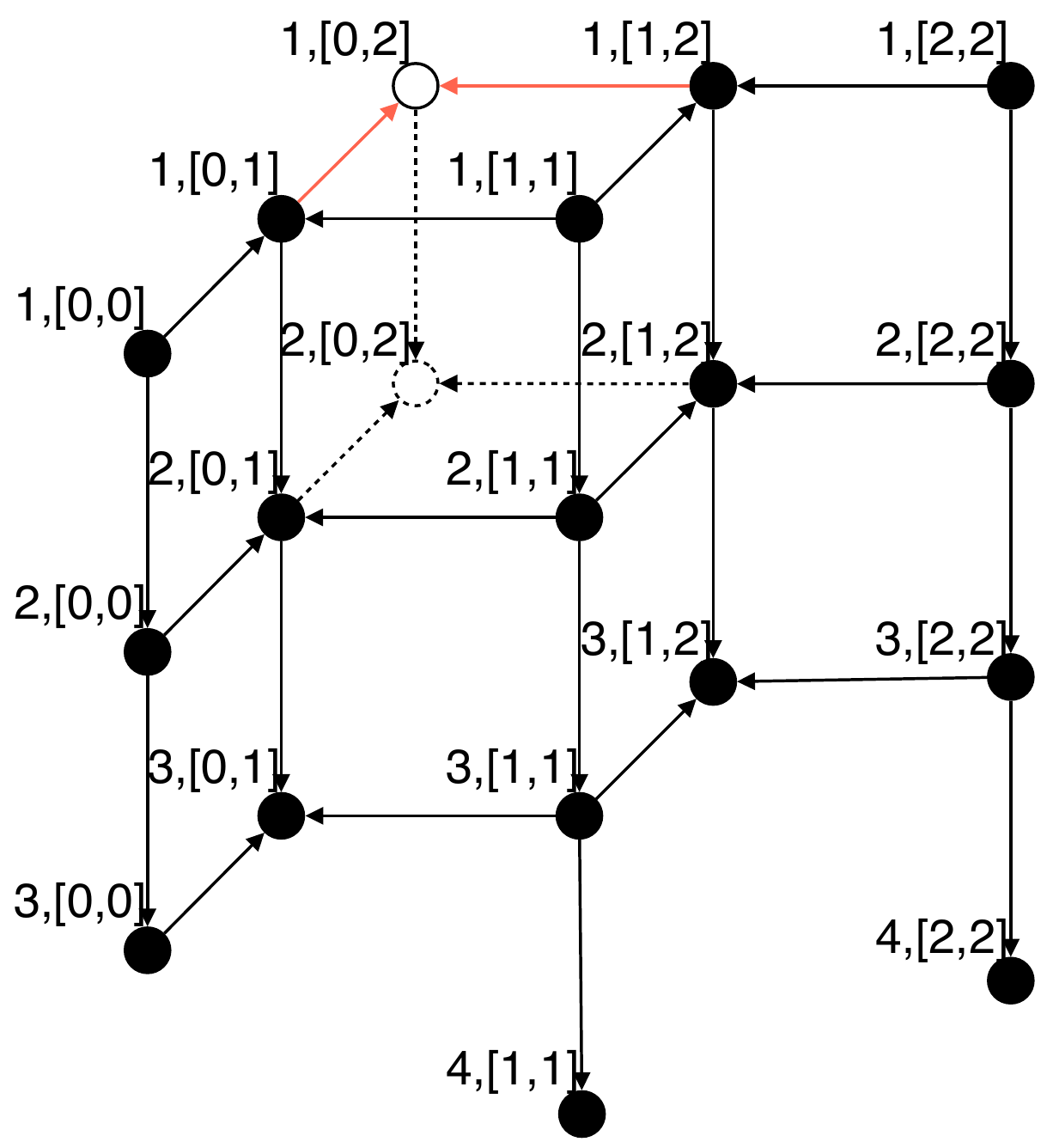} & \hspace{0.5cm} \includegraphics[width=0.275\columnwidth]{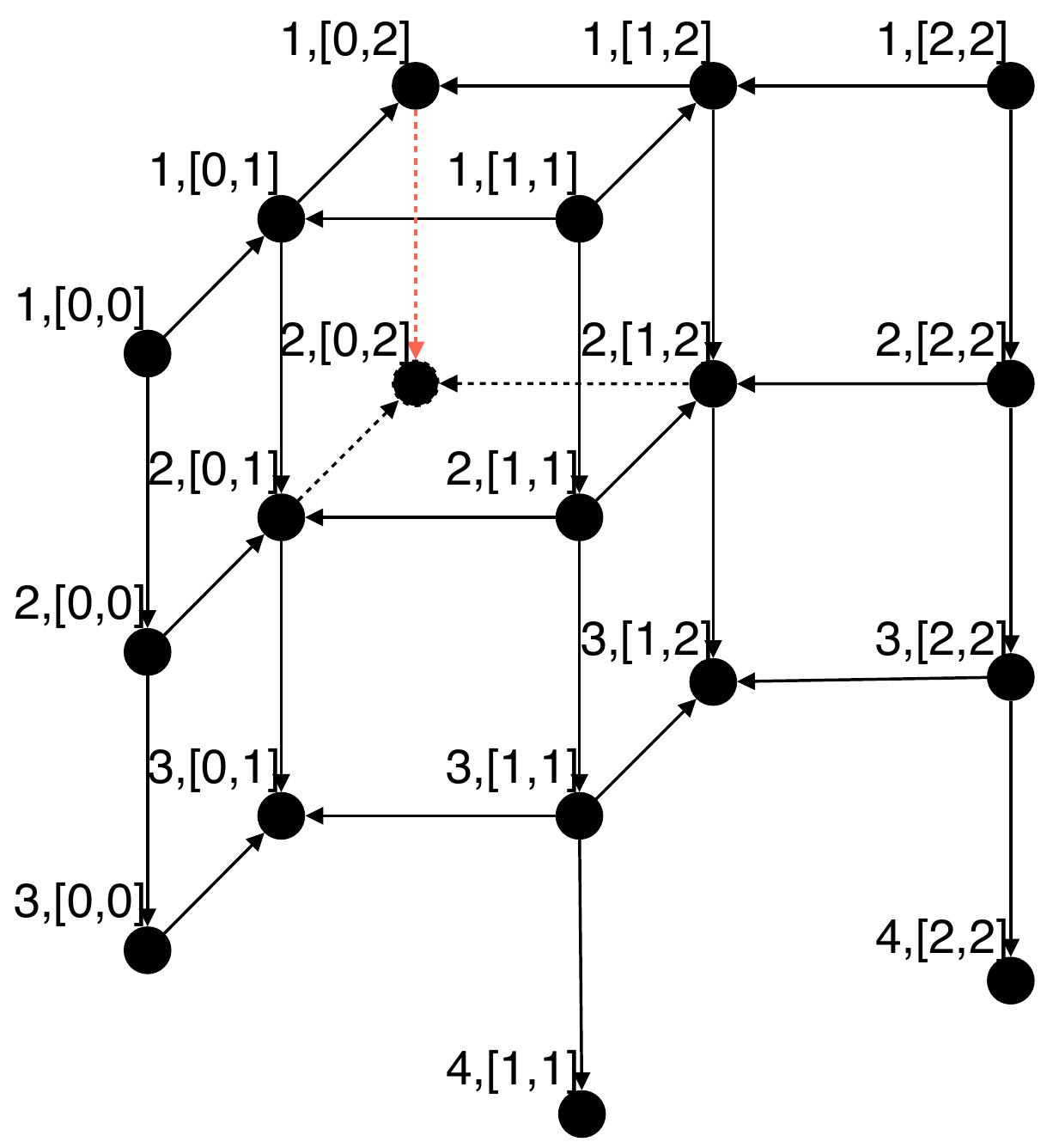}\\
(d) \hspace{0.5cm} & (e) & \hspace{0.5cm} (f)\\
\end{tabular}
\caption{\label{fig:running_spancores} Run-through example of the execution of \cores\ (Algorithm \ref{alg:decomposition}) over the search space of a temporal graph having $T = [0,2]$.
Full nodes represent computed \spancores, while empty nodes are \spancores that will be visited in the next steps of the algorithm.
Red arrows highlight the containment relationships exploited during the current step.}
\end{figure}

\begin{myexample}
Figure~\ref{fig:running_spancores} reports a run-through example, illustrating the execution of  \cores\ (Algorithm \ref{alg:decomposition}) over the search space of a toy temporal graph having $T = [0,2]$ (shown in Figure~\ref{fig:running_spancores}(a)).
The algorithm starts by computing all the \spancores having span of size 1 (Figure~\ref{fig:running_spancores}(b)); in this case, only the ``vertical containment'' is exploited by the \textsf{core-decomposition} subroutine.
Then, \cores\  proceeds with the computation of the \spancores having span of size 2.
At first, the algorithm exploits the ``horizontal containment'' relationships at the first level of the search space to restrict the starting set of vertices for computing the \spancores of $k = 1$ (Figure~\ref{fig:running_spancores}(c)).
Afterwards, the \textsf{core-decomposition} subroutine computes all the \spancores with span of size 2, by following the ``vertical containment'' (Figure~\ref{fig:running_spancores}(d)).
Finally, the same method is applied for visiting the \spancores with span of size 3 (Figure~\ref{fig:running_spancores}(e)-(f)).
\end{myexample}

\section{Algorithms: computing maximal \spancores}
\label{sec:maximal_spancores}

In this section we focus on Problem~\ref{pbl:maximal}: computing the \emph{maximal} \spancores of a temporal graph.

\spara{A filtering approach.}
As anticipated above, a straightforward way of solving this problem consists in filtering the \spancores computed during the execution of Algorithm~\ref{alg:decomposition}, so as to ultimately output only the maximal ones.
This can easily be accomplished by equipping Algorithm~\ref{alg:decomposition} with a data structure $\mathcal{M}$ that stores the \spancore of the highest order for every temporal interval $\Delta \sqsubseteq T$ that has been processed by the algorithm.
Moreover, at the storage of a \spancore $C_{k,\Delta}$ in $\mathcal{M}$, the \spancores previously stored in $\mathcal{M}$ for subintervals of the temporal interval $\Delta$ and with the same order $k$ are removed from $\mathcal{M}$.
This removal operation, together with the order in which \spancores are processed, ensures that $\mathcal{M}$ eventually contains only the maximal \spancores.

%


\spara{Efficient maximal-span-core finding.}
Our next goal is to design a more efficient algorithm that extracts maximal \spancores directly, without computing complete core decompositions,
passing over more peripheral ones, and without generating all temporal cores.
This is a quite challenging design principle, as it contrasts the intrinsic structural properties of core decomposition, based on which a core of order $k$ is usually computed from the core of order $k\!-\!1$, thus making the computation of the core of the highest order as hard as computing the overall decomposition.
Nevertheless, thanks to theoretical properties that relate the maximal \spancores to each other, in the temporal context such a challenge can be achieved.
In the following we discuss such properties in detail, by starting from a result that has already been discussed above, but only informally.

Consider the classic core decomposition in a standard (\mbox{non-temporal}) graph $G$ (Definition~\ref{def:kcores}) and
let $C_{k^*}[G]$ denote the \emph{innermost} core of $G$, i.e., the non-empty $k$-core of $G$ with the largest~$k$.

\begin{mylemma}\label{lemma1}
Given a temporal graph $G = (V,T,\tau)$, let $\imcores$ be the set of all maximal \spancores of $G$, and $\mathbf{C_{inner}} = \{ C_{k^*}[G_\Delta]  \mid \Delta \sqsubseteq T\}$ be the set of innermost cores of all graphs $G_\Delta$. It holds that $\imcores \subseteq \mathbf{C_{inner}}$.
\end{mylemma}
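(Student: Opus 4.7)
The plan is to argue by contradiction, exploiting the fact that, once the span $\Delta$ is fixed, the span-cores of span $\Delta$ coincide with the classical core decomposition of the simple graph $G_\Delta$ (Observation~\ref{observation1}). So ``innermost'' has an unambiguous meaning within the family $\{C_{k,\Delta}\}_k$: it is simply the one with the largest order $k$ for which the core is non-empty.

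More concretely, I would fix an arbitrary maximal span-core $C_{k,\Delta} \in \imcores$ and show that it must coincide with $C_{k^*}[G_\Delta]$. Suppose not: then by Observation~\ref{observation1} there exists some order $k' > k$ such that the $(k',\Delta)$-span-core is non-empty. This candidate span-core satisfies both $k < k'$ (so $k \le k'$) and $\Delta \sqsubseteq \Delta$ (trivially), while being a different span-core from $C_{k,\Delta}$ since $k' \neq k$. This directly contradicts Definition~\ref{def:maximal}, which forbids the existence of any \emph{other} span-core dominating $C_{k,\Delta}$ in both order and span. Hence $C_{k,\Delta}$ must be the innermost core of $G_\Delta$, i.e., $C_{k,\Delta} \in \mathbf{C_{inner}}$, and the inclusion $\imcores \subseteq \mathbf{C_{inner}}$ follows.

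I do not anticipate any real obstacle here: the content of the statement is essentially that maximality along the coreness dimension (for fixed span) is a necessary condition for maximality in the joint $(k,\Delta)$ partial order, which is immediate from the definitions. The only subtlety worth being explicit about is the word ``other'' in Definition~\ref{def:maximal}, which is what rules out the degenerate choice $(k',\Delta') = (k,\Delta)$ and makes the contradiction genuinely require $k' > k$ rather than only $k' \ge k$. Note also that the reverse inclusion $\mathbf{C_{inner}} \subseteq \imcores$ does not hold in general --- an innermost core of some $G_\Delta$ can still be dominated by a span-core on a strictly larger interval --- which is consistent with the lemma only asserting a one-sided inclusion that serves as a pruning principle for the forthcoming algorithm.
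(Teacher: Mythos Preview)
Your proof is correct and follows essentially the same approach as the paper: both argue by contradiction that if $C_{k,\Delta} \in \imcores$ were not the innermost core of $G_\Delta$, then some $C_{k',\Delta}$ with $k' > k$ would exist and dominate it, contradicting maximality. Your additional remarks on the role of the word ``other'' and on the failure of the reverse inclusion are accurate but go slightly beyond what the paper records.
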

\begin{proof}
Every $C_{k,\Delta} \in \imcores$ is the innermost core of the non-temporal graph $G_\Delta$: else,
there would exist another core $C_{k',\Delta} \neq \emptyset$ with $k' > k$, implying that $C_{k,\Delta} \notin \imcores$.
\end{proof}

Lemma~\ref{lemma1} states that each maximal \spancore is an innermost core of a $G_\Delta$, for some temporal interval $\Delta \sqsubseteq T$.
Hence, there can exist at most one maximal \spancore for every $\Delta \sqsubseteq T$ (while an interval $\Delta$ may not yield any maximal \spancore).
The key question to design an efficient maximal-\spancore-mining algorithm thus becomes how to extract innermost cores of
the graphs $G_\Delta$ more efficiently than by computing the full core decompositions of all $G_\Delta$.
The answer to this question comes from the result stated in the next two lemmas (with Lemma~\ref{lemma2} being auxiliary to Lemma~\ref{lemma3}).

\begin{mylemma}\label{lemma2}
Given a temporal graph $G = (V,T,\tau)$, and three temporal intervals $\Delta = [t_s,t_e] \sqsubseteq T$, $\Delta' = [t_s\!-\!1,t_e] \sqsubseteq T$, and $\Delta'' = [t_s,t_e\!+\!1] \sqsubseteq T$.
The innermost core $C_{k^*}[G_\Delta]$ is a maximal span-core of $G$ if and only if $k^* > \max\{k',k''\}$ where $k'$ and $k''$ are the orders of the innermost cores of $G_{\Delta'}$ and $G_{\Delta''}$, respectively.
\end{mylemma}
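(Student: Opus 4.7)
The plan is to unfold the maximality condition of Definition~\ref{def:maximal} and reduce the search over all dominating span-cores to only two ``minimal superinterval'' checks, namely $\Delta'$ and $\Delta''$, using Proposition~\ref{prp:conteinment} (span-core containment). The key observation is that every proper superinterval $\Delta''' \supsetneq \Delta$ with $\Delta''' \sqsubseteq T$ must extend $\Delta$ on the left or on the right (or both), hence it must contain either $\Delta'$ or $\Delta''$. This lets me transfer information about an arbitrary dominating span-core down to the two ``immediate neighbors'' $\Delta'$ and $\Delta''$.

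For the forward direction ($\Rightarrow$), I would argue by contrapositive: assume without loss of generality that $k' \geq k^*$ (the case $k'' \geq k^*$ is symmetric). Then the innermost core $C_{k',\Delta'}$ of $G_{\Delta'}$ is itself a \spancore of $G$ whose order dominates $k^*$ and whose span strictly contains $\Delta$. This is a ``different'' \spancore (the spans already differ), so by Definition~\ref{def:maximal} it witnesses that $C_{k^*}[G_\Delta]$ is not maximal.

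For the backward direction ($\Leftarrow$), assume $k^* > \max\{k',k''\}$ and suppose toward contradiction that some other span-core $C_{k,\Delta'''}$ dominates $C_{k^*}[G_\Delta]$, i.e., $k \geq k^*$ and $\Delta \sqsubseteq \Delta'''$. The case $\Delta''' = \Delta$ is immediately ruled out because $k^*$ is the largest order of a non-empty core on $G_\Delta$ (Observation~\ref{observation1}), so the only dominating candidate on $\Delta$ is $C_{k^*}[G_\Delta]$ itself. Hence $\Delta''' \supsetneq \Delta$, and by the decomposition remark above either $\Delta' \sqsubseteq \Delta'''$ or $\Delta'' \sqsubseteq \Delta'''$. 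Applying Proposition~\ref{prp:conteinment} in the first case gives $\emptyset \neq C_{k,\Delta'''} \subseteq C_{k,\Delta'}$, so $G_{\Delta'}$ admits a non-empty $k$-core with $k \geq k^*$, i.e., $k' \geq k^*$; this contradicts the hypothesis. The second case is symmetric and contradicts $k'' < k^*$.

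The only real subtlety — and the step I would check most carefully — is the boundary-case handling: one must make sure that the ``left extension'' $[t_s-1,t_e]$ and ``right extension'' $[t_s,t_e+1]$ actually lie in $T$, which is exactly the hypothesis $\Delta',\Delta'' \sqsubseteq T$ built into the statement. When one of them does not exist (i.e., $t_s=0$ or $t_e=t_{max}$), a proper superinterval cannot extend $\Delta$ in that direction, so the corresponding case in the contradiction above is vacuous and the argument collapses to the single relevant extension. With this understood, no extra case analysis is needed beyond the two applications of Proposition~\ref{prp:conteinment}.
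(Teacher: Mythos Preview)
Your proof is correct and follows essentially the same approach as the paper's: the forward direction uses the innermost cores on $\Delta'$ and $\Delta''$ as explicit dominating witnesses, and the backward direction uses Proposition~\ref{prp:conteinment} to show that $\max\{k',k''\}$ upper-bounds the order of any \spancore on a proper superinterval of $\Delta$. You are simply more explicit than the paper about why every proper superinterval must contain $\Delta'$ or $\Delta''$, about ruling out the $\Delta'''=\Delta$ case, and about the boundary handling when $t_s=0$ or $t_e=t_{max}$; the paper compresses these into a one-line appeal to Lemma~\ref{lemma1} and Proposition~\ref{prp:conteinment}.
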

\begin{proof}
The ``$\Rightarrow$'' part comes directly from the definition of maximal \spancore (Definition~\ref{def:maximal}): if $k^*$ were
not  larger than $\max\{k',k''\}$, then $C_{k^*}[G_\Delta]$ would be dominated by another \spancore both on the order and on the span (as both $\Delta'$ and $\Delta''$ are superintervals of $\Delta$).
For the ``$\Leftarrow$'' part, from Lemma~\ref{lemma1} and Proposition~\ref{prp:conteinment} it follows that $\max\{k',k''\}$ is an upper bound on the maximum order of a \spancore of a superinterval of $\Delta$.
Therefore, $k^* > \max\{k',k''\}$ implies that there cannot exist any other \spancore that dominates $C_{k^*}[G_\Delta]$ both on the order and on the span.
\end{proof}

\begin{mylemma}\label{lemma3}
Given $G$, $\Delta$,  $\Delta'$,  $\Delta''$,  $k'$, and $k''$ defined as in Lemma~\ref{lemma2}, let $\widetilde{V} = \{u \in V \mid \tdeg_{\Delta}(V, u) > \max\{k', k''\}\}$,
and let $C_{k^*}[G_\Delta[\widetilde{V}]]$  be the innermost core of $G_\Delta[\widetilde{V}]$.
If $k^* >  \max\{k', k''\}$, then $C_{k^*}[G_\Delta[\widetilde{V}]]$ is a maximal \spancore; otherwise, no maximal \spancore exists for $\Delta$.
\end{mylemma}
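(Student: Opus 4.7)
The plan is to combine Lemma~\ref{lemma2}, which characterizes when a given interval $\Delta$ yields a maximal span-core, with a pruning argument showing that the same determination can be made on the reduced graph $G_\Delta[\widetilde{V}]$ rather than the full $G_\Delta$. Concretely, Lemma~\ref{lemma2} tells us that a maximal span-core with span $\Delta$ exists if and only if the innermost core of $G_\Delta$ has order strictly greater than $\max\{k',k''\}$, so the goal reduces to showing that passing from $G_\Delta$ to $G_\Delta[\widetilde{V}]$ preserves all cores of such orders.

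First, I would establish the core pruning claim: any vertex $u$ lying in a $k$-core of $G_\Delta$ with $k > \max\{k',k''\}$ must belong to $\widetilde{V}$. If $u$ is in such a core $C \subseteq V$, then by Definition~\ref{def:core} we have $\tdeg_\Delta(C,u) \geq k$, and since adding vertices can only increase degrees, $\tdeg_\Delta(V,u) \geq \tdeg_\Delta(C,u) \geq k > \max\{k',k''\}$, placing $u$ in $\widetilde{V}$ by construction.

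Second, I would use this claim to establish a bijection in the regime $k > \max\{k',k''\}$ between $k$-cores of $G_\Delta$ and $k$-cores of $G_\Delta[\widetilde{V}]$. In one direction, any $k$-core of $G_\Delta$ of such order is entirely contained in $\widetilde{V}$ by the pruning claim, and since edges are unchanged on this vertex subset, it remains a valid vertex set satisfying the $k$-core condition in $G_\Delta[\widetilde{V}]$; by maximality it coincides with the $k$-core of $G_\Delta[\widetilde{V}]$. In the other direction, any $k$-core of $G_\Delta[\widetilde{V}]$ obviously satisfies the same degree bound when viewed inside $G_\Delta$, and maximality transfers symmetrically.

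Finally, I would split into the two cases of the statement. If $k^* > \max\{k',k''\}$, the above correspondence guarantees that $C_{k^*}[G_\Delta[\widetilde{V}]]$ is also the innermost core of $G_\Delta$ at order $k^*$, and Lemma~\ref{lemma2} immediately certifies it as a maximal span-core. If instead $k^* \leq \max\{k',k''\}$, I would argue by contradiction: if $G_\Delta$ had any core of order $k > \max\{k',k''\}$, the correspondence would force $G_\Delta[\widetilde{V}]$ to contain a $k$-core as well, contradicting $k^* \leq \max\{k',k''\}$; hence the innermost core of $G_\Delta$ has order at most $\max\{k',k''\}$, and Lemma~\ref{lemma2} rules out a maximal span-core for $\Delta$. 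The most delicate point is verifying the equivalence between the two innermost cores precisely at the threshold: it is important to observe that the pruning is ``safe'' only for orders strictly above $\max\{k',k''\}$, so both directions of the case analysis must refer explicitly to this threshold rather than to the unconditional innermost-core orders of $G_\Delta$ and $G_\Delta[\widetilde{V}]$.
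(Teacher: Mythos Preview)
Your proposal is correct and follows essentially the same approach as the paper: both arguments hinge on the observation that any vertex outside $\widetilde{V}$ has $\tdeg_\Delta(V,u) \leq \max\{k',k''\}$ and therefore cannot belong to any core of $G_\Delta$ of order exceeding $\max\{k',k''\}$, after which Lemma~\ref{lemma2} finishes the job. Your write-up is in fact more thorough than the paper's --- you make explicit the bijection between high-order cores of $G_\Delta$ and of $G_\Delta[\widetilde{V}]$ and handle the two cases separately, whereas the paper's proof is a terse three-sentence sketch that leaves these steps implicit.
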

\begin{proof}
Lemma~\ref{lemma2} states that, to be recognized as a maximal \spancore, the innermost core of $G_{\Delta}$ should have order larger than $\max\{k', k''\}$.
This means that, if the innermost core of $G_{\Delta}$ is a maximal \spancore, all vertices $u \notin \widetilde{V}$ cannot be part of it.
Therefore, $G_{\Delta}$ yields a maximal \spancore only if the innermost core of subgraph  $G_{\Delta}[\widetilde{V}]$ has order $k^* >  \max\{k', k''\}$.
\end{proof}
Lemma~\ref{lemma3} provides the basis of our efficient method for extracting maximal \spancores.
Basically, it states that, to verify whether a certain temporal interval $\Delta = [t_s,t_e]$ yields a maximal \spancore (and, if so, compute it), there is no need to consider the whole graph $G_\Delta$, rather it suffices to start from a smaller subgraph, which is given by all vertices whose temporal degree is larger than the maximum between the orders of the innermost cores of intervals $\Delta' = [t_s\!-\!1,t_e]$ and $\Delta'' = [t_s,t_e\!+\!1]$.
This finding suggests a strategy that is opposite to the one used for computing the overall \spancore decomposition:
a \emph{top-down} strategy that processes temporal intervals starting from the larger ones.
Indeed, in addition to exploiting the result in Lemma~\ref{lemma3}, this way of exploring the temporal-interval space allows us to skip the computation of complete core decompositions of the whole ``singleton-interval'' graphs $\{G_{_{[t,t]}}\}_{t \in T}$, which may easily become a critical bottleneck, as they are the largest ones among the graphs induced by temporal intervals.

\begin{algorithm}[t]
\DontPrintSemicolon
\KwIn{A temporal graph $G=(V,T,\tau)$.}
\KwOut{The set $\imcores$ of all maximal \spancores of $G$.}

$\imcores \leftarrow \emptyset$\;
$\mathcal{K}'[t] \gets 0$, $\forall t \in T$\;

\ForAll{$t_s \in [0, 1, \ldots, t_{max}]$}
{\label{line:imcores:extfor}
	$t^* \leftarrow \max\{ t_e \in [t_s, t_{max}] \mid E_{_{[t_s,t_e]}} \neq \emptyset\}$\; \label{line:imcores:t}
	$k'' \leftarrow 0$\;
	\ForAll{$t_e \in [t^*, t^*\!-\!1, \ldots, t_s]$}
	{\label{line:imcores:intfor} 
		$\Delta \leftarrow [t_s,t_e]$\; \label{line:imcores:delta}
		$lb\gets \max\{\mathcal{K}'[t_e],k''\}$\; \label{line:imcores:lb}
		$V_{lb} \leftarrow \{u \in V \mid \tdeg_{\Delta}(V,u) > lb\}$\; \label{line:imcores:V}
		$E_\Delta[V_{lb}] \gets \{(u,v) \in E_{\Delta} \mid u \in V_{lb}, v \in V_{lb}\}$\; \label{line:imcores:E}
		$C \leftarrow $ \innermost $(V_{lb}, E_{\Delta}[V_{lb}])$\; \label{line:imcores:core}
		$k^* \leftarrow $ order of $C$\; \label{line:imcores:core2}
		\If{$k^* > lb$}
		{\label{line:imcores:updatesolution}
			$\imcores \leftarrow \imcores \cup \{C\}$\; \label{line:imcores:updatesolution2}
		}
		$k'' \gets \max\{k'', k^*\}$; \ $\mathcal{K}'[t_e] \gets \max\{\mathcal{K}'[t_e], k''\}$\; \label{line:imcores:updatek}
	}
}
\caption{\innermosts}\label{alg:imcores}
\end{algorithm}

\spara{The \innermosts\ algorithm.}
Algorithm~\ref{alg:imcores} iterates over all timestamps $t_s \in T$ in \emph{increasing order} (Line~\ref{line:imcores:extfor}), and for each $t_s$ it first finds all the maximal span-cores that have span starting in $t_s$.
This way of proceeding \emph{ensures that a span-core that is recognized as maximal will not be later dominated by another span-core}.
Indeed, an interval $[t_s,t_e]$ can never be contained in another interval $[t_s',t_e']$ with $t_s < t_s'$.
For a given $t_s$, all maximal span-cores are computed as follows.
First, the maximum timestamp $\geq t_s$ such that the corresponding edge set $E_{_{[t_s,t_e]}}$ is not empty is identified as $t^*$ (Line~\ref{line:imcores:t}).
Then, all intervals $\Delta = [t_s, t_e]$ are considered one by one in \emph{decreasing order} of $t_e$ (Lines~\ref{line:imcores:intfor}--\ref{line:imcores:delta}): this again \emph{guarantees that a span-core that is recognized as maximal will not be later dominated by another span-core, as the intervals are processed from the largest to the smallest}.
At each iteration of the internal cycle, the algorithm resorts to Lemma~\ref{lemma3} and computes the lower bound $lb$ on the order of the innermost core of $G_{\Delta}$ to be recognized as maximal, by taking the maximum between $\mathcal{K}'[t_e]$ and $k''$ (Line~\ref{line:imcores:lb}).
$\mathcal{K}'$ is a map that maintains, for every timestamp $t \in [t_s, t^*]$, the order of the innermost core of graph $G_{\Delta'}$, where $\Delta' = [t_s\!-\!1, t]$ (i.e., $\mathcal{K}'[t]$ stores what in Lemmas~\ref{lemma2}--\ref{lemma3} is denoted as $k'$).
Whereas $k''$ stores the order of the innermost core of $G_{\Delta''}$, where $\Delta'' = [t_s, t_e+1]$.
Afterwards, the sets of vertices $V_{lb}$ and of edges $E_{\Delta}[V_{lb}]$ that comply with this lower-bound constraint are built (Lines~\ref{line:imcores:V}--\ref{line:imcores:E}), and the innermost core of the subgraph $(V_{lb}, E_{\Delta}[V_{lb}])$ is extracted (Lines~\ref{line:imcores:core}--\ref{line:imcores:core2}).
Ultimately, based again on Lemma~\ref{lemma3}, such a core is added to the output set of maximal \spancores only if its order is actually larger than $lb$ (Lines~\ref{line:imcores:updatesolution}--\ref{line:imcores:updatesolution2}), and the values of $k''$ and $\mathcal{K}'[t_e]$ are updated (Line~\ref{line:imcores:updatek}).
Specifically, note that the order $k^*$ of core $C$ may in principle be less than $k''$, as $C$ is extracted from a subgraph of $G_{\Delta}$.
If this happens, it means that the actual order of the innermost core of $G_{\Delta}$ is equal to $k''$.
This motivates the update rules (and their order) reported in Line~\ref{line:imcores:updatek}.

\begin{mytheorem}\label{th:correctnessAlg3}
Algorithm~\ref{alg:imcores} is sound and complete for Problem~\ref{pbl:maximal}.
\end{mytheorem}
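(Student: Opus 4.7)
To prove Theorem~\ref{th:correctnessAlg3} I would combine Lemmas~\ref{lemma1}--\ref{lemma3} with a double induction over the outer ($t_s$, ascending) and inner ($t_e$, descending from $t^*$) loops of Algorithm~\ref{alg:imcores}. Denote by $k^*(H)$ the order of the innermost core $C_{k^*}[H]$ of a static graph $H$. By Lemma~\ref{lemma1}, every maximal \spancore has the form $C_{k^*}[G_\Delta]$ for some $\Delta=[t_s,t_e]\sqsubseteq T$, and by Lemma~\ref{lemma2} such a core is maximal if and only if $k^*(G_\Delta)>\max\{k^*(G_{[t_s-1,t_e]}),k^*(G_{[t_s,t_e+1]})\}$. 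The strategy is therefore to show that, for every such $\Delta$, the algorithm visits $\Delta$ exactly once and evaluates on Line~\ref{line:imcores:lb} the exact bound $lb=\max\{k^*(G_{[t_s-1,t_e]}),k^*(G_{[t_s,t_e+1]})\}$ required by Lemma~\ref{lemma3}; once this is in place, Lemma~\ref{lemma3} yields soundness (the test $k^*>lb$ succeeds only for genuine maximal \spancores) and completeness (every maximal \spancore is found when its defining interval is visited).

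First I would argue that the two nested loops together with the choice of $t^*$ enumerate every interval with $E_\Delta\neq\emptyset$ exactly once, while the remaining intervals host no \spancore and contribute only a zero to the bounds, consistently with the initializations $\mathcal{K}'[t]\leftarrow 0$ and $k''\leftarrow 0$. The heart of the argument is the loop invariant that, at the moment Line~\ref{line:imcores:lb} is executed for $[t_s,t_e]$, (i)~$\mathcal{K}'[t_e]=k^*(G_{[t_s-1,t_e]})$ (with the convention that the value is $0$ when $t_s=0$ or the underlying edge set is empty) and (ii)~$\max\{\mathcal{K}'[t_e],k''\}=\max\{k^*(G_{[t_s-1,t_e]}),k^*(G_{[t_s,t_e+1]})\}$. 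The outer base case $t_s=0$ is immediate from the initialization of $\mathcal{K}'$, and the inner base case $t_e=t^*$ holds because $E_{[t_s,t^*+1]}=\emptyset$ forces $k^*(G_{[t_s,t^*+1]})=0=k''$.

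The inductive step on inner iterations requires a case split on $\alpha=k^*(G_\Delta)$, which by Proposition~\ref{prp:conteinment} satisfies $\alpha\geq lb$. If $\alpha>lb$, Lemma~\ref{lemma3} gives $k^*=\alpha$, so both updates on Line~\ref{line:imcores:updatek} faithfully propagate $\alpha$ and the invariant is preserved. The main obstacle I anticipate is the boundary case $\alpha=lb$: here the restricted innermost-core computation may return $k^*<\alpha$, leaving $k''$ strictly below $\alpha$ and carrying this underestimate into the next inner iteration. I would resolve this via a second application of Proposition~\ref{prp:conteinment}: in the subcase in which $lb=\mathcal{K}'[t_e]$, this proposition implies $\mathcal{K}'[t_e-1]=k^*(G_{[t_s-1,t_e-1]})\geq k^*(G_{[t_s-1,t_e]})=\alpha$, so the possibly stale $k''$ is dominated by $\mathcal{K}'[t_e-1]$ inside the next $\max$ on Line~\ref{line:imcores:lb} and the correct bound is still recovered; the symmetric concern for outer iteration $t_s+1$ is taken care of by the extra $\max$ against $\mathcal{K}'[t_e]$ on Line~\ref{line:imcores:updatek}, which promotes the stored order to $\alpha$ before the algorithm moves on.

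Combining the invariant with Lemma~\ref{lemma3} closes the argument: at every visit of $\Delta$, the test $k^*>lb$ succeeds precisely when $C_{k^*}[G_\Delta]$ is a maximal \spancore, so $\imcores$ ends up containing exactly the maximal \spancores of $G$.
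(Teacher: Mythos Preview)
Your proposal is correct and follows the same route as the paper's proof: both rest on Lemmas~\ref{lemma1}--\ref{lemma3} together with the top-down processing order of intervals. The paper's own argument is much terser, simply appealing to Lemma~\ref{lemma3} at each interval without explicitly verifying that $lb$ on Line~\ref{line:imcores:lb} coincides with the bound that lemma requires; your loop-invariant argument, and in particular your handling of the boundary case $\alpha=lb$ (which the paper only alludes to in the informal discussion surrounding Line~\ref{line:imcores:updatek}), supplies the bookkeeping the paper leaves implicit.
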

\begin{proof}
The algorithm processes all temporal intervals $\Delta \sqsubseteq T$ yielding a non-empty edge set $E_\Delta$, in an order such that no interval is processed before one of its superintervals: this guarantees that a span-core recognized as maximal will not be dominated by another span-core found later on. For every $\Delta$ it extracts a core $C$ that is used as a proxy of the innermost core of graph $G_{\Delta}$.
$C$ is added to the output set $\imcores$
only if Lemma~\ref{lemma3} recognizes it as a maximal \spancore, otherwise it is discarded.
This proves the soundness of the algorithm.
Completeness follows from Lemma~\ref{lemma1}, which states that to extract all maximal \spancores it suffices to focus on the innermost cores of graphs $\{G_{\Delta} \mid \Delta \sqsubseteq T\}$, and Lemma~\ref{lemma3} again, which states the condition for a proxy core $C$ to be safely  discarded because it is a non-maximal \spancore.
\end{proof}

\spara{Discussion.}
The worst-case time complexity of Algorithm~\ref{alg:imcores} is the same as the algorithm for computing the overall \spancore decomposition, i.e., $\bigO(|T|^2 \times |E|)$.
It is worth mentioning that it is not possible to do better than this, as the output itself is potentially quadratic in $|T|$.
However, as we will show in Section~\ref{sec:experiments}, the proposed algorithm is in practice much more efficient than computing the overall \spancore decomposition and filtering out the non-maximal span-cores as, in this case, we avoid the visit of portions of the \spancore search space and the computations are run over subgraphs of reduced dimensions.

To conclude, we discuss how the crucial operation of building the subgraph $(V_{lb}, E_{\Delta}[V_{lb}])$ may be carried out efficiently in terms of both time and space.
Consider a fixed timestamp $t_s \in [0, \ldots, t_{max}]$.
The following reasoning holds for every $t_s$.
Let $E^-(t_e) = E_{_{[t_s,t_e]}} \setminus E_{_{[t_s,t_e\!+1]}}$ be the set of edges that are in $E_{_{[t_s,t_e]}}$ but not in $E_{_{[t_s,t_e\!+1]}}$, for
 $t_e \in [t_s, \ldots, t^*\!-1]$.
As a first general step, for each $t_s$, we compute and store \emph{all} edge sets $\{E^-(t_e)\}_{t_e \in [t_s, t^*\!-1]}$.
These operations can be accomplished in $\mathcal{O}(|T| \times |E|)$ overall time, because every $E^-(t_e\!)$ can be computed incrementally from $E_{_{[t_s,t_e]}}$ as $E^-(t_e) = \{(u,v) \in E_{_{[t_s,t_e]}} \mid \tau(u,v,t_e\!+\!1) = 0\}$.
Moreover, for any timestamp $t_e$, we keep a map $\mathcal{D}$ storing all vertices of $G_{_{[t_s,t_e]}}$ organized by degree.
Specifically, the set $\mathcal{D}[k]$ contains all vertices having degree $> k$ in  $G_{_{[t_s,t_e]}}$.
Every vertex in $\mathcal{D}$  is thus replicated a number of times equal to its degree.
This way, the overall space taken by $\mathcal{D}$ is $\bigO(|E|)$, i.e., as much space as $G$.
$\mathcal{D}$ is initialized as empty (when $t_e = t^*$) and repeatedly augmented as $t_e$ decreases, by a linear scan of the various $E^-(t_e)$.
The overall filling of $\mathcal{D}$ (for all $t_e$) therefore takes $\mathcal{O}(|T| \times |E|)$ time.
Then, the desired $V_{lb}$ can be computed in constant time simply as $V_{lb} = \mathcal{D}[lb]$.

As for $E_{\Delta}[V_{lb}]$, for any $t_e$, we first reconstruct $E_{_{[t_s,t_e]}}$ as $E_{_{[t_s,t_e+\!1]}} \cup E^-(t_e)$, having previously computed $E_{_{[t_s,t_e+\!1]}}$.
Note that storing all $E^-(t_e)$ takes $\bigO(|E|)$ space.
That is why we store all $E^-(t_e)$ and reconstruct $E_{_{[t_s,t_e]}}$ afterward (instead of storing the latter, which would take $\bigO(|T| \times |E|)$ space).
$E_{\Delta}[V_{lb}]$ is ultimately derived by a linear scan of $E_{_{[t_s,t_e]}}$, taking all edges in $E_{_{[t_s,t_e]}}$ having both endpoints in $V_{lb}$.
This way, the step of building $E_{\Delta}[V_{lb}]$  for all $t_e$ takes again $\mathcal{O}(|T| \times |E|)$ overall time.


\begin{myexample}
We report here a run-through example of the execution of \innermosts\ (Algorithm \ref{alg:maximaltemporalcs}) over the search space of a temporal graph having $T = [0,2]$ (the same shown in Figure~\ref{fig:running_spancores}(a)).
\innermosts\ starts by identifying the \spancore of highest order in the largest possible temporal interval $\Delta$, i.e., $\Delta = T$ (Figure~\ref{fig:running_maximal}(a)).
Such a \spancore is guaranteed to be maximal, since the \spancore of highest order with span $T$ cannot be dominated in terms of span by any other \spancore.
The algorithm then processes interval $[0,1]$ (Figure~\ref{fig:running_maximal}(b)): here $lb = 2$, since the only constraint derives from the identification of core $C_{2,[0,2]}$ as maximal, therefore core $C_{3,[0,1]}$ is recognized as maximal.
Next, the algorithm searches for the last possible maximal \spancore having span $\Delta = [t_s,t_e]$ such that $t_s = 0$, i.e., $\Delta = [0,0]$ (Figure~\ref{fig:running_maximal}(c)).
Core $C_{3,[0,0]}$ is computed, but discarded from the solution maximal, because it has order equal to the lower bound $lb = 3$, derived from core $C_{3,[0,1]}$.
The algorithm proceeds in a similar way, by finding a maximal \spancore in all the remaining intervals, i.e., $[1,2]$, $[1,1]$, and $[2,2]$ (Figures~\ref{fig:running_maximal}(d)-(e)-(f)).
It is important to note that in all such cases, the lower bound $lb$ for the existence of a maximal \spancore in a given temporal interval accounts for two factors.
For example, consider interval $[1,1]$.
The innermost core in $G_{[1,1]}$ is a maximal \spancore if it has order greater than both cores $C_{3,[0,1]}$ and $C_{3,[1,2]}$, that is $lb = 3$.
\end{myexample}
\begin{figure}[t]
\begin{tabular}{ccc}
\includegraphics[width=0.275\columnwidth]{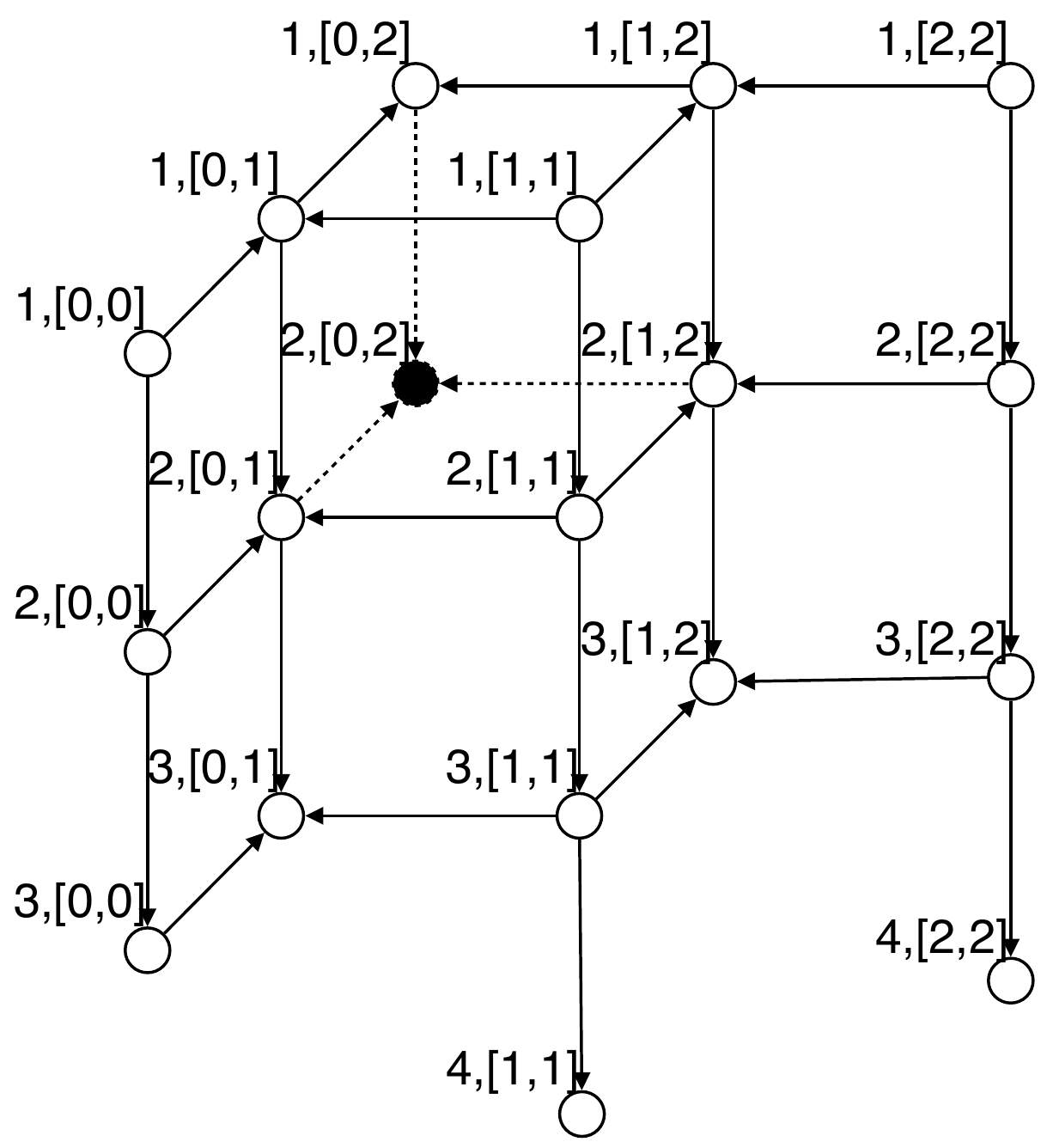} \hspace{0.5cm} & \includegraphics[width=0.275\columnwidth]{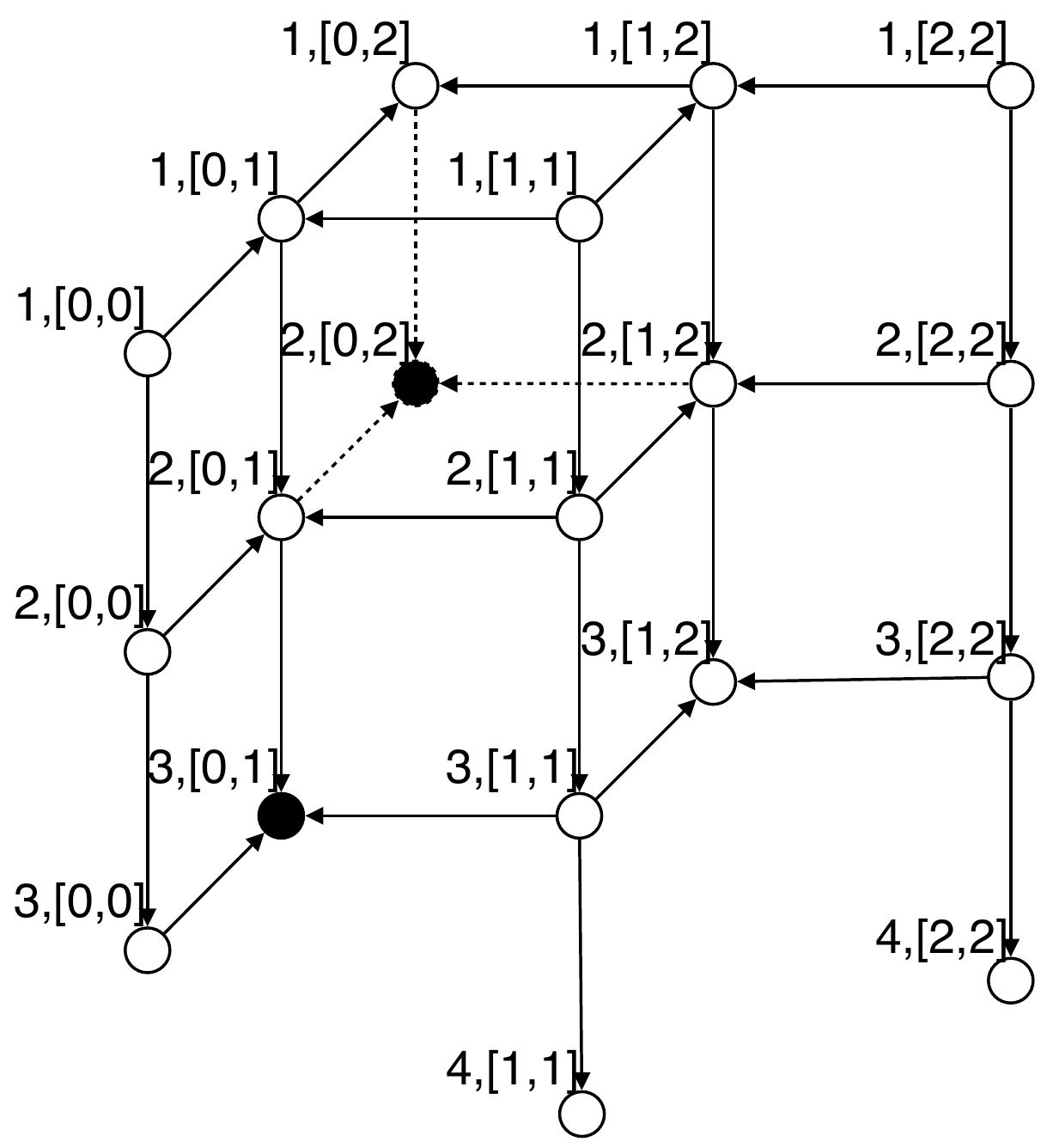} & \hspace{0.5cm} \includegraphics[width=0.275\columnwidth]{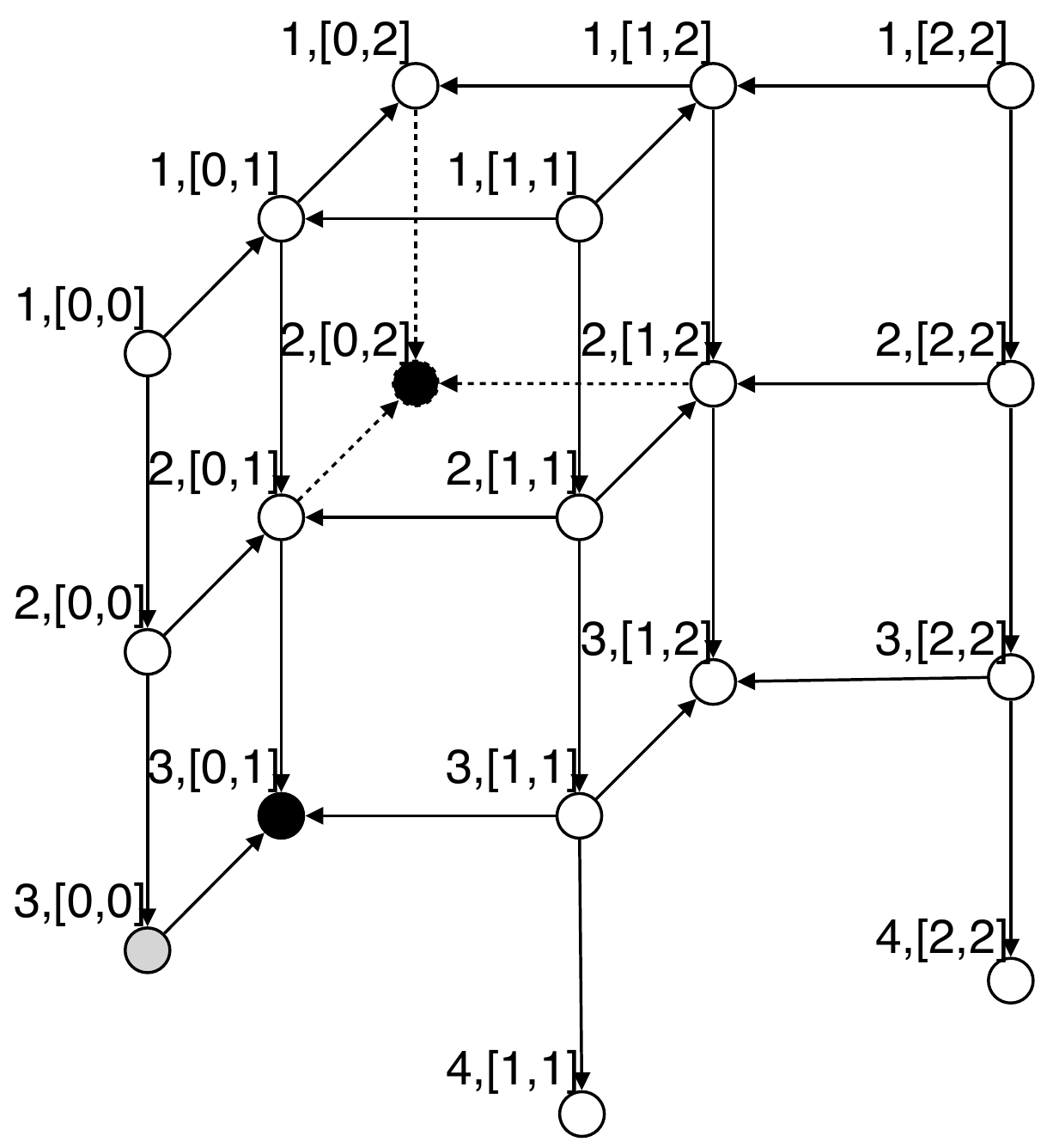}\\
(a) $[0,2]$ \hspace{0.5cm} & (b) $[0,1]$ & \hspace{0.5cm} (c) $[0,0]$\\
\multicolumn{3}{c}{}\vspace{0.25cm}\\
\includegraphics[width=0.275\columnwidth]{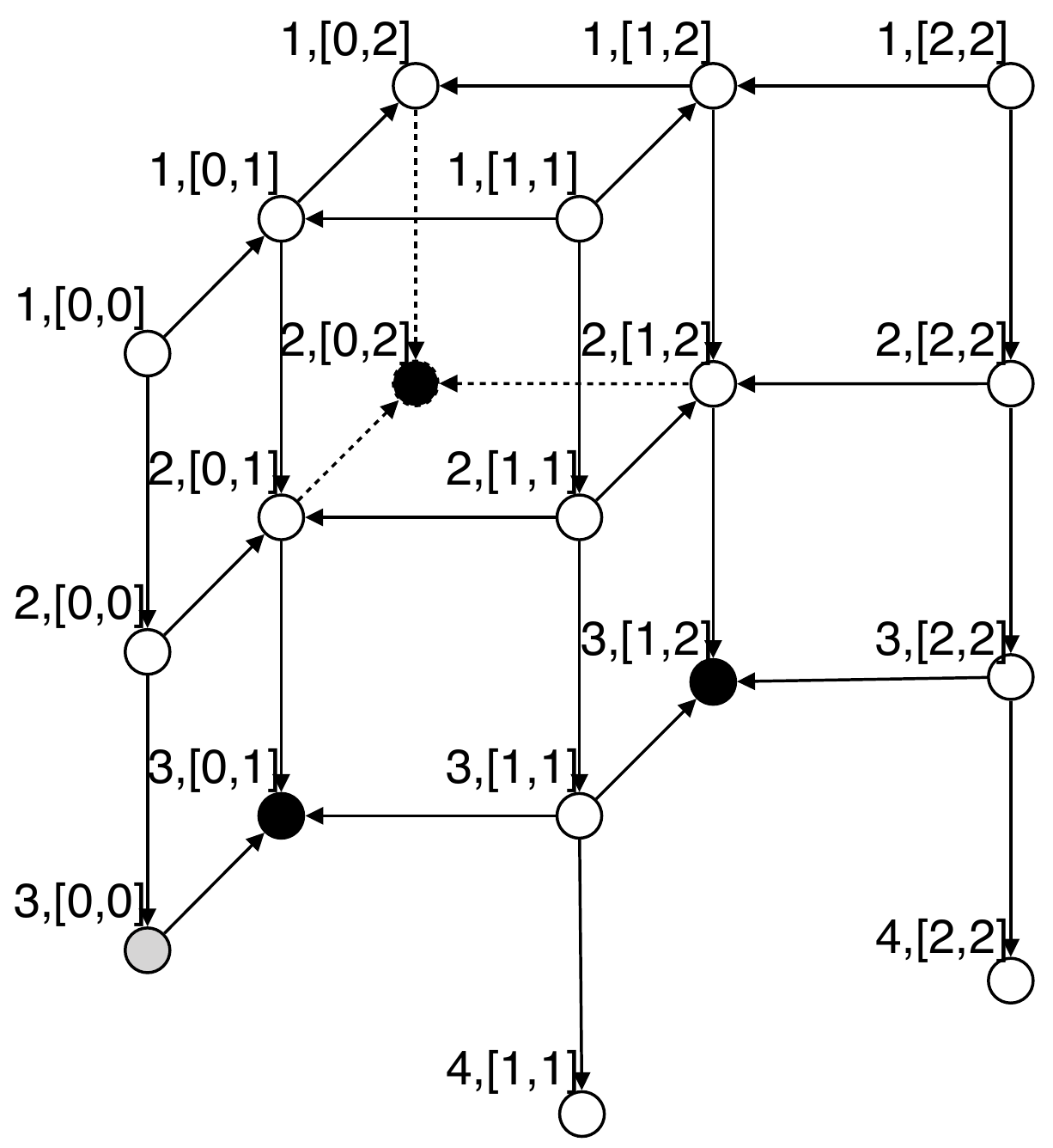} \hspace{0.5cm} & \includegraphics[width=0.275\columnwidth]{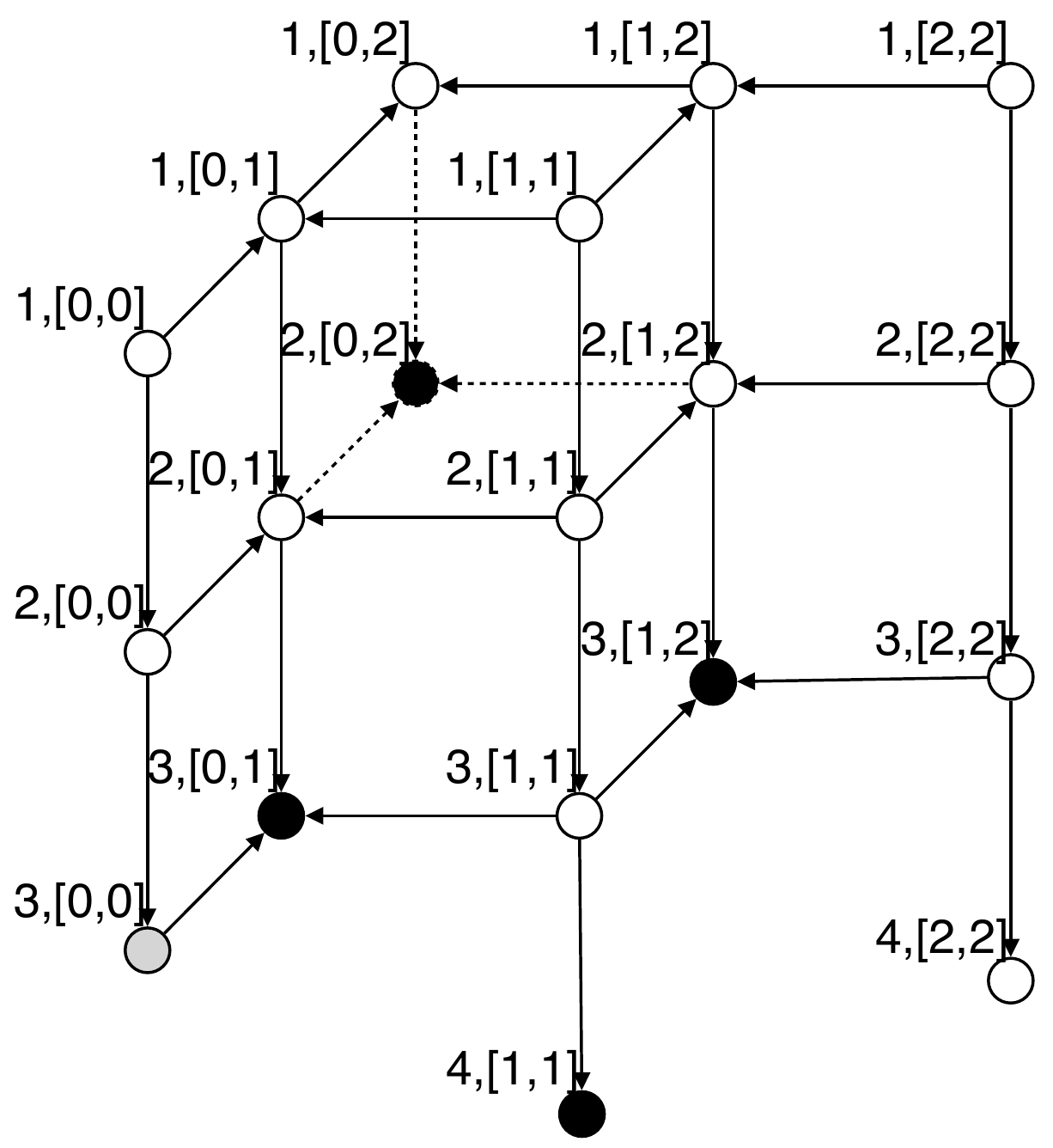} & \hspace{0.5cm} \includegraphics[width=0.275\columnwidth]{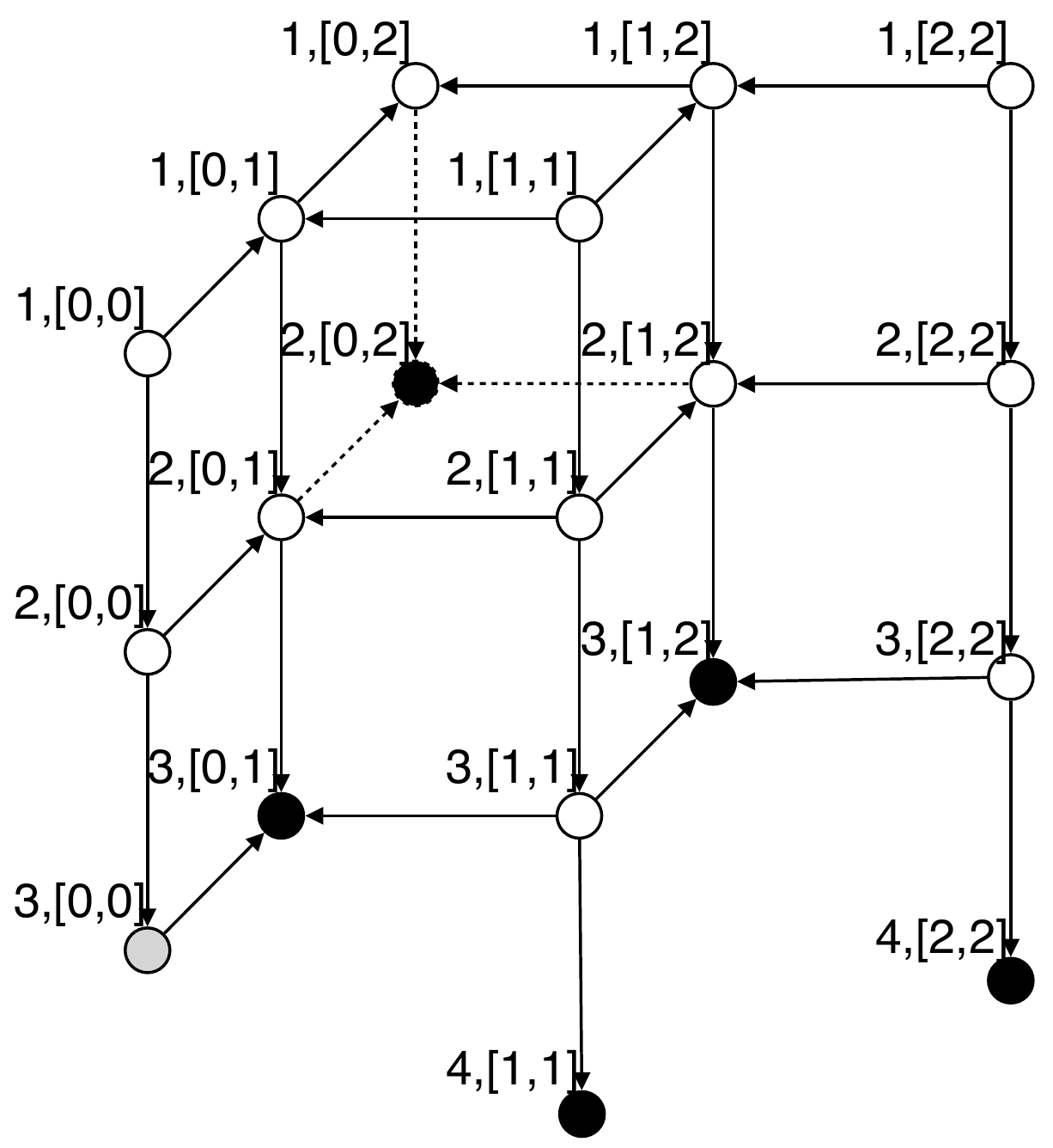}\\
(d) $[1,2]$ \hspace{0.5cm} & (e) $[1,1]$ & \hspace{0.5cm} (f) $[2,2]$\\
\end{tabular}
\caption{\label{fig:running_maximal} Run-through example of the execution of \innermosts\ (Algorithm \ref{alg:maximaltemporalcs}) over the search space of a temporal graph having $T = [0,2]$ (same as Figure~\ref{fig:running_spancores}(a)).
Full nodes represent computed \spancores: \spancores in black are recognized as maximal, while those in gray are discarded from the set of the maximal \spancores.}
\end{figure} 

\section{Temporal community search}
\label{sec:community_search}

Community search in static graphs aims at finding a dense subgraph (community) containing a set of input query vertices~\cite{fang2019survey, HuangLX17}.
In the temporal setting it is very likely that the communities spanning the query vertices change over time.
To be more precise, it may happen that a certain subgraph $S$ is a well-representative community for the given query vertices $Q$, but only for a certain time interval $\Delta$. Instead, for another time interval $\Delta'$, a relevant community for $Q$ might correspond to a completely different subgraph $S'$.
For this reason, we formulate community search on temporal networks  as the problem of finding $h$ subgraphs (with $h > 0$ being an input parameter) containing the query vertices, together with their temporal span, such that the sum of the density of those subgraphs is maximized and the union of their temporal spans corresponds to the whole input temporal domain.
{
Among the many densities proposed in the literature, here we follow the bulk of the literature on community search, and adopt the minimum-degree density~\cite{fang2019survey,HuangLX17}.
In fact, as well-discussed, among others, by  Sozio~and~Gionis~\cite{Sozio} in their seminal work, unlike other density notions, including the popular average degree, the minimum-degree density has the capability of mitigating the so-called ``\emph{free-rider}'' effect, i.e., the fact that (large) subgraphs may be arbitrarily added to a community-search solution to artificially increase the objective-function value, and thus lead to unintuitive yet unnecessarily large output solutions.
}
Formally, the problem we study in this work is:

\begin{problem}[\temporalcs]\label{prob:temporalcs}
Given a temporal graph $G = (V, T, \tau)$, a set $Q \subseteq V$ of query vertices, and a positive integer $h \in \mathbb{N}^+$,
find a set $\{\langle S_i, \Delta_i \rangle \}_{i=1}^h$ of $h$ pairs  such that ($i$) $\forall 1 \leq i \leq h : Q \subseteq S_i \subseteq V$, \ ($ii$) $\bigcup_{1 \leq i \leq h} \Delta_i = T$, \ and ($iii$) the following is maximized:
\begin{equation}\label{eq:temporalcs}
\sum_{i=1}^h \min_{u \in S_i}  \tdeg_{\Delta_i}(S_i,u).
\end{equation}
\end{problem}

The input integer $h$ is a user-defined parameter that gives the analyst the flexibility of requiring a specific number of output temporal communities, which might vary from application to application.


\subsection{Connection with \textsc{Sequence Segmentation}}

Here we provide some theoretical insights into the \temporalcs problem.
The main result we provide at the end of this subsection is an interesting connection with the well-established \seqseg problem~\cite{bellman61approximation}.
As shown in the next subsections, such a result forms the basis for algorithmic design.

Let us first consider a single-interval variant of Problem~\ref{prob:temporalcs}: for a fixed temporal interval $\Delta$, find a subgraph containing the input set $Q$ of query vertices that maximizes  the minimum temporal degree within $\Delta$.
Formally:

\begin{problem}[\singletemporalcs]\label{prob:temporalcs_subproblem}
Given a temporal graph $G = (V,T,\tau)$, a set $Q \subseteq V$ of query vertices, and an interval $\Delta \sqsubseteq T$, find
$$
S^* = \operatorname{argmax}_{Q \subseteq S \subseteq V} \min_{u \in S} \tdeg_{\Delta}(S,u).
$$
\end{problem}

It is easy to see that solving Problem~\ref{prob:temporalcs_subproblem}  corresponds to solving minimum-degree-based community search on graph $G_{\Delta}$.
Therefore, a solution to Problem~\ref{prob:temporalcs_subproblem} can straightforwardly be computed by applying a standard result on minimum-degree-based community search, which states that the highest-order core containing all query vertices is a solution to that problem~\cite{BarbieriBGG15}.
This finding is formalized next.

%

\begin{mydefinition}[$(Q, \Delta)$-highest-order-\spancore]\label{def:deltaQhighestordercore}
Given a temporal graph $G = (V,T,\tau)$, a set $Q \subseteq V$  of query vertices, and an interval $\Delta \sqsubseteq T$,
the $(Q, \Delta)$-\emph{highest-order-\spancore} of $G$, denoted $\Qdeltahighest$, is defined as the highest-order \spancore among all \spancores of $G$ with temporal span $\Delta$ and containing all query vertices in $Q$.
Let also $\Qdeltascore$ denote the order of $\Qdeltahighest$.
\end{mydefinition}

\begin{fact}\label{fact1}
Given a temporal graph $G = (V,T,\tau)$, a set $Q \subseteq V$  of query vertices, and an interval $\Delta \sqsubseteq T$, the $(Q, \Delta)$-highest-order-\spancore of $G$ is a solution to Problem~\ref{prob:temporalcs_subproblem} on input $\langle G, Q, \Delta \rangle$.
\end{fact}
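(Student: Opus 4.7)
The plan is to reduce the claim to a classical fact about $k$-cores in static graphs. By Observation~\ref{observation1}, for the fixed interval $\Delta$ the family of \spancores of $G$ with span $\Delta$ coincides with the ordinary core decomposition of the simple graph $G_\Delta = (V, E_\Delta)$, and the temporal degree $\tdeg_\Delta(S,u)$ is exactly the static degree of $u$ in the subgraph of $G_\Delta$ induced by $S$. Hence Problem~\ref{prob:temporalcs_subproblem} is literally an instance of minimum-degree community search on the static graph $G_\Delta$, for which the known answer~\cite{BarbieriBGG15,Sozio} is precisely the highest-order core that contains the query set~$Q$.

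The short self-contained argument I would write out proceeds in two directions. First, $\Qdeltahighest$ is itself feasible because $Q \subseteq \Qdeltahighest$ by construction, and by Definition~\ref{def:core} each of its vertices satisfies $\tdeg_\Delta(\Qdeltahighest,u) \geq \Qdeltascore$, so its objective value is at least $\Qdeltascore$. Second, for any candidate $S$ with $Q \subseteq S \subseteq V$ set $k = \min_{u \in S}\tdeg_\Delta(S,u)$; then every vertex of $S$ has temporal degree at least $k$ inside $S$, and by the maximality clause of Definition~\ref{def:core} this forces $S \subseteq C_{k,\Delta}$. Since $Q \subseteq S$, the \spancore $C_{k,\Delta}$ contains all query vertices, so by the definition of $(Q,\Delta)$-highest-order-\spancore we obtain $\Qdeltascore \geq k$. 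Combining the two bounds shows that $\Qdeltahighest$ attains the maximum of the objective in Problem~\ref{prob:temporalcs_subproblem}.

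I do not foresee any real obstacle here: once the equivalence with classical core decomposition is made explicit through Observation~\ref{observation1}, the fact is a direct translation of a well-known property of $k$-cores into the temporal vocabulary introduced in Section~\ref{sec:problem}. The only subtlety worth flagging in the write-up is the invocation of the \emph{maximality} clause of the \spancore definition to pass from the pointwise degree lower bound on $S$ to the set-level containment $S \subseteq C_{k,\Delta}$, and the observation that all degrees must be computed with respect to the same interval $\Delta$; both are immediate from the definitions and require no additional machinery.
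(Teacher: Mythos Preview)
Your proposal is correct and follows exactly the paper's approach: the paper does not give a standalone proof of Fact~\ref{fact1} but simply observes that Problem~\ref{prob:temporalcs_subproblem} is minimum-degree community search on the static graph $G_\Delta$ and invokes the known result of~\cite{BarbieriBGG15}. Your additional two-direction argument (feasibility of $\Qdeltahighest$ with value $\Qdeltascore$, and $S \subseteq C_{k,\Delta}$ via maximality for any competitor $S$) is a sound self-contained justification that goes beyond what the paper writes out.
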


Note that Problem~\ref{prob:temporalcs_subproblem} may have multiple solutions: $\Qdeltahighest$ is only one of those possibly many ones.
$\Qdeltahighest$ can be computed by running a core decomposition  on (static) graph $G_{\Delta}$, and stopping it when the first core that does not contain all query vertices in $Q$ has been encountered.
Therefore, Problem~\ref{prob:temporalcs_subproblem} can be solved in $\bigO(|\Delta| \times |E|)$ time.

In light of the above findings, an alternative yet equivalent way of formulating our \temporalcs problem is to ask for a \emph{segmentation} (i.e., a partition) of the time domain $T$ into a set $\{\Delta_i\}_{i=1}^h$ of $h$ intervals so as to maximize the sum $\sum_{i = 1}^h  v^*_{Q, \Delta_i}$ of the orders of the $(Q, \Delta)$-highest-order-\spancores of those identified intervals.
Once such an optimal segmentation of $T$ has been computed, the ultimate $\{\langle S_i, \Delta_i \rangle \}_{i=1}^h$ pairs are derived by simply setting $S_i = C^*_{Q,\Delta_i}$, $\forall 1 \leq i \leq h$.
Formally:

\begin{problem}[Alternative formulation of Problem~\ref{prob:temporalcs}]\label{prob:alternativetemporalcs}
Given a temporal graph $G = (V, T, \tau)$, a set $Q \subseteq V$ of query vertices, and a positive integer $h \in \mathbb{N}^+$,
find a set $\{\langle S_i, \Delta_i \rangle \}_{i=1}^h$ of $h$ pairs  such that
($i$) $\forall 1 \leq i \leq h : S_i = C^*_{Q,\Delta_i}$,
\ ($ii$) $\{\Delta_i\}_{i=1}^h$ is a partition of $T$,
\ and ($iii$) the following is maximized:
\begin{equation}\label{eq:alternativetemporalcs}
\sum_{i=1}^h   v^*_{Q, \Delta_i}.
\end{equation}

\end{problem}

Correspondence between Problem~\ref{prob:temporalcs} and Problem~\ref{prob:alternativetemporalcs} easily follows from Fact~\ref{fact1} and from the observation that for any feasible solution $\{\langle S_i, \Delta_i \rangle \}_{i=1}^h$ to Problem~\ref{prob:temporalcs} with overlapping intervals, there exists an overlapping-interval-free feasible solution with not smaller objective-function value.
To see the latter, for any two overlapping intervals $\Delta_i$ and $\Delta_j$, simply replace one of the two intervals, say $\Delta_i$, with $\Delta'_i = \Delta_i \setminus (\Delta_i \cap \Delta_j)$. As $\Delta_i' \sqsubseteq \Delta_i$, it holds that $v^*_{Q, \Delta'_i} \geq v^*_{Q, \Delta_i}$, therefore the resulting overlapping-interval-free solution will have objective-function value greater than or equal to the objective-function value of the starting solution with overlapping intervals.

Thanks to the reformulation in Problem~\ref{prob:alternativetemporalcs}, it is immediate to observe that our \temporalcs problem is an instance of the well-established \textsc{Sequence Segmentation} problem, which asks for partitioning a sequence of  numerical values into $b$ segments so as to minimize the sum of the penalties (according to some penalty function) on each identified segment~\cite{bellman61approximation}:

\begin{problem}[\seqseg~\cite{bellman61approximation}]\label{prob:sequencesegmentation}
Given a sequence $X = (x_0, x_1, \ldots, x_{max})$ of numerical values, and a function $p: \{Y\}_{Y \sqsubseteq X} \rightarrow \mathbb{R}$ that assigns a penalty score to every subsequence $Y$ of $X$,
partition $X$ into a set $\{X_i\}_{i=1}^b$ of $b$ subsequences such that $\sum_{i=1}^b p(X_i)$ is minimized.
\end{problem}

\begin{fact}\label{fact2}
\temporalcs (Problem~\ref{prob:temporalcs}) on input $\langle G = (V,T,\tau), Q, h \rangle$ is an instance of \seqseg (Problem~\ref{prob:sequencesegmentation}) with $X = T$, \ $b = h$, \  and $\forall \Delta \sqsubseteq T : p(\Delta) = -v^*_{Q,\Delta}$.
\end{fact}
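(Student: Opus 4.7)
The plan is to leverage the reformulation given by Problem~\ref{prob:alternativetemporalcs} and then exhibit a direct syntactic correspondence between it and the \seqseg problem under the substitution claimed in the statement. Since the equivalence between Problem~\ref{prob:temporalcs} and Problem~\ref{prob:alternativetemporalcs} has already been established in the paragraph following Problem~\ref{prob:alternativetemporalcs} (by Fact~\ref{fact1} together with the overlap-elimination argument that replaces any $\Delta_i$ by $\Delta_i\setminus(\Delta_i\cap\Delta_j)$ without decreasing the objective), it suffices to prove that Problem~\ref{prob:alternativetemporalcs} is literally an instance of Problem~\ref{prob:sequencesegmentation} with the stated parameters.

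To carry out the correspondence, I would first identify objects: interpret the temporal domain $T=[0,1,\dots,t_{max}]$ as the sequence $X=(x_0,x_1,\dots,x_{t_{max}})$ (the elements play no role beyond being indexed in order), so that contiguous subsequences $Y\sqsubseteq X$ are in one-to-one correspondence with temporal intervals $\Delta\sqsubseteq T$. Under this identification, a partition $\{X_i\}_{i=1}^{b}$ of $X$ into $b$ contiguous subsequences is exactly a partition $\{\Delta_i\}_{i=1}^{h}$ of $T$ into $h$ intervals when we set $b=h$. Next, I would set the penalty function $p(\Delta)=-v^*_{Q,\Delta}$, where $v^*_{Q,\Delta}$ is the well-defined quantity from Definition~\ref{def:deltaQhighestordercore} (taking $v^*_{Q,\Delta}=0$ whenever no \spancore on $\Delta$ contains $Q$, so that $p$ is total). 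With these choices, the \seqseg objective becomes
\[
\min_{\{\Delta_i\}_{i=1}^{h}}\sum_{i=1}^{h} p(\Delta_i)=\min_{\{\Delta_i\}_{i=1}^{h}}\sum_{i=1}^{h}\bigl(-v^*_{Q,\Delta_i}\bigr)=-\max_{\{\Delta_i\}_{i=1}^{h}}\sum_{i=1}^{h} v^*_{Q,\Delta_i},
\]
which coincides, up to sign, with the objective of Problem~\ref{prob:alternativetemporalcs}. Hence any optimal segmentation of $X$ under $p$ yields an optimal partition of $T$ for Problem~\ref{prob:alternativetemporalcs}, and by Fact~\ref{fact1} the pairs $\langle C^*_{Q,\Delta_i},\Delta_i\rangle_{i=1}^{h}$ constitute an optimal solution to Problem~\ref{prob:temporalcs}.

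The main (and essentially only) obstacle is to make sure the two notions of ``partition/segmentation'' match exactly: \seqseg segments are contiguous subsequences that together cover $X$ without overlap, while Problem~\ref{prob:alternativetemporalcs} asks for a partition of $T$ into intervals. This is resolved by the observation that the reduction from Problem~\ref{prob:temporalcs} to Problem~\ref{prob:alternativetemporalcs} already produces overlap-free covering intervals whose union is $T$, so the two notions coincide. Everything else is a bookkeeping verification that the sign flip on $v^*_{Q,\Delta}$ turns the maximization in Problem~\ref{prob:alternativetemporalcs} into the minimization in Problem~\ref{prob:sequencesegmentation}, completing the reduction.
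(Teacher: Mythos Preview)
Your proposal is correct and mirrors the paper's (implicit) argument: the paper does not give a separate proof of Fact~\ref{fact2}, relying instead on the already-established equivalence between Problem~\ref{prob:temporalcs} and Problem~\ref{prob:alternativetemporalcs} (via Fact~\ref{fact1} and the overlap-elimination step), after which the correspondence with \seqseg under $X=T$, $b=h$, $p(\Delta)=-v^*_{Q,\Delta}$ is immediate. You have simply spelled out that immediate correspondence in detail, including the sign-flip and the identification of partitions of $T$ with segmentations of $X$.
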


In the following two subsections we show how to exploit the result in Fact~\ref{fact2} (and a further important finding about maximal \spancore{s}) to design efficient algorithms for our  \temporalcs problem.


\begin{algorithm}[t]
\DontPrintSemicolon
\newcommand\commentfont[1]{\small\ttfamily{#1}}
\SetCommentSty{commentfont}

\KwIn{A temporal graph $G = (V,E,T)$, a set $Q \subseteq V$ of query vertices, an integer $h \in \mathbb{N}^+$.}
\KwOut{A set $\{\langle S_i, \Delta_i \rangle\}_{i=1}^h$, where $Q \subseteq S_i \subseteq V$, $\forall 1 \leq i \leq h$, and $\{\Delta_i\}_{i=1}^h$ is a partition of $T$.}

\vspace{1mm}
\tcc{Initialization}
Compute $v^*_{Q,\Delta}$ and $C^*_{Q,\Delta}$, $\forall \Delta \sqsubseteq T$, via $Q$-constrained \spancore decomposition\;
$\mathbf{P} \gets$ an empty  $(|T| \times h)$-dimensional matrix \tcp*{Penalty matrix}
$\mathbf{R} \gets$ an empty $(|T| \times h)$-dimensional matrix \tcp*{Reconstruction matrix}
\ForAll{$t \in T$}
{
	$\mathbf{P}[t,0] \gets -v^*_{Q, [0,t]}$\;
	$\mathbf{R}[t,0] \gets 0$\;
}

\tcc{Dynamic-programming step}
\ForAll{$t \in T$}
{
	\ForAll{$i \in [1, h)$}
	{
		$\mathbf{P}[t,i] \gets \min_{\ell \in [0, t]} \mathbf{P}[\ell,i-1] - v^*_{Q,[\ell+1,t]}$\;
		$\mathbf{R}[t,i] \gets \operatorname{argmin}_{\ell \in [0, t]} \mathbf{P}[\ell,i-1] - v^*_{Q,[\ell+1,t]}$\;
	}
}

\tcc{Reconstruction of the solution}
$ub \gets t_{max}$\;
\ForAll{$i \in (h, 0]$}
{
	$lb \gets \mathbf{R}[ub,i]$\;
	$\Delta_i \gets [lb, ub]$\;
	$ub \gets lb - 1$\;
}
\ForAll{$i \in (h, 0]$}
{
	$S_i \gets C^*_{Q, \Delta_i}$
}

\caption{\cs}\label{alg:temporalcsnaive}
\end{algorithm}

\subsection{A basic algorithm (based on all span-cores)}\label{sec:alg_naive_cs}

\seqseg can be solved in \mbox{$\bigO(|X|^2 \times h + \tau_p)$} time via dynamic programming~\cite{bellman61approximation}, where $\tau_p$ is the overall time spent for computing the penalty score of all subsequences of the input sequence $X$ (according to the given penalty function $p$).
Thanks to the connection shown in Fact~\ref{fact2}, the dynamic-programming algorithm for \seqseg can be easily adapted to solve \temporalcs as well.
The pseudocode of this algorithm -- termed \cs\ -- is reported as Algorithm~\ref{alg:temporalcsnaive}, and described next.

\begin{algorithm}[t]
\DontPrintSemicolon
\newcommand\commentfont[1]{\small\ttfamily{#1}}
\SetCommentSty{commentfont}

\KwIn{A temporal graph $G = (V,E,T)$, a set $Q \subseteq V$ of query vertices, an integer $h \in \mathbb{N}^+$.}
\KwOut{A set $\{\langle S_i, \Delta_i \rangle\}_{i=1}^h$, where $Q \subseteq S_i \subseteq V$, $\forall 1 \leq i \leq h$, and $\{\Delta_i\}_{i=1}^h$ is a partition of $T$.}

\tcc{Identification of $T^*$}

Compute the set $\mathbf{C}_M(Q)$ of $Q$-constrained maximal \spancores of $G$ \label{line:maximaltemporalcs:cd}\;
$\mathbf{D} \leftarrow \{\Delta \sqsubseteq T \mid C_{k,\Delta} \in \mathbf{C}_M(Q) \}$\;
$T_{\mathbf{D}} \leftarrow \bigcup_{\Delta \in \mathbf{D}} \Delta$; \ $T_{\mathbf{D}}^+ \leftarrow \{\min\{t_e \!+\! 1, t_{max}\} \mid [t_s,t_e] \in \mathbf{D}\}$; \ $T_{\mathbf{D}}^- \leftarrow \{\max\{t_s \!-\! 1, 0\} \mid [t_s,t_e] \in \mathbf{D}\}$\;
$T_{sup} \leftarrow \{t_i \in T \setminus (T_{\mathbf{D}} \cup T_{\mathbf{D}}^- \cup T_{\mathbf{D}}^+ \cup \{t_{max}\}) \mid i \in [1, h + 1 - |T_{\mathbf{D}} \cup T_{\mathbf{D}}^- \cup T_{\mathbf{D}}^+ \cup \{t_{max}\}|]\}$\;
$T^* \leftarrow T_{\mathbf{D}} \ \cup \ T_{\mathbf{D}}^+ \ \cup \ T_{\mathbf{D}}^-  \ \cup \ \{t_{max} \} \ \cup \ T_{sup}$\;

\tcc{Initialization}
Compute $v^*_{Q,\Delta}$, $\forall \Delta \sqsubseteq T$\label{line:maximaltemporalcs:m}\;
$\mathbf{M} \leftarrow $ mapping function $[0, |T^*|) \rightarrow T^*$\;
$\mathbf{P} \gets$ an empty  $(|T^*| \times h)$-dimensional matrix \tcp*{Penalty matrix}
$\mathbf{R} \gets$ an empty $(|T^*| \times h)$-dimensional matrix \tcp*{Reconstruction matrix}

\ForAll{$r \in [0, |T^*|)$}
{
	$\mathbf{P}[r,0] \gets -v^*_{Q, [0,\mathbf{M}[r]]}$\;
	$\mathbf{R}[r,0] \gets 0$\;
}

\tcc{Dynamic-programming step}
\ForAll{$r \in [0, |T^*|)$}
{
	\ForAll{$i \in [1, h)$}
	{
		$\mathbf{P}[r,i] \gets \min_{\ell \in [0,r]} \mathbf{P}[\ell,i-1] - v^*_{Q, [\mathbf{M}[\ell+1],\mathbf{M}[r]]}$\;
		$\mathbf{R}[r,i] \gets \operatorname{argmin}_{\ell \in [0,r]} \mathbf{P}[\ell,i-1] - v^*_{Q, [\mathbf{M}[\ell+1],\mathbf{M}[r]]}$\;
	}
}

\tcc{Reconstruction of the solution}
$ub \gets |T^*| - 1$\;
\ForAll{$i \in (h, 0]$}
{
	$lb \gets \mathbf{R}[ub,i]$\;
	$\Delta_i \gets [\mathbf{M}[lb], \mathbf{M}[ub]]$\;
	$ub \gets lb - 1$\;
}
\ForAll{$i \in (h, 0]$}
{
	$S_i \gets C^*_{Q, \Delta_i}$
}

\caption{\mcs}\label{alg:maximaltemporalcs}
\end{algorithm}

The \cs algorithm makes use of two $(|T| \times h)$-dimensional matrices, i.e., $\mathbf{P}$ and $\mathbf{R}$.
Matrix $\mathbf{P}$ represents the \emph{penalty matrix}.
It contains, $\forall t \in T$, $\forall i \in [0, h)$, the minimum cost of segmenting the sequence corresponding to the first $t$ timestamps of $T$ into $i+1$ segments.
As a result, $\mathbf{P}[t_{max}, h-1]$ contains the objective-function value of the ultimate optimal solution to Problem~\ref{prob:alternativetemporalcs}.
Matrix $\mathbf{R}$ is the \emph{reconstruction matrix}.
It provides information about the optimal segmentation, and is used at the end of the algorithm to reconstruct the output $\{\Delta_i\}_{i=1}^h$.
Note that the algorithm does not explicitly compute the $S_i$ subgraphs  corresponding to the optimal $\Delta_i$ intervals.
In fact, as discussed above, each $S_i$ can be easily retrieved at the end of the algorithm, by simply setting it equal to the corresponding $(Q, \Delta_i)$-highest-order-\spancore $C^*_{Q,\Delta_i}$.
%
According to Fact~\ref{fact2}, the penalty score of an interval $\Delta \sqsubseteq T$ corresponds to $-v^*_{Q,\Delta}$, i.e., the negative of the order of the $(Q, \Delta)$-highest-order-\spancore $C^*_{Q,\Delta}$.
All individual $v^*_{Q,\Delta}$ values, for all $\Delta \sqsubseteq T$, are efficiently computed altogether, at the beginning of the algorithm, via a ``$Q$-constrained'' variant of span-core decomposition (an alternative, but much less efficient strategy consists in computing every single $v^*_{Q,\Delta}$ from scratch, on the fly).
Specifically,  a simple (yet more efficient) variant of the span-core decomposition algorithm (Algorithm~\ref{alg:decomposition}) is employed for this purpose, which outputs only those \spancores containing all the vertices in $Q$.
This is easily achievable by stopping the \textsf{core-decomposition} subroutine, for every interval $\Delta \sqsubseteq T$, as soon as a core not containing all query vertices in $Q$ has been encountered.

The time complexity of Algorithm~\ref{alg:temporalcsnaive} is $\bigO(|T|^2 \times h + \tau_{sc})$, where $\tau_{sc}$ is the time spent for computing the $Q$-constrained \spancore decomposition of the input graph $G$.

\subsection{A more efficient algorithm (based on maximal span-cores)}\label{sec:alg_mcs}

A more efficient algorithm can be designed by noticing that, actually, one does not need to consider all timestamps in $T$ in the dynamic-programming step.
Rather, focusing on a subset $T^* \subseteq T$ -- which is properly defined based on the maximal \spancores of the input graph, see next -- allows for significantly reducing the dimensionality of the penalty matrix $P$ and the reconstruction matrix $R$, hence the overall time complexity of the algorithm, without affecting optimality of the output solution.
The following fact provides the theoretical basis for defining such a reduced temporal domain $T^*$.


\begin{fact}\label{fact3}
Given a temporal graph $G = (V,T,\tau)$ and a set $Q \subseteq V$ of query vertices, let $\mathbf{C}_M(Q)$ be the set of all $Q$-constrained maximal \spancores of $G$. For a temporal interval $\Delta \sqsubseteq T$, it holds that $\Qdeltascore = \max \{0, \max \{ k \mid C_{k, \Delta'} \in \mathbf{C}_M(Q), \Delta \sqsubseteq \Delta'\}\}$.
\end{fact}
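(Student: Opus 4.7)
The plan is to prove the claimed equality as two matching inequalities, both of which follow quickly from Proposition~\ref{prp:conteinment} together with the definition of $(Q,\Delta)$-highest-order-\spancore.

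For the ``$\geq$'' direction, I would take an arbitrary $C_{k,\Delta'}\in\mathbf{C}_M(Q)$ with $\Delta\sqsubseteq\Delta'$ and apply Proposition~\ref{prp:conteinment} with the same order $k$ on both sides and the intervals $\Delta\sqsubseteq\Delta'$, obtaining $C_{k,\Delta'}\subseteq C_{k,\Delta}$. Because $C_{k,\Delta'}$ is $Q$-constrained, $Q\subseteq C_{k,\Delta'}$, and the inclusion propagates to $Q\subseteq C_{k,\Delta}$. Thus $C_{k,\Delta}$ is itself a $Q$-constrained \spancore of span $\Delta$ and order $k$, so Definition~\ref{def:deltaQhighestordercore} forces $v^*_{Q,\Delta}\geq k$. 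Taking the maximum over all eligible $C_{k,\Delta'}$ yields the desired inequality, while the outer ``$\max\{0,\cdot\}$'' in the statement absorbs the degenerate case in which no such maximal \spancore exists.

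For the ``$\leq$'' direction, the case $v^*_{Q,\Delta}=0$ is trivial, so I would set $k^*:=v^*_{Q,\Delta}>0$ and regard $C^*_{Q,\Delta}=C_{k^*,\Delta}$ as a distinguished element of the finite poset of all $Q$-constrained \spancores of $G$, ordered by $(k_1,\Delta_1)\preceq(k_2,\Delta_2)\Leftrightarrow k_1\leq k_2\wedge \Delta_1\sqsubseteq\Delta_2$. Finiteness guarantees a $\preceq$-maximal element $C_{k^\dagger,\Delta^\dagger}$ dominating $C_{k^*,\Delta}$, so $k^\dagger\geq k^*$ and $\Delta\sqsubseteq\Delta^\dagger$, and by construction $C_{k^\dagger,\Delta^\dagger}\in\mathbf{C}_M(Q)$. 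Hence the right-hand side is at least $k^\dagger\geq k^*=v^*_{Q,\Delta}$, closing the second inequality.

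The only place where real care is required is the reading of ``$Q$-constrained maximal \spancore'' underlying $\mathbf{C}_M(Q)$: it must denote \spancores maximal \emph{among those containing $Q$}, rather than globally maximal \spancores (in the sense of Definition~\ref{def:maximal}) that happen to contain $Q$. Under the latter interpretation, Proposition~\ref{prp:conteinment} only tells us that a dominating globally maximal \spancore is a \emph{subset} of $C_{k^*,\Delta}$ and might therefore drop vertices of $Q$, so the chain argument could fail to terminate inside $\mathbf{C}_M(Q)$. Under the natural $Q$-restricted reading, every link of the chain retains $Q$ by assumption, so the terminal element automatically lies in $\mathbf{C}_M(Q)$ and the argument goes through cleanly; this will be the conceptual crux of the proof, whereas the remaining steps are direct applications of Proposition~\ref{prp:conteinment}.
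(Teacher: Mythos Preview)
The paper states Fact~\ref{fact3} without proof, so there is no ``paper's own proof'' to compare against; your two-inequality argument via Proposition~\ref{prp:conteinment} and a finiteness/poset step is correct and is the natural justification. Your caveat about the reading of $\mathbf{C}_M(Q)$ is well taken: the paper's description of how $\mathbf{C}_M(Q)$ is computed (a variant of Algorithm~\ref{alg:imcores} ``where containment of query vertices is added as a further constraint'') confirms the $Q$-restricted interpretation under which your chain argument terminates inside $\mathbf{C}_M(Q)$.
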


Fact~\ref{fact3}
states that the penalty score $\Qdeltascore$ of an interval $\Delta$ corresponds to the maximum among the orders of the $Q$-constrained maximal \spancores whose span includes $\Delta$, if some exist.
If an interval $\Delta$ is not a subset of any span of a $Q$-constrained maximal \spancore, then $\Qdeltascore = 0$.
In that case, therefore, $\Delta$ can be safely discarded, as it cannot be part of the optimal solution of the given \temporalcs problem instance (unless it is needed to fill possible ``holes'', see below).
The ultimate consequence of this finding is that  the aforementioned reduced temporal domain $T^*$ is identified by the timestamps covered by the spans of the maximal \spancores, along with auxiliary timestamps, which are needed to ensure a smooth execution of the dynamic-programming step, as well as a correct handling of some extreme cases.
Specifically, let $\mathbf{D} = \{\Delta \sqsubseteq T  \mid C_{k,\Delta} \in C_M(Q)\}$ be the set of the spans of the $Q$-constrained maximal \spancores of the input graph,
and $T_{\mathbf{D}} = \bigcup_{\Delta \in \mathbf{D}} \Delta$ be the set of timestamps that are part of a span of a $Q$-constrained maximal \spancore.
The first two sets of auxiliary timestamps correspond to the timestamps that immediately precede and succeed the intervals in $\mathbf{D}$, i.e., the sets $T_{\mathbf{D}}^+ = \{\min\{t_e + 1, t_{max}\} \mid [t_s,t_e] \in \mathbf{D}\}$ and $T_{\mathbf{D}}^- = \{\max\{t_s - 1, 0\} \mid [t_s,t_e] \in \mathbf{D}\}$, respectively.
The timestamps in $T_{\mathbf{D}}^+$ and $T_{\mathbf{D}}^-$ (along with the last timestamp $t_{max}$ of the input temporal domain $T$) are needed to allow the dynamic-programming step to identify a solution that actually covers the whole temporal domain $T$ (as per Condition ($ii$) of Problem~\ref{prob:temporalcs}).
In particular, such timestamps may be interpreted as a trick to give the dynamic-programming step the flexibility to select ``holes'' (i.e., time intervals in-between two consecutive but not necessarily contiguous timestamps in $T_{\mathbf{D}}$).
Moreover, we define $T_{sup}$ as the set of the first $h +1 - |T_{\mathbf{D}} \cup T_{\mathbf{D}}^- \cup T_{\mathbf{D}}^+ \cup \{t_{max}\}|$ timestamps of $T$ not contained in $T_{\mathbf{D}} \cup T_{\mathbf{D}}^- \cup T_{\mathbf{D}}^+ \cup \{t_{max}\}$, i.e., $T_{sup} = \{t_i \in T \setminus (T_{\mathbf{D}} \cup T_{\mathbf{D}}^- \cup T_{\mathbf{D}}^+ \cup \{t_{max}\}) \mid i \in [1, h +1 - |T_{\mathbf{D}} \cup T_{\mathbf{D}}^- \cup T_{\mathbf{D}}^+ \cup \{t_{max}\}|]\}$.
The timestamps in $T_{sup}$ are further auxiliary timestamps that are needed to return a correct $h$-sized solution when the timestamps in $T_{\mathbf{D}} \cup T_{\mathbf{D}}^- \cup T_{\mathbf{D}}^+ \cup \{t_{max}\}$ are less than $h+1$ (the minimum number of timestamps required in $T^*$ to have a solution of size $h$).
Note that $T_{sup}$ is nonempty only if $|T_{\mathbf{D}} \cup T_{\mathbf{D}}^- \cup T_{\mathbf{D}}^+ \cup \{t_{max}\}| < h+1$.
Ultimately, $T^*$ is defined as
$$
T^* = T_{\mathbf{D}} \ \cup \ T_{\mathbf{D}}^+ \ \cup \ T_{\mathbf{D}}^-  \ \cup \ \{t_{max} \} \ \cup \ T_{sup}.
$$

The proposed more efficient method for \temporalcs, termed \mcs,  is summarized in Algorithm~\ref{alg:maximaltemporalcs} and described next.
The first five lines of the algorithm are devoted to the identification of $T^*$.
As said above, matrices $\mathbf{P}$ and $\mathbf{R}$ have here reduced dimensionality with respect to Algorithm~\ref{alg:temporalcsnaive}: they are $(|T^*| \times h)$-dimensional matrices, where $|T^*| \leq |T|$.
A mapping function $\mathbf{M}$ is used to assign an index within $[0,|T^*|)$ to every timestamp in $|T^*|$ (Line~\ref{line:maximaltemporalcs:m}). Such a mapping is needed to have every timestamp in $|T^*|$ logically assigned to a row of matrices $\mathbf{P}$ and $\mathbf{R}$.
The rest of the algorithm resembles Algorithm~\ref{alg:temporalcsnaive}, except for the fact that $\mathbf{M}$ is used every time that a row index has to be mapped to its corresponding timestamp (e.g., during the reconstruction of the solution).

An important point to clarify is that, during the execution of the \mcs algorithm, we might need the penalty score $\Qdeltascore$ of intervals $\Delta \sqsubseteq T$ corresponding to \emph{non-maximal} ($Q$-constrained) \spancores.
Therefore, the algorithm needs the $\Qdeltascore$ score of \emph{all} intervals $\Delta \sqsubseteq T$.
To compute these $\Qdeltascore$ scores (and, related to this, the set $\mathbf{C}_M(Q)$ of $Q$-constrained maximal \spancores, at Line~\ref{line:maximaltemporalcs:cd}), there are two main options.
The first one consists in computing the whole $Q$-constrained \spancore decomposition (as done in Algorithm~\ref{alg:temporalcsnaive}), keep the $\Qdeltascore$ scores of all such cores, and eventually compute $\mathbf{C}_M(Q)$ by simply filtering out non-maximal \spancores.
The second option corresponds instead to compute $\mathbf{C}_M(Q)$ directly, without passing through the whole  $Q$-constrained \spancore decomposition.
This may be carried out by running a simple variant of the algorithm for computing maximal \spancores (Algorithm~\ref{alg:imcores}), where containment of query vertices is added as a further constraint.
The computation of all the $\Qdeltascore$ scores comes for free during the execution of this algorithm for $Q$-constrained maximal \spancores: these scores can therefore be retained by adding a few straightforward (constant-time) instructions to that algorithm.
In our implementation we stick to the latter, as the \innermosts\ algorithm has been experimentally recognized as faster than the na\"ive filtering approach in all tested datasets.

The time complexity of the proposed \mcs algorithm is $\bigO(|T^*|^2 \times h + \tau_{msc})$, with $\tau_{msc}$ being the time spent in computing the $Q$-constrained maximal \spancores and the penalty scores $\Qdeltascore$.
As in practice (attested by our experiments) $|T^*| \ll |T|$,
the proposed \mcs algorithm is expected to be much more efficient than its na\"ive counterpart, i.e., Algorithm~\ref{alg:temporalcsnaive}.

\subsection{Minimum community search}\label{sec:min_cs}
An instance of \temporalcs may admit several optimal solutions which might differ either in terms of output intervals $\{\Delta_i\}_{i=1}^h$, or in terms of subgraphs assigned to the various identified intervals.
More precisely, the latter
refers to the fact that two optimal solutions might find the same segmentation $\{\Delta_i\}_{i=1}^h$ of the input temporal domain, but select different subgraphs $S_i$ for any interval $\Delta_i$.
Therefore, if the communities $S_i$ are not chosen carefully, they may result to be excessively large, not really cohesive, and containing redundant/outlying vertices.
This is a well-recognized issue of minimum-degree-based community search~\cite{Sozio}.
At the same time, large communities might include more cohesive and denser subgraphs that still exhibit optimality.
Motivated by this, in this subsection we devise a method to refine the communities originally found by our algorithms for \temporalcs, specifically attempting to minimize their size while preserving optimality.
The main idea behind our refinement method is based on the following result:

\begin{myproposition}[Community containment]\label{prp:cs_conteinment}
Given a temporal graph $G = (V,T,\tau)$, a set $Q \subseteq V$ of query vertices, and a positive integer $h \in \mathbb{N}^+$, let $\{\langle S_i, \Delta_i \rangle \}_{i=1}^h$ be a solution to Problem~\ref{prob:temporalcs} on input $\langle G, Q,h\rangle$ with $S_i$ corresponding to the $(Q, \Delta_i)$-highest-order-\spancore of $G$, $\forall i \in [1, h]$.
For every other solution $\{\langle S'_i, \Delta_i \rangle \}_{i=1}^h$ (referring to the same segmentation $\{\Delta_i\}_{i=1}^h$) to Problem~\ref{prob:temporalcs} on input $\langle G, Q,h\rangle$ it holds that $S_i' \subseteq S_i$, $\forall i \in [1, h]$.
\end{myproposition}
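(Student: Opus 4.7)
The plan is to combine a segment-wise decomposition argument with the maximality property of $k$-cores. Since the segmentation $\{\Delta_i\}_{i=1}^h$ is fixed and both $\{\langle S_i,\Delta_i\rangle\}_{i=1}^h$ and $\{\langle S'_i,\Delta_i\rangle\}_{i=1}^h$ are optimal for Problem~\ref{prob:temporalcs}, I would first show that each community $S'_i$ must individually realize the per-segment optimum $v^*_{Q,\Delta_i}$. Then I would invoke the fact that, by Definition~\ref{def:deltaQhighestordercore}, $C^*_{Q,\Delta_i}=S_i$ is an ordinary temporal $k_i$-core of $G_{\Delta_i}$ (with $k_i=v^*_{Q,\Delta_i}$), so the maximality built into Definition~\ref{def:core} forces $S'_i\subseteq S_i$.

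For the first step, I would invoke Fact~\ref{fact1}, which says that for any $Q\subseteq S\subseteq V$ we have $\min_{u\in S}\tdeg_{\Delta_i}(S,u)\leq v^*_{Q,\Delta_i}$. The reference solution achieves objective value $\sum_{i=1}^h v^*_{Q,\Delta_i}$, which is therefore the optimum of Problem~\ref{prob:temporalcs} restricted to this segmentation. Since $\{\langle S'_i,\Delta_i\rangle\}_{i=1}^h$ is also optimal and uses the same segmentation, the per-segment upper bounds must be tight term by term:
\[ \min_{u\in S'_i}\tdeg_{\Delta_i}(S'_i,u)=v^*_{Q,\Delta_i}=:k_i \quad\text{for every } i\in[1,h]. \]
In particular, every $u\in S'_i$ satisfies $\tdeg_{\Delta_i}(S'_i,u)\geq k_i$.

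For the second step, I would note that $S_i=C^*_{Q,\Delta_i}=C_{k_i,\Delta_i}$ is, by Definition~\ref{def:core}, \emph{the} maximal subset of $V$ every vertex of which has temporal degree on $\Delta_i$ at least $k_i$. The set $S'_i$ established above also satisfies this min-degree-$k_i$ condition, and so does $S'_i\cup S_i$: for any $u$ in the union, its temporal degree restricted to the union is at least as large as its degree restricted to whichever of $S'_i$ or $S_i$ contains it, because $E_{\Delta_i}[S'_i]\cup E_{\Delta_i}[S_i]\subseteq E_{\Delta_i}[S'_i\cup S_i]$. The maximality clause in Definition~\ref{def:core} then yields $S'_i\cup S_i\subseteq S_i$, and hence $S'_i\subseteq S_i$, which is the desired conclusion.

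The main obstacle is making precise the uniqueness of the maximal set in Definition~\ref{def:core}, so that one can pass from ``there exists a maximal set'' to ``every set satisfying the min-degree-$k$ condition is contained in it''. This rests on the closure-under-union property invoked above, which is the temporal analogue of a standard fact about classical $k$-cores; once it is recorded, the two steps chain together without further technical difficulty, and the coupling across the $h$ segments is handled purely by the term-by-term tightness of the objective.
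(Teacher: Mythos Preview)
Your proposal is correct and follows essentially the same line as the paper's proof: both establish that each $S'_i$ has minimum temporal degree $k_i=v^*_{Q,\Delta_i}$, then argue that $S_i\cup S'_i$ retains minimum degree $\geq k_i$, contradicting the maximality of $S_i=C_{k_i,\Delta_i}$ unless $S'_i\subseteq S_i$. The only difference is that you spell out the term-by-term tightness argument (via Fact~\ref{fact1} and optimality under the fixed segmentation) to justify that $S'_i$ attains minimum degree exactly $k_i$, whereas the paper simply asserts this by treating each $S'_i$ as a solution to the single-interval subproblem; your version is slightly more explicit but not substantively different.
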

\begin{proof}
Let $k_i$ be the minimum degree of $S_i$, i.e., $k_i = v^*_{Q,\Delta_i}$ is the order of the $(Q, \Delta_i)$-highest-order-\spancore.
Assume that there exists a solution $S'_i$ to Problem~\ref{prob:temporalcs_subproblem} that is not contained in $S_i$.
This implies that $(i)$ the minimum degree of a vertex of $S'_i$ in $\Delta_i$ is $k_i$,
and ($ii$) the minimum degree of a vertex of $S_i \cup S'_i$ in $\Delta_i$ is $k_i$ as well.
This violates the maximality condition of the definition of \spancore, since, by hypothesis, $S_i$ corresponds to the $(Q, \Delta_i)$-highest-order-\spancore of $G$.
\end{proof}

The above proposition states that, given a solution $\{\langle S_i, \Delta_i \rangle \}_{i=1}^h$ to the \temporalcs problem where every $S_i$ corresponds to the $(Q, \Delta_i)$-highest-order-\spancore of the input graph, one can focus on the various $S_i$ solely to refine the output communities, as such $S_i$ are guaranteed to contain \emph{all} optimal solutions of the underlying problem instance (while keeping the segmentation $\{\Delta_i\}_{i=1}^h$ fixed).
Within this view, we formulate the following optimization problem (which is a variant of Problem~\ref{prob:temporalcs_subproblem}, with the additional constraint of requiring a smallest-sized solution):

\begin{problem}\label{prob:temporalcs_subproblem_min}
Given a temporal graph $G = (V,T,\tau)$, a set $Q \subseteq V$ of query vertices, and an interval $\Delta \sqsubseteq T$,
let $S^* \subseteq V$ be the subset of vertices containing all the solutions to Problem~\ref{prob:temporalcs_subproblem} on input  $\langle G, Q, \Delta \rangle$ (according to what stated in Proposition~\ref{prp:cs_conteinment}).
Find
$$
S^*_{min} =  { \operatorname{argmin}_{\{ S \mid Q \subseteq S \subseteq S^*, \min_{u \in S} \tdeg_{\Delta}(S,u) \geq \min_{u \in S^*} \tdeg_{\Delta}(S^*,u)\}} |S|}.
$$
\end{problem}

\begin{mytheorem}\label{th:nphardness}
Problem~\ref{prob:temporalcs_subproblem_min} in $\mathbf{NP}$-hard.
\end{mytheorem}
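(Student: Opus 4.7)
The plan is a polynomial-time reduction from the \textsc{Minimum Community Search} problem on static graphs, whose NP-hardness was established by Barbieri~\emph{et~al.}~\cite{BarbieriBGG15}. Recall that in the static version one is given a graph $G'=(V',E')$ and a query set $Q'\subseteq V'$, and the goal is to find a smallest vertex set $S$ with $Q'\subseteq S\subseteq V'$ whose induced subgraph has minimum degree at least $k^*$, where $k^*$ is the order of the highest-order $k$-core of $G'$ containing $Q'$.

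Given a static instance $(G',Q')$, I would construct the temporal instance $(G,Q,\Delta)$ as follows: take $V=V'$, $T=\{0\}$, and $\tau(u,v,0)=1$ iff $(u,v)\in E'$; set $Q=Q'$ and $\Delta=[0,0]$. Because the temporal domain is a singleton, $E_\Delta=E'$ and for every $S\subseteq V$ the temporal degree $\tdeg_\Delta(S,u)$ coincides with the ordinary degree of $u$ in $G'[S]$. Consequently, by Definition~\ref{def:deltaQhighestordercore} and Fact~\ref{fact1}, the $(Q,\Delta)$-highest-order-\spancore of $G$ is exactly the highest-order $k$-core of $G'$ containing $Q'$, and by Proposition~\ref{prp:cs_conteinment} this set coincides with the $S^*$ appearing in Problem~\ref{prob:temporalcs_subproblem_min}; in particular $\min_{u\in S^*}\tdeg_\Delta(S^*,u)=k^*$. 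Under this correspondence, a set $S$ with $Q\subseteq S\subseteq S^*$ and $\min_{u\in S}\tdeg_\Delta(S,u)\geq k^*$ is feasible for Problem~\ref{prob:temporalcs_subproblem_min} if and only if it is feasible for the static \textsc{Minimum Community Search} on $(G',Q')$, and the objective values coincide. A polynomial-time algorithm for Problem~\ref{prob:temporalcs_subproblem_min} would therefore solve the NP-hard static problem in polynomial time.

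The only subtle point to verify is that restricting the static problem to subsets of $S^*$ does not discard any optimal static solution; this is exactly the content of Proposition~\ref{prp:cs_conteinment}, which guarantees that every community achieving minimum degree $\geq k^*$ while containing $Q'$ lies inside the highest-order $k$-core spanning $Q'$. I expect the main obstacle to be not the reduction itself, which is essentially a ``lift'' of the static instance into a one-timestamp temporal graph, but rather checking that the precise form of \textsc{Minimum Community Search} used in \cite{BarbieriBGG15} matches ours closely enough for a black-box citation. Should this be delicate, a self-contained backup would be a direct reduction from \textsc{Steiner Tree}: for $k^*=1$, Problem~\ref{prob:temporalcs_subproblem_min} reduces to finding a smallest connected vertex set containing the query vertices, which is the (node-weighted) Steiner tree problem and is already NP-hard.
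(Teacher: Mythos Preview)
Your main reduction is correct and follows essentially the same route as the paper: embed a static minimum-community-search instance into a one-timestamp temporal graph so that Problem~\ref{prob:temporalcs_subproblem_min} degenerates to the static problem. The only difference is the choice of source problem: the paper reduces from the \textsc{mCST} problem of Cui~\emph{et~al.}~\cite{SozioLocalSIGMOD14} with a \emph{single} query vertex (where the connectedness constraint of \textsc{mCST} is automatically met because the connected component of the query vertex in any feasible $S$ is itself feasible and no larger), while you reduce from the multi-query \textsc{Minimum Community Search} of Barbieri~\emph{et~al.}~\cite{BarbieriBGG15}. Both work; yours is arguably cleaner since it avoids the connectedness discussion, at the cost of the black-box dependence you already flagged.

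One caveat on your backup plan: for $k^*=1$, Problem~\ref{prob:temporalcs_subproblem_min} does \emph{not} require the output set to be connected, only that every vertex have degree at least~$1$ in the induced subgraph. So the reduction to \textsc{Steiner Tree} as stated does not go through---a disconnected union of edges covering the queries could be feasible and smaller than any Steiner tree. If you want a self-contained fallback, stick closer to the paper's idea and specialize to $|Q|=1$, where an optimal solution can always be taken connected.
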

\begin{proof}
Consider (the optimization version of) the $\mathbf{NP}$-hard \textsc{mCST} problem introduced by Cui~\emph{et al.}~\cite{SozioLocalSIGMOD14}: given a graph $H = (V_H,E_H)$ and a query vertex $q \in V_H$, find a minimum-sized subgraph that contains $q$, is connected, and maximizes the minimum degree.
Given an instance $\langle H, q \rangle$ of the  \textsc{mCST} problem,  construct an instance $\langle G, Q, \Delta \rangle$ of  Problem~\ref{prob:temporalcs_subproblem_min} by defining $G$ as composed by a single temporal snapshot corresponding to graph $H$, $\Delta$ as a singleton interval composed of the single timestamp of $G$, and setting $Q = \{q\}$.
It is straightforward to see that solving Problem~\ref{prob:temporalcs_subproblem_min} on input $\langle G, Q, \Delta \rangle$ is equivalent to solving \textsc{mCST} on input $\langle H, q \rangle$, as the constraint about connectedness is automatically satisfied in Problem~\ref{prob:temporalcs_subproblem_min} for the special case of a single query vertex.
\end{proof}

\begin{algorithm}[t]
\DontPrintSemicolon
\newcommand\commentfont[1]{\small\ttfamily{#1}}
\SetCommentSty{commentfont}

\KwIn{A temporal graph $G = (V,E,T)$, a set $Q \subseteq V$ of query vertices, an interval $\Delta \sqsubseteq T$, a subset of vertices $S^* \subseteq V$ containing all the solutions to Problem~\ref{prob:temporalcs_subproblem} on input $\langle G, Q,\Delta \rangle$.}
\KwOut{A subset $S^*_{min}$ of vertices such that $Q \subseteq S^*_{min} \subseteq S^*$ and $\min_{u \in S^*_{min}} \tdeg_{\Delta}(S^*_{min},u) \geq \min_{u \in S^*} \tdeg_{\Delta}(S^*,u)$.}

$S^*_{min} \leftarrow \emptyset$; \ \ $P \leftarrow \emptyset$; \ \ $\mathcal{A} \leftarrow \emptyset$\;
add every $q \in Q$ to $P$ with priority $+\infty$\;
$k^* \leftarrow \min_{u \in S^*} \tdeg_{\Delta}(S^*,u)$; \ \ $k^*_{min} \leftarrow 0$\;

\While{$k^*_{min} < k^*$ or $Q \not\subseteq S^*_{min}$}
{\label{line:greedy:intwhile} 
	dequeue $u$ from $P$\;
	$S^*_{min} \leftarrow S^*_{min} \cup \{u\}$\;

	\ForAll{$v \in \neigh(S^*, u) \setminus S^*_{min} \setminus P$}
	{
		$\mathcal{A}[v] \leftarrow \score(v)$\; \label{line:greedy:prioiritystart} 
		add $v$ to $P$ with priority $\mathcal{A}[v]$ \label{line:greedy:prioirityend} 
	}

	\ForAll{$v \in \neigh(S^*_{min}, u)$}
	{

		\If{$\tdeg_{\Delta}(S^*_{min}, v) = k^*$}
		{
			\ForAll{$w \in \neigh(P, v)$}
			{
				$\mathcal{A}[w] \leftarrow \mathcal{A}[w] - 1$\;
			}
		}
	}
	$k^*_{min} \leftarrow \min_{v \in S^*_{min}} \tdeg_{\Delta}(S^*_{min},v)$\;
}
\caption{\greedy}\label{alg:greedy}
\end{algorithm}

As Problem~\ref{prob:temporalcs_subproblem_min} is $\mathbf{NP}$-hard, we devise a heuristic that is inspired to the greedy one proposed for the \textsc{Minimum Community Search} problem in~\cite{BarbieriBGG15}.
The proposed heuristic is outlined in Algorithm~\ref{alg:greedy} and described next.
In the pseudocode and in the following we denote as $k^*$ and $k^*_{min}$ the minimum degree of $S^*$ and $S^*_{min}$, respectively, and as $\neigh(S, u)$ the neighbors of a vertex $u \in V$ in the subgraph induced by $S \subseteq V$ and $\Delta \sqsubseteq T$.
Algorithm~\ref{alg:greedy} iteratively adds vertices to the solution $S^*_{min}$ according to a priority queue $P$.
Priorities of vertices in $P$  are defined based on a score that measures how promising a vertex is for making the current solution $S^*_{min}$  reach the optimal minimum degree.
Specifically, the score of a vertex $u \in S^*$ is defined as:
$$
\score(u) = \scorep(u) - \scorem(u),
$$
where
$$
\scorep(u) = |\{v \in \neigh(S^*_{min}, u) \mid \tdeg_{\Delta}(S^*_{min}, v) < k^*\}|;
$$
$$
\scorem(u) = \max\{0, k^* - \tdeg_{\Delta}(S^*_{min}, u)\}.
$$
$\scorep(u)$ is the gain effect of adding $u$ to $S^*_{min}$, while $\scorem(u)$ is the penalty effect.
In particular, $\scorep(u)$ counts the number of neighbors of $u$ in $S^*_{min}$ that would benefit from the inclusion of $u$ to $S^*_{min}$, i.e., that have degree less than $k^*$.
On the other hand, $\scorem(u)$ represents the number of neighbors of $u$ still required in $S^*_{min}$ so that $u$ has degree at least $k^*$.
The algorithm starts by adding the query vertices to the queue $P$ with priority $+\infty$, in order to ensure that they will be selected at the very beginning.
At each iteration of the main  cycle of the algorithm (starting at Line~\ref{line:greedy:intwhile}), the vertex $u$ exhibiting the highest priority is dequeued from $P$ and is added to the solution $S^*_{min}$.
As a consequence, a couple of updates are performed.
First, $u$'s neighbors not in the priority queue $P$ are added to it (Lines~\ref{line:greedy:prioiritystart}-\ref{line:greedy:prioirityend}).
Note that this is the only step of the algorithm where the score of a vertex is computed from scratch and stored in $\mathcal{A}$, a map that keeps  the scores of all vertices in $P$ up-to-date during the whole execution of the algorithm.
The second update consists in recomputing the score of every $v$'s neighbor $w$ in the queue, if a vertex $v \in S^*_{min}$ has reached the desired minimum degree $k^*$ after the addition of $u$.


\section{Experiments}
\label{sec:experiments}


\begin{table}[t!]
\centering
\caption{Temporal graphs used in the experiments.}
\label{tab:datasets}

\begin{tabular}{c|ccccccc}
\multicolumn{1}{c}{dataset} & \multicolumn{1}{c}{$|V|$} & \multicolumn{1}{c}{$|E|$} & \multicolumn{1}{c}{$|T|$} & \multicolumn{1}{c}{window size} & \multicolumn{1}{c}{domain} \\
\hline
\textsf{HighSchool} & $327$ & $47$k & $1212$ & $5$ mins & face-to-face \\
\textsf{PrimarySchool} & $242$ & $55$k & $390$ & $5$ mins & face-to-face \\
\textsf{HongKong} & $806$ & $2$M & $2976$ & $5$ mins & face-to-face \\
\textsf{ProsperLoans} & $89$k & $3$M & $307$ & $7$ days & economic \\
\textsf{Last.fm} & $992$ & $4$M & $77$ & $21$ days & co-listening \\
\textsf{WikiTalk} & $2$M & $10$M & $192$ & $28$ days & communication \\
\textsf{DBLP} & $1$M & $11$M & $80$ & $366$ days & co-authorship \\
\textsf{StackOverflow} & $2$M & $16$M & $51$ & $56$ days & question answering \\
\textsf{Wikipedia} & $343$k & $18$M & $101$ & $56$ days & co-editing \\
\textsf{Amazon} & $2$M & $22$M & $115$ & $28$ days & co-rating \\
\textsf{Epinions} & $120$k & $33$M & $25$ & $21$ days & co-rating \\
\hline
\end{tabular}
\end{table}
\begin{table}
\centering
\caption{Evaluation of the proposed algorithms: number of output \spancores, running time, memory, and number of processed vertices.}
\label{tab:evaluation}

\renewcommand{\arraystretch}{0.975}
\begin{tabular}{c|c|c|ccc}
\multicolumn{1}{c}{} & \multicolumn{1}{c}{} & \multicolumn{1}{c}{\# output} & \multicolumn{1}{c}{running} & \multicolumn{1}{c}{memory} & \multicolumn{1}{c}{\# processed} \\
\multicolumn{1}{c}{dataset} & \multicolumn{1}{c}{method} & \multicolumn{1}{c}{\spancores} & \multicolumn{1}{c}{time (s)} & \multicolumn{1}{c}{(GB)} & \multicolumn{1}{c}{vertices} \\
\hline
 \multirow{4}{*}{\textsf{HighSchool}} & \baseline & \multirow{2}{*}{$12\,320$} & $18$ & $0.1$ & $3$M \\
 & \cores & & $1$ & $0.1$ & $581$k \\ \cline{2-6}
 & \baselineinnermosts & \multirow{2}{*}{$450$} & $1$ & $0.1$ & $581$k \\
 & \innermosts & & $0.3$ & $0.1$ & $181$k \\
\hline
 \multirow{4}{*}{\textsf{PrimarySchool}} & \baseline & \multirow{2}{*}{$4\,703$} & $4$ & $0.1$ & $818$k \\
 & \cores & & $0.6$ & $0.1$ & $174$k \\ \cline{2-6}
 & \baselineinnermosts &  \multirow{2}{*}{$409$} & $0.6$ & $0.1$ & $174$k \\
 & \innermosts & & $0.1$ & $0.1$ & $63$k \\
\hline
 \multirow{4}{*}{\textsf{HongKong}} & \baseline & \multirow{2}{*}{$2\,367\,743$} & $85\,180$ & $1$ & $819$M \\
 & \cores & & $18\,389$ & $0.8$ & $216$M \\ \cline{2-6}
 & \baselineinnermosts &  \multirow{2}{*}{$1\,807$} & $18\,641$ & $0.8$ & $216$M \\
 & \innermosts & & $339$ & $0.5$ & $212$M \\
\hline
 \multirow{4}{*}{\textsf{ProsperLoans}} & \baseline & \multirow{2}{*}{$4\,273$} & $101$ & $2$ & $55$M \\
 & \cores & & $46$ & $2$ & $27$M \\ \cline{2-6}
 & \baselineinnermosts &  \multirow{2}{*}{$293$} & $48$ & $2$ & $27$M \\
 & \innermosts & & $8$ & $2$ & $980$k \\
\hline
\multirow{4}{*}{\textsf{Last.fm}} & \baseline &  \multirow{2}{*}{$126\,819$} & $707$ & $0.5$ & $2$M \\
 & \cores & & $199$ & $0.5$ & $531$k \\ \cline{2-6}
 & \baselineinnermosts &  \multirow{2}{*}{$1\,670$} & $202$ & $0.5$ & $531$k \\
 & \innermosts & & $57$ & $0.5$ & $271$k \\
\hline
\multirow{4}{*}{\textsf{WikiTalk}} & \baseline &  \multirow{2}{*}{$19\,693$} & $322\,302$ & $36$ & $25$B \\
 & \cores & & $1\,084$ & $36$ & $555$M \\ \cline{2-6}
 & \baselineinnermosts &  \multirow{2}{*}{$632$} & $1\,194$ & $36$ & $555$M \\
 & \innermosts & & $126$ & $35$ & $2$M \\
\hline
\multirow{4}{*}{\textsf{DBLP}} & \baseline &  \multirow{2}{*}{$6\,135$} & $10\,506$ & $11$ & $1$B \\
 & \cores & & $278$ & $11$ & $150$M \\ \cline{2-6}
 & \baselineinnermosts &  \multirow{2}{*}{$268$} & $292$ & $11$ & $150$M  \\
 & \innermosts & & $116$ & $11$ & $620$k \\
\hline
\multirow{4}{*}{\textsf{StackOverflow}} & \baseline &  \multirow{2}{*}{$1\,238$} & $5\,360$ & $10$ & $1$B \\
 & \cores & & $245$ & $10$ & $127$M \\ \cline{2-6}
 & \baselineinnermosts &  \multirow{2}{*}{$129$} & $245$ & $10$ & $127$M \\
 & \innermosts & & $128$ & $10$ & $3$M \\
\hline
\multirow{4}{*}{\textsf{Wikipedia}} & \baseline &  \multirow{2}{*}{$125\,191$} & $17\,155$ & $4$ & $1$B \\
 & \cores & & $522$ & $4$ & $35$M \\ \cline{2-6}
 & \baselineinnermosts &  \multirow{2}{*}{$2\,147$} & $537$ & $4$ & $35$M \\
 & \innermosts & & $201$ & $4$ & $320$k \\
\hline
\multirow{4}{*}{\textsf{Amazon}} & \baseline &  \multirow{2}{*}{$29\,318$} & $10\,415$ & $18$ & $2$B \\
 & \cores & & $409$ & $18$ & $247$M \\ \cline{2-6}
 & \baselineinnermosts &  \multirow{2}{*}{$303$} & $580$ & $18$ & $247$M \\
 & \innermosts & & $123$ & $18$ & $688$k \\
\hline
\multirow{4}{*}{\textsf{Epinions}} & \baseline &  \multirow{2}{*}{$63\,111$} & $699$ & $4$ & $39$M \\
 & \cores & & $186$ & $4$ & $3$M \\ \cline{2-6}
 & \baselineinnermosts &  \multirow{2}{*}{$320$} & $201$ & $4$ & $3$M \\
 & \innermosts & & $154$ & $5$ & $129$k \\
\hline
\end{tabular}
\end{table}

In this section we present an experimental evaluation to empirically assess the performance of all the proposed methods.
Specifically, we focus on  whole \spancore decomposition (Section~\ref{sec:experiments_sc}), maximal \spancores  (Section~\ref{sec:experiments_msc}), characterization  of the extracted \spancores (Section~\ref{sec:experiments_characterization}), and temporal community search  (Section~\ref{sec:experiments_cs}).

\spara{Datasets.}
We use eleven real-world datasets recording timestamped interactions between entities.
For each dataset we select a window size to define a discrete time domain, composed of contiguous timestamps of the same duration, and build the corresponding temporal graph.
If multiple interactions occur between two entities during the same discrete timestamp, they are counted as one.
The characteristics of the resulting temporal graphs, along with the selected window sizes, are reported in Table~\ref{tab:datasets}.

The three smallest datasets were gathered by using wearable proximity sensors in schools, with a temporal resolution of $20$ seconds.
\textsf{PrimarySchool}\footnote{\href{http://www.sociopatterns.org}{sociopatterns.org}\label{foot:school}} contains the contact events between $242$ volunteers ($232$ children and $10$ teachers) in a primary school in Lyon, France, during two days~\cite{Stehle:2011}.
\textsf{HighSchool}$^{\ref{foot:school}}$ describes the close-range proximity interactions between students and teachers ($327$ individuals overall) of nine classes during five days in a high school in Marseilles, France~\cite{Fournet:PLOS2015}.
\textsf{HongKong} reports the same kind of interactions for a primary school in Hong Kong, whose population consists of $709$ children and $65$ teachers divided into thirty classes, for eleven consecutive days~\cite{sapienza2015detecting}.

\textsf{ProsperLoans}\footnote{\href{http://konect.cc}{konect.cc}\label{foot:konect}} represents the network of loans between the users of Prosper, a marketplace of loans between privates.
\textsf{Last.fm}$^{\ref{foot:konect}}$ records the co-listening activity of the Last.fm streaming platform: an edge exists between two users if they listened to songs of the same band within the same discrete timestamp.
\textsf{WikiTalk}$^{\ref{foot:konect}}$ is the communication network of the English Wikipedia.
\textsf{DBLP}$^{\ref{foot:konect}}$ is the co-authorship network of the authors of scientific papers from the DBLP computer science bibliography.
\textsf{StackOverflow}\footnote{\href{http://snap.stanford.edu}{snap.stanford.edu}}  includes the answer-to-question interactions on the stack exchange of the stackoverflow.com website.
\textsf{Wikipedia}$^{\ref{foot:konect}}$ connects users of the Italian Wikipedia that co-edited a page during the same discrete timestamp.
Finally, for both \textsf{Amazon}$^{\ref{foot:konect}}$ and \textsf{Epinions}$^{\ref{foot:konect}}$, vertices are users and edges represent the rating of at least one common item within the same discrete timestamp.

\spara{Implementation.}
All methods are implemented in Python (v. 2.7.16) and compiled by Cython.
All the experiments were run on a machine equipped with Intel Xeon CPU at 2.1GHz.
The experiments reported in Sections~\ref{sec:experiments_sc}~and~\ref{sec:experiments_msc} used 64GB RAM, while the ones in Section~\ref{sec:experiments_cs} used 32GB RAM.

\subsection{\Spancore decomposition}\label{sec:experiments_sc}
We compare the two methods to compute a complete decomposition
described in Section~\ref{sec:spancores}, i.e.,
the baseline \baseline\ and the proposed \cores, in terms of execution time, memory, and total number of vertices input to the $\textsf{core-decomposition}$ subroutine. We report these measures, together with the number of \spancores and maximal \spancores of each dataset, in Table~\ref{tab:evaluation}.

In terms of execution time, \cores\ considerably outperforms \baseline\ in all datasets, achieving a speed-up from $2.1$ up to two orders of magnitude.
The speed-up is explained by the number of vertices processed by the $\textsf{core-decomposition}$ subroutine, which is the most time-consuming step of the algorithms albeit linear in the size of the input subgraph.
The difference of this quantity between \cores\ and \baseline\ reaches over an order of magnitude in the \textsf{WikiTalk}, \textsf{Wikipedia}, and \textsf{Epinions} dataset, confirming the effectiveness of the ``horizontal containment'' relationships.
The memory required by the two procedures is comparable in all cases since the largest structures needed in memory are the temporal graph itself and the set $\coresset$ of all \spancores.

\subsection{Maximal \spancores}\label{sec:experiments_msc}
We compare our \innermosts\ algorithm to the na\"ive approach, described at the beginning of Section~\ref{sec:maximal_spancores}, based on running the \cores\ algorithm and filtering out the non-maximal \spancores, which we refer to as \baselineinnermosts.
The results are again reported in Table~\ref{tab:evaluation}.

\baselineinnermosts\ behaves very similarly to \cores: they only differ for the filtering mechanism which requires a few additional seconds in most cases.
\innermosts\ is much faster than \baselineinnermosts\ for all datasets, with a speed-up from $1.3$ for the \textsf{Epinions} dataset to one order of magnitude for the \textsf{HongKong} dataset.
Except for the school datasets and \textsf{Last.fm},
the difference in terms of number of processed vertices is between one and three orders of magnitude, attesting the advantages of the top-down strategy of \innermosts, which avoids the visit of portions of the \spancore search space and handles the overhead of reconstructing graphs, i.e., $(V_{lb}, E_{\Delta}[V_{lb}])$, efficiently.
Finally, the memory requirements of the two methods are comparable for all datasets.

\begin{figure}[t!]
\centerline{
\begin{tabular}{cc}
\includegraphics[width=0.35\columnwidth]{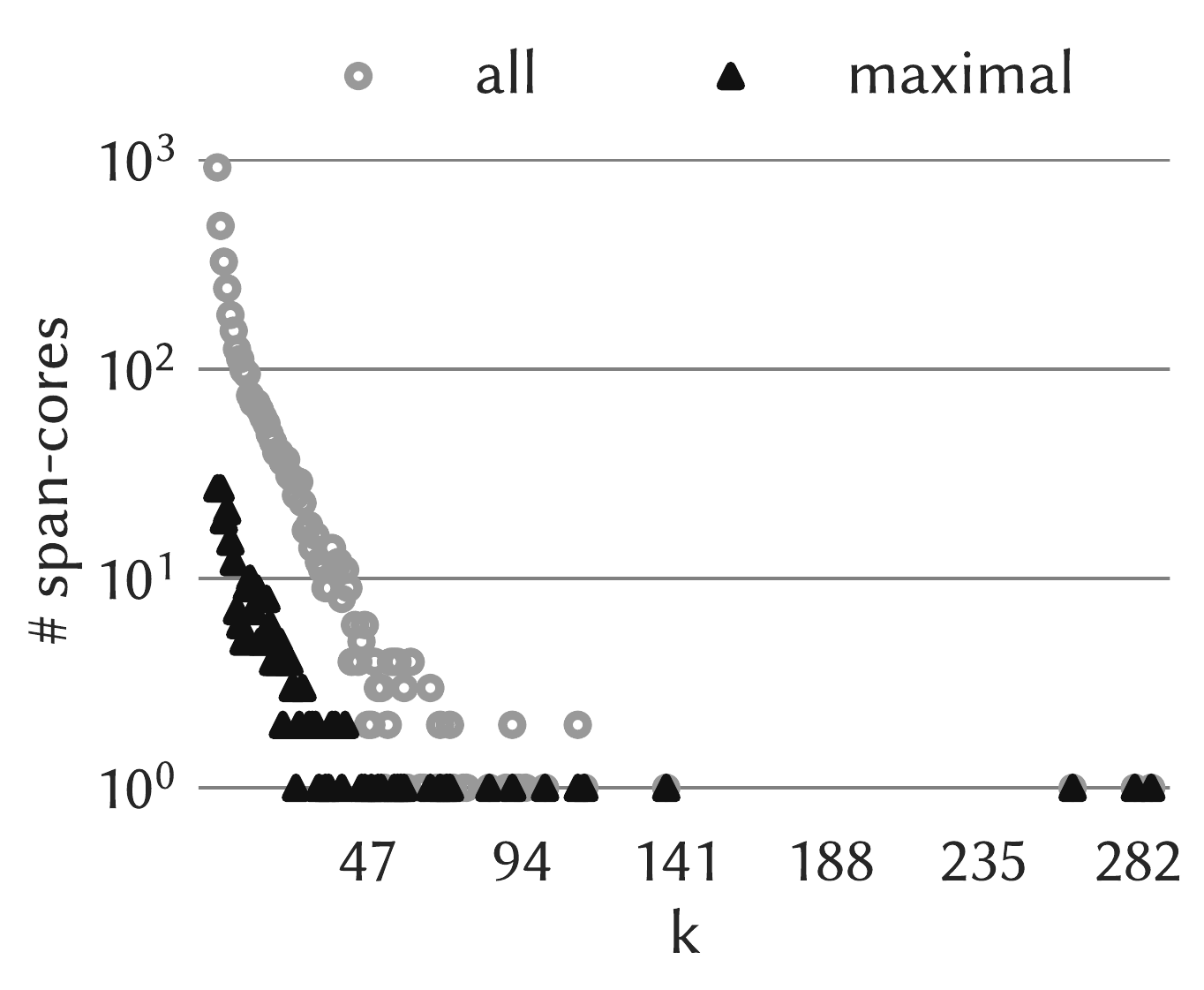} & \includegraphics[width=0.35\columnwidth]{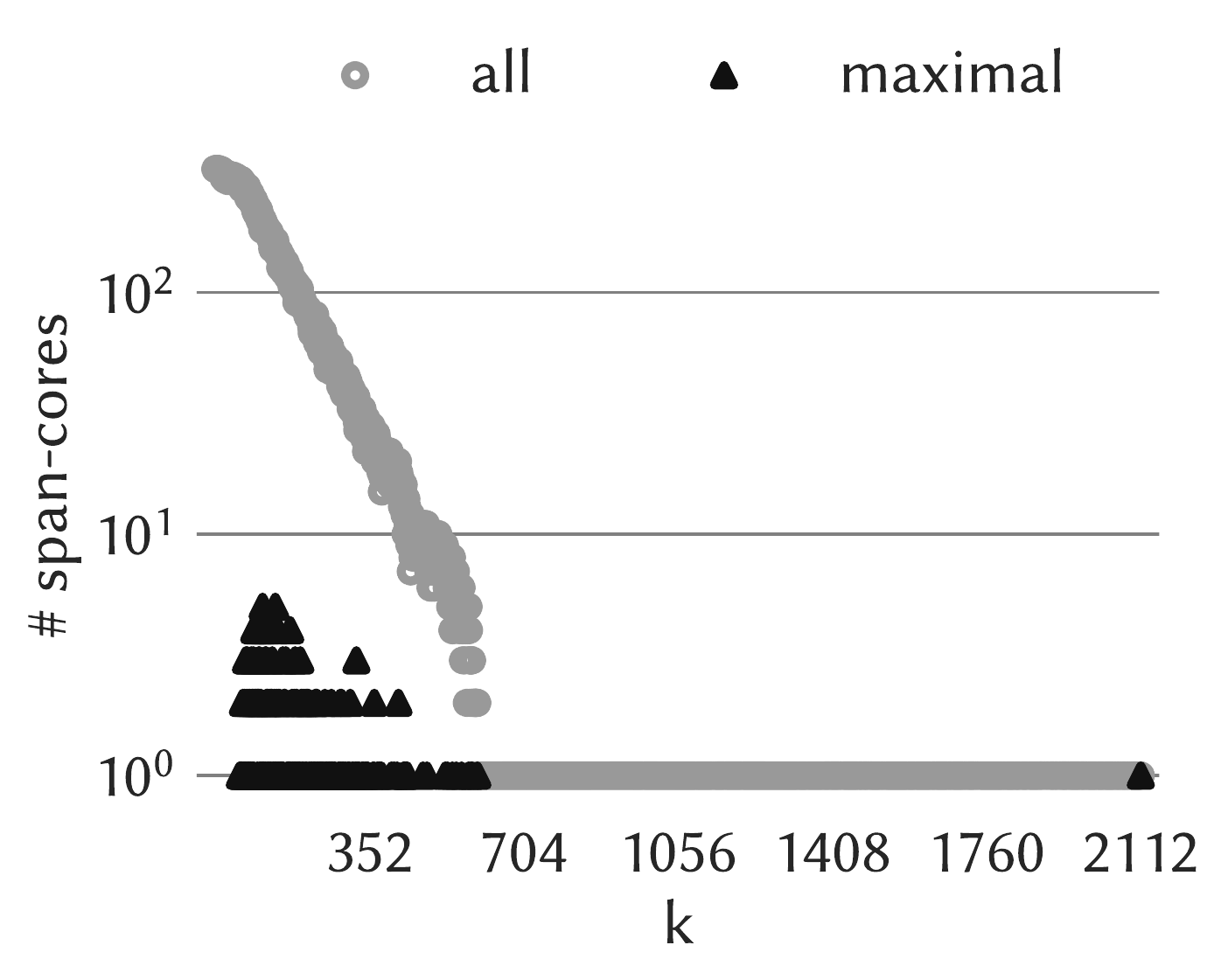}\\
\includegraphics[width=0.35\columnwidth]{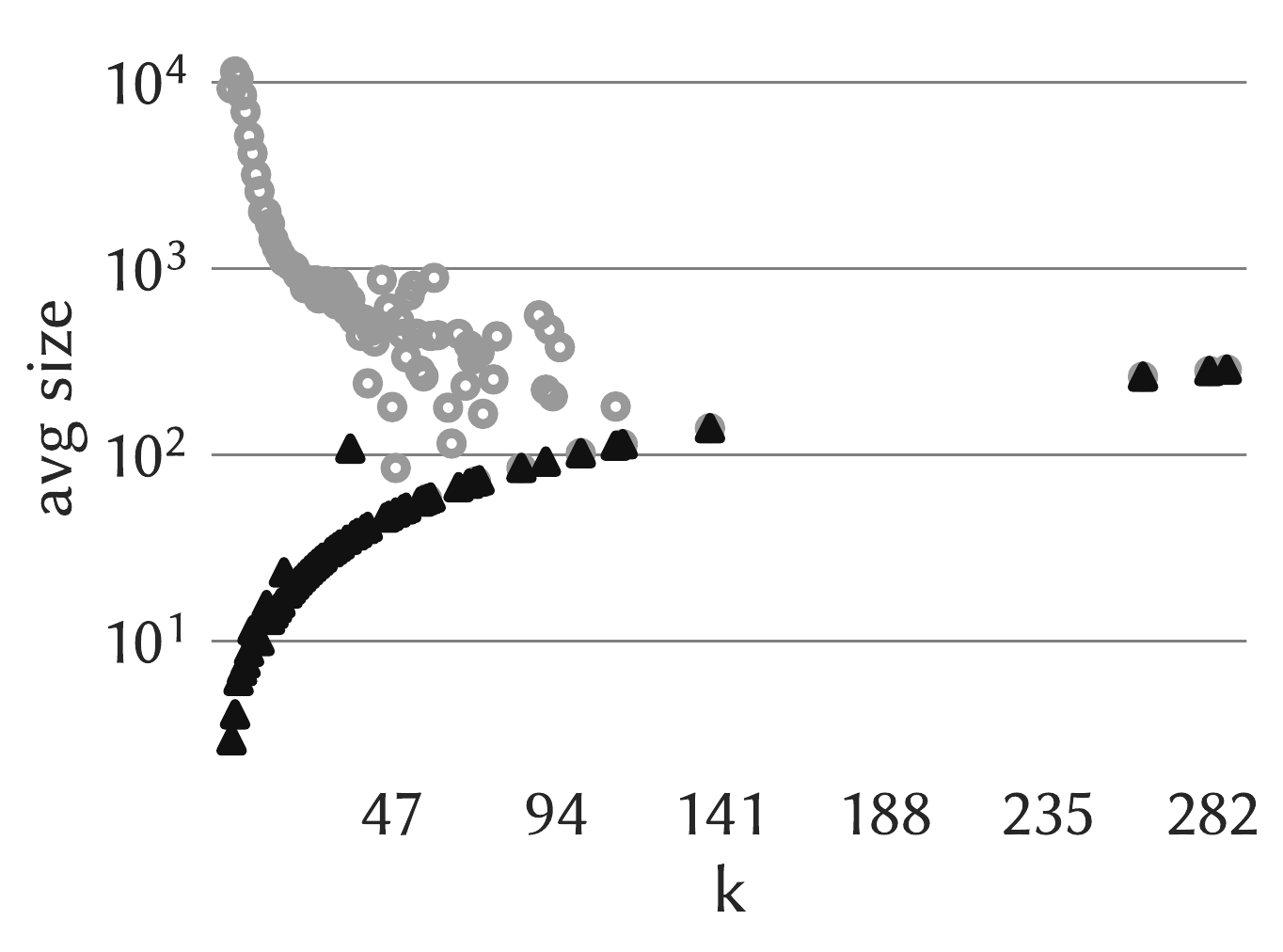} & \includegraphics[width=0.35\columnwidth]{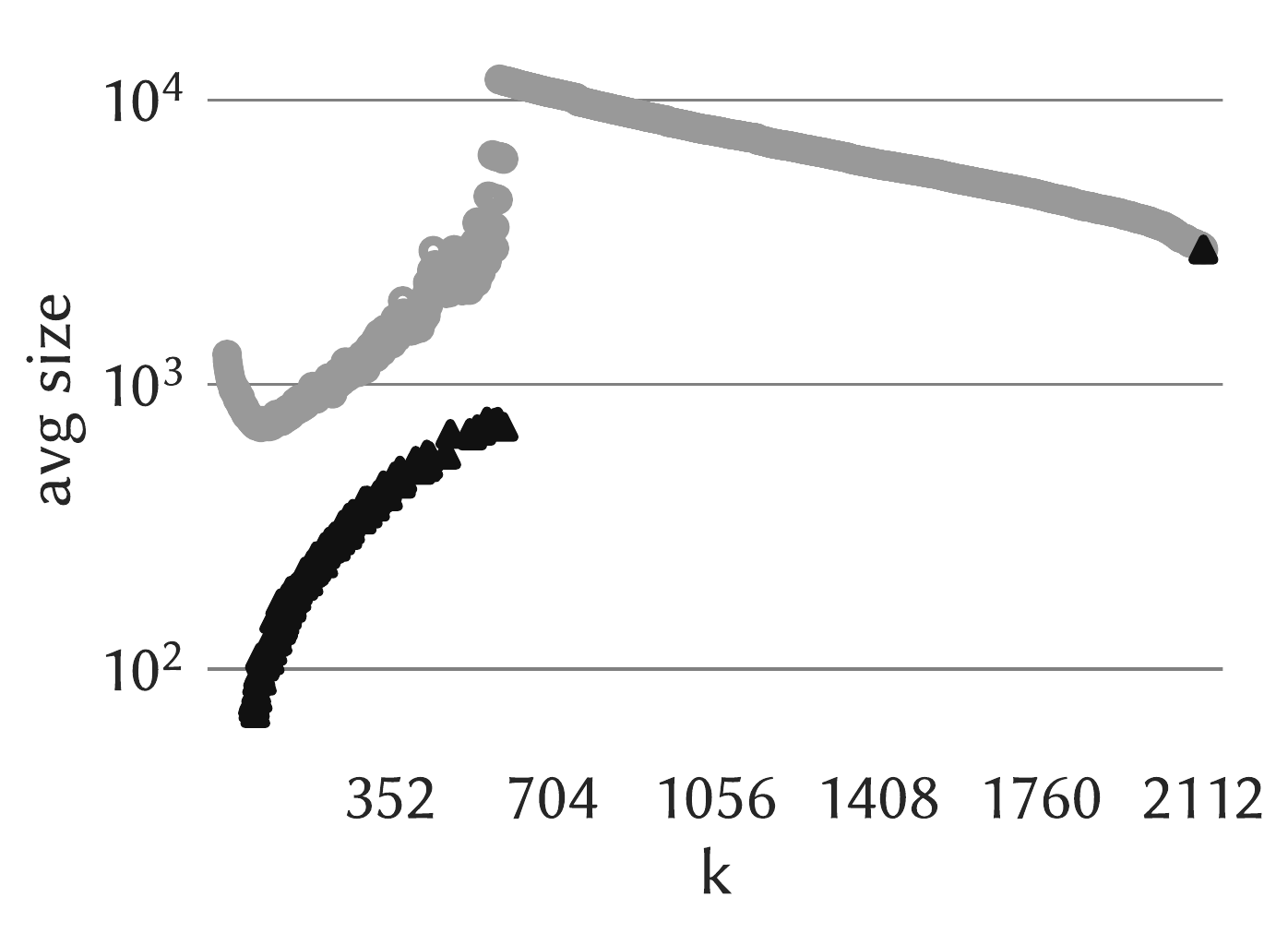} \vspace{-2mm}\\
\footnotesize{\textsf{DBLP}} & \footnotesize{\textsf{Epinions}}
\end{tabular}
}
\caption{\label{fig:coresvsmaximals_k} Top plots: number of \spancores and maximal \spancores as a function of the order $k$. Bottom plots: average size of all \spancores and maximal \spancores as a function of the order $k$.}
\end{figure}

\begin{figure}[t!]
\centerline{
\begin{tabular}{cc}
\includegraphics[width=0.35\columnwidth]{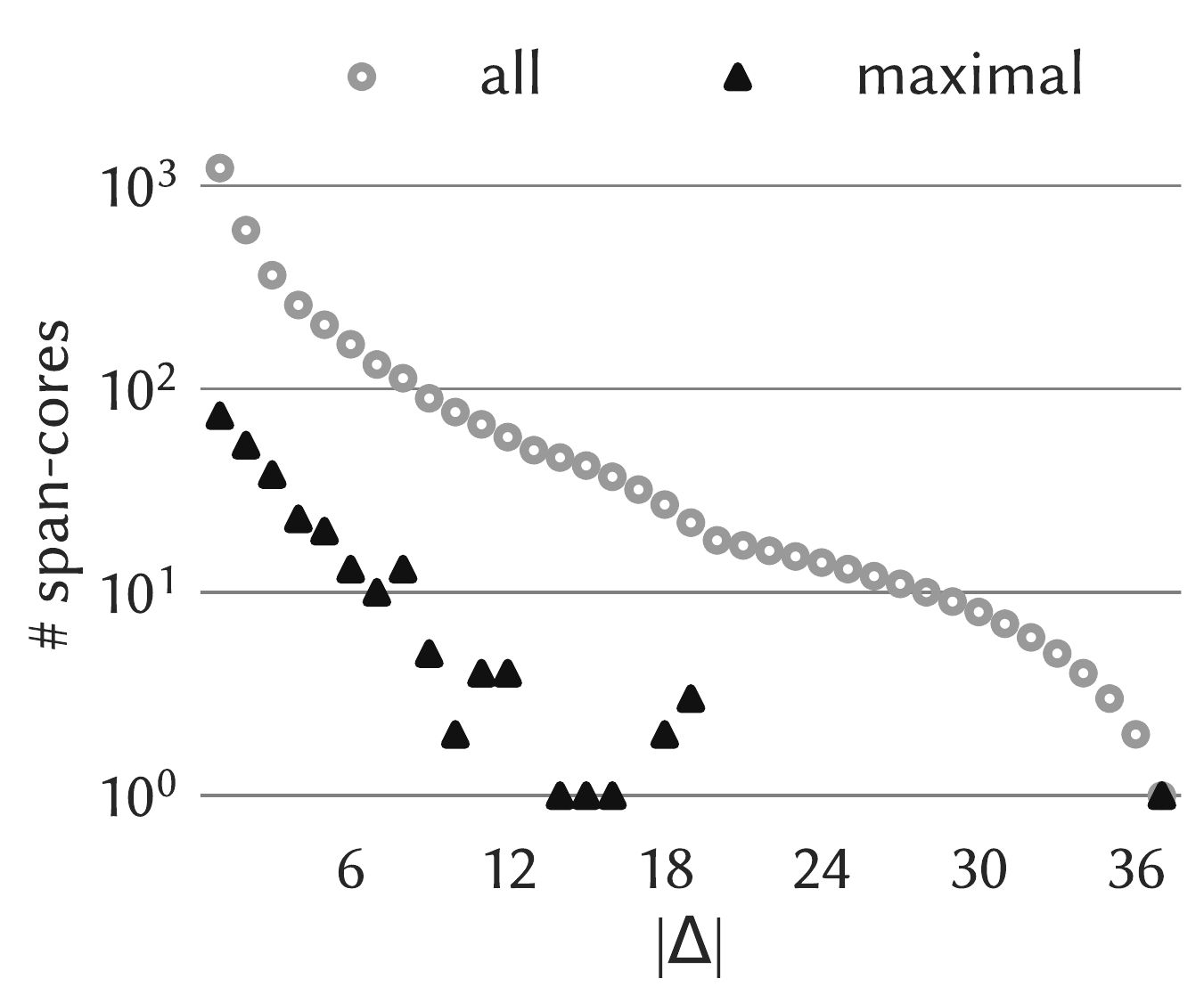} & \includegraphics[width=0.35\columnwidth]{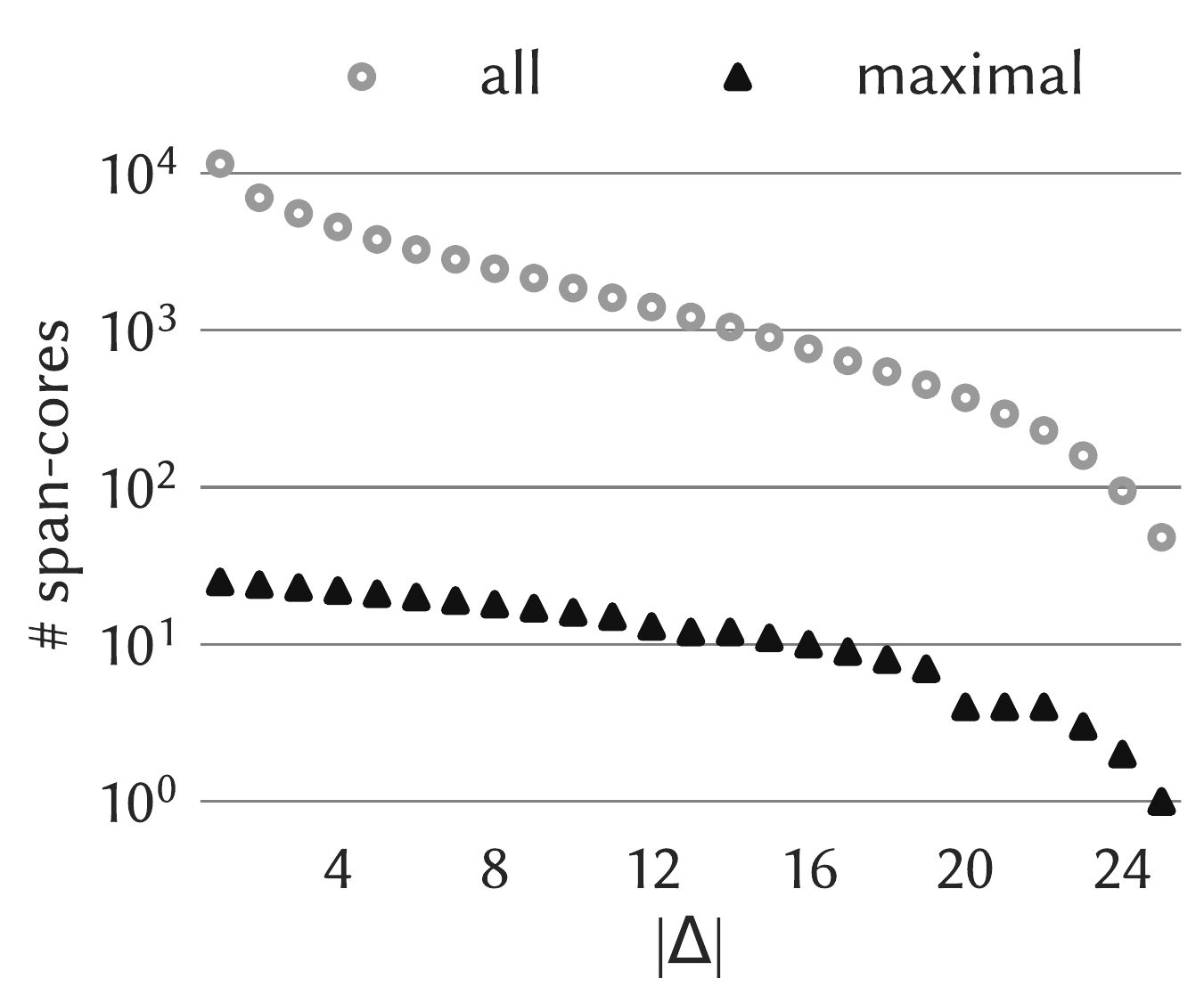}\\
\includegraphics[width=0.35\columnwidth]{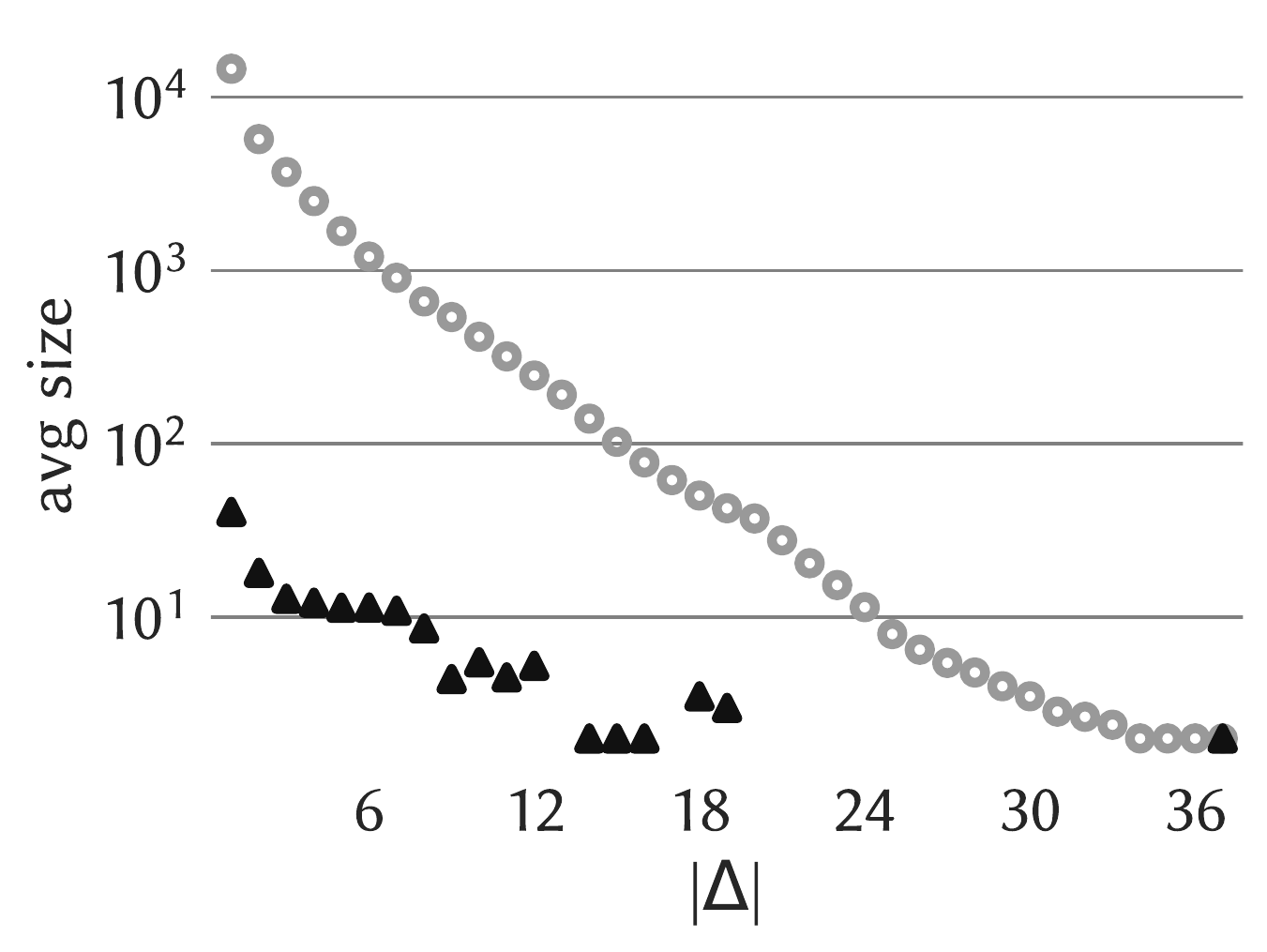} & \includegraphics[width=0.35\columnwidth]{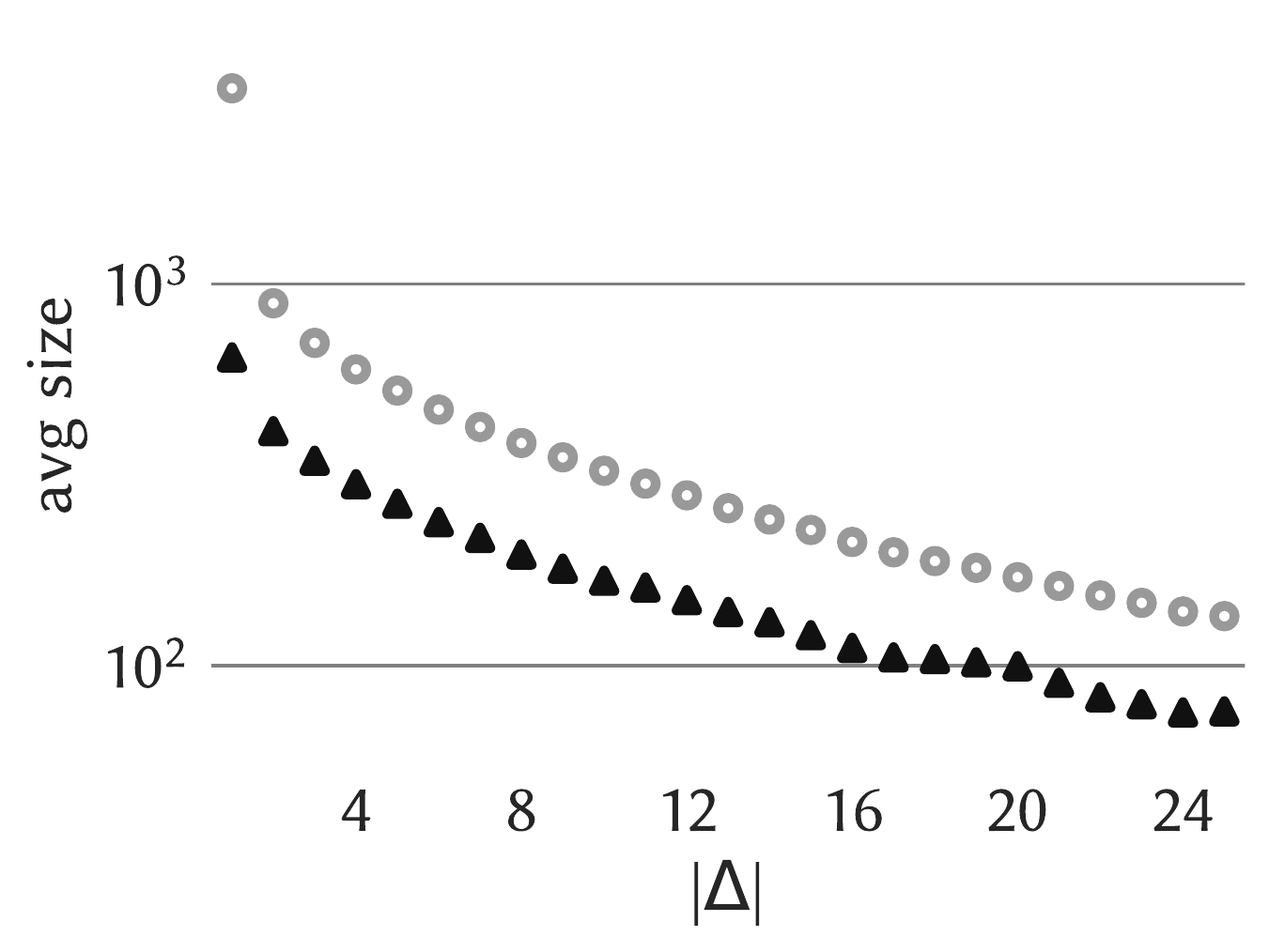} \vspace{-2mm}\\
\footnotesize{\textsf{DBLP}} & \footnotesize{\textsf{Epinions}}
\end{tabular}
}
\caption{\label{fig:coresvsmaximals_delta} Top plots: number of  \spancores and maximal \spancores as a function of the size of the temporal span $|\Delta|$. Bottom plots: average size of all \spancores and maximal \spancores  as a function of the size of the temporal span $|\Delta|$.}
\end{figure}

\subsection{Span-cores characterization}\label{sec:experiments_characterization}
%
We compare and characterize all \spancores against maximal \spancores.
At first, Table~\ref{tab:evaluation} shows that \spancores are at least one order of magnitude more numerous than maximal \spancores for all datasets, with the maximum difference of three orders of magnitude for the \textsf{HongKong} dataset.

In Figure~\ref{fig:coresvsmaximals_k} we show the number (top) and the average size (bottom) of \spancores and maximal \spancores as a function of the order $k$ for the \textsf{DBLP} and \textsf{Epinions} datasets.
For both datasets, the number of maximal \spancores is at least one order of magnitude lower than the total number of \spancores up to a quarter of the $k$ domain, where the \spancores are more numerous.
Instead, in the rest of the domain, they mostly coincide due to the maximality condition over $|\Delta|$.
The average size is also smaller for maximal \spancores, difference that wears thin when the gap between
the numbers of \spancores and maximal \spancores starts decreasing since, for high values of $k$, most (or all) \spancores are maximal.

\begin{figure}[t!]
\centerline{
\begin{tabular}{cc}
\includegraphics[width=0.35\columnwidth]{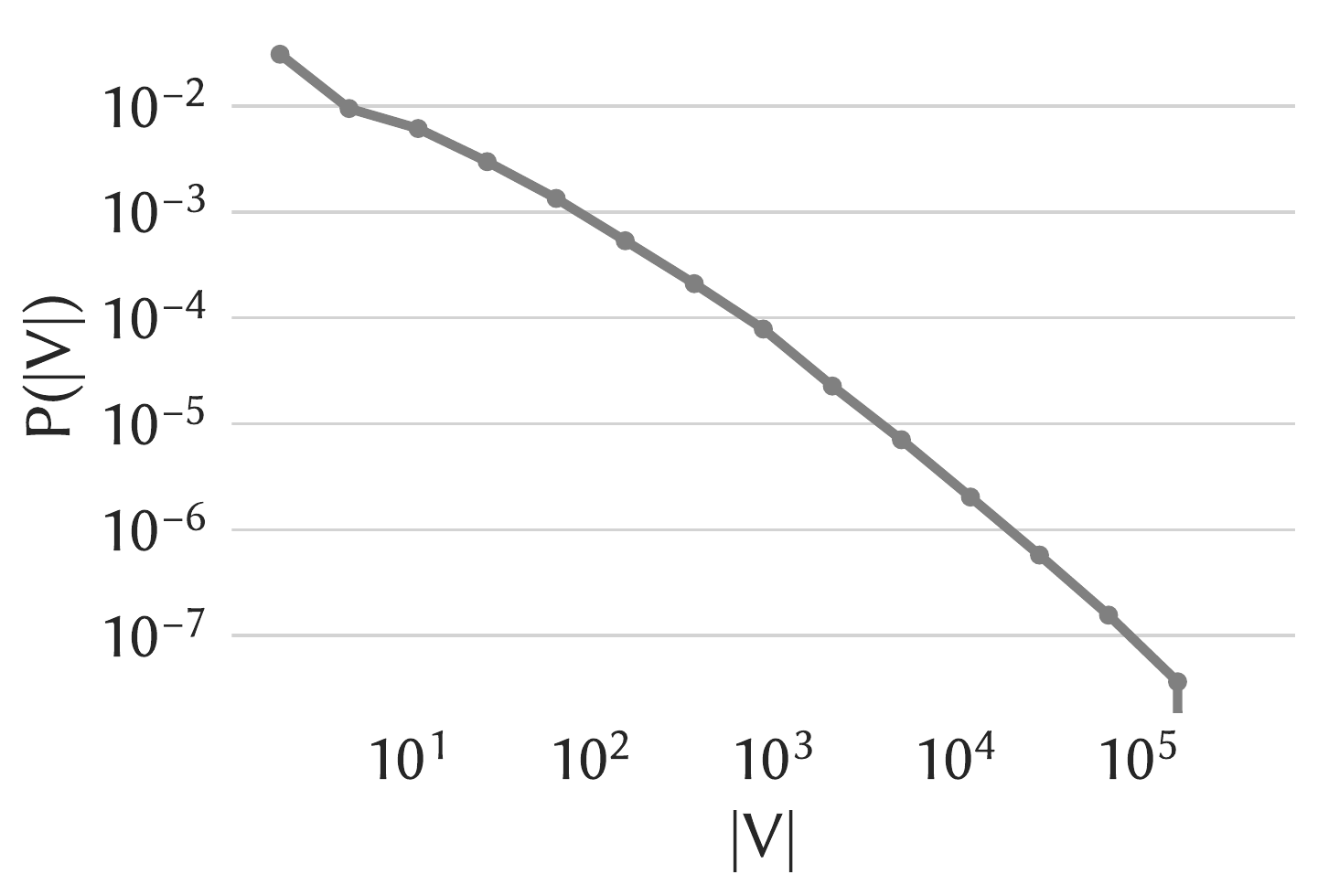} &
\includegraphics[width=0.35\columnwidth]{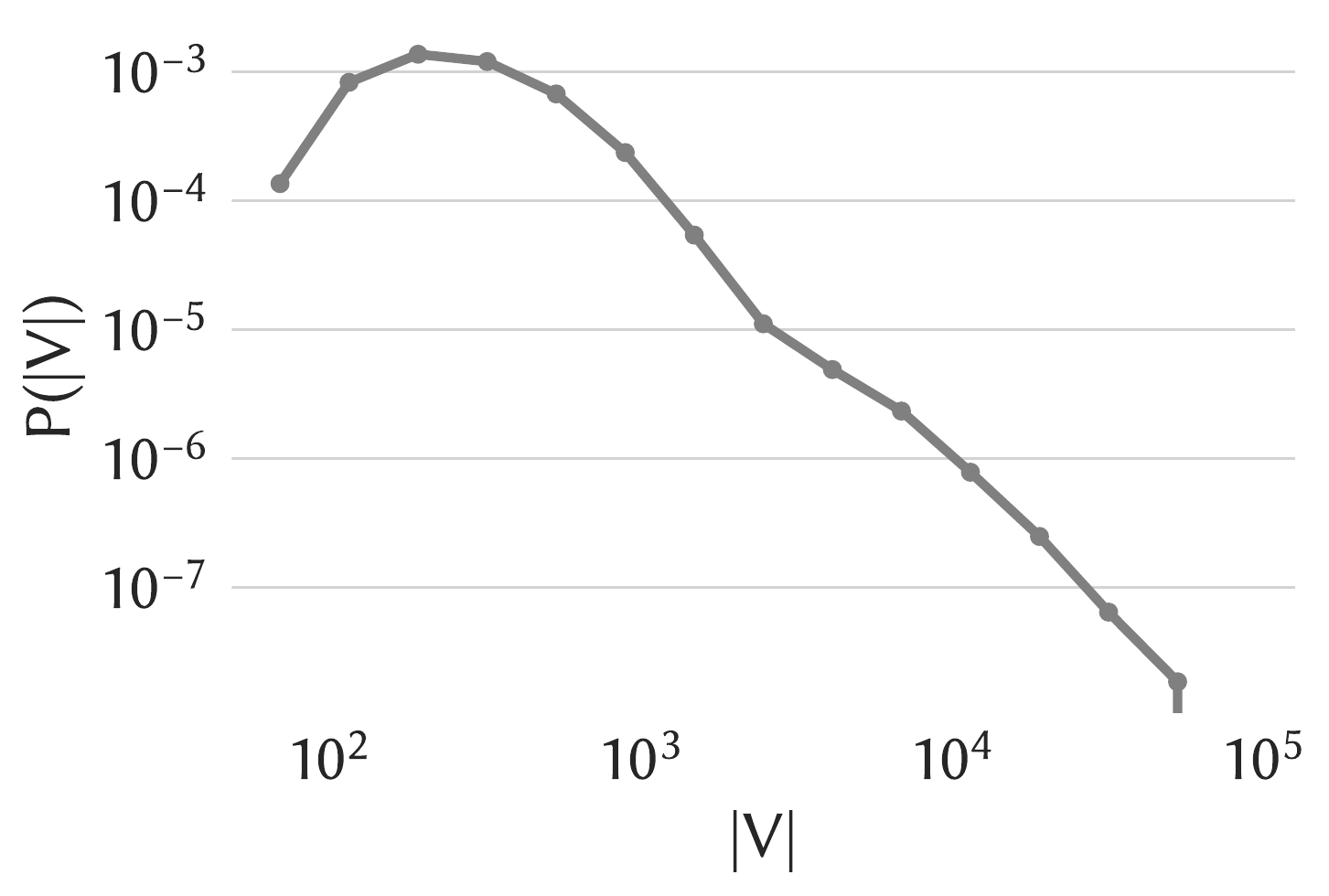}\\
\includegraphics[width=0.35\columnwidth]{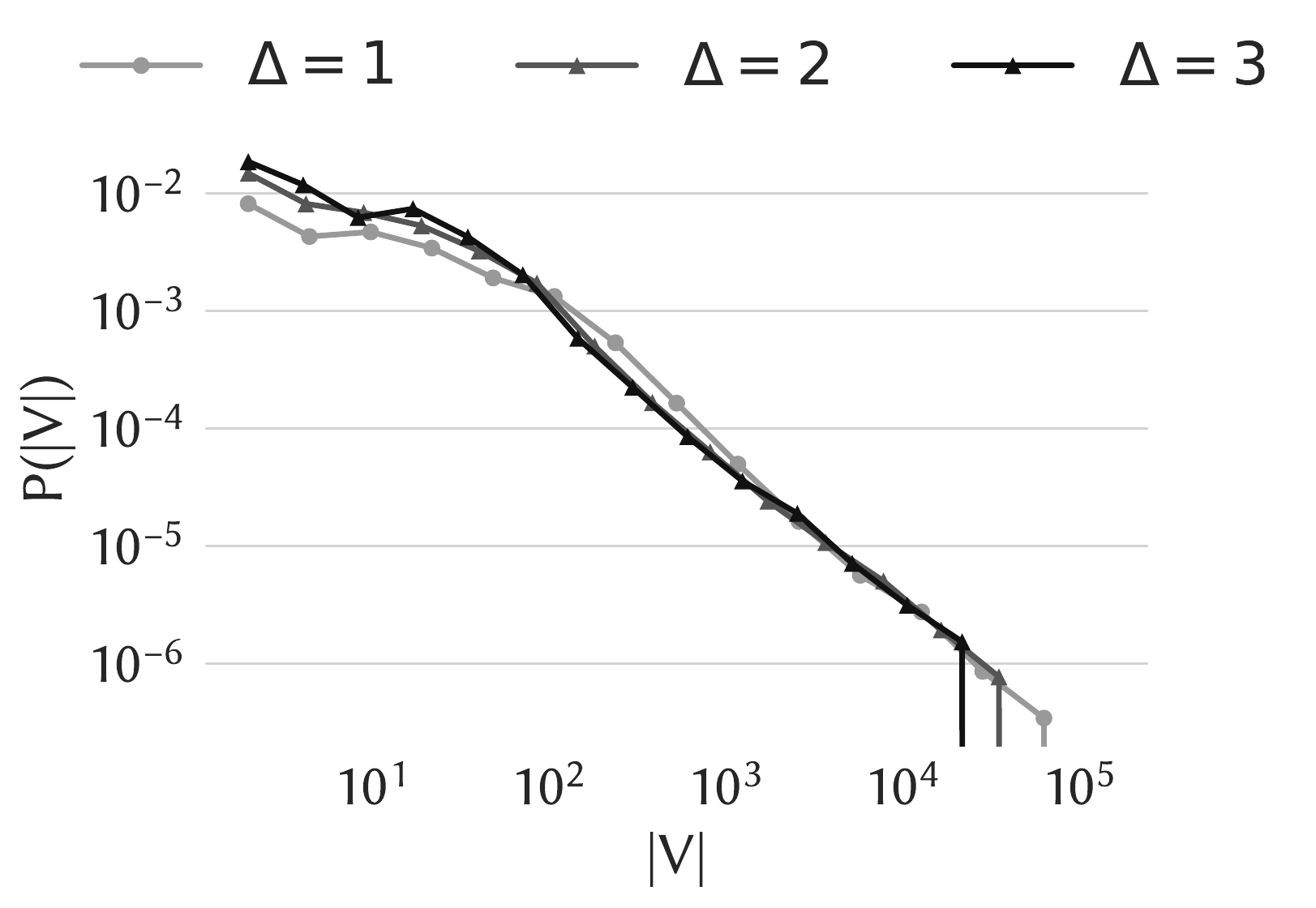} &
\includegraphics[width=0.35\columnwidth]{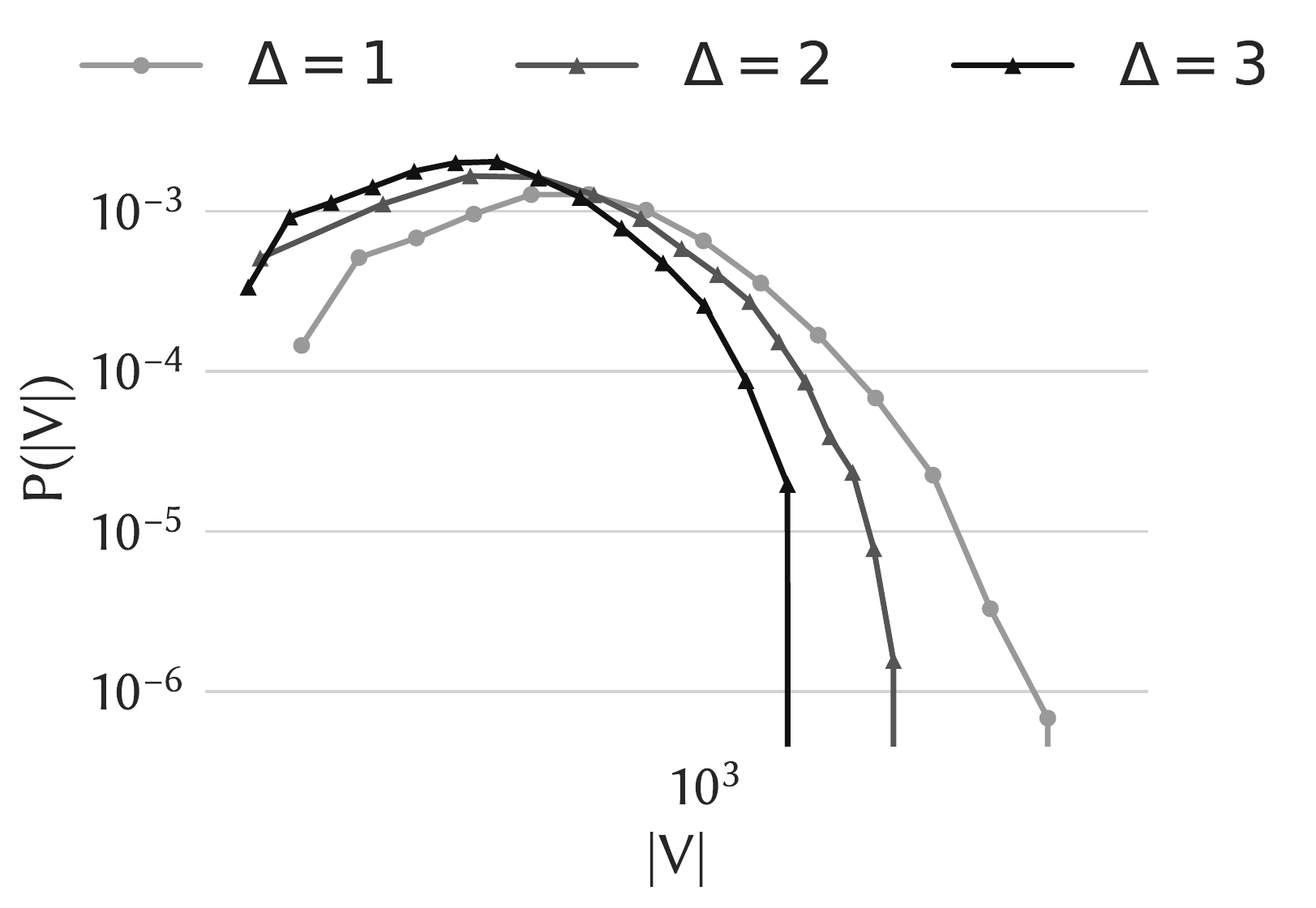}\\
\includegraphics[width=0.35\columnwidth]{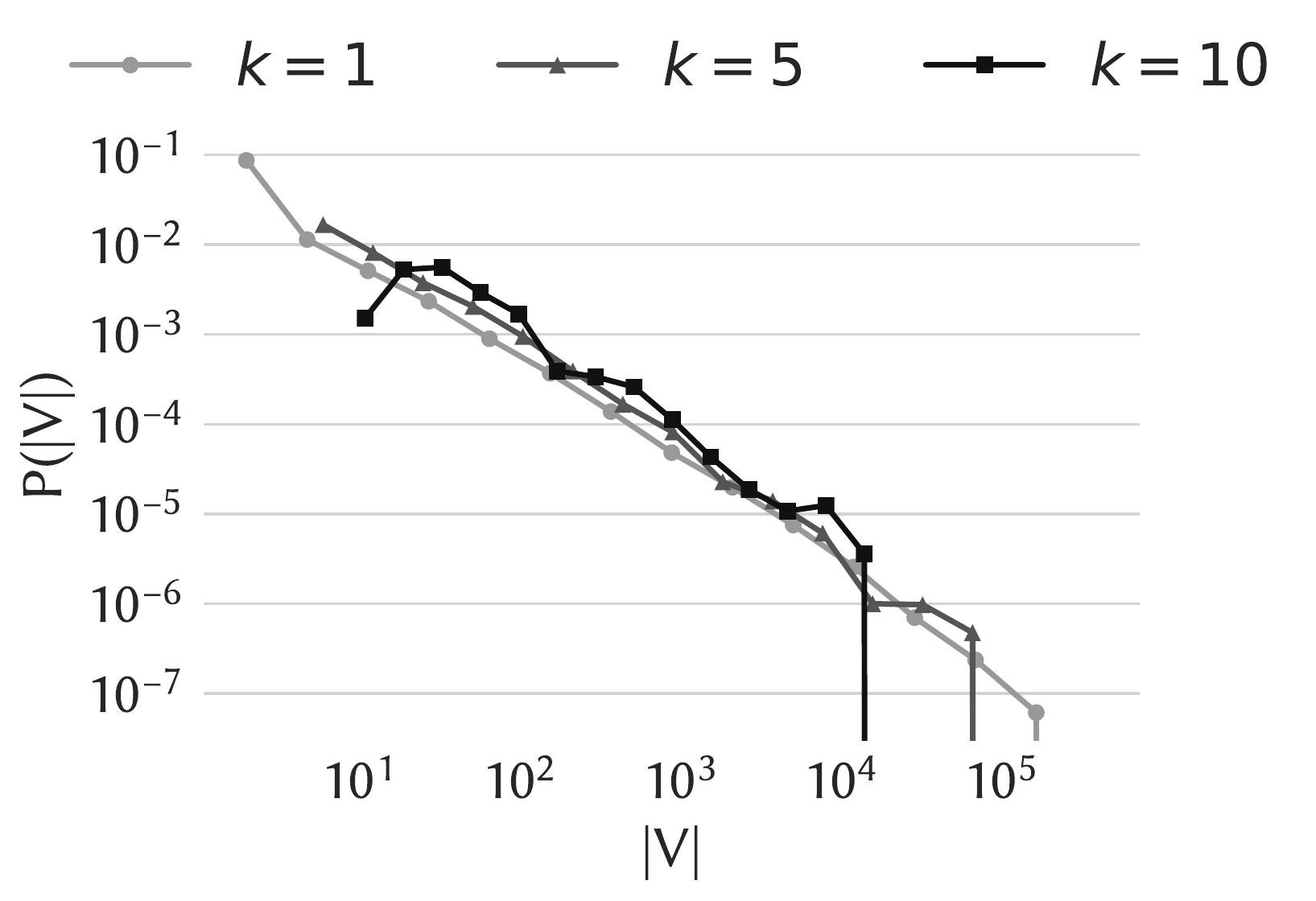} &
\includegraphics[width=0.35\columnwidth]{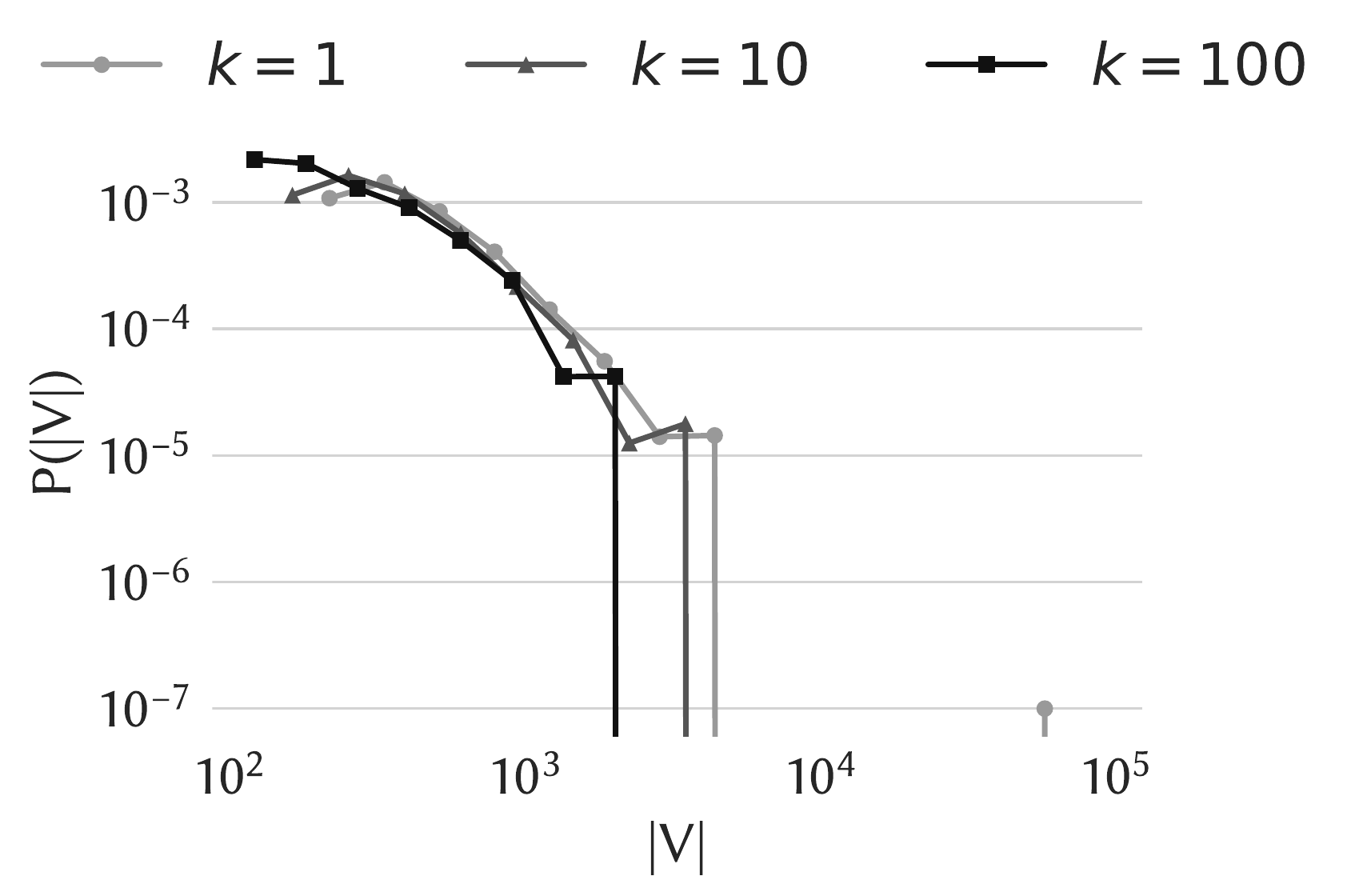}\\
\vspace{-2mm}\\
\footnotesize{\textsf{DBLP}} & \footnotesize{\textsf{Epinions}}
\end{tabular}
}
\caption{\label{fig:distrsizes}
Distribution of sizes of the \spancores.
Top plots: overall distributions. Middle plots: distributions of the sizes of \spancores with fixed temporal span.
Bottom plots: distributions of the sizes of \spancores with fixed order.}
\end{figure}

Figure~\ref{fig:coresvsmaximals_delta} shows a different picture when numbers and average sizes of \spancores are
shown as a function of the size of the span $|\Delta|$.
For both datasets, the number of \spancores and maximal \spancores is decreasing --
which is expected since the number of intervals decreases when $|\Delta|$ increases --
with a constant gap close to one and two orders of magnitude, respectively.
On the other hand, the behavior of the average size is quite different between the two datasets.
For low values of $|\Delta|$, the average size of \spancores of the \textsf{DBLP} dataset is much higher than the average size
of maximal \spancores, then the difference decreases and
vanishes at the end of domain where a maximal \spancore of $|\Delta| = 37$ dominates all other \spancores with $|\Delta| \geq 20$.
Instead, for the \textsf{Epinions} dataset, the average size of all \spancores and 
of maximal \spancores follow the same behavior, with a difference of less than
an order of magnitude, because the maximality condition over $k$ excludes the largest \spancores from the set of maximal \spancores.

Figure \ref{fig:distrsizes} yields some additional insights by showing the whole distribution of sizes of the \spancores:
these distributions are very skewed and span several orders of magnitude. The figure also shows the size distributions 
of \spancores with a given order or duration: all are broad, becoming narrower as the order or duration increase. 
We have also considered randomized versions of the datasets, in which edges are reshuffled at random at each timestamp.
In this case as well, the distributions of the sizes of the  \spancores are found to be broad, showing
that the heterogeneity in \spancore sizes can also be obtained in largely random data. However, 
the cohesive temporal structures are destroyed in the reshuffled data (see also Section \ref{sec:temppatterns}):
all the \spancores have then very small order and span.

{

\begin{figure}[t!]
	\centerline{
		\begin{tabular}{cc}
			\includegraphics[width=0.35\columnwidth]{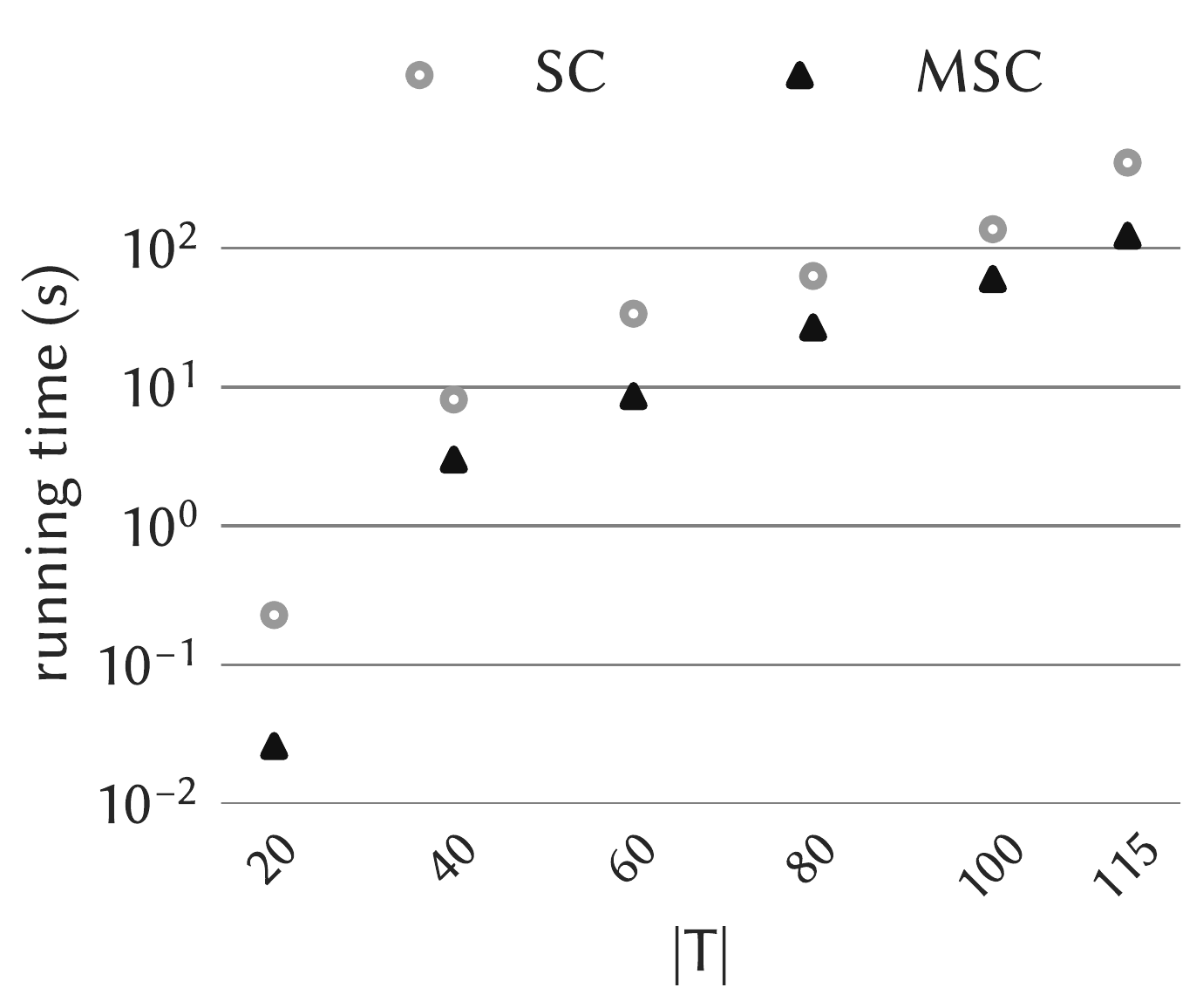} &
			\includegraphics[width=0.35\columnwidth]{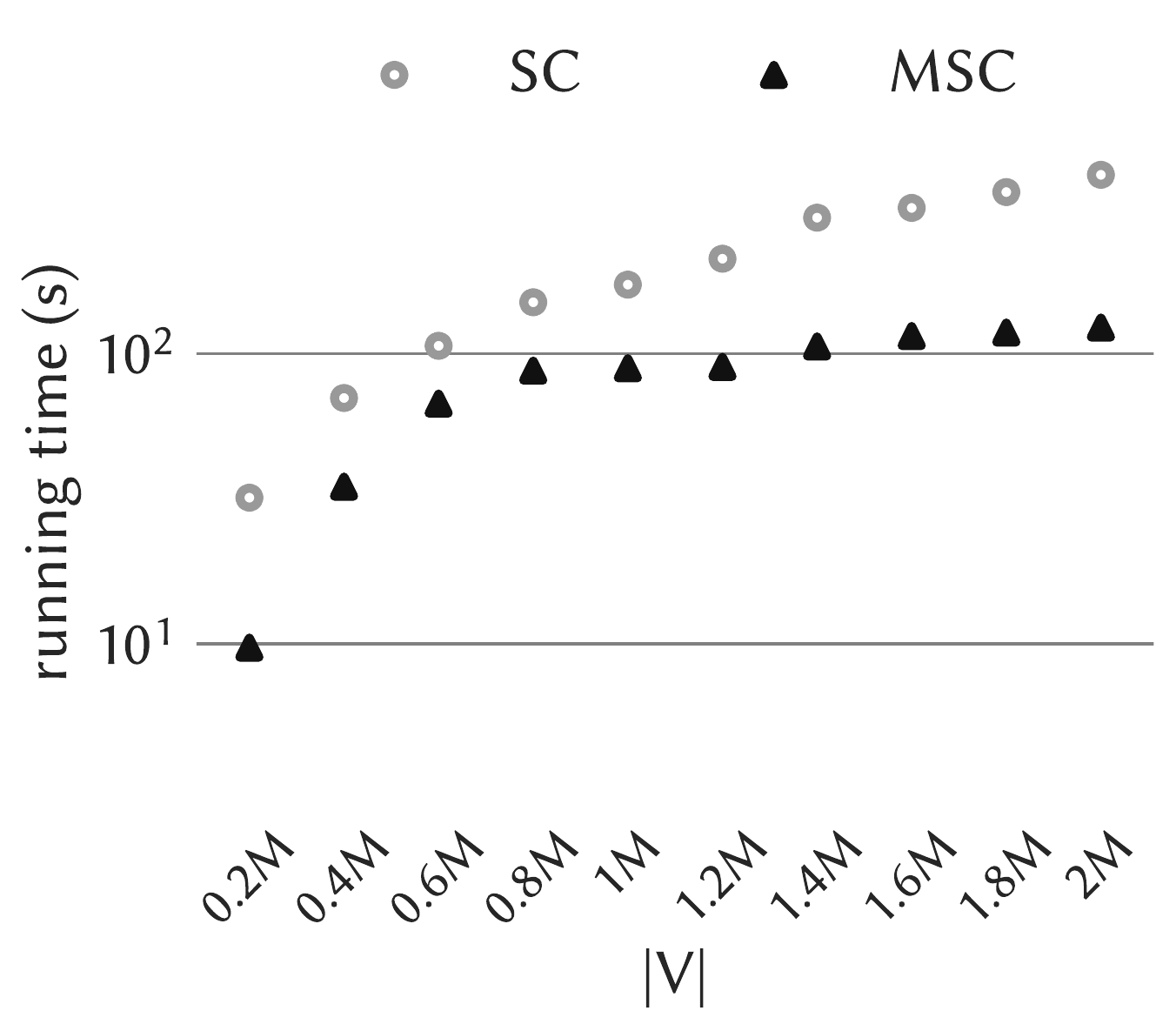}\\
		\end{tabular}
	}
	\caption{\label{fig:scability}Scalability analysis: running time of the algorithms for extracting all the span-cores (\textsf{SC}, Algorithm~\ref{alg:decomposition}) and only the maximal span-cores (\textsf{MSC}, Algorithm~\ref{alg:imcores}) as a function of the number timestamps and vertices, on the \textsf{Amazon} dataset.}
\end{figure}

\subsection{Scalability analysis}\label{sec:scalability}

Here we evaluate the scalability of Algorithms~\ref{alg:decomposition}~and~\ref{alg:imcores}.
To this aim, we consider temporal subgraphs derived from the \textsf{Amazon} dataset with increasing number of vertices and timestamps.
To obtain temporal subgraphs with varying number of timestamps, we simply consider the temporal graphs associated with the first $20$, $40$, $60$, $80$, $100$ timestamps, while considering all the vertices and edges existing in these time-frames.
For what concerns the temporal subgraphs with controlled number of vertices, we consider the entire temporal domain and sample sets of vertices of size $0.2M$, $0.4M$, $0.6M$, $0.8M$, $1M$, $1.2M$, $1.4M$, $1.6M$, $1.8M$.
Vertices are sampled according to the following simple procedure:
\begin{itemize}
	\item Select a vertex uniformly at random from the whole $V$ and add such a vertex to the set of sampled nodes $S$
	\item Starting from the first timestamp of the temporal domain $T$, iteratively
	\begin{itemize}
		\item For each vertex $v$ in $S$ select a neighbor of $v$ uniformly at random and add it to $S$
		\item Move to the next timestamp
	\end{itemize}
	\item If the last timestamp is reached, restart from the first one
\end{itemize}

The results of this scalability experiment are reported in Figure~\ref{fig:scability}.
It can be observed that the trends shown in the figure comply with the time complexities discussed in Sections~\ref{sec:spancores}--\ref{sec:maximal_spancores}: running times are quadratic in the number of timestamps and linear in the number of vertices/edges.
Another relevant consideration is that the difference between the algorithm to compute maximal \spancores and the one for computing all the \spancores gets larger as the number of timestamps or vertices increase. This further attests the usefulness of introducing an algorithm that is specifically devoted to maximal-\spancore computation.
}

\subsection{Temporal community search}\label{sec:experiments_cs}
In this subsection we assess the performance of the proposed algorithms for temporal community search (presented in Sections~\ref{sec:alg_naive_cs}--\ref{sec:alg_mcs}), as well as the greedy procedure for reducing the size of the output communities (presented in Section~\ref{sec:min_cs}).
In the remainder of this subsection we refer to our basic algorithm (i.e., Algorithm~\ref{alg:temporalcsnaive}, which precomputes the penalty scores via \spancore decomposition) as \textsf{SC-TCS}, and to our more efficient  algorithm (i.e., Algorithm~\ref{alg:maximaltemporalcs}, which exploits maximal \spancores to reduce the number of timestamps to be considered) as \textsf{MSC-TCS}.
We also involve in the comparison a na\"ive version of Algorithm~\ref{alg:temporalcsnaive}, where the penalty scores of the various intervals are computed from scratch during the execution of the algorithm, instead of precomputing them all via \spancore decomposition.
We refer to such a na\"ive method as \textsf{Na\"ive-TCS}.

The experimental setting we consider here is as follows.
We vary the number $|Q|$ of query vertices from 1 to 3.
In particular, when $|Q| = 1$, we sample the single query vertex uniformly at random from the whole vertex set $V$.
Instead, for $|Q| > 1$, we employ a more sophisticated sampling strategy that aims at finding meaningful query-vertex sets, i.e., vertices interacting with each other during the temporal observations, and, at the same time, independent from the specific form of the resulting \spancore decomposition.
Specifically, the sampling strategy we use is based on an adaptation of random walk to the temporal settings:
\begin{itemize}
\item Select a vertex uniformly at random from the whole $V$ and add such a vertex to the set $Q_{\visited}$ of visited vertices
\item Starting from the first timestamp of the temporal domain $T$, iteratively:
\begin{itemize}
\item With probability $p$, move the random walker to a neighbor of the current vertex and add the neighbor to $Q_{\visited}$. If the current vertex has no neighbors in a given timestamp, the random walker jumps to the first next timestamp in which that vertex has at least one neighbor
\item With probability $1 - p$, keep the random walker at the current vertex, but go to the next timestamp
\end{itemize}
\item Restart if the last timestamp of $T$ is reached
\item Stop when $|Q_{\visited}|$ reaches a proper (user-defined) size $\nu$
\item Sample $|Q|$ query vertices from $Q_{\visited}$ with probability proportional to the frequency of the visits during the random walk
\end{itemize}
In our experiments we set $p = 0.8$ and $\nu = 3|Q|$.
As far as the number $h$ of output communities, we consider the range $h \in [10, 20, 30, 40, 50, 60]$ on all datasets, with the exception of \textsf{StackOverflow}, for which we discard $h = 60$, and \textsf{Epinions}, for which we consider $h \in [4, 8, 12, 16, 20, 24]$.
For every parameter configuration, we perform five runs of every algorithm (in every run we sample a different query-vertex set).
Note that we were not able to run the algorithms for temporal community search on the \textsf{WikiTalk} dataset due to memory constraints.

\spara{Running time.}
In Figure~\ref{fig:cs_runtime} we show the running time of the proposed algorithms as a function of the number $h$ of output communities, for the \textsf{HighSchool}, \textsf{DBLP}, \textsf{Wikipedia}, and \textsf{Amazon} datasets.
The first general observation we make is that the running time of all algorithms increases as $h$ gets higher.
This in accordance with the time-complexity analysis reported in Section~\ref{sec:community_search}.
Also, running times are independent of the selected query-vertex set $Q$.
Looking at the individual performance, we notice that, as expected, the \textsf{Na\"ive-TCS} method has severe limitations in terms of efficiency:
it takes hours to run on the \textsf{HighSchool} and \textsf{Wikipedia} datasets, while it is not able to terminate in less than 10 days on the remaining datasets.
\textsf{SC-TCS} and \textsf{MSC-TCS} are much faster than \textsf{Na\"ive-TCS}, achieving a speedup of up to more than four orders of magnitude.
\textsf{MSC-TCS} is in most cases faster than  \textsf{SC-TCS}, with speedup up to one order of magnitude (on \textsf{HighSchool}, for $h = 60$).
This confirms that the exploitation of the maximal \spancores is effective in both shortening the precomputation time and reducing the temporal domain considered in the dynamic-programming step.
The only exception is the \textsf{Wikipedia} dataset.
To dive deeper into the motivations of this exception, we report in Figure~\ref{fig:cs_runtime_detail} the split of the average running time of \textsf{SC-TCS} and \textsf{MSC-TCS} into the time spent in the dynamic-programming step (DP) (which also includes the identification of the reduced temporal domain $T^*$ for \textsf{MSC-TCS}),
and the precomputation time (i.e., the time required for computing all penalty scores via \spancore decomposition or maximal \spancores).
Interestingly, what affects the most the running time is the precomputation of the scores.
Apparently, the $Q$-constrained version of \cores\ is more efficient than \innermosts\ in some datasets, which we believe is due to the  structure of the search space.
On the other hand, these results confirm that the reduction of the temporal domain considered by the dynamic-programming step is actually effective since the DP running time of \textsf{MSC-TCS} is always less than (or equal to) the DP running time of \textsf{SC-TCS}.

\begin{figure}
\centerline{
\begin{tabular}{cc}
\includegraphics[width=0.35\columnwidth]{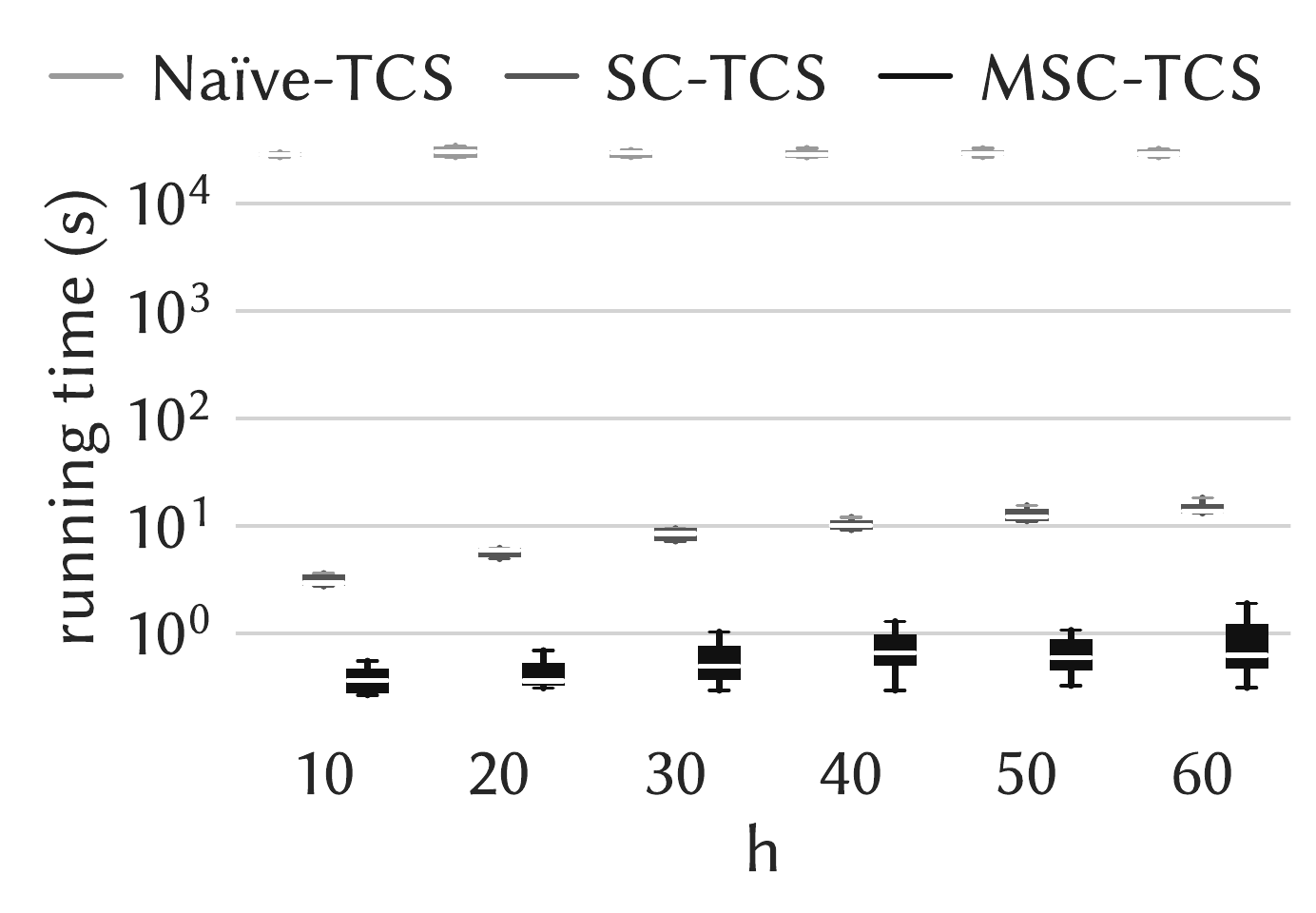} & \includegraphics[width=0.35\columnwidth]{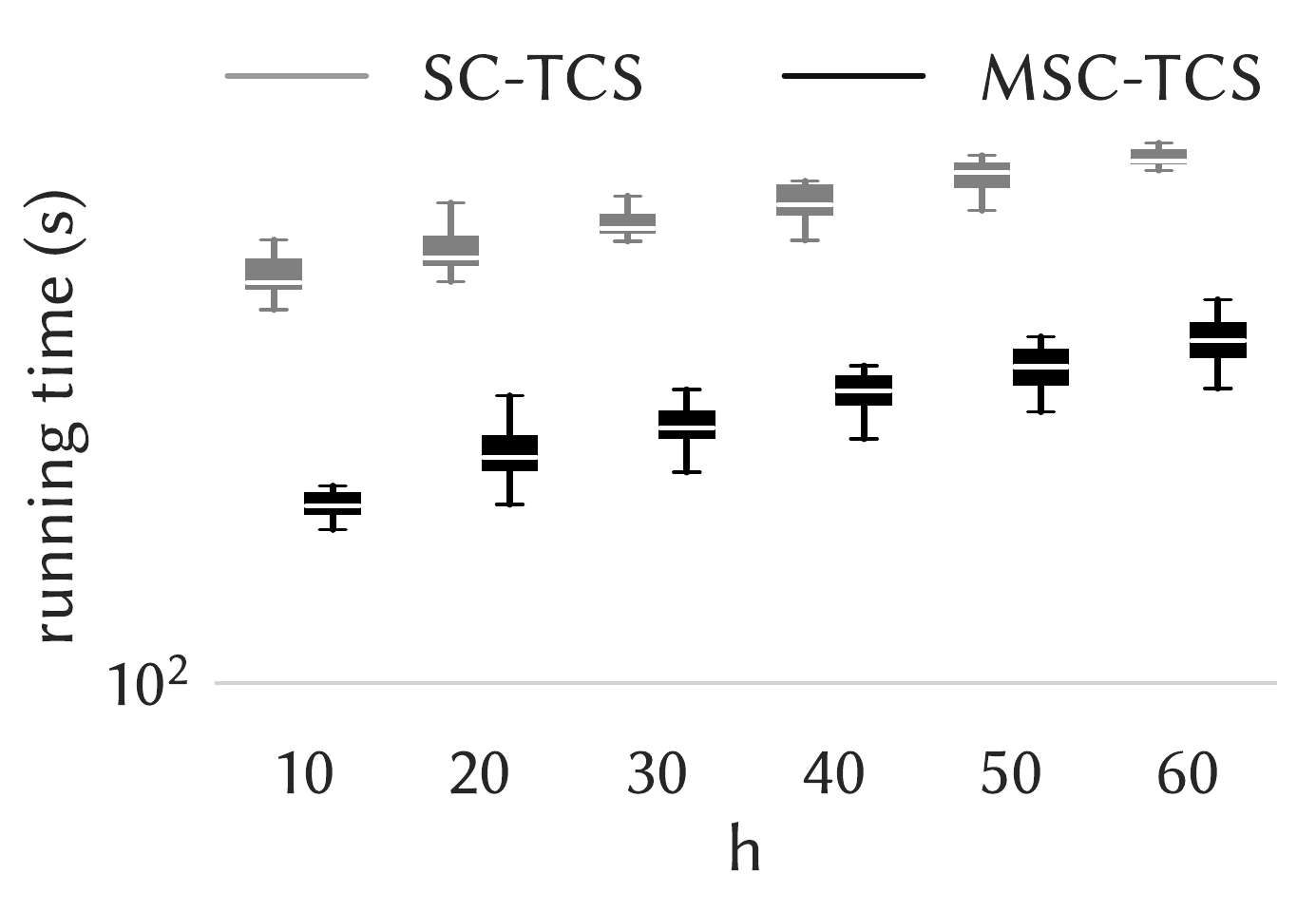} \vspace{-2mm}\\
\footnotesize{\textsf{HighSchool}} & \footnotesize{\textsf{DBLP}} \vspace{4mm}\\
\includegraphics[width=0.35\columnwidth]{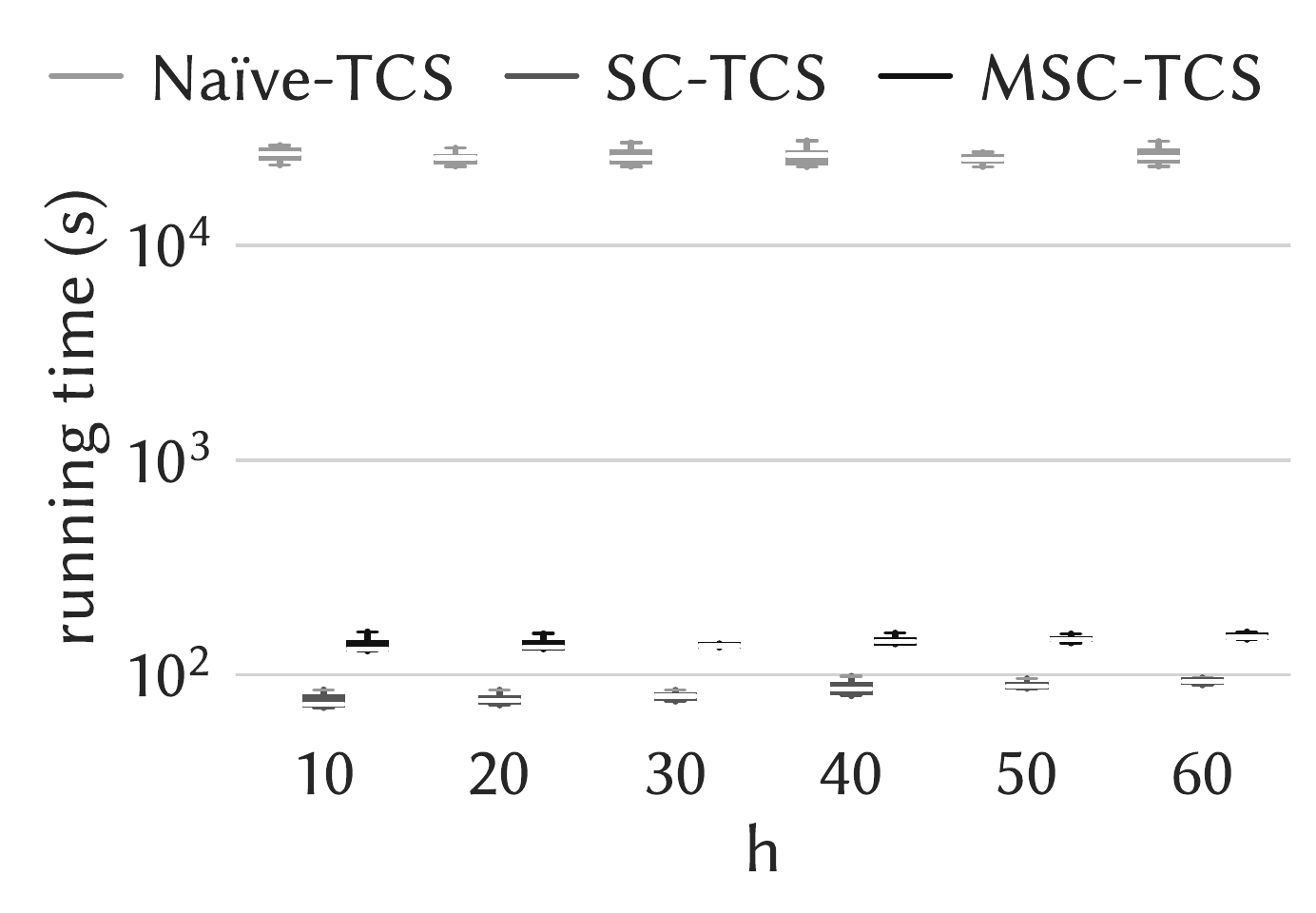} & \includegraphics[width=0.35\columnwidth]{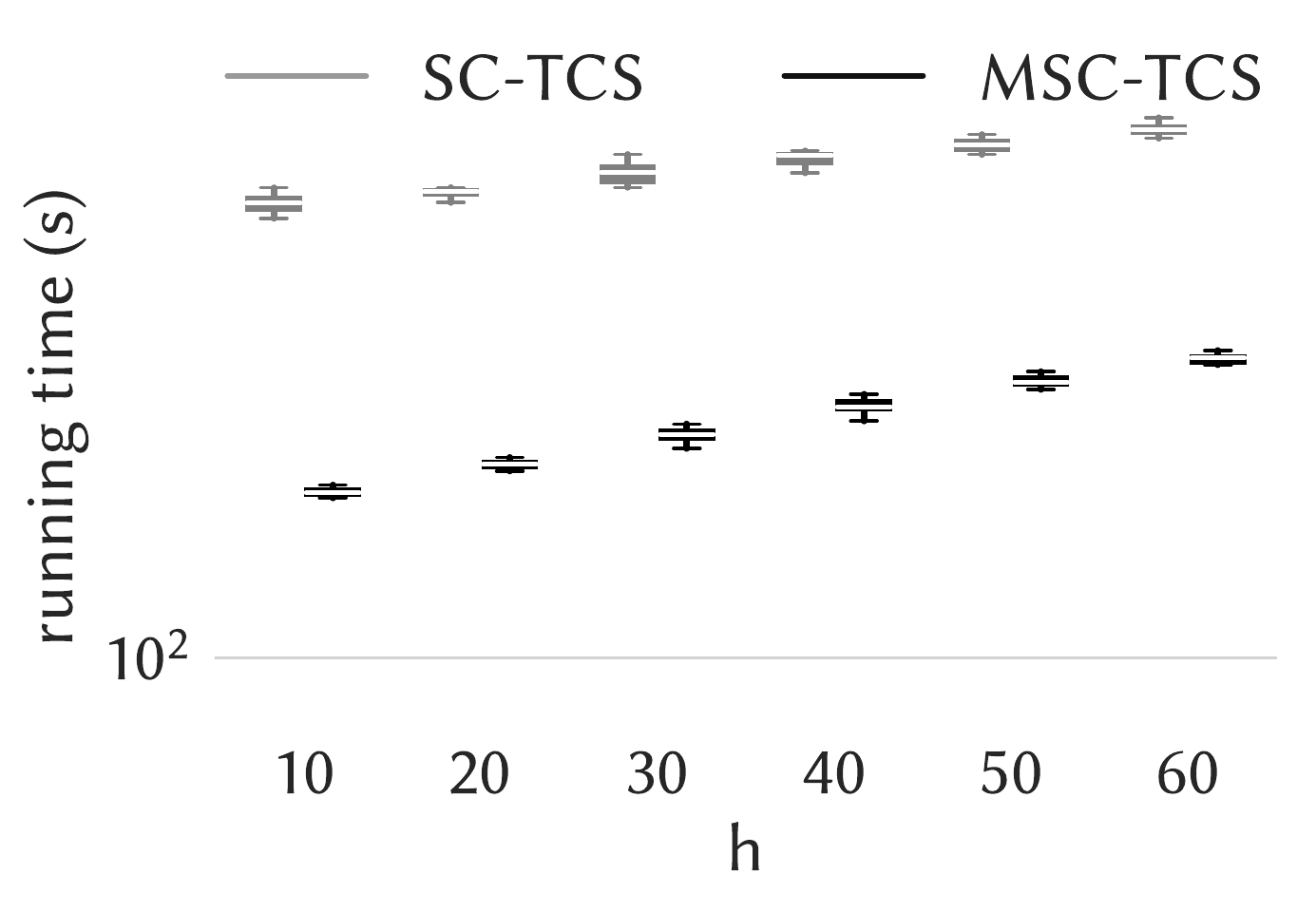} \vspace{-2mm}\\
\footnotesize{\textsf{Wikipedia}} & \footnotesize{\textsf{Amazon}}\\
\end{tabular}
}
\caption{\label{fig:cs_runtime} Running time of the algorithms for \temporalcs, as a function of the number $h$ of output communities.
Each boxplot corresponds to 15 data points.}
\end{figure}

\begin{figure}
\centerline{
\begin{tabular}{cc}
\includegraphics[width=0.35\columnwidth]{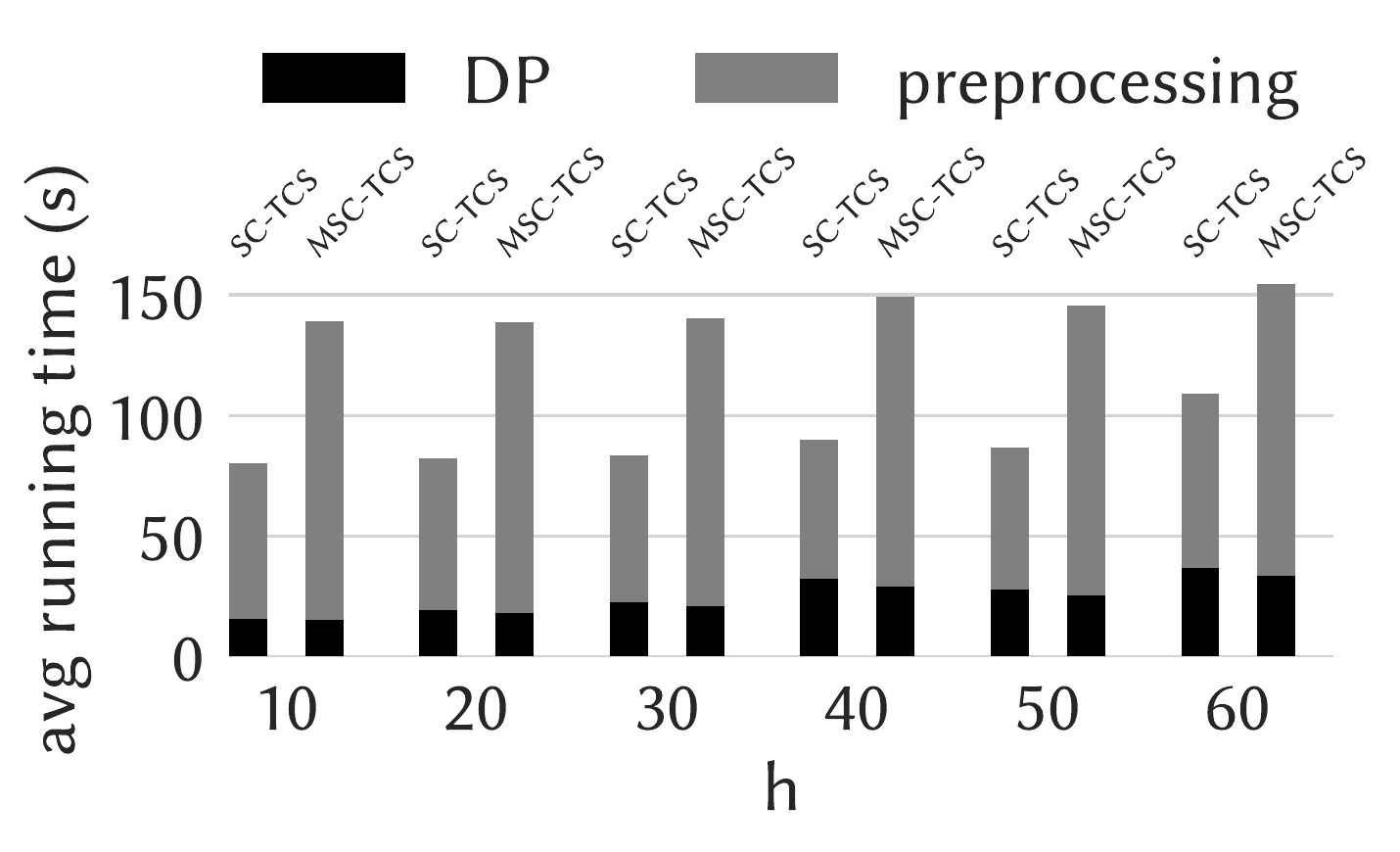} & \includegraphics[width=0.35\columnwidth]{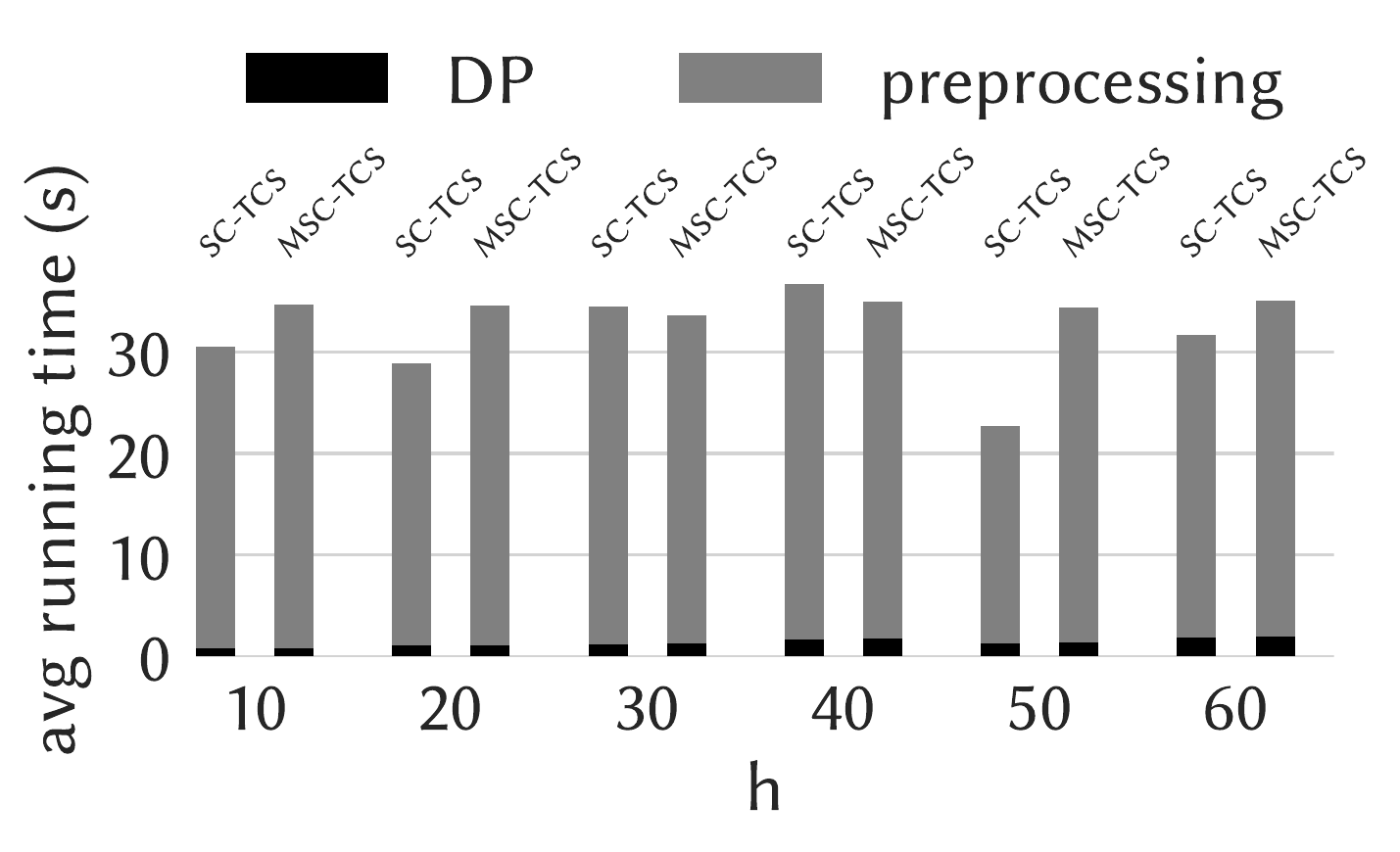} \vspace{-2mm}\\
\footnotesize{\textsf{Wikipedia}} & \footnotesize{\textsf{Last.fm}}\\
\end{tabular}
}
\caption{\label{fig:cs_runtime_detail} Split of the average running time of the \textsf{SC-TCS} and \textsf{MSC-TCS} algorithms into dynamic programming (DP) and precomputation, for the \textsf{Wikipedia} and \textsf{Last.fm} datasets.}
\end{figure}

\begin{figure}
\centerline{
\begin{tabular}{cc}
\includegraphics[width=0.35\columnwidth]{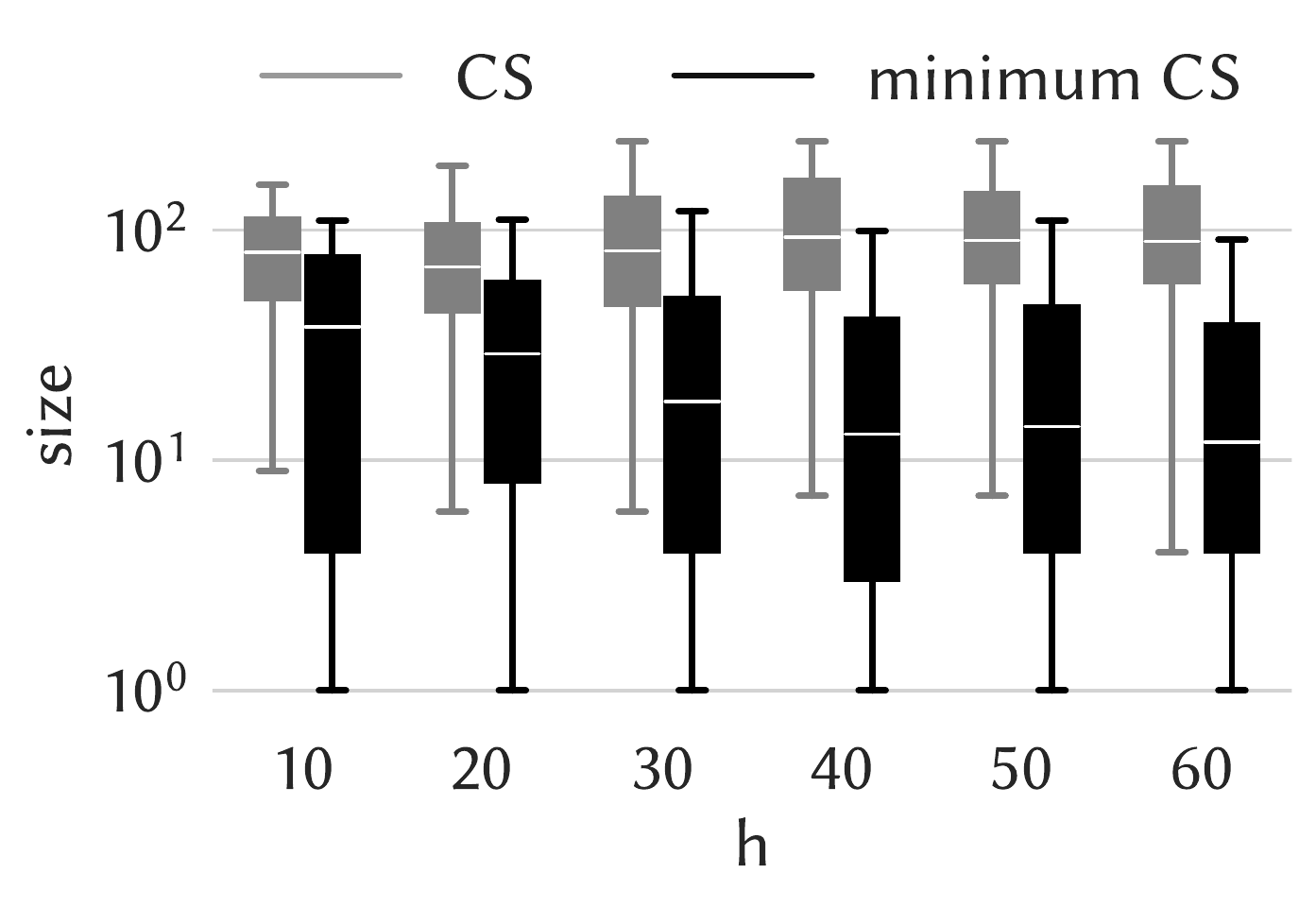} & \includegraphics[width=0.35\columnwidth]{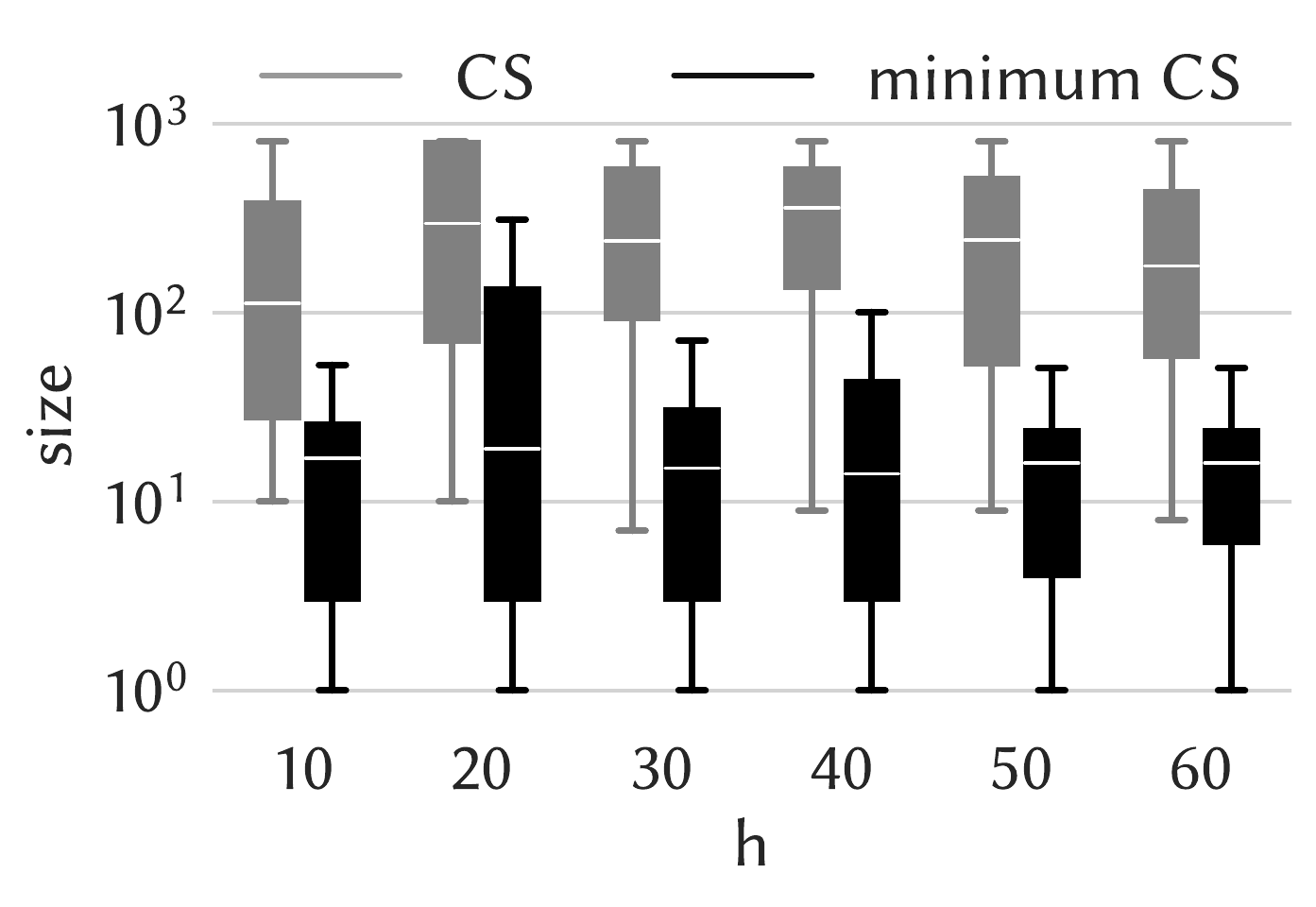} \vspace{-2mm}\\
\footnotesize{\textsf{PrimarySchool}} & \footnotesize{\textsf{HongKong}} \vspace{4mm}\\
\includegraphics[width=0.35\columnwidth]{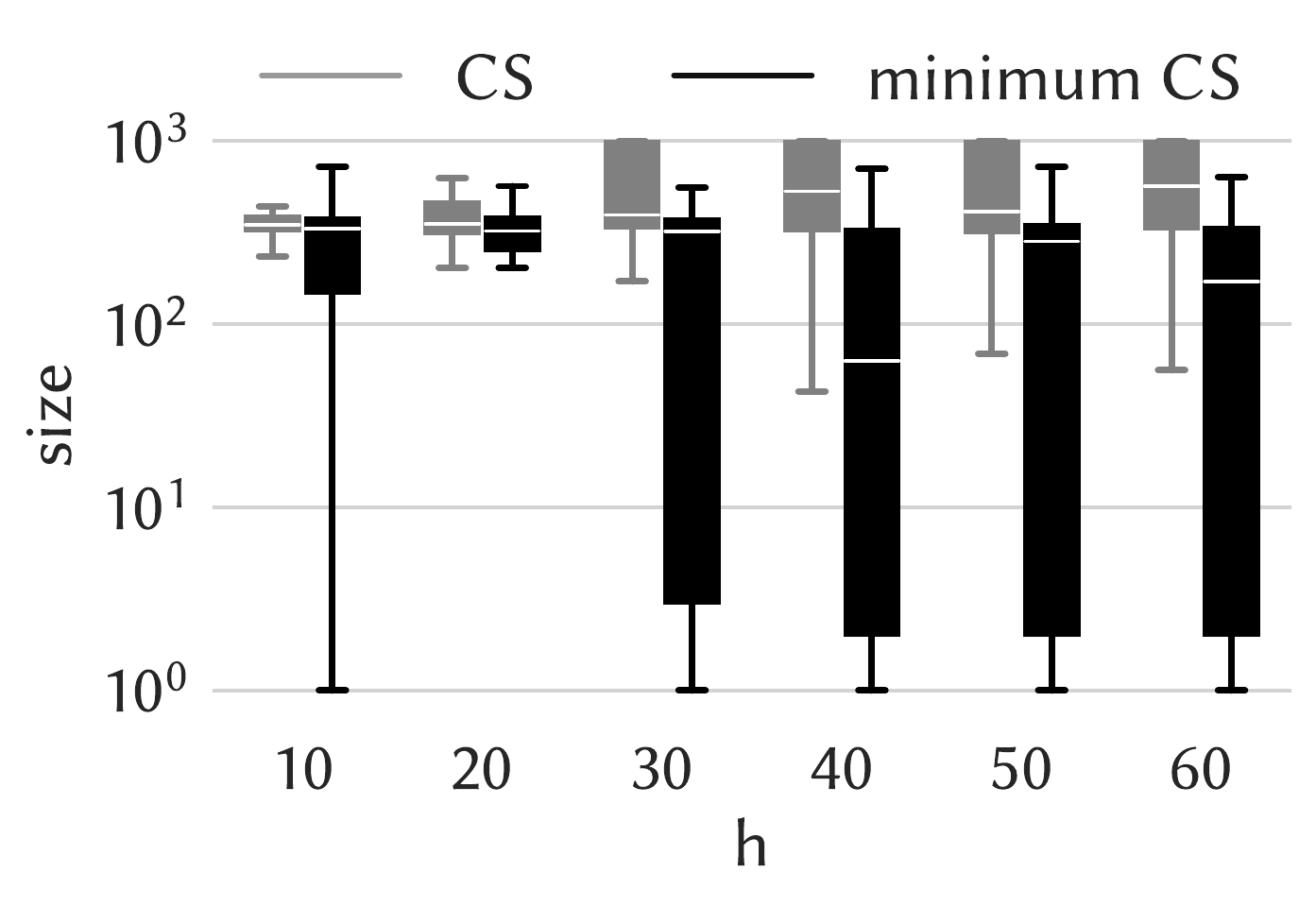} & \includegraphics[width=0.35\columnwidth]{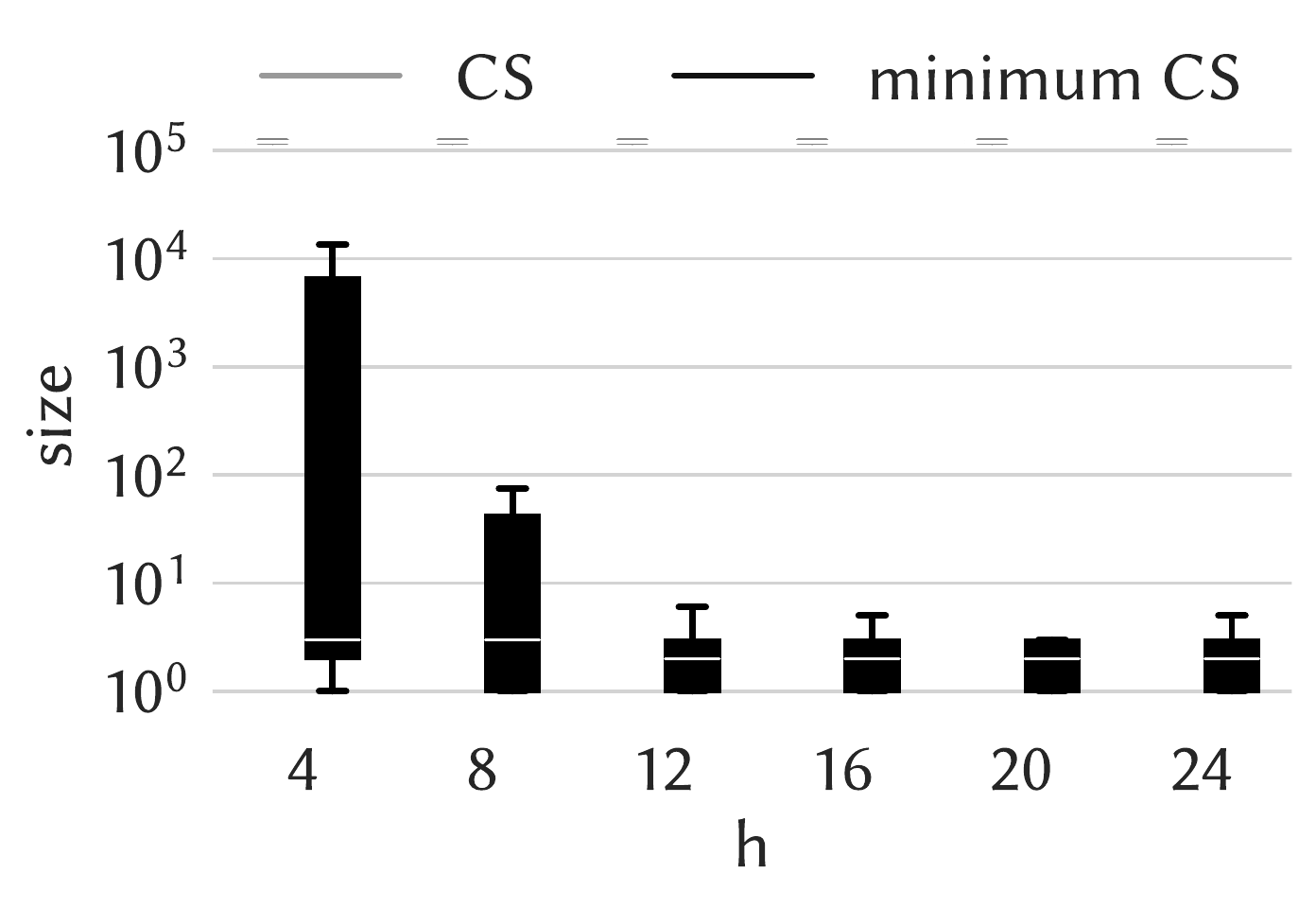} \vspace{-2mm}\\
\footnotesize{\textsf{Last.fm}} & \footnotesize{\textsf{Epinions}}\\
\end{tabular}
}
\caption{\label{fig:greedy_size} Comparison of the size of the communities in the solutions to \temporalcs:
original output of the algorithms for \temporalcs (CS) and after running the \greedy algorithm on top of them (minimum CS).
Each boxplot corresponds to 15 data points.}
\vspace{3mm}
\end{figure}

\spara{\greedy.}
Here we evaluate the performance of the proposed \greedy algorithm (Algorithm~\ref{alg:greedy}) for reducing the size of the output communities.
We recall that the proposed algorithms for \temporalcs (evaluated above) output communities corresponding to the $(Q, \Delta_i)$-highest-order-\spancores for all $\{\Delta_i\}_{i=1}^h$ temporal intervals identified.
The \greedy algorithm takes every $(Q, \Delta_i)$-highest-order-\spancore and attempts to reduce its size, while preserving optimality.
Thus, the ultimate goal of the evaluation presented next is to show how well \greedy is able to reduce the size of the original \spancores, and what is its overhead in terms of running time.


Figure~\ref{fig:greedy_size} compares the size of the starting $(Q, \Delta_i)$-highest-order-\spancores and the size of the corresponding reduced community yielded by the \greedy algorithm,  for the \textsf{PrimarySchool}, \textsf{HongKong}, \textsf{Last.fm}, and \textsf{Epinions} datasets.
It can be easily observed that, as a general trend, the reduced communities are much smaller than the original ones, in all datasets, up to four orders of magnitude.
The results on the \textsf{Epinions} dataset are a bit different than the other three datasets.
In fact, on that dataset, the original communities (CS) always include the whole $120$k vertices of the graph, while the communities found by \greedy (minimum CS) have median size smaller than $10$, and, in many cases, they correspond to communities composed of the query vertices only.
This means that, on the \textsf{Epinions} dataset, for our tested queries, the algorithms for \temporalcs do not extract communities that are really cohesive around the query vertices.
This way, the benefits of exploiting an a-posteriori community-size-reduction step are less evident.
Also, we do not notice any evident pattern as a function of $h$, for any dataset.

In Table~\ref{tab:greedy_runtime} we report the average running time of an execution of \greedy, for all datasets.
Note that this is the average time required to process one of the $h$ communities in a solution to \temporalcs.
\greedy runs in $8$ seconds or less in all tested datasets.
Therefore, the additional running time required by the algorithm is rather negligible.

To summarize, \greedy is empirically recognized as a powerful post-processing method for improving the quality of the solutions to \temporalcs:
it finds much smaller communities at a very small additional computational cost.

\begin{table}
\centering
\caption{Average running time of an execution of the \greedy algorithm.}
\label{tab:greedy_runtime}

\begin{tabular}{c|ccccccc}
& \textsf{HighSchool} & \textsf{PrimarySchool}  & \textsf{HongKong} & \textsf{ProsperLoans} & \textsf{Last.fm} \\
\hline
running time (s) & $0.003$ & $0.001$ & $0.02$ & $0.3$ & $0.06$ \\
\hline
\end{tabular}

\vspace{0.5cm}

\begin{tabular}{c|ccccccc}
& \textsf{DBLP}  & \textsf{StackOverflow} & \textsf{Wikipedia} & \textsf{Amazon} & \textsf{Epinions}  \\
\hline
running time (s) & $7$ & $8$ & $1$ & $7$ & $6$ \\
\hline
\end{tabular}
\end{table}

\section{Applications}
\label{sec:applications}

In this section we illustrate applications of (maximal) \spancores in the analysis of face-to-face interaction networks, and how the methods for \temporalcs can be profitably exploited in a task of graph classification.
For these applications we use the three networks gathered in schools, i.e., \textsf{PrimarySchool}, \textsf{HighSchool}, and \textsf{HongKong}, which are described above, at the beginning of Section~\ref{sec:experiments}.
We use a window size of $5$ minutes and, in the analysis, we discard \spancores of $|\Delta| = 1$, i.e., having span of $5$ minutes, since they represent short interactions, not significant for our purposes.
In the following we show
($i$) three types of interesting temporal patterns (Section~\ref{sec:temppatterns}), i.e., social activities of groups of students within a school day, mixing of gender and class, and length of social interactions in groups; ($ii$) a procedure to detect anomalous contacts and intervals that exploits maximal \spancores (Section~\ref{sec:anomaly});
and, ($iii$) an approach to graph classification based on temporal community search (Section~\ref{sec:classification}).

\begin{figure}[h!t!]
\begin{tabular}{c}
\centerline{\includegraphics[width=0.7\columnwidth]{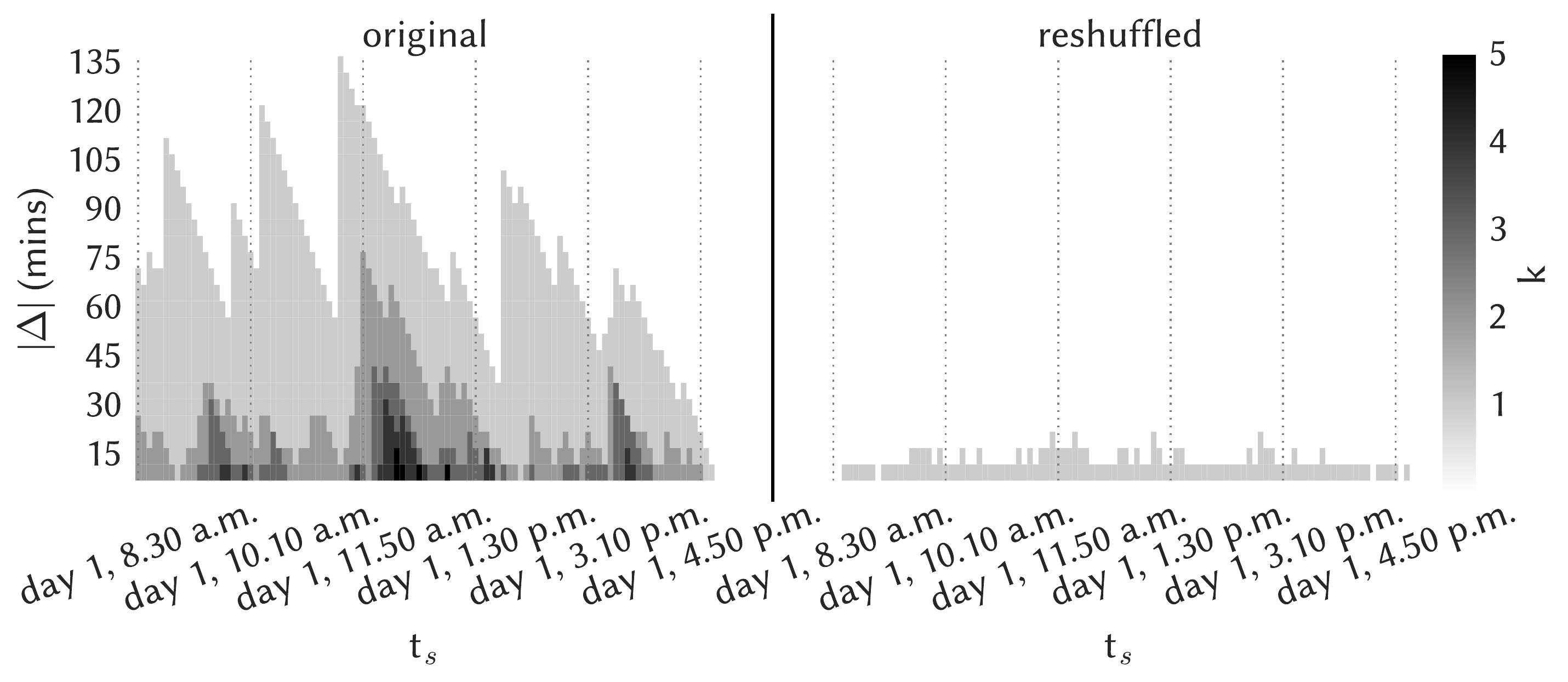}} \vspace{-2mm}\\
\footnotesize{\textsf{PrimarySchool}} \vspace{4mm}\\
\centerline{\includegraphics[width=0.7\columnwidth]{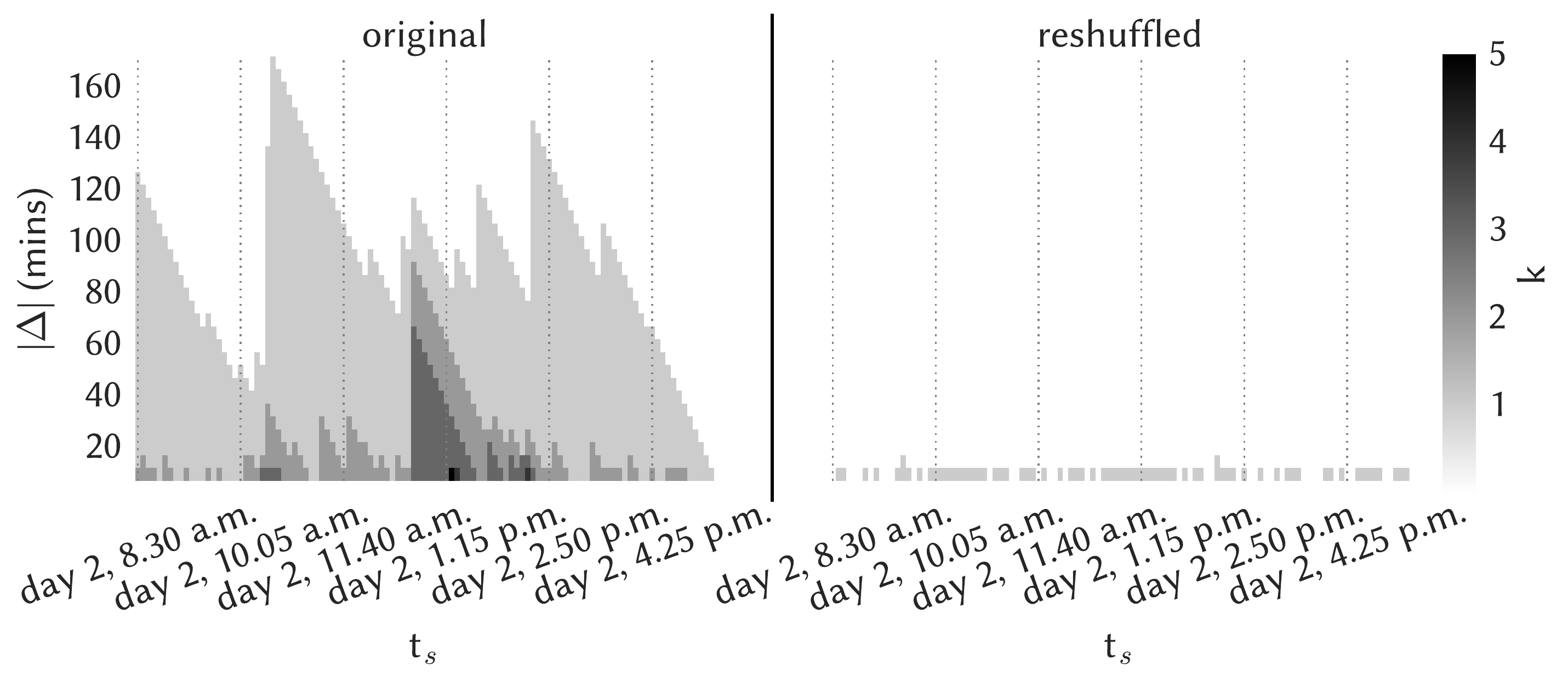}} \vspace{-2mm}\\
\footnotesize{\textsf{HighSchool}}
\end{tabular}
\caption{\label{fig:temporalactivity} Temporal activity of a school day of the \textsf{PrimarySchool} and \textsf{HighSchool} datasets: the $x$ axis reports the hour of the day at which the span of a \spancore starts, the $y$ axis specifies the size of the span (in minutes), and the color scale shows the order $k$.
At a glance, it can be observed that the temporal structure of the \spancore decomposition detects time-evolving cohesive
structures in the original datasets (left plots) that completely disappear in the reshuffled datasets (right plots).}
\end{figure}

\subsection{Temporal patterns}
\label{sec:temppatterns}
\spara{Temporal activity.}
We first show how \spancores\ afford a simple temporal analysis of social activities of groups of people within a school day.
The left side of Figure~\ref{fig:temporalactivity} reports colormaps of the order $k$ of the span-cores as a function of their starting time $t_s$ ($x$ axis) and of the size of their temporal span $|\Delta|$ ($y$ axis), for a school day of the \textsf{PrimarySchool} and \textsf{HighSchool} datasets.
Darker gray indicates \spancores of high order and slots located in the upper part of the plots refer to \spancores of long span.
It is important to notice that the linear decay in span duration is naturally due to the definition of \spancore and to the shifting of the starting time $t_s$;
therefore, it is not a distinguishing feature of the activity patterns found in the analyzed data.
In both datasets, fluctuations of $k$ and $|\Delta|$ are observed along the day, which can be related to school events.
Around $10$ a.m., the size of the span $|\Delta|$ reaches a local maximum in correspondence to the morning break, which means that students establish long-lasting interactions that hold beyond the break itself.
Moreover, when classes gather for the lunch break, the order $k$ reaches its maximum value since students tend to form larger and more cohesive groups.

In order to verify that these results are not trivially derived from the general temporal activity, as simply given by the number of interactions in each timestamp, we compare our findings to a null model.
At each timestamp of the temporal graphs, we reshuffle the edges by the Maslov-Sneppen algorithm~\cite{maslov2002specificity} which consists in repeating the following operations up to when all edges have been processed:
select at random two edges with no common vertices, e.g., $(u,v)$ and $(w,z)$, and transform them into $(u,z)$ and $(w,v)$,
if neither $(u,z)$ and $(w,v)$ existed in the original timestamp.
This reshuffling preserves the degree of each vertex in each timestamp and the global activity (i.e., the
number of contacts per timestamp), but destroys correlations between edges of successive timestamps.
In the right side of Figure~\ref{fig:temporalactivity} we show the results of the temporal analysis described above for the reshuffled datasets.
In both, the values of $|\Delta|$ and $k$ reached are much smaller than in the original datasets.
The size of the span $|\Delta|$ is always shorter than $20$ minutes, while in the original datasets it is much longer, up to $170$ minutes, and the order $k$ is always equal to $1$, compared to the original maximum of $5$.
The time-evolving cohesive structures detected by the temporal core decomposition
in the original datasets are completely lost on reshuffling,
since only span-cores of short span and low coreness are observed in the latter case.
This shows that the temporal structure exposed by the \spancore decomposition is not simply a consequence
of temporal patterns of global activity but that \spancores represent a concrete method to detect complex cohesive structures and their temporal
evolution.

\spara{Mixing patterns.}
We now show an analysis of mixing patterns of students with respect to gender and class.
Such vertex attributes are indeed available for the individuals of the \textsf{PrimarySchool} dataset.
We define as \emph{gender purity} of a \spancore the fraction of individuals of the most represented gender within the \spancore.
\emph{Class purity} is analogously defined.
The left plot of Figure~\ref{fig:purity} reports the temporal evolution of the average gender and class purity
of the maximal \spancores spanning each timestamp,
during the first school day of the \textsf{PrimarySchool} dataset.
During lessons, when students are in their own classes, class purity has
naturally very high values, very close to $1$. Gender purity is instead rather low.
On the other hand, when students are gathered together, during the morning break at $10$ a.m. and the lunch break between $12$ a.m. and $2$ p.m., the situation is overturned: gender purity reaches large values while class purity drastically decreases.
This shows that primary school students group with individuals of the same class, disregarding the gender, only when they are forced by the schedule of the lessons, but prefer on average to form cohesive groups with students of the same gender during breaks. This is in agreement
and complements a previous study of the same dataset focusing on single interactions in the static aggregated network \cite{Stehle:2013}.

The right plot of Figure~\ref{fig:purity} shows the temporal evolution of the average
gender and class purity for a null model in which
 gender and class are randomly reshuffled among individuals.
The two curves are more flat and the anti-correlation between them completely vanishes.
This testifies that the results on the original dataset are not simply due to the relative abundance
of individuals of each type interacting at each time, but reflect genuine mixing patterns and their temporal evolution.


\begin{figure}
\centerline{\includegraphics[width=0.7\columnwidth]{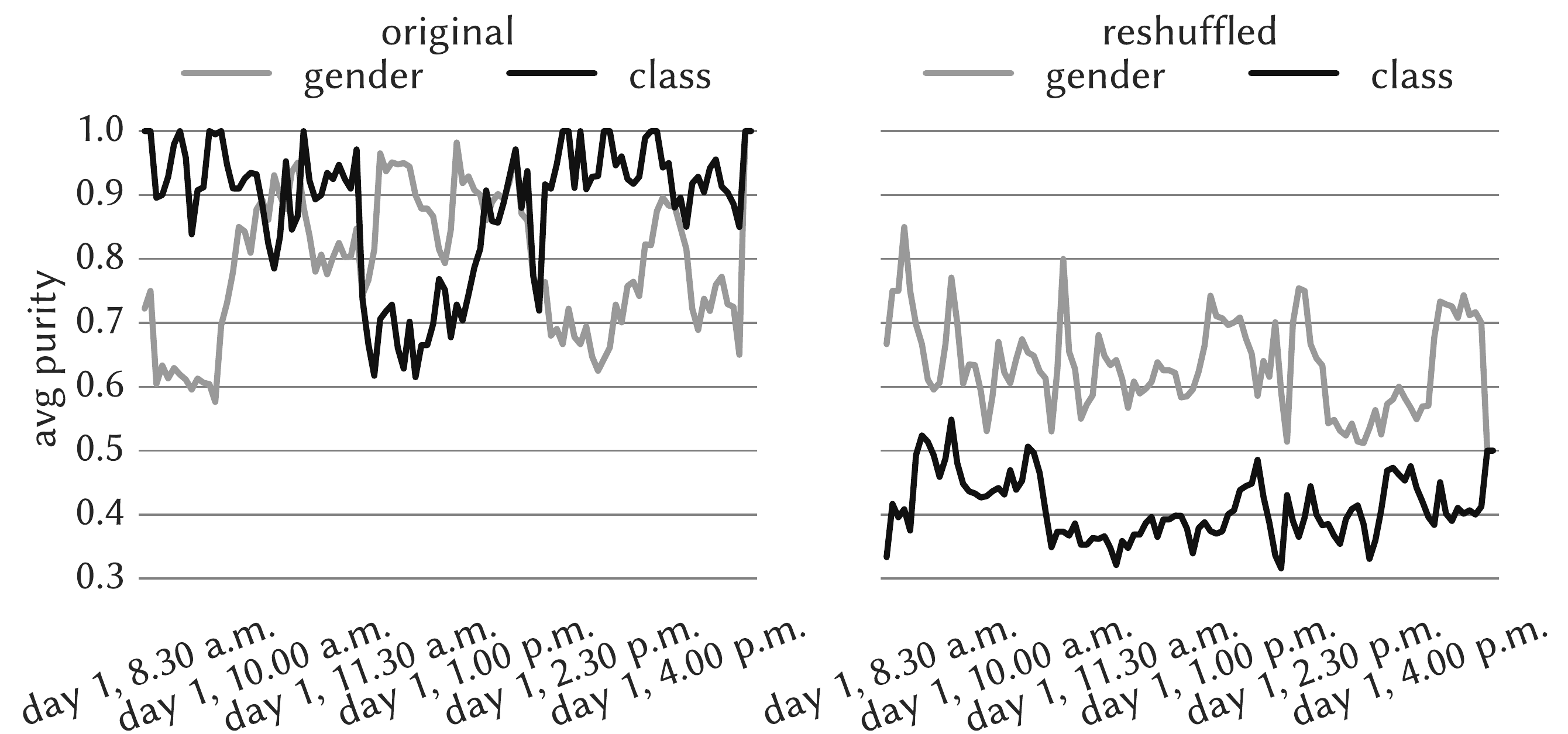}}
\caption{\label{fig:purity} Temporal evolution (time on the $x$ axis) of average gender purity and average class purity ($y$ axis) of the maximal \spancores of the \textsf{PrimarySchool} dataset.
Original data on the left, reshuffled data on the right.}
\end{figure}

\spara{Interaction length.}
Finally, we analyze the duration of interactions of social groups in schools by studying the distribution of the size of the span of the maximal \spancores of the three datasets (Figure~\ref{fig:lengthdistribution}). All distributions are extremely skewed with broad tails: most maximal span-cores have duration less than $1$ hour, but
durations much larger than the average can also be observed.
Interestingly, the three datasets at hand all exhibit the same functional shape, confirming a robust statistical behavior.
We also note that similar robust broad distributions have been observed for simpler characteristics of human interactions such as the statistics of contact durations \cite{Stehle:2011,Fournet:PLOS2015}.
Outliers appear also at very large durations, especially for the \textsf{HongKong} dataset that has maximal \spancores lasting up to $83$ hours.
Group interactions of such long span are clearly abnormal and represent outliers in the distributions.
We will show, in the following of this section, how to exploit such outliers to detect both irregular interactions and anomalous temporal intervals.

\begin{figure}
\centerline{\includegraphics[width=0.7\columnwidth]{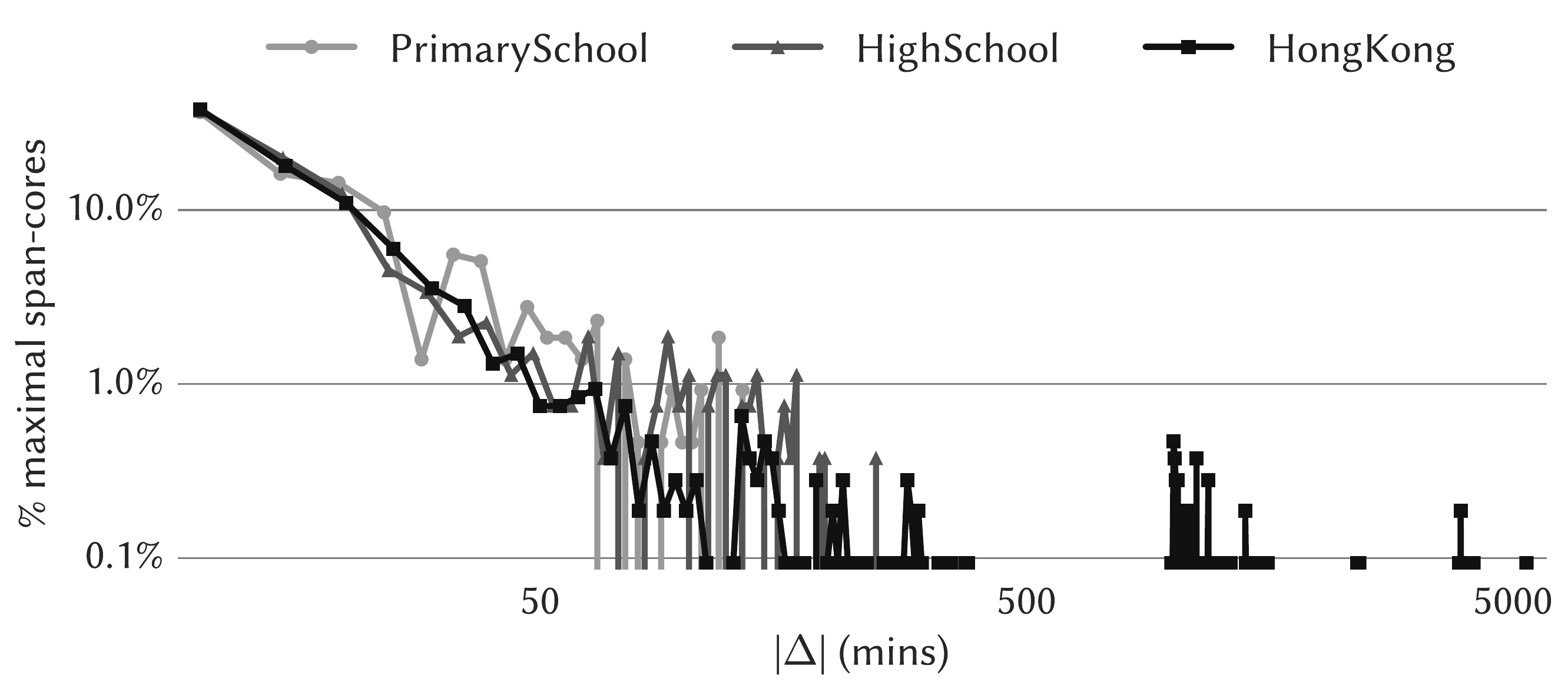}}
\caption{\label{fig:lengthdistribution} Distribution of the size of the span $|\Delta|$ of the maximal \spancores. The $x$ axis reports the size of the span (in minutes), while the $y$ axis the percentage of maximal \spancores having a given size of the span.}
\end{figure}

\subsection{Anomaly detection}
\label{sec:anomaly}
The identification of anomalous behaviors in temporal networks has been the focus of several studies in the last few years~\cite{mongiovi2013netspot, sapienza2015detecting}.
Based on the above findings, we devise a simple procedure to detect anomalous edges and intervals of the \textsf{HongKong} dataset that exploits maximal \spancores.
The topmost plot of Figure~\ref{fig:anomaly} reports the number of edges for each timestamp of the original \textsf{HongKong} dataset.
It is easy to notice that there is a lot of constant anomalous activity between school days and during the weekend, i.e., days six and seven:
unexpectedly, the number of interactions per timestamp does not drop to zero. This happened in fact
because proximity sensors were left in each class and close to each other, at the end of the lessons.
In order to automatically detect these steady activity patterns that do not correspond to any genuine social dynamics,
we apply the following procedure: $(i)$ find a set of anomalously long temporal intervals supporting maximal \spancores, $(ii)$ identify anomalous vertices, and, $(iii)$ filter out anomalous edges.

The first step of this procedure requires to find the set of temporal intervals $\mathcal{I} = \{\Delta \sqsubseteq T \mid C_{k,\Delta} \in \imcores \land |\Delta| > tr \}$ that are the span of a maximal \spancore $C_{k,\Delta}$ with size longer than a certain threshold $tr$.
Then, for each timestamp $t \in T$, select as anomalous all those vertices that appear in the \spancores $\{C_{1,\Delta} \mid \Delta \in \mathcal{I} \land t \in \Delta\}$, i.e., the \spancores of $k=1$ whose span is in $\mathcal{I}$ and contains~$t$.
Finally, at each timestamp $t \in T$, remove edges that are incident to at least a vertex that has been marked as anomalous at time $t$.
Consistently with the distribution of the span durations of the maximal \spancores, we select the threshold $tr = 22$ ($110$ minutes).
The results of this filtering procedure are shown in the middle plot of Figure~\ref{fig:anomaly}.
The number of edges during school days remains approximately unchanged, while the activity noticeably decreases in-between.
Identifying as positives the spurious interactions occurring when the school is closed and as negatives the genuine interactions observed when the school is open, this approach achieves a precision of $0.91$ and a recall of $0.64$.

We can refine this anomaly detection process by identifying, in addition to anomalous edges, also anomalous temporal intervals.
We define a timestamp $t \in T$ as anomalous if the ratio between the number of original edges (top plot of Figure~\ref{fig:anomaly}) and the number of filtered edges (middle plot of Figure~\ref{fig:anomaly}) exceeds a given threshold.
We apply this further filtering to the \textsf{HongKong} dataset with a threshold of $1.5$ and report the results in the bottommost plot of Figure~\ref{fig:anomaly}.
The number of edges when the school is closed drops to zero, while the activity during school days is not modified, except for the last one, which is affected by the proximity to the end of the time domain.
The overall procedure yields a slightly higher value of precision, $0.93$, and substantially improves the recall to $0.99$.

\begin{figure}
\centerline{\includegraphics[width=0.7\columnwidth]{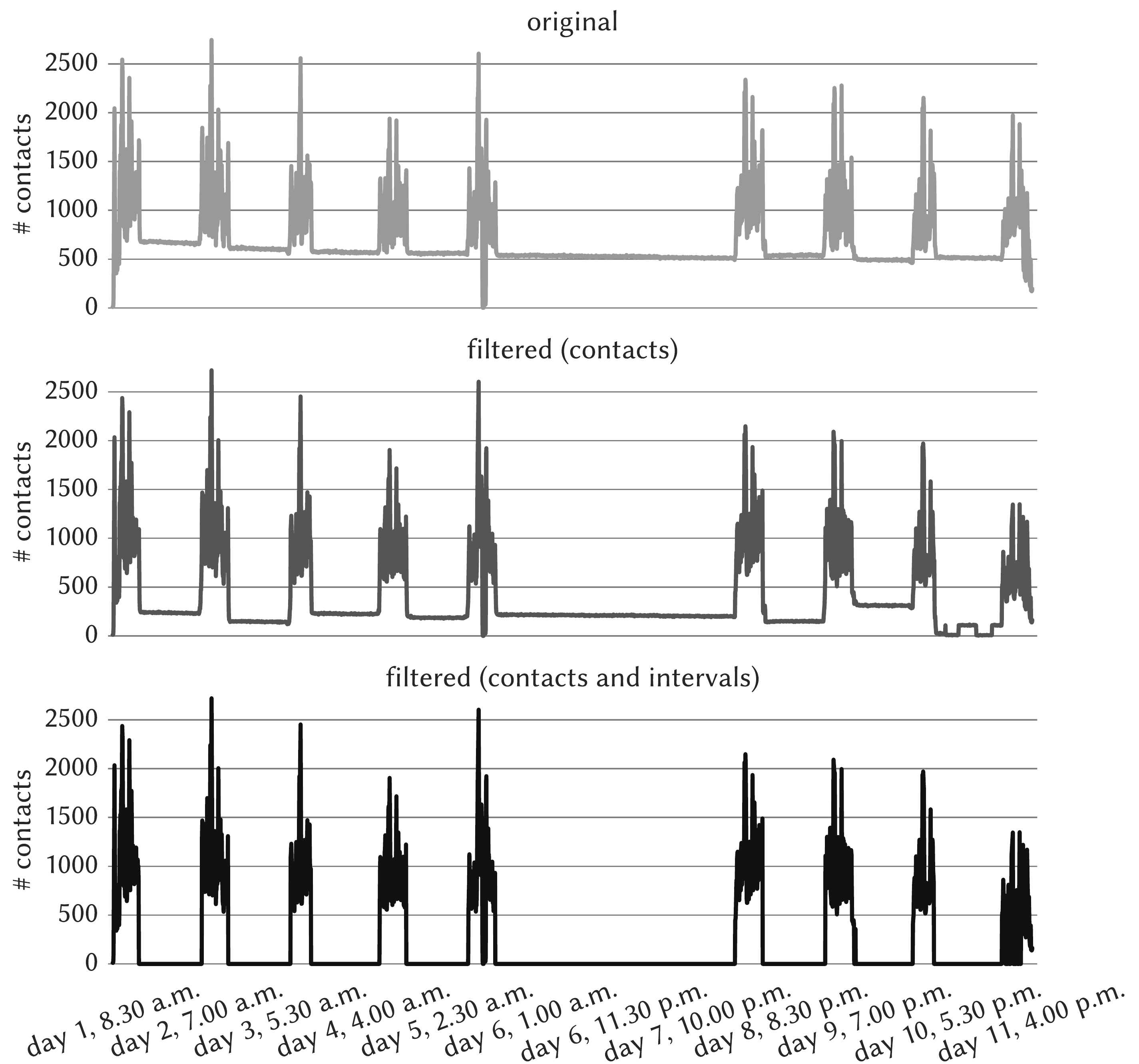}}
\caption{\label{fig:anomaly} \textsf{HongKong} dataset: number of edges  per timestamp in the original data (top), after filtering anomalous edges (middle), and after filtering anomalous edges and intervals (bottom).
Days 6 and 7 are weekend.}
\end{figure}

\subsection{Graph embedding and vertex classification}
\label{sec:classification}
In this subsection we show how \temporalcs can be profitably exploited for classifying the vertices of a temporal graph.
Specifically, the classification framework we set up is based on the paradigm of \emph{graph embedding},
which has attracted a great deal of attention in the last few years, and whose goal is to assign to
every vertex of a graph a numerical vector (i.e., an \emph{embedding}) such that structurally similar vertices are
represented by similar vectors, and vice versa~\cite{grover2016node2vec,epasto2019single,GOYAL201878}.
Here, our framework simply consists in learning suitable embeddings for the vertices of the input graph, and then give them as input to some (well-established) classifier to ultimately accomplish the desired classification task.
Thus, the main goal is to learn embeddings that are well-representative of the relationships among vertices, so as to help the classifier perform accurately.
As our main result here, we show how an embedding strategy based on a simple exploitation of the output of \temporalcs achieves results comparable to
well-established vertex-embedding methods
such as \deepwalk~\cite{perozzi2014deepwalk}, \LINE~\cite{tang2015line}, and \nodevec~\cite{grover2016node2vec}.

\spara{Method.}
For every vertex of the input temporal graph, we build an embedding as an $h$-dimensional vector conveying the information provided by a solution to the \temporalcs problem on the same graph.
Specifically, consider a vertex $u \in V $ and a solution $\{\langle S_i, \Delta_i \rangle \}_{i=1}^h$ to \temporalcs on query-vertex set $Q = \{u\}$.
We define $u$'s embedding as
$$
\mathbf{X}_u = [v^*_{Q, \Delta_1}, v^*_{Q, \Delta_2}, \ldots, v^*_{Q, \Delta_h}],
$$
which corresponds to the temporally-ordered sequence of minimum degrees of the $h$ communities identified by the temporal-community-search solution.
Below we show that this simple approach is sufficient to achieve interesting experimental results.
Clearly, more sophisticated methods are possible, e.g., by simultaneously exploiting  information from the $S_i$ communities.
However, our main goal here is to give an idea of how the \temporalcs problem can be successfully leveraged in a relevant application scenario, rather than devise the best temporal-community-search-based graph-embedding method.

\begin{figure}
	\begin{tabular}{c}
		\centerline{\includegraphics[width=0.7\columnwidth]{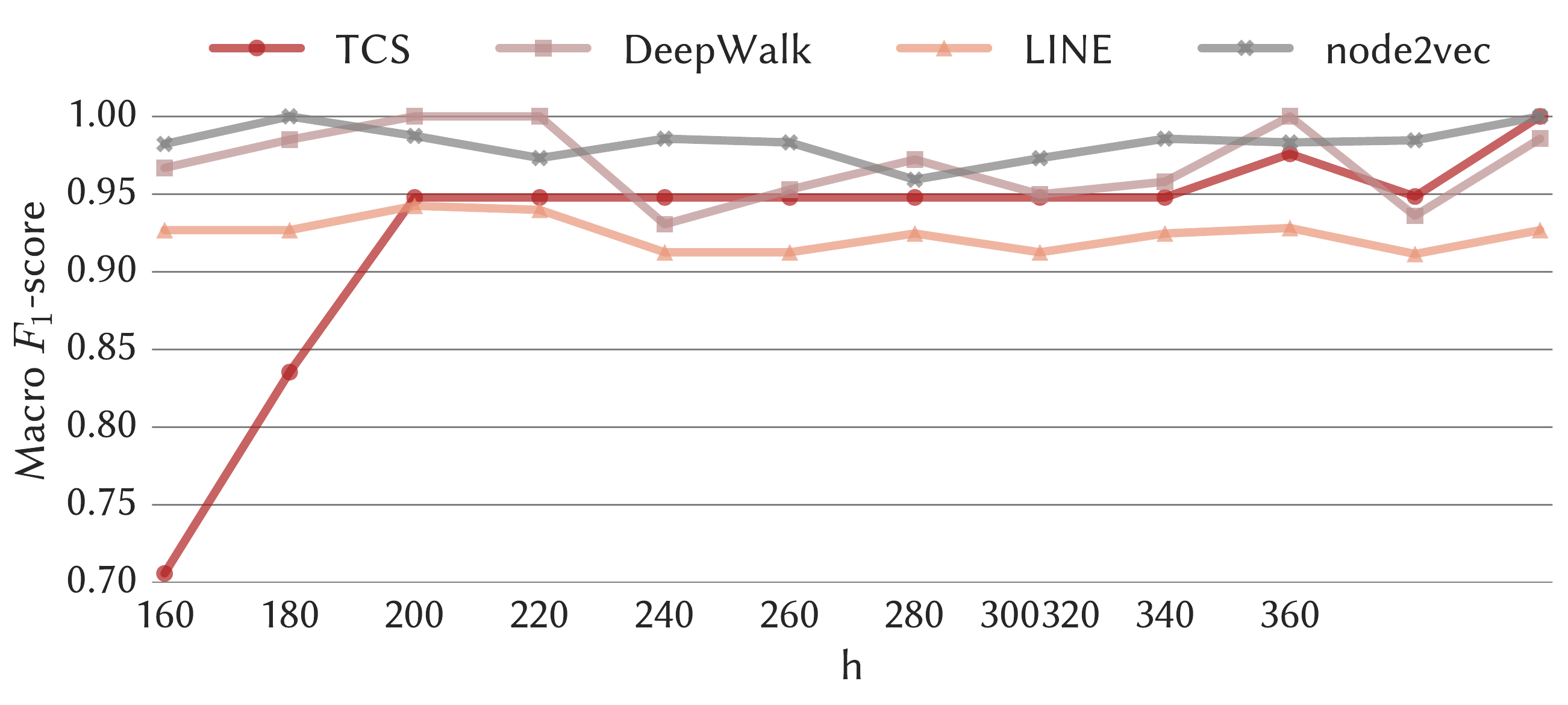}} \vspace{-2mm}\\
		\footnotesize{\textsf{PrimarySchool}} \vspace{4mm}\\
		\centerline{\includegraphics[width=0.7\columnwidth]{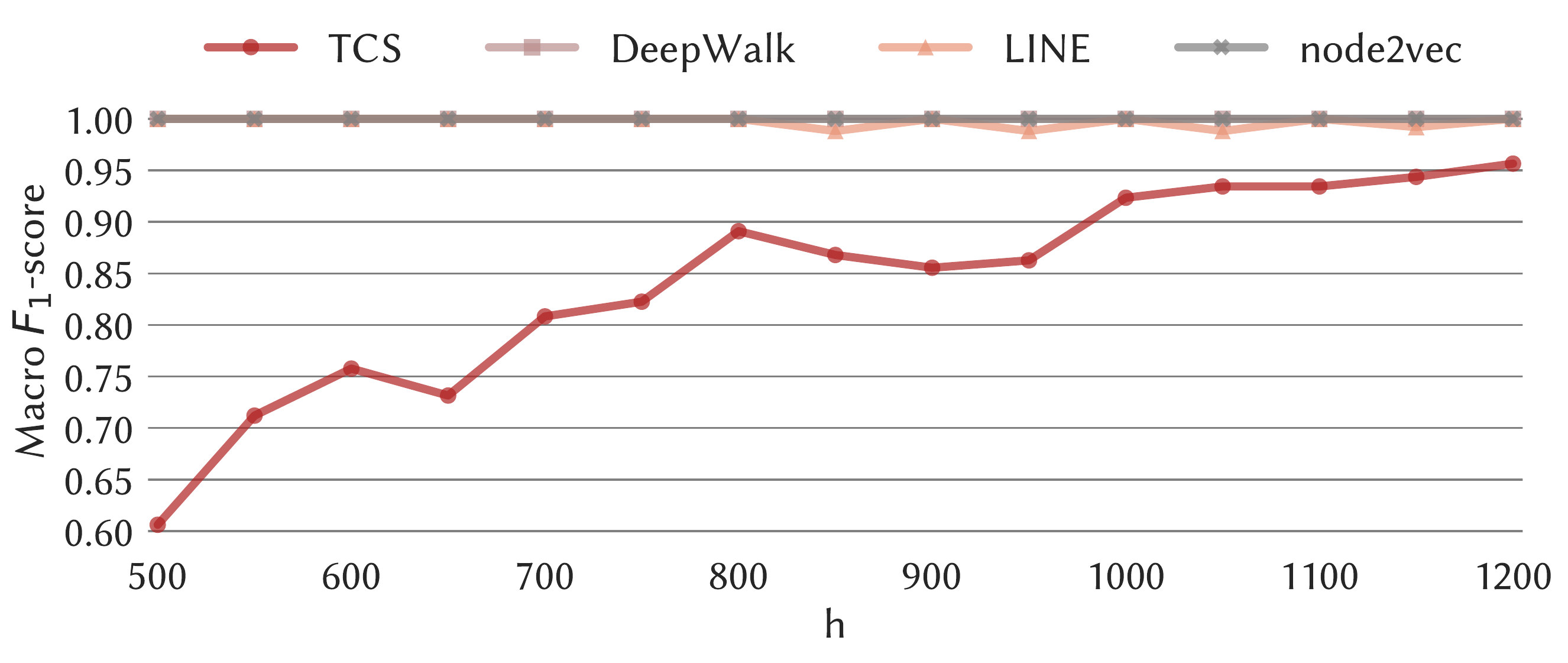}} \vspace{-2mm}\\
		\footnotesize{\textsf{HighSchool}} \\
	\end{tabular}
	\caption{ \label{fig:f1scores}
	{	
		 Vertex classification: Macro $F_1$-score of the proposed temporal-community-search-based graph-embedding method \tcs and the competing methods, with varying the dimensionality $h$ of the output embeddings, on the \textsf{PrimarySchool} and \textsf{HighSchool} datasets.}
	}
\end{figure}

\spara{Evaluation.}
We assess the performance of our method on the \textsf{PrimarySchool} and \textsf{HighSchool} datasets.
In these datasets vertices correspond to students, and vertex labels (to be predicted) are the classes that every student belongs to.
We involve in the comparison the following state-of-the-art vertex-embedding methods:
\begin{itemize}
	\item \deepwalk~\cite{perozzi2014deepwalk}, a method that preserves the proximity between vertices by running a set of random walks and maximizing the sum of the log-likelihood of a set of vertices for each walk.
	\item \LINE~\cite{tang2015line}, which optimizes a suitable objective function preserving both first-order (one-hop) and second-order (two-hop) proximities.
	Neighborhoods are not explored via random walk, but in a breadth-first fashion.
	\item \nodevec~\cite{grover2016node2vec}, which is based on the same idea underlying \deepwalk, but allowing more flexibility on how random walks
	explore and leave the neighborhood of the current vertex.
\end{itemize}
These three methods consider non-temporal graphs.
Therefore, we feed them with aggregated graphs in which every edge exists if it exists in at least one timestamp.

{
	Exhaustive grid search is carried out to tune parameters of
	 \nodevec~\cite{grover2016node2vec} and \deepwalk~\cite{perozzi2014deepwalk}.
	In particular, for both methods, we tune the neighborhood parameters of a vertex, i.e.,  number of walks $r$, and walk length $l$, while the neighborhood size $k$ is set to $10$.
	Furthermore, for \nodevec we tune the return and in-out parameters $p$ and $q$.
	For each parameter, we use the same grid of values as the one considered in the parameter sensitivity analysis reported in the original \nodevec paper~\cite{grover2016node2vec}.
	Specifically, we consider the following parameter space:
	
	\begin{itemize}
		
		 \item Number of walks, $r = \{6,8,10,12,14,16,18,20\}$;
		
		 \item Walk length, $l = \{ 30,40,50,60,70,80,90,100,110 \}$;
		
	
	 	\item Return parameter, $p =  \{ 0.25, 0.5, 1 , 2, 4 \}$;
	 	
	 	\item In-out parameter, $q =  \{ 0.25, 0.5, 1 , 2, 4 \}$.
		
	\end{itemize}

	We select the combination of parameters maximizing the Macro $F_1$-score averaged over a range of numbers of
	latent dimensions $d = \{16,32,64,128,256\}$. 
}

%
After filtering out those vertices representing the teachers, we partition the remaining vertices
(i.e., the students) into training and test sets with an 80-20 split.
A standard scaler is applied to the features extracted by each embedding method and, then, a penalized logistic-regression classifier is trained.

In Figure \ref{fig:f1scores} we report classification results in terms of Macro $F_1$-score, with varying the dimensionality $h$ of the embeddings.
On the \textsf{PrimarySchool} dataset, for $h \geq 200$, our \tcs has performance close to $1$ in terms Macro $F_1$-score, similarly to the three baselines.
It can be observed that the \tcs results are better as $h$ gets higher;
in particular, \tcs is even better than \nodevec for $h = |T|$.
This is expected and is motivated as, for higher $h$, \tcs is allowed to rely on more temporal information about the vertices.
On the \textsf{HighSchool} dataset, \tcs is outperformed by all methods for smaller $h$.
However, again, the performance of \tcs becomes competitive for larger $h$, up to achieving comparable results to the best method(s) for $h = |T|$.

\section{Conclusions}
\label{sec:conclusions}

Temporal networks are a powerful representation of how relations are established and interrupted along time among a given population of entities.
An interesting primitive for analyzing this type of networks is the extraction of relevant patterns,
such as dense subgraphs, together with their time interval of existence (or span).
Following this idea, we introduced in this paper a notion of temporal core decomposition where each core is associated with its span. Exploiting containment properties among cores we developed efficient algorithms for computing all the span-cores, and also only the maximal ones.
We then introduced the problem of temporal community search and showed how it  can be solved in polynomial time via dynamic programming.
We also proved an interesting connection between temporal community search and maximal span-cores, which made it possible
to devise a considerably more efficient algorithm than the na\"ive dynamic-programming one.
Finally, we presented  applications on empirical networks of human close-range proximity, that illustrate the relevance of the notions of (maximal) span-core and temporal community search in a variety analyses and applications.

In future work we will study the role of maximal span-cores with large coreness and/or $|\Delta|$ in spreading processes on temporal networks.
Furthermore,  span-cores represent features that can be used
for network fingerprinting and classification, as well as for model validation, and that could provide
support for new ways of visualizing large-scale time-varying graphs.
We also plan to investigate different semantics of temporal patterns and the corresponding notions of core 
decompositions extracted from temporal networks, such as, for instance, the ones that might arise by considering a temporal edge existing in a given interval if it appears in at least one of the timestamps of the interval, or in a fraction of timestamps larger than a given threshold. 
Finally, investigation in temporal community search has just started: we plan to study different notions of community search and their corresponding extraction problems. As an example, a variant of the notion adopted in this article, which is worth to be investigated further, would correspond to relaxing the requirement of covering the whole temporal domain $T$, and instead looking for a set of communities giving a good enough temporal coverage.

%

\bibliographystyle{ACM-Reference-Format}

\begin{thebibliography}{91}


\ifx \showCODEN    \undefined \def \showCODEN     #1{\unskip}     \fi
\ifx \showDOI      \undefined \def \showDOI       #1{#1}\fi
\ifx \showISBNx    \undefined \def \showISBNx     #1{\unskip}     \fi
\ifx \showISBNxiii \undefined \def \showISBNxiii  #1{\unskip}     \fi
\ifx \showISSN     \undefined \def \showISSN      #1{\unskip}     \fi
\ifx \showLCCN     \undefined \def \showLCCN      #1{\unskip}     \fi
\ifx \shownote     \undefined \def \shownote      #1{#1}          \fi
\ifx \showarticletitle \undefined \def \showarticletitle #1{#1}   \fi
\ifx \showURL      \undefined \def \showURL       {\relax}        \fi
\providecommand\bibfield[2]{#2}
\providecommand\bibinfo[2]{#2}
\providecommand\natexlab[1]{#1}
\providecommand\showeprint[2][]{arXiv:#2}

\bibitem[\protect\citeauthoryear{Akoglu, Chau, Faloutsos, Tatti, Tong, and
  Vreeken}{Akoglu et~al\mbox{.}}{2013}]%
        {akoglu2013mining}
\bibfield{author}{\bibinfo{person}{Leman Akoglu}, \bibinfo{person}{Duen~Horng
  Chau}, \bibinfo{person}{Christos Faloutsos}, \bibinfo{person}{Nikolaj Tatti},
  \bibinfo{person}{Hanghang Tong}, {and} \bibinfo{person}{Jilles Vreeken}.}
  \bibinfo{year}{2013}\natexlab{}.
\newblock \showarticletitle{Mining Connection Pathways for Marked Nodes in
  Large Graphs}. In \bibinfo{booktitle}{\emph{Proc. of {SIAM} Int. Conf. on
  Data Mining {(SDM)}}}. \bibinfo{pages}{37--45}.
\newblock


\bibitem[\protect\citeauthoryear{Alvarez-Hamelin, Dall'Asta, Barrat, and
  Vespignani}{Alvarez-Hamelin et~al\mbox{.}}{2006}]%
        {Alvarez-HamelinDBV05}
\bibfield{author}{\bibinfo{person}{J~Ignacio Alvarez-Hamelin},
  \bibinfo{person}{Luca Dall'Asta}, \bibinfo{person}{Alain Barrat}, {and}
  \bibinfo{person}{Alessandro Vespignani}.} \bibinfo{year}{2006}\natexlab{}.
\newblock \showarticletitle{Large scale networks fingerprinting and
  visualization using the k-core decomposition}. In
  \bibinfo{booktitle}{\emph{Proc. of Conf. on Advances in Neural Information
  Processing Systems {(NeurIPS)}}}. \bibinfo{pages}{41--50}.
\newblock


\bibitem[\protect\citeauthoryear{Andersen and Chellapilla}{Andersen and
  Chellapilla}{2009}]%
        {AndersenC09}
\bibfield{author}{\bibinfo{person}{Reid Andersen} {and} \bibinfo{person}{Kumar
  Chellapilla}.} \bibinfo{year}{2009}\natexlab{}.
\newblock \showarticletitle{Finding Dense Subgraphs with Size Bounds}. In
  \bibinfo{booktitle}{\emph{Proc. of Int. Work. on Algorithms and Models for
  the Web-Graph {(WAW)}}}. \bibinfo{pages}{25--37}.
\newblock


\bibitem[\protect\citeauthoryear{Andersen and Lang}{Andersen and Lang}{2006}]%
        {Andersen1}
\bibfield{author}{\bibinfo{person}{Reid Andersen} {and}
  \bibinfo{person}{Kevin~J. Lang}.} \bibinfo{year}{2006}\natexlab{}.
\newblock \showarticletitle{Communities from seed sets}. In
  \bibinfo{booktitle}{\emph{Proc. of World Wide Web Conf. {(WWW)}}}.
  \bibinfo{pages}{223--232}.
\newblock


\bibitem[\protect\citeauthoryear{Angel, Sarkas, Koudas, and Srivastava}{Angel
  et~al\mbox{.}}{2012}]%
        {Angel}
\bibfield{author}{\bibinfo{person}{Albert Angel}, \bibinfo{person}{Nikos
  Sarkas}, \bibinfo{person}{Nick Koudas}, {and} \bibinfo{person}{Divesh
  Srivastava}.} \bibinfo{year}{2012}\natexlab{}.
\newblock \showarticletitle{Dense subgraph maintenance under streaming edge
  weight updates for real-time story identification}.
\newblock \bibinfo{journal}{\emph{Proc. of the VLDB Endowment {(PVLDB)}}}
  \bibinfo{volume}{5}, \bibinfo{number}{6} (\bibinfo{year}{2012}),
  \bibinfo{pages}{574--585}.
\newblock


\bibitem[\protect\citeauthoryear{Bader and Hogue}{Bader and Hogue}{2003}]%
        {DBLP:journals/bmcbi/BaderH03}
\bibfield{author}{\bibinfo{person}{Gary~D. Bader} {and}
  \bibinfo{person}{Christopher W.~V. Hogue}.} \bibinfo{year}{2003}\natexlab{}.
\newblock \showarticletitle{An automated method for finding molecular complexes
  in large protein interaction networks}.
\newblock \bibinfo{journal}{\emph{BMC Bioinformatics}}  \bibinfo{volume}{4}
  (\bibinfo{year}{2003}), \bibinfo{pages}{2}.
\newblock


\bibitem[\protect\citeauthoryear{Barbieri, Bonchi, Galimberti, and
  Gullo}{Barbieri et~al\mbox{.}}{2015}]%
        {BarbieriBGG15}
\bibfield{author}{\bibinfo{person}{Nicola Barbieri}, \bibinfo{person}{Francesco
  Bonchi}, \bibinfo{person}{Edoardo Galimberti}, {and}
  \bibinfo{person}{Francesco Gullo}.} \bibinfo{year}{2015}\natexlab{}.
\newblock \showarticletitle{Efficient and effective community search}.
\newblock \bibinfo{journal}{\emph{Data Mining and Knowledge Discovery
  {(DAMI)}}} \bibinfo{volume}{29}, \bibinfo{number}{5} (\bibinfo{year}{2015}),
  \bibinfo{pages}{1406--1433}.
\newblock


\bibitem[\protect\citeauthoryear{Batagelj, Mrvar, and Zaversnik}{Batagelj
  et~al\mbox{.}}{1999}]%
        {DBLP:conf/gd/BatageljMZ99}
\bibfield{author}{\bibinfo{person}{Vladimir Batagelj}, \bibinfo{person}{Andrej
  Mrvar}, {and} \bibinfo{person}{Matjaz Zaversnik}.}
  \bibinfo{year}{1999}\natexlab{}.
\newblock \showarticletitle{Partitioning Approach to Visualization of Large
  Graphs}. In \bibinfo{booktitle}{\emph{Proc. of Int. Symp. on Graph Drawing
  {(GD)}}}. \bibinfo{pages}{90--97}.
\newblock


\bibitem[\protect\citeauthoryear{Batagelj and Zaversnik}{Batagelj and
  Zaversnik}{2011}]%
        {batagelj2011fast}
\bibfield{author}{\bibinfo{person}{Vladimir Batagelj} {and}
  \bibinfo{person}{Matjaz Zaversnik}.} \bibinfo{year}{2011}\natexlab{}.
\newblock \showarticletitle{Fast algorithms for determining (generalized) core
  groups in social networks}.
\newblock \bibinfo{journal}{\emph{Advances in Data Analysis and Classification
  {(ADAC)}}} \bibinfo{volume}{5}, \bibinfo{number}{2} (\bibinfo{year}{2011}),
  \bibinfo{pages}{129--145}.
\newblock


\bibitem[\protect\citeauthoryear{Bellman}{Bellman}{1961}]%
        {bellman61approximation}
\bibfield{author}{\bibinfo{person}{R. Bellman}.}
  \bibinfo{year}{1961}\natexlab{}.
\newblock \showarticletitle{On the Approximation of Curves by Line Segments
  Using Dynamic Programming}.
\newblock \bibinfo{journal}{\emph{Communications of the {ACM} {(CACM)}}}
  \bibinfo{volume}{4}, \bibinfo{number}{6} (\bibinfo{year}{1961}),
  \bibinfo{pages}{284}.
\newblock


\bibitem[\protect\citeauthoryear{Bentert, Himmel, Molter, Marik, Niedermeier,
  and Saitenmacher}{Bentert et~al\mbox{.}}{2018}]%
        {bentert2018listing}
\bibfield{author}{\bibinfo{person}{Matthias Bentert},
  \bibinfo{person}{Anne-Sophie Himmel}, \bibinfo{person}{Hendrik Molter},
  \bibinfo{person}{Marco Marik}, \bibinfo{person}{Rolf Niedermeier}, {and}
  \bibinfo{person}{Ren{\'e} Saitenmacher}.} \bibinfo{year}{2018}\natexlab{}.
\newblock \showarticletitle{Listing all maximal k-plexes in temporal graphs}.
  In \bibinfo{booktitle}{\emph{Proc. of. {IEEE/ACM} Int. Conf. on Advances in
  Social Networks Analysis and Mining {(ASONAM)}}}. \bibinfo{pages}{41--46}.
\newblock


\bibitem[\protect\citeauthoryear{Berlingerio, Bonchi, Bringmann, and
  Gionis}{Berlingerio et~al\mbox{.}}{2009}]%
        {BerlingerioBBG09}
\bibfield{author}{\bibinfo{person}{Michele Berlingerio},
  \bibinfo{person}{Francesco Bonchi}, \bibinfo{person}{Bj{\"{o}}rn Bringmann},
  {and} \bibinfo{person}{Aristides Gionis}.} \bibinfo{year}{2009}\natexlab{}.
\newblock \showarticletitle{Mining Graph Evolution Rules}. In
  \bibinfo{booktitle}{\emph{Proc. of Europ. Machine Learning and Principles and
  Practice of Knowledge Discovery in Databases {(ECML PKDD)}}}.
  \bibinfo{pages}{115--130}.
\newblock


\bibitem[\protect\citeauthoryear{Bian, Yan, Cheng, Wang, Luo, and Zhang}{Bian
  et~al\mbox{.}}{2018}]%
        {bian2018multi}
\bibfield{author}{\bibinfo{person}{Yuchen Bian}, \bibinfo{person}{Yaowei Yan},
  \bibinfo{person}{Wei Cheng}, \bibinfo{person}{Wei Wang},
  \bibinfo{person}{Dongsheng Luo}, {and} \bibinfo{person}{Xiang Zhang}.}
  \bibinfo{year}{2018}\natexlab{}.
\newblock \showarticletitle{On Multi-query Local Community Detection}. In
  \bibinfo{booktitle}{\emph{Proc. of {IEEE} Int. Conf. on Data Mining
  {(ICDM)}}}. \bibinfo{pages}{9--18}.
\newblock


\bibitem[\protect\citeauthoryear{Bogdanov, Mongiov{\`\i}, and Singh}{Bogdanov
  et~al\mbox{.}}{2011}]%
        {bogdanov2011mining}
\bibfield{author}{\bibinfo{person}{Petko Bogdanov}, \bibinfo{person}{Misael
  Mongiov{\`\i}}, {and} \bibinfo{person}{Ambuj~K Singh}.}
  \bibinfo{year}{2011}\natexlab{}.
\newblock \showarticletitle{Mining heavy subgraphs in time-evolving networks}.
  In \bibinfo{booktitle}{\emph{Proc. of {IEEE} Int. Conf. on Data Mining
  {(ICDM)}}}. \bibinfo{pages}{81--90}.
\newblock


\bibitem[\protect\citeauthoryear{Bonchi, Bordino, Gullo, and Stilo}{Bonchi
  et~al\mbox{.}}{2016}]%
        {BonchiBGS16}
\bibfield{author}{\bibinfo{person}{Francesco Bonchi}, \bibinfo{person}{Ilaria
  Bordino}, \bibinfo{person}{Francesco Gullo}, {and} \bibinfo{person}{Giovanni
  Stilo}.} \bibinfo{year}{2016}\natexlab{}.
\newblock \showarticletitle{Identifying Buzzing Stories via Anomalous Temporal
  Subgraph Discovery}. In \bibinfo{booktitle}{\emph{Proc. of {IEEE/WIC/ACM}
  Int. Conf. on Web Intelligence {(WI)}}}. \bibinfo{pages}{161--168}.
\newblock


\bibitem[\protect\citeauthoryear{Bonchi, Bordino, Gullo, and Stilo}{Bonchi
  et~al\mbox{.}}{2019}]%
        {bonchi2019theimportance}
\bibfield{author}{\bibinfo{person}{Francesco Bonchi}, \bibinfo{person}{Ilaria
  Bordino}, \bibinfo{person}{Francesco Gullo}, {and} \bibinfo{person}{Giovanni
  Stilo}.} \bibinfo{year}{2019}\natexlab{}.
\newblock \showarticletitle{The importance of unexpectedness: Discovering
  buzzing stories in anomalous temporal graphs}.
\newblock \bibinfo{journal}{\emph{Web Intelligence}} \bibinfo{volume}{17},
  \bibinfo{number}{3} (\bibinfo{year}{2019}), \bibinfo{pages}{177--198}.
\newblock


\bibitem[\protect\citeauthoryear{Bonchi, Gionis, Gullo, Tsourakakis, and
  Ukkonen}{Bonchi et~al\mbox{.}}{2015}]%
        {bonchi2015chromatic}
\bibfield{author}{\bibinfo{person}{Francesco Bonchi},
  \bibinfo{person}{Aristides Gionis}, \bibinfo{person}{Francesco Gullo},
  \bibinfo{person}{Charalampos~E. Tsourakakis}, {and} \bibinfo{person}{Antti
  Ukkonen}.} \bibinfo{year}{2015}\natexlab{}.
\newblock \showarticletitle{Chromatic Correlation Clustering}.
\newblock \bibinfo{journal}{\emph{ACM Transactions on Knowledge Discovery from
  Data {(TKDD)}}} \bibinfo{volume}{9}, \bibinfo{number}{4}
  (\bibinfo{year}{2015}), \bibinfo{pages}{34:1--34:24}.
\newblock


\bibitem[\protect\citeauthoryear{Bonchi, Gullo, and Kaltenbrunner}{Bonchi
  et~al\mbox{.}}{2018}]%
        {bonchi2018core}
\bibfield{author}{\bibinfo{person}{Francesco Bonchi},
  \bibinfo{person}{Francesco Gullo}, {and} \bibinfo{person}{Andreas
  Kaltenbrunner}.} \bibinfo{year}{2018}\natexlab{}.
\newblock \showarticletitle{Core Decomposition of Massive, Information-Rich
  Graphs}.
\newblock In \bibinfo{booktitle}{\emph{Encyclopedia of Social Network Analysis
  and Mining, 2nd Edition}}, \bibfield{editor}{\bibinfo{person}{Reda Alhajj}
  {and} \bibinfo{person}{Jon~G. Rokne}} (Eds.). \bibinfo{publisher}{Springer}.
\newblock


\bibitem[\protect\citeauthoryear{Bonchi, Gullo, Kaltenbrunner, and
  Volkovich}{Bonchi et~al\mbox{.}}{2014}]%
        {bonchi14cores}
\bibfield{author}{\bibinfo{person}{Francesco Bonchi},
  \bibinfo{person}{Francesco Gullo}, \bibinfo{person}{Andreas Kaltenbrunner},
  {and} \bibinfo{person}{Yana Volkovich}.} \bibinfo{year}{2014}\natexlab{}.
\newblock \showarticletitle{Core decomposition of uncertain graphs}. In
  \bibinfo{booktitle}{\emph{Proc. of {ACM} Int. Conf. on Knowledge Discovery
  and Data Mining {(KDD)}}}. \bibinfo{pages}{1316--1325}.
\newblock


\bibitem[\protect\citeauthoryear{Bringmann, Berlingerio, Bonchi, and
  Gionis}{Bringmann et~al\mbox{.}}{2010}]%
        {BringmannBBG10}
\bibfield{author}{\bibinfo{person}{Bj{\"{o}}rn Bringmann},
  \bibinfo{person}{Michele Berlingerio}, \bibinfo{person}{Francesco Bonchi},
  {and} \bibinfo{person}{Aristides Gionis}.} \bibinfo{year}{2010}\natexlab{}.
\newblock \showarticletitle{Learning and Predicting the Evolution of Social
  Networks}.
\newblock \bibinfo{journal}{\emph{{IEEE} Intelligent Systems}}
  \bibinfo{volume}{25}, \bibinfo{number}{4} (\bibinfo{year}{2010}),
  \bibinfo{pages}{26--35}.
\newblock


\bibitem[\protect\citeauthoryear{Charikar}{Charikar}{2000}]%
        {Char00}
\bibfield{author}{\bibinfo{person}{Moses Charikar}.}
  \bibinfo{year}{2000}\natexlab{}.
\newblock \showarticletitle{Greedy approximation algorithms for finding dense
  components in a graph}. In \bibinfo{booktitle}{\emph{Proc. of Int. Work. on
  Approximation Algorithms for Combinatorial Optimization Problems
  {(APPROX)}}}. \bibinfo{pages}{84--95}.
\newblock


\bibitem[\protect\citeauthoryear{Charikar, Naamad, and Wu}{Charikar
  et~al\mbox{.}}{2018}]%
        {charikar2018finding}
\bibfield{author}{\bibinfo{person}{Moses Charikar}, \bibinfo{person}{Yonatan
  Naamad}, {and} \bibinfo{person}{Jimmy Wu}.} \bibinfo{year}{2018}\natexlab{}.
\newblock \showarticletitle{On Finding Dense Common Subgraphs}.
\newblock \bibinfo{journal}{\emph{CoRR}}  \bibinfo{volume}{abs/1802.06361}
  (\bibinfo{year}{2018}).
\newblock


\bibitem[\protect\citeauthoryear{Cheng, Ke, Chu, and {\"{O}}zsu}{Cheng
  et~al\mbox{.}}{2011}]%
        {DiskCores}
\bibfield{author}{\bibinfo{person}{James Cheng}, \bibinfo{person}{Yiping Ke},
  \bibinfo{person}{Shumo Chu}, {and} \bibinfo{person}{M.~Tamer {\"{O}}zsu}.}
  \bibinfo{year}{2011}\natexlab{}.
\newblock \showarticletitle{Efficient core decomposition in massive networks}.
  In \bibinfo{booktitle}{\emph{Proc. of {IEEE} Int. Conf. on Data Engineering
  {(ICDE)}}}. \bibinfo{pages}{51--62}.
\newblock


\bibitem[\protect\citeauthoryear{Cui, Xiao, Wang, and Wang}{Cui
  et~al\mbox{.}}{2014}]%
        {SozioLocalSIGMOD14}
\bibfield{author}{\bibinfo{person}{Wanyun Cui}, \bibinfo{person}{Yanghua Xiao},
  \bibinfo{person}{Haixun Wang}, {and} \bibinfo{person}{Wei Wang}.}
  \bibinfo{year}{2014}\natexlab{}.
\newblock \showarticletitle{Local search of communities in large graphs}. In
  \bibinfo{booktitle}{\emph{Proc. of {ACM} Int. Conf. on Management of Data
  {(SIGMOD)}}}. \bibinfo{pages}{991--1002}.
\newblock


\bibitem[\protect\citeauthoryear{Das~Sarma, Jain, and Yu}{Das~Sarma
  et~al\mbox{.}}{2011}]%
        {das2011dynamic}
\bibfield{author}{\bibinfo{person}{Anish Das~Sarma}, \bibinfo{person}{Alpa
  Jain}, {and} \bibinfo{person}{Cong Yu}.} \bibinfo{year}{2011}\natexlab{}.
\newblock \showarticletitle{Dynamic relationship and event discovery}. In
  \bibinfo{booktitle}{\emph{Proc. of Int. Conf. on Web Search and Data Mining
  {(WSDM)}}}. \bibinfo{pages}{207--216}.
\newblock


\bibitem[\protect\citeauthoryear{Desmier, Plantevit, Robardet, and
  Boulicaut}{Desmier et~al\mbox{.}}{2012}]%
        {desmier2012cohesive}
\bibfield{author}{\bibinfo{person}{Elise Desmier}, \bibinfo{person}{Marc
  Plantevit}, \bibinfo{person}{C{\'e}line Robardet}, {and}
  \bibinfo{person}{Jean-Fran{\c{c}}ois Boulicaut}.}
  \bibinfo{year}{2012}\natexlab{}.
\newblock \showarticletitle{Cohesive co-evolution patterns in dynamic
  attributed graphs}. In \bibinfo{booktitle}{\emph{Proc. of Int. Conf. on
  Discovery Science {(DS)}}}. \bibinfo{pages}{110--124}.
\newblock


\bibitem[\protect\citeauthoryear{Dickison, Magnani, and Rossi}{Dickison
  et~al\mbox{.}}{2016}]%
        {DickisonMagnaniRossi2016}
\bibfield{author}{\bibinfo{person}{Mark~E. Dickison}, \bibinfo{person}{Matteo
  Magnani}, {and} \bibinfo{person}{Luca Rossi}.}
  \bibinfo{year}{2016}\natexlab{}.
\newblock \bibinfo{booktitle}{\emph{{Multilayer social networks}}}.
\newblock \bibinfo{publisher}{Cambridge University Press}.
\newblock
\showISBNx{978-1107438750}


\bibitem[\protect\citeauthoryear{Eidsaa and Almaas}{Eidsaa and Almaas}{2013}]%
        {s-core}
\bibfield{author}{\bibinfo{person}{Marius Eidsaa} {and} \bibinfo{person}{Eivind
  Almaas}.} \bibinfo{year}{2013}\natexlab{}.
\newblock \showarticletitle{$s$-core network decomposition: A generalization of
  $k$-core analysis to weighted networks}.
\newblock \bibinfo{journal}{\emph{Physical Review E}}  \bibinfo{volume}{88}
  (\bibinfo{year}{2013}), \bibinfo{pages}{062819}.
\newblock
Issue 6.


\bibitem[\protect\citeauthoryear{Epasto, Lattanzi, and Sozio}{Epasto
  et~al\mbox{.}}{2015}]%
        {epasto2015efficient}
\bibfield{author}{\bibinfo{person}{Alessandro Epasto}, \bibinfo{person}{Silvio
  Lattanzi}, {and} \bibinfo{person}{Mauro Sozio}.}
  \bibinfo{year}{2015}\natexlab{}.
\newblock \showarticletitle{Efficient densest subgraph computation in evolving
  graphs}. In \bibinfo{booktitle}{\emph{Proc. of World Wide Web Conf.
  {(WWW)}}}. \bibinfo{pages}{300--310}.
\newblock


\bibitem[\protect\citeauthoryear{Epasto and Perozzi}{Epasto and
  Perozzi}{2019}]%
        {epasto2019single}
\bibfield{author}{\bibinfo{person}{Alessandro Epasto} {and}
  \bibinfo{person}{Bryan Perozzi}.} \bibinfo{year}{2019}\natexlab{}.
\newblock \showarticletitle{Is a Single Embedding Enough? Learning Node
  Representations that Capture Multiple Social Contexts}. In
  \bibinfo{booktitle}{\emph{Proc. of World Wide Web Conf. {(WWW)}}}.
  \bibinfo{pages}{394--404}.
\newblock


\bibitem[\protect\citeauthoryear{Eppstein, L{\"o}ffler, and Strash}{Eppstein
  et~al\mbox{.}}{2010}]%
        {EppsteinLS10}
\bibfield{author}{\bibinfo{person}{David Eppstein}, \bibinfo{person}{Maarten
  L{\"o}ffler}, {and} \bibinfo{person}{Darren Strash}.}
  \bibinfo{year}{2010}\natexlab{}.
\newblock \showarticletitle{Listing All Maximal Cliques in Sparse Graphs in
  Near-Optimal Time}. In \bibinfo{booktitle}{\emph{Proc. of Int. Symp. on
  Algorithms and Computation {(ISAAC)}}}. \bibinfo{pages}{403--414}.
\newblock


\bibitem[\protect\citeauthoryear{{\'E}rdi, Makovi, Somogyv{\'a}ri, Strandburg,
  Tobochnik, Volf, and Zal{\'a}nyi}{{\'E}rdi et~al\mbox{.}}{2013}]%
        {erdi2013prediction}
\bibfield{author}{\bibinfo{person}{P{\'e}ter {\'E}rdi}, \bibinfo{person}{Kinga
  Makovi}, \bibinfo{person}{Zolt{\'a}n Somogyv{\'a}ri},
  \bibinfo{person}{Katherine Strandburg}, \bibinfo{person}{Jan Tobochnik},
  \bibinfo{person}{P{\'e}ter Volf}, {and} \bibinfo{person}{L{\'a}szl{\'o}
  Zal{\'a}nyi}.} \bibinfo{year}{2013}\natexlab{}.
\newblock \showarticletitle{Prediction of emerging technologies based on
  analysis of the US patent citation network}.
\newblock \bibinfo{journal}{\emph{Scientometrics}} \bibinfo{volume}{95},
  \bibinfo{number}{1} (\bibinfo{year}{2013}), \bibinfo{pages}{225--242}.
\newblock


\bibitem[\protect\citeauthoryear{Faloutsos, McCurley, and Tomkins}{Faloutsos
  et~al\mbox{.}}{2004}]%
        {connect}
\bibfield{author}{\bibinfo{person}{Christos Faloutsos},
  \bibinfo{person}{Kevin~S. McCurley}, {and} \bibinfo{person}{Andrew Tomkins}.}
  \bibinfo{year}{2004}\natexlab{}.
\newblock \showarticletitle{Fast discovery of connection subgraphs}. In
  \bibinfo{booktitle}{\emph{Proc. of {ACM} Int. Conf. on Knowledge Discovery
  and Data Mining {(KDD)}}}. \bibinfo{pages}{118--127}.
\newblock


\bibitem[\protect\citeauthoryear{Fang, Cheng, Chen, Luo, and Hu}{Fang
  et~al\mbox{.}}{2017a}]%
        {fang2017attributed}
\bibfield{author}{\bibinfo{person}{Yixiang Fang}, \bibinfo{person}{Reynold
  Cheng}, \bibinfo{person}{Yankai Chen}, \bibinfo{person}{Siqiang Luo}, {and}
  \bibinfo{person}{Jiafeng Hu}.} \bibinfo{year}{2017}\natexlab{a}.
\newblock \showarticletitle{Effective and efficient attributed community
  search}.
\newblock \bibinfo{journal}{\emph{The VLDB Journal}} \bibinfo{volume}{26},
  \bibinfo{number}{6} (\bibinfo{year}{2017}), \bibinfo{pages}{803--828}.
\newblock


\bibitem[\protect\citeauthoryear{Fang, Cheng, Li, Luo, and Hu}{Fang
  et~al\mbox{.}}{2017b}]%
        {fang2017spatial}
\bibfield{author}{\bibinfo{person}{Yixiang Fang}, \bibinfo{person}{Reynold
  Cheng}, \bibinfo{person}{Xiaodong Li}, \bibinfo{person}{Siqiang Luo}, {and}
  \bibinfo{person}{Jiafeng Hu}.} \bibinfo{year}{2017}\natexlab{b}.
\newblock \showarticletitle{Effective community search over large spatial
  graphs}.
\newblock \bibinfo{journal}{\emph{Proc. of the VLDB Endowment {(PVLDB)}}}
  \bibinfo{volume}{10}, \bibinfo{number}{6} (\bibinfo{year}{2017}),
  \bibinfo{pages}{709--720}.
\newblock


\bibitem[\protect\citeauthoryear{Fang, Huang, Qin, Zhang, Zhang, Cheng, and
  Lin}{Fang et~al\mbox{.}}{2020}]%
        {fang2019survey}
\bibfield{author}{\bibinfo{person}{Yixiang Fang}, \bibinfo{person}{Xin Huang},
  \bibinfo{person}{Lu Qin}, \bibinfo{person}{Ying Zhang},
  \bibinfo{person}{Wenjie Zhang}, \bibinfo{person}{Reynold Cheng}, {and}
  \bibinfo{person}{Xuemin Lin}.} \bibinfo{year}{2020}\natexlab{}.
\newblock \showarticletitle{A survey of community search over big graphs}.
\newblock \bibinfo{journal}{\emph{The VLDB Journal}} \bibinfo{volume}{29},
  \bibinfo{number}{1} (\bibinfo{year}{2020}), \bibinfo{pages}{353--392}.
\newblock


\bibitem[\protect\citeauthoryear{Galimberti, Barrat, Bonchi, Cattuto, and
  Gullo}{Galimberti et~al\mbox{.}}{2018}]%
        {galimberti2018mining}
\bibfield{author}{\bibinfo{person}{Edoardo Galimberti}, \bibinfo{person}{Alain
  Barrat}, \bibinfo{person}{Francesco Bonchi}, \bibinfo{person}{Ciro Cattuto},
  {and} \bibinfo{person}{Francesco Gullo}.} \bibinfo{year}{2018}\natexlab{}.
\newblock \showarticletitle{Mining (maximal) span-cores from temporal
  networks}. In \bibinfo{booktitle}{\emph{Proc. of Int. Conf. on Information
  and Knowledge Management {(CIKM)}}}. \bibinfo{pages}{107--116}.
\newblock


\bibitem[\protect\citeauthoryear{Galimberti, Bonchi, and Gullo}{Galimberti
  et~al\mbox{.}}{2017}]%
        {GalimbertiBG17}
\bibfield{author}{\bibinfo{person}{Edoardo Galimberti},
  \bibinfo{person}{Francesco Bonchi}, {and} \bibinfo{person}{Francesco Gullo}.}
  \bibinfo{year}{2017}\natexlab{}.
\newblock \showarticletitle{Core Decomposition and Densest Subgraph in
  Multilayer Networks}. In \bibinfo{booktitle}{\emph{Proc. of Int. Conf. on
  Information and Knowledge Management {(CIKM)}}}. \bibinfo{pages}{1807--1816}.
\newblock


\bibitem[\protect\citeauthoryear{Galimberti, Bonchi, Gullo, and
  Lanciano}{Galimberti et~al\mbox{.}}{2020}]%
        {GalimbertiBGL20}
\bibfield{author}{\bibinfo{person}{Edoardo Galimberti},
  \bibinfo{person}{Francesco Bonchi}, \bibinfo{person}{Francesco Gullo}, {and}
  \bibinfo{person}{Tommaso Lanciano}.} \bibinfo{year}{2020}\natexlab{}.
\newblock \showarticletitle{Core Decomposition in Multilayer Networks: Theory,
  Algorithms, and Applications}.
\newblock \bibinfo{journal}{\emph{ACM Transactions on Knowledge Discovery from
  Data {(TKDD)}}} \bibinfo{volume}{14}, \bibinfo{number}{1}
  (\bibinfo{year}{2020}), \bibinfo{pages}{11:1--11:40}.
\newblock


\bibitem[\protect\citeauthoryear{Garas, Schweitzer, and Havlin}{Garas
  et~al\mbox{.}}{2012}]%
        {WeigthedCores}
\bibfield{author}{\bibinfo{person}{Antonios Garas}, \bibinfo{person}{Frank
  Schweitzer}, {and} \bibinfo{person}{Shlomo Havlin}.}
  \bibinfo{year}{2012}\natexlab{}.
\newblock \showarticletitle{A k-shell decomposition method for weighted
  networks}.
\newblock \bibinfo{journal}{\emph{New Journal of Physics {(NJP)}}}
  \bibinfo{volume}{14}, \bibinfo{number}{8} (\bibinfo{year}{2012}),
  \bibinfo{pages}{083030}.
\newblock


\bibitem[\protect\citeauthoryear{Garc{\'{\i}}a, Mavrodiev, and
  Schweitzer}{Garc{\'{\i}}a et~al\mbox{.}}{2013}]%
        {GArcia2013}
\bibfield{author}{\bibinfo{person}{David Garc{\'{\i}}a},
  \bibinfo{person}{Pavlin Mavrodiev}, {and} \bibinfo{person}{Frank
  Schweitzer}.} \bibinfo{year}{2013}\natexlab{}.
\newblock \showarticletitle{Social resilience in online communities: the
  autopsy of {F}riendster}. In \bibinfo{booktitle}{\emph{Proc. of {ACM} Conf.
  on Online Social Networks {(COSN)}}}. \bibinfo{pages}{39--50}.
\newblock


\bibitem[\protect\citeauthoryear{Gauvin, Panisson, and Cattuto}{Gauvin
  et~al\mbox{.}}{2014}]%
        {Gauvin2014}
\bibfield{author}{\bibinfo{person}{Laetitia Gauvin}, \bibinfo{person}{Andr{\'e}
  Panisson}, {and} \bibinfo{person}{Ciro Cattuto}.}
  \bibinfo{year}{2014}\natexlab{}.
\newblock \showarticletitle{Detecting the community structure and activity
  patterns of temporal networks: a non-negative tensor factorization approach}.
\newblock \bibinfo{journal}{\emph{{PLOS ONE}}} \bibinfo{volume}{9},
  \bibinfo{number}{1} (\bibinfo{year}{2014}), \bibinfo{pages}{e86028}.
\newblock


\bibitem[\protect\citeauthoryear{Gemmetto, Barrat, and Cattuto}{Gemmetto
  et~al\mbox{.}}{2014}]%
        {Gemmetto2014}
\bibfield{author}{\bibinfo{person}{Valerio Gemmetto}, \bibinfo{person}{Alain
  Barrat}, {and} \bibinfo{person}{Ciro Cattuto}.}
  \bibinfo{year}{2014}\natexlab{}.
\newblock \showarticletitle{{Mitigation of infectious disease at school:
  targeted class closure vs school closure.}}
\newblock \bibinfo{journal}{\emph{{BMC} Infectious Diseases}}
  \bibinfo{volume}{14}, \bibinfo{number}{1} (\bibinfo{year}{2014}),
  \bibinfo{pages}{695}.
\newblock


\bibitem[\protect\citeauthoryear{Giatsidis, Thilikos, and
  Vazirgiannis}{Giatsidis et~al\mbox{.}}{2013}]%
        {DirectedCores}
\bibfield{author}{\bibinfo{person}{Christos Giatsidis},
  \bibinfo{person}{Dimitrios~M. Thilikos}, {and} \bibinfo{person}{Michalis
  Vazirgiannis}.} \bibinfo{year}{2013}\natexlab{}.
\newblock \showarticletitle{D-cores: measuring collaboration of directed graphs
  based on degeneracy}.
\newblock \bibinfo{journal}{\emph{Knowledge and Information Systems {(KAIS)}}}
  \bibinfo{volume}{35}, \bibinfo{number}{2} (\bibinfo{year}{2013}),
  \bibinfo{pages}{311--343}.
\newblock


\bibitem[\protect\citeauthoryear{Goyal and Ferrara}{Goyal and Ferrara}{2018}]%
        {GOYAL201878}
\bibfield{author}{\bibinfo{person}{Palash Goyal} {and} \bibinfo{person}{Emilio
  Ferrara}.} \bibinfo{year}{2018}\natexlab{}.
\newblock \showarticletitle{Graph embedding techniques, applications, and
  performance: A survey}.
\newblock \bibinfo{journal}{\emph{Knowledge-Based Systems}}
  \bibinfo{volume}{151} (\bibinfo{year}{2018}), \bibinfo{pages}{78--94}.
\newblock


\bibitem[\protect\citeauthoryear{Grover and Leskovec}{Grover and
  Leskovec}{2016}]%
        {grover2016node2vec}
\bibfield{author}{\bibinfo{person}{Aditya Grover} {and} \bibinfo{person}{Jure
  Leskovec}.} \bibinfo{year}{2016}\natexlab{}.
\newblock \showarticletitle{node2vec: Scalable feature learning for networks}.
  In \bibinfo{booktitle}{\emph{Proc. of {ACM} Int. Conf. on Knowledge Discovery
  and Data Mining {(KDD)}}}. \bibinfo{pages}{855--864}.
\newblock


\bibitem[\protect\citeauthoryear{Healy, Janssen, Milios, and Aiello}{Healy
  et~al\mbox{.}}{2006}]%
        {HealyJMA06}
\bibfield{author}{\bibinfo{person}{John Healy}, \bibinfo{person}{Jeannette
  Janssen}, \bibinfo{person}{Evangelos~E. Milios}, {and}
  \bibinfo{person}{William Aiello}.} \bibinfo{year}{2006}\natexlab{}.
\newblock \showarticletitle{Characterization of Graphs Using Degree Cores}. In
  \bibinfo{booktitle}{\emph{Proc. of Int. Work. on Algorithms and Models for
  the Web-Graph {(WAW)}}}. \bibinfo{pages}{25--37}.
\newblock


\bibitem[\protect\citeauthoryear{Himmel, Molter, Niedermeier, and Sorge}{Himmel
  et~al\mbox{.}}{2016}]%
        {himmel2016enumerating}
\bibfield{author}{\bibinfo{person}{Anne-Sophie Himmel},
  \bibinfo{person}{Hendrik Molter}, \bibinfo{person}{Rolf Niedermeier}, {and}
  \bibinfo{person}{Manuel Sorge}.} \bibinfo{year}{2016}\natexlab{}.
\newblock \showarticletitle{Enumerating maximal cliques in temporal graphs}. In
  \bibinfo{booktitle}{\emph{Proc. of. {IEEE/ACM} Int. Conf. on Advances in
  Social Networks Analysis and Mining {(ASONAM)}}}. \bibinfo{pages}{337--344}.
\newblock


\bibitem[\protect\citeauthoryear{Huang, Cheng, Qin, Tian, and Yu}{Huang
  et~al\mbox{.}}{2014}]%
        {KtrussSIGMOD14}
\bibfield{author}{\bibinfo{person}{Xin Huang}, \bibinfo{person}{Hong Cheng},
  \bibinfo{person}{Lu Qin}, \bibinfo{person}{Wentao Tian}, {and}
  \bibinfo{person}{Jeffrey~Xu Yu}.} \bibinfo{year}{2014}\natexlab{}.
\newblock \showarticletitle{Querying k-truss community in large and dynamic
  graphs}. In \bibinfo{booktitle}{\emph{Proc. of {ACM} Int. Conf. on Management
  of Data {(SIGMOD)}}}. \bibinfo{pages}{1311--1322}.
\newblock


\bibitem[\protect\citeauthoryear{Huang and Lakshmanan}{Huang and
  Lakshmanan}{2017}]%
        {huang2017attribute}
\bibfield{author}{\bibinfo{person}{Xin Huang} {and} \bibinfo{person}{Laks~VS
  Lakshmanan}.} \bibinfo{year}{2017}\natexlab{}.
\newblock \showarticletitle{Attribute-driven community search}.
\newblock \bibinfo{journal}{\emph{Proc. of the VLDB Endowment {(PVLDB)}}}
  \bibinfo{volume}{10}, \bibinfo{number}{9} (\bibinfo{year}{2017}),
  \bibinfo{pages}{949--960}.
\newblock


\bibitem[\protect\citeauthoryear{Huang, Lakshmanan, and Xu}{Huang
  et~al\mbox{.}}{2017}]%
        {HuangLX17}
\bibfield{author}{\bibinfo{person}{Xin Huang}, \bibinfo{person}{Laks~VS
  Lakshmanan}, {and} \bibinfo{person}{Jianliang Xu}.}
  \bibinfo{year}{2017}\natexlab{}.
\newblock \showarticletitle{Community search over big graphs: Models,
  algorithms, and opportunities}. In \bibinfo{booktitle}{\emph{Proc. of {IEEE}
  Int. Conf. on Data Engineering {(ICDE)}}}. \bibinfo{pages}{1451--1454}.
\newblock


\bibitem[\protect\citeauthoryear{Inokuchi and Washio}{Inokuchi and
  Washio}{2010}]%
        {inokuchi2010mining}
\bibfield{author}{\bibinfo{person}{Akihiro Inokuchi} {and}
  \bibinfo{person}{Takashi Washio}.} \bibinfo{year}{2010}\natexlab{}.
\newblock \showarticletitle{Mining frequent graph sequence patterns induced by
  vertices}. In \bibinfo{booktitle}{\emph{Proc. of {SIAM} Int. Conf. on Data
  Mining {(SDM)}}}. \bibinfo{pages}{466--477}.
\newblock


\bibitem[\protect\citeauthoryear{Jethava and Beerenwinkel}{Jethava and
  Beerenwinkel}{2015}]%
        {jethava2015finding}
\bibfield{author}{\bibinfo{person}{Vinay Jethava} {and} \bibinfo{person}{Niko
  Beerenwinkel}.} \bibinfo{year}{2015}\natexlab{}.
\newblock \showarticletitle{Finding dense subgraphs in relational graphs}. In
  \bibinfo{booktitle}{\emph{Proc. of Europ. Machine Learning and Principles and
  Practice of Knowledge Discovery in Databases {(ECML PKDD)}}}.
  \bibinfo{pages}{641--654}.
\newblock


\bibitem[\protect\citeauthoryear{Kitsak, Gallos, Havlin, Liljeros, Muchnik,
  Stanley, and Makse}{Kitsak et~al\mbox{.}}{2010}]%
        {Kitsak2010}
\bibfield{author}{\bibinfo{person}{Maksim Kitsak}, \bibinfo{person}{Lazaros~K.
  Gallos}, \bibinfo{person}{Shlomo Havlin}, \bibinfo{person}{Fredrik Liljeros},
  \bibinfo{person}{Lev Muchnik}, \bibinfo{person}{H.~Eugene Stanley}, {and}
  \bibinfo{person}{Hernán~A. Makse}.} \bibinfo{year}{2010}\natexlab{}.
\newblock \showarticletitle{Identifying influential spreaders in complex
  networks}.
\newblock \bibinfo{journal}{\emph{Nature Physics}} \bibinfo{volume}{6},
  \bibinfo{number}{11} (\bibinfo{year}{2010}), \bibinfo{pages}{888--893}.
\newblock


\bibitem[\protect\citeauthoryear{Kloumann and Kleinberg}{Kloumann and
  Kleinberg}{2014}]%
        {Kloumann}
\bibfield{author}{\bibinfo{person}{Isabel~M. Kloumann} {and}
  \bibinfo{person}{Jon~M. Kleinberg}.} \bibinfo{year}{2014}\natexlab{}.
\newblock \showarticletitle{Community membership identification from small seed
  sets}. In \bibinfo{booktitle}{\emph{Proc. of {ACM} Int. Conf. on Knowledge
  Discovery and Data Mining {(KDD)}}}. \bibinfo{pages}{1366--1375}.
\newblock


\bibitem[\protect\citeauthoryear{Kortsarz and Peleg}{Kortsarz and
  Peleg}{1994}]%
        {KortsarzP94}
\bibfield{author}{\bibinfo{person}{Guy Kortsarz} {and} \bibinfo{person}{David
  Peleg}.} \bibinfo{year}{1994}\natexlab{}.
\newblock \showarticletitle{Generating Sparse 2-Spanners}.
\newblock \bibinfo{journal}{\emph{{J}ournal of {A}lgorithms}}
  \bibinfo{volume}{17}, \bibinfo{number}{2} (\bibinfo{year}{1994}),
  \bibinfo{pages}{222--236}.
\newblock


\bibitem[\protect\citeauthoryear{Kovanen, Karsai, Kaski, Kert{\'{e}}sz, and
  Saram{\"{a}}ki}{Kovanen et~al\mbox{.}}{2011}]%
        {Kovanen:2011}
\bibfield{author}{\bibinfo{person}{Lauri Kovanen},
  \bibinfo{person}{M{\'{a}}rton Karsai}, \bibinfo{person}{Kimmo Kaski},
  \bibinfo{person}{J{\'{a}}nos Kert{\'{e}}sz}, {and} \bibinfo{person}{Jari
  Saram{\"{a}}ki}.} \bibinfo{year}{2011}\natexlab{}.
\newblock \showarticletitle{Temporal motifs in time-dependent networks}.
\newblock \bibinfo{journal}{\emph{Journal of Statistical Mechanics}}
  \bibinfo{volume}{2011}, \bibinfo{number}{11} (\bibinfo{year}{2011}),
  \bibinfo{pages}{P11005}.
\newblock


\bibitem[\protect\citeauthoryear{Leung, Lim, Lo, and Weng}{Leung
  et~al\mbox{.}}{2010}]%
        {leung2010mining}
\bibfield{author}{\bibinfo{person}{Cane Wing-ki Leung},
  \bibinfo{person}{Ee-Peng Lim}, \bibinfo{person}{David Lo}, {and}
  \bibinfo{person}{Jianshu Weng}.} \bibinfo{year}{2010}\natexlab{}.
\newblock \showarticletitle{Mining interesting link formation rules in social
  networks}. In \bibinfo{booktitle}{\emph{Proc. of Int. Conf. on Information
  and Knowledge Management {(CIKM)}}}. \bibinfo{pages}{209--218}.
\newblock


\bibitem[\protect\citeauthoryear{Li, Su, Qin, Yu, and Dai}{Li
  et~al\mbox{.}}{2018}]%
        {li2018persistent}
\bibfield{author}{\bibinfo{person}{Rong-Hua Li}, \bibinfo{person}{Jiao Su},
  \bibinfo{person}{Lu Qin}, \bibinfo{person}{Jeffrey~Xu Yu}, {and}
  \bibinfo{person}{Qiangqiang Dai}.} \bibinfo{year}{2018}\natexlab{}.
\newblock \showarticletitle{Persistent community search in temporal networks}.
  In \bibinfo{booktitle}{\emph{Proc. of {IEEE} Int. Conf. on Data Engineering
  {(ICDE)}}}. \bibinfo{pages}{797--808}.
\newblock


\bibitem[\protect\citeauthoryear{Li, Yu, and Mao}{Li et~al\mbox{.}}{2014}]%
        {li2014efficient}
\bibfield{author}{\bibinfo{person}{Rong-Hua Li}, \bibinfo{person}{Jeffrey~Xu
  Yu}, {and} \bibinfo{person}{Rui Mao}.} \bibinfo{year}{2014}\natexlab{}.
\newblock \showarticletitle{Efficient core maintenance in large dynamic
  graphs}.
\newblock \bibinfo{journal}{\emph{{IEEE} Transactions on Knowledge and Data
  Engineering {(TKDE)}}} \bibinfo{volume}{26}, \bibinfo{number}{10}
  (\bibinfo{year}{2014}), \bibinfo{pages}{2453--2465}.
\newblock


\bibitem[\protect\citeauthoryear{Liu, Yuan, Lin, Qin, Zhang, and Zhou}{Liu
  et~al\mbox{.}}{2019b}]%
        {liu2019efficient}
\bibfield{author}{\bibinfo{person}{Boge Liu}, \bibinfo{person}{Long Yuan},
  \bibinfo{person}{Xuemin Lin}, \bibinfo{person}{Lu Qin},
  \bibinfo{person}{Wenjie Zhang}, {and} \bibinfo{person}{Jingren Zhou}.}
  \bibinfo{year}{2019}\natexlab{b}.
\newblock \showarticletitle{Efficient ($\alpha$, $\beta$)-core Computation: an
  Index-based Approach}. In \bibinfo{booktitle}{\emph{Proc. of World Wide Web
  Conf. {(WWW)}}}. \bibinfo{pages}{1130--1141}.
\newblock


\bibitem[\protect\citeauthoryear{Liu, Benson, and Charikar}{Liu
  et~al\mbox{.}}{2019a}]%
        {liu2019sampling}
\bibfield{author}{\bibinfo{person}{Paul Liu}, \bibinfo{person}{Austin~R
  Benson}, {and} \bibinfo{person}{Moses Charikar}.}
  \bibinfo{year}{2019}\natexlab{a}.
\newblock \showarticletitle{Sampling Methods for Counting Temporal Motifs}. In
  \bibinfo{booktitle}{\emph{Proc. of Int. Conf. on Web Search and Data Mining
  {(WSDM)}}}. \bibinfo{pages}{294--302}.
\newblock


\bibitem[\protect\citeauthoryear{Ma, Hu, Wang, Lin, and Huai}{Ma
  et~al\mbox{.}}{2017}]%
        {ma2017fast}
\bibfield{author}{\bibinfo{person}{Shuai Ma}, \bibinfo{person}{Renjun Hu},
  \bibinfo{person}{Luoshu Wang}, \bibinfo{person}{Xuelian Lin}, {and}
  \bibinfo{person}{Jinpeng Huai}.} \bibinfo{year}{2017}\natexlab{}.
\newblock \showarticletitle{Fast Computation of Dense Temporal Subgraphs}. In
  \bibinfo{booktitle}{\emph{Proc. of {IEEE} Int. Conf. on Data Engineering
  {(ICDE)}}}. \bibinfo{pages}{361--372}.
\newblock


\bibitem[\protect\citeauthoryear{Malliaros, Giatsidis, Papadopoulos, and
  Vazirgiannis}{Malliaros et~al\mbox{.}}{2020}]%
        {malliaros2019core}
\bibfield{author}{\bibinfo{person}{Fragkiskos~D. Malliaros},
  \bibinfo{person}{Christos Giatsidis}, \bibinfo{person}{Apostolos~N.
  Papadopoulos}, {and} \bibinfo{person}{Michalis Vazirgiannis}.}
  \bibinfo{year}{2020}\natexlab{}.
\newblock \showarticletitle{The Core Decomposition of Networks: Theory,
  Algorithms and Applications}.
\newblock \bibinfo{journal}{\emph{The VLDB Journal}} \bibinfo{volume}{29},
  \bibinfo{number}{1} (\bibinfo{year}{2020}), \bibinfo{pages}{61--92}.
\newblock


\bibitem[\protect\citeauthoryear{Maslov and Sneppen}{Maslov and
  Sneppen}{2002}]%
        {maslov2002specificity}
\bibfield{author}{\bibinfo{person}{Sergei Maslov} {and} \bibinfo{person}{Kim
  Sneppen}.} \bibinfo{year}{2002}\natexlab{}.
\newblock \showarticletitle{Specificity and stability in topology of protein
  networks}.
\newblock \bibinfo{journal}{\emph{Science}} \bibinfo{volume}{296},
  \bibinfo{number}{5569} (\bibinfo{year}{2002}), \bibinfo{pages}{910--913}.
\newblock


\bibitem[\protect\citeauthoryear{Mastrandrea, Fournet, and Barrat}{Mastrandrea
  et~al\mbox{.}}{2015}]%
        {Fournet:PLOS2015}
\bibfield{author}{\bibinfo{person}{Rossana Mastrandrea}, \bibinfo{person}{Julie
  Fournet}, {and} \bibinfo{person}{Alain Barrat}.}
  \bibinfo{year}{2015}\natexlab{}.
\newblock \showarticletitle{Contact Patterns in a High School: A Comparison
  between Data Collected Using Wearable Sensors, Contact Diaries and Friendship
  Surveys}.
\newblock \bibinfo{journal}{\emph{{PLOS ONE}}} \bibinfo{volume}{10},
  \bibinfo{number}{9} (\bibinfo{date}{09} \bibinfo{year}{2015}),
  \bibinfo{pages}{1--26}.
\newblock


\bibitem[\protect\citeauthoryear{Matula and Beck}{Matula and Beck}{1983}]%
        {MatulaB83}
\bibfield{author}{\bibinfo{person}{David~W. Matula} {and}
  \bibinfo{person}{Leland~L. Beck}.} \bibinfo{year}{1983}\natexlab{}.
\newblock \showarticletitle{Smallest-Last Ordering and clustering and Graph
  Coloring Algorithms}.
\newblock \bibinfo{journal}{\emph{Journal of the {ACM} {(JACM)}}}
  \bibinfo{volume}{30}, \bibinfo{number}{3} (\bibinfo{year}{1983}),
  \bibinfo{pages}{417--427}.
\newblock


\bibitem[\protect\citeauthoryear{Mongiovi, Bogdanov, Ranca, Papalexakis,
  Faloutsos, and Singh}{Mongiovi et~al\mbox{.}}{2013}]%
        {mongiovi2013netspot}
\bibfield{author}{\bibinfo{person}{Misael Mongiovi}, \bibinfo{person}{Petko
  Bogdanov}, \bibinfo{person}{Razvan Ranca}, \bibinfo{person}{Evangelos~E
  Papalexakis}, \bibinfo{person}{Christos Faloutsos}, {and}
  \bibinfo{person}{Ambuj~K Singh}.} \bibinfo{year}{2013}\natexlab{}.
\newblock \showarticletitle{Netspot: Spotting significant anomalous regions on
  dynamic networks}. In \bibinfo{booktitle}{\emph{Proc. of {SIAM} Int. Conf. on
  Data Mining {(SDM)}}}. \bibinfo{pages}{28--36}.
\newblock


\bibitem[\protect\citeauthoryear{Montresor, Pellegrini, and Miorandi}{Montresor
  et~al\mbox{.}}{2013}]%
        {DistributedCores1}
\bibfield{author}{\bibinfo{person}{Alberto Montresor},
  \bibinfo{person}{Francesco~De Pellegrini}, {and} \bibinfo{person}{Daniele
  Miorandi}.} \bibinfo{year}{2013}\natexlab{}.
\newblock \showarticletitle{Distributed k-Core Decomposition}.
\newblock \bibinfo{journal}{\emph{{IEEE} Transactions on Parallel and
  Distributed Systems {(TPDS)}}} \bibinfo{volume}{24}, \bibinfo{number}{2}
  (\bibinfo{year}{2013}), \bibinfo{pages}{288--300}.
\newblock


\bibitem[\protect\citeauthoryear{Perozzi, Al-Rfou, and Skiena}{Perozzi
  et~al\mbox{.}}{2014}]%
        {perozzi2014deepwalk}
\bibfield{author}{\bibinfo{person}{Bryan Perozzi}, \bibinfo{person}{Rami
  Al-Rfou}, {and} \bibinfo{person}{Steven Skiena}.}
  \bibinfo{year}{2014}\natexlab{}.
\newblock \showarticletitle{Deepwalk: Online learning of social
  representations}. In \bibinfo{booktitle}{\emph{Proc. of {ACM} Int. Conf. on
  Knowledge Discovery and Data Mining {(KDD)}}}. \bibinfo{pages}{701--710}.
\newblock


\bibitem[\protect\citeauthoryear{Reinthal, Andersson, Norlander,
  St{\aa}lhammar, Norlin, et~al\mbox{.}}{Reinthal et~al\mbox{.}}{2016}]%
        {reinthal2016finding}
\bibfield{author}{\bibinfo{person}{Alexander Reinthal}, \bibinfo{person}{Arvid
  Andersson}, \bibinfo{person}{Erik Norlander}, \bibinfo{person}{Philip
  St{\aa}lhammar}, \bibinfo{person}{Sebastian Norlin}, {et~al\mbox{.}}}
  \bibinfo{year}{2016}\natexlab{}.
\newblock \showarticletitle{Finding the Densest Common Subgraph with Linear
  Programming}.
\newblock  (\bibinfo{year}{2016}).
\newblock


\bibitem[\protect\citeauthoryear{Rozenshtein, Bonchi, Gionis, Sozio, and
  Tatti}{Rozenshtein et~al\mbox{.}}{2018}]%
        {rozenshtein2018finding}
\bibfield{author}{\bibinfo{person}{Polina Rozenshtein},
  \bibinfo{person}{Francesco Bonchi}, \bibinfo{person}{Aristides Gionis},
  \bibinfo{person}{Mauro Sozio}, {and} \bibinfo{person}{Nikolaj Tatti}.}
  \bibinfo{year}{2018}\natexlab{}.
\newblock \showarticletitle{Finding events in temporal networks: Segmentation
  meets densest-subgraph discovery}. In \bibinfo{booktitle}{\emph{Proc. of
  {IEEE} Int. Conf. on Data Mining {(ICDM)}}}. \bibinfo{pages}{397--406}.
\newblock


\bibitem[\protect\citeauthoryear{Rozenshtein, Tatti, and Gionis}{Rozenshtein
  et~al\mbox{.}}{2017}]%
        {rozenshtein2017finding}
\bibfield{author}{\bibinfo{person}{Polina Rozenshtein},
  \bibinfo{person}{Nikolaj Tatti}, {and} \bibinfo{person}{Aristides Gionis}.}
  \bibinfo{year}{2017}\natexlab{}.
\newblock \showarticletitle{Finding dynamic dense subgraphs}.
\newblock \bibinfo{journal}{\emph{ACM Transactions on Knowledge Discovery from
  Data {(TKDD)}}} \bibinfo{volume}{11}, \bibinfo{number}{3}
  (\bibinfo{year}{2017}), \bibinfo{pages}{27}.
\newblock


\bibitem[\protect\citeauthoryear{Ruchansky, Bonchi, Garc{\'\i}a-Soriano, Gullo,
  and Kourtellis}{Ruchansky et~al\mbox{.}}{2015}]%
        {ruchansky2015minimum}
\bibfield{author}{\bibinfo{person}{Natali Ruchansky},
  \bibinfo{person}{Francesco Bonchi}, \bibinfo{person}{David
  Garc{\'\i}a-Soriano}, \bibinfo{person}{Francesco Gullo}, {and}
  \bibinfo{person}{Nicolas Kourtellis}.} \bibinfo{year}{2015}\natexlab{}.
\newblock \showarticletitle{The minimum wiener connector problem}. In
  \bibinfo{booktitle}{\emph{Proc. of {ACM} Int. Conf. on Management of Data
  {(SIGMOD)}}}. \bibinfo{pages}{1587--1602}.
\newblock


\bibitem[\protect\citeauthoryear{Ruchansky, Bonchi, Garc{\'{\i}}a{-}Soriano,
  Gullo, and Kourtellis}{Ruchansky et~al\mbox{.}}{2017}]%
        {RuchanskyBGGK17}
\bibfield{author}{\bibinfo{person}{Natali Ruchansky},
  \bibinfo{person}{Francesco Bonchi}, \bibinfo{person}{David
  Garc{\'{\i}}a{-}Soriano}, \bibinfo{person}{Francesco Gullo}, {and}
  \bibinfo{person}{Nicolas Kourtellis}.} \bibinfo{year}{2017}\natexlab{}.
\newblock \showarticletitle{To Be Connected, or Not to Be Connected: That is
  the Minimum Inefficiency Subgraph Problem}. In
  \bibinfo{booktitle}{\emph{Proc. of Int. Conf. on Information and Knowledge
  Management {(CIKM)}}}. \bibinfo{pages}{879--888}.
\newblock


\bibitem[\protect\citeauthoryear{Sapienza, Panisson, Wu, Gauvin, and
  Cattuto}{Sapienza et~al\mbox{.}}{2015}]%
        {sapienza2015detecting}
\bibfield{author}{\bibinfo{person}{Anna Sapienza}, \bibinfo{person}{Andr{\'e}
  Panisson}, \bibinfo{person}{Joseph Wu}, \bibinfo{person}{Laetitia Gauvin},
  {and} \bibinfo{person}{Ciro Cattuto}.} \bibinfo{year}{2015}\natexlab{}.
\newblock \showarticletitle{Detecting anomalies in time-varying networks using
  tensor decomposition}. In \bibinfo{booktitle}{\emph{Proc. of {IEEE} Int.
  Conf. on Data Mining - Worskshops {(ICDMW)}}}. \bibinfo{pages}{516--523}.
\newblock


\bibitem[\protect\citeauthoryear{Sariy{\"{u}}ce, Gedik, Jacques{-}Silva, Wu,
  and {\c{C}}ataly{\"{u}}rek}{Sariy{\"{u}}ce et~al\mbox{.}}{2013}]%
        {StreamingCores}
\bibfield{author}{\bibinfo{person}{Ahmet~Erdem Sariy{\"{u}}ce},
  \bibinfo{person}{Bugra Gedik}, \bibinfo{person}{Gabriela Jacques{-}Silva},
  \bibinfo{person}{Kun{-}Lung Wu}, {and} \bibinfo{person}{{\"{U}}mit~V.
  {\c{C}}ataly{\"{u}}rek}.} \bibinfo{year}{2013}\natexlab{}.
\newblock \showarticletitle{Streaming Algorithms for k-core Decomposition}.
\newblock \bibinfo{journal}{\emph{Proc. of the VLDB Endowment {(PVLDB)}}}
  \bibinfo{volume}{6}, \bibinfo{number}{6} (\bibinfo{year}{2013}),
  \bibinfo{pages}{433--444}.
\newblock


\bibitem[\protect\citeauthoryear{Semertzidis, Pitoura, Terzi, and
  Tsaparas}{Semertzidis et~al\mbox{.}}{2018}]%
        {semertzidis2016best}
\bibfield{author}{\bibinfo{person}{Konstantinos Semertzidis},
  \bibinfo{person}{Evaggelia Pitoura}, \bibinfo{person}{Evimaria Terzi}, {and}
  \bibinfo{person}{Panayiotis Tsaparas}.} \bibinfo{year}{2018}\natexlab{}.
\newblock \showarticletitle{Finding lasting dense subgraphs}.
\newblock \bibinfo{journal}{\emph{Data Mining and Knowledge Discovery
  {(DAMI)}}} \bibinfo{volume}{33}, \bibinfo{number}{5} (\bibinfo{year}{2018}),
  \bibinfo{pages}{1417--1445}.
\newblock


\bibitem[\protect\citeauthoryear{Sozio and Gionis}{Sozio and Gionis}{2010}]%
        {Sozio}
\bibfield{author}{\bibinfo{person}{Mauro Sozio} {and}
  \bibinfo{person}{Aristides Gionis}.} \bibinfo{year}{2010}\natexlab{}.
\newblock \showarticletitle{The community-search problem and how to plan a
  successful cocktail party}. In \bibinfo{booktitle}{\emph{Proc. of {ACM} Int.
  Conf. on Knowledge Discovery and Data Mining {(KDD)}}}.
  \bibinfo{pages}{939--948}.
\newblock


\bibitem[\protect\citeauthoryear{Stehl\'e, Charbonnier, Picard, Cattuto, and
  Barrat}{Stehl\'e et~al\mbox{.}}{2013}]%
        {Stehle:2013}
\bibfield{author}{\bibinfo{person}{Juliette Stehl\'e},
  \bibinfo{person}{François Charbonnier}, \bibinfo{person}{Tristan Picard},
  \bibinfo{person}{Ciro Cattuto}, {and} \bibinfo{person}{Alain Barrat}.}
  \bibinfo{year}{2013}\natexlab{}.
\newblock \showarticletitle{Gender homophily from spatial behavior in a primary
  school: A sociometric study}.
\newblock \bibinfo{journal}{\emph{Social Networks}} \bibinfo{volume}{35},
  \bibinfo{number}{4} (\bibinfo{year}{2013}), \bibinfo{pages}{604--613}.
\newblock


\bibitem[\protect\citeauthoryear{Stehl\'e, Voirin, Barrat, Cattuto, Isella,
  Pinton, Quaggiotto, Van~den Broeck, R\'egis, Lina, and Vanhems}{Stehl\'e
  et~al\mbox{.}}{2011}]%
        {Stehle:2011}
\bibfield{author}{\bibinfo{person}{Juliette Stehl\'e}, \bibinfo{person}{Nicolas
  Voirin}, \bibinfo{person}{Alain Barrat}, \bibinfo{person}{Ciro Cattuto},
  \bibinfo{person}{Lorenzo Isella}, \bibinfo{person}{Jean-François Pinton},
  \bibinfo{person}{Marco Quaggiotto}, \bibinfo{person}{Wouter Van~den Broeck},
  \bibinfo{person}{Corinne R\'egis}, \bibinfo{person}{Bruno Lina}, {and}
  \bibinfo{person}{Philippe Vanhems}.} \bibinfo{year}{2011}\natexlab{}.
\newblock \showarticletitle{High-Resolution Measurements of Face-to-Face
  Contact Patterns in a Primary School}.
\newblock \bibinfo{journal}{\emph{{PLOS ONE}}} \bibinfo{volume}{6},
  \bibinfo{number}{8} (\bibinfo{year}{2011}), \bibinfo{pages}{e23176}.
\newblock


\bibitem[\protect\citeauthoryear{Tagarelli, Amelio, and Gullo}{Tagarelli
  et~al\mbox{.}}{2017}]%
        {tagarelli2017ensemble}
\bibfield{author}{\bibinfo{person}{Andrea Tagarelli}, \bibinfo{person}{Alessia
  Amelio}, {and} \bibinfo{person}{Francesco Gullo}.}
  \bibinfo{year}{2017}\natexlab{}.
\newblock \showarticletitle{Ensemble-based community detection in multilayer
  networks}.
\newblock \bibinfo{journal}{\emph{Data Mining and Knowledge Discovery
  {(DAMI)}}} \bibinfo{volume}{31}, \bibinfo{number}{5} (\bibinfo{year}{2017}),
  \bibinfo{pages}{1506--1543}.
\newblock


\bibitem[\protect\citeauthoryear{Tang, Qu, Wang, Zhang, Yan, and Mei}{Tang
  et~al\mbox{.}}{2015}]%
        {tang2015line}
\bibfield{author}{\bibinfo{person}{Jian Tang}, \bibinfo{person}{Meng Qu},
  \bibinfo{person}{Mingzhe Wang}, \bibinfo{person}{Ming Zhang},
  \bibinfo{person}{Jun Yan}, {and} \bibinfo{person}{Qiaozhu Mei}.}
  \bibinfo{year}{2015}\natexlab{}.
\newblock \showarticletitle{Line: Large-scale information network embedding}.
  In \bibinfo{booktitle}{\emph{Proc. of World Wide Web Conf. {(WWW)}}}.
  \bibinfo{pages}{1067--1077}.
\newblock


\bibitem[\protect\citeauthoryear{Tong and Faloutsos}{Tong and
  Faloutsos}{2006}]%
        {CenterpieceKDD06}
\bibfield{author}{\bibinfo{person}{Hanghang Tong} {and}
  \bibinfo{person}{Christos Faloutsos}.} \bibinfo{year}{2006}\natexlab{}.
\newblock \showarticletitle{Center-piece subgraphs: problem definition and fast
  solutions}. In \bibinfo{booktitle}{\emph{Proc. of {ACM} Int. Conf. on
  Knowledge Discovery and Data Mining {(KDD)}}}. \bibinfo{pages}{404--413}.
\newblock


\bibitem[\protect\citeauthoryear{Viard, Latapy, and Magnien}{Viard
  et~al\mbox{.}}{2016}]%
        {viard2016computing}
\bibfield{author}{\bibinfo{person}{Tiphaine Viard}, \bibinfo{person}{Matthieu
  Latapy}, {and} \bibinfo{person}{Cl{\'e}mence Magnien}.}
  \bibinfo{year}{2016}\natexlab{}.
\newblock \showarticletitle{Computing maximal cliques in link streams}.
\newblock \bibinfo{journal}{\emph{Theoretical Computer Science {(TCS)}}}
  \bibinfo{volume}{609} (\bibinfo{year}{2016}), \bibinfo{pages}{245--252}.
\newblock


\bibitem[\protect\citeauthoryear{Wu, Cheng, Lu, Ke, Huang, Yan, and Wu}{Wu
  et~al\mbox{.}}{2015}]%
        {wu2015core}
\bibfield{author}{\bibinfo{person}{Huanhuan Wu}, \bibinfo{person}{James Cheng},
  \bibinfo{person}{Yi Lu}, \bibinfo{person}{Yiping Ke}, \bibinfo{person}{Yuzhen
  Huang}, \bibinfo{person}{Da Yan}, {and} \bibinfo{person}{Hejun Wu}.}
  \bibinfo{year}{2015}\natexlab{}.
\newblock \showarticletitle{Core decomposition in large temporal graphs}. In
  \bibinfo{booktitle}{\emph{Proc. of {IEEE} Int. Conf. on Big Data}}.
  \bibinfo{pages}{649--658}.
\newblock


\bibitem[\protect\citeauthoryear{Wuchty and Almaas}{Wuchty and Almaas}{2005}]%
        {citeulike:298147}
\bibfield{author}{\bibinfo{person}{Stefan Wuchty} {and} \bibinfo{person}{Eivind
  Almaas}.} \bibinfo{year}{2005}\natexlab{}.
\newblock \showarticletitle{Peeling the yeast protein network}.
\newblock \bibinfo{journal}{\emph{Proteomics}} \bibinfo{volume}{5},
  \bibinfo{number}{2} (\bibinfo{year}{2005}), \bibinfo{pages}{444--449}.
\newblock


\bibitem[\protect\citeauthoryear{Yan, Bian, Luo, Lee, and Zhang}{Yan
  et~al\mbox{.}}{2019}]%
        {yan2019constrained}
\bibfield{author}{\bibinfo{person}{Yaowei Yan}, \bibinfo{person}{Yuchen Bian},
  \bibinfo{person}{Dongsheng Luo}, \bibinfo{person}{Dongwon Lee}, {and}
  \bibinfo{person}{Xiang Zhang}.} \bibinfo{year}{2019}\natexlab{}.
\newblock \showarticletitle{Constrained Local Graph Clustering by Colored
  Random Walk}. In \bibinfo{booktitle}{\emph{Proc. of World Wide Web Conf.
  {(WWW)}}}. \bibinfo{pages}{2137--2146}.
\newblock


\bibitem[\protect\citeauthoryear{Zhang, Zhang, Qin, Zhang, and Lin}{Zhang
  et~al\mbox{.}}{2017}]%
        {zhang2017engagement}
\bibfield{author}{\bibinfo{person}{Fan Zhang}, \bibinfo{person}{Ying Zhang},
  \bibinfo{person}{Lu Qin}, \bibinfo{person}{Wenjie Zhang}, {and}
  \bibinfo{person}{Xuemin Lin}.} \bibinfo{year}{2017}\natexlab{}.
\newblock \showarticletitle{When engagement meets similarity: efficient (k,
  r)-core computation on social networks}.
\newblock \bibinfo{journal}{\emph{Proc. of the VLDB Endowment {(PVLDB)}}}
  \bibinfo{volume}{10}, \bibinfo{number}{10} (\bibinfo{year}{2017}),
  \bibinfo{pages}{998--1009}.
\newblock


\bibitem[\protect\citeauthoryear{Zhang, Zhao, Cai, Liu, and Zhou}{Zhang
  et~al\mbox{.}}{2010}]%
        {DBLP:journals/tjs/ZhangZCLZ10}
\bibfield{author}{\bibinfo{person}{Haohua Zhang}, \bibinfo{person}{Hai Zhao},
  \bibinfo{person}{Wei Cai}, \bibinfo{person}{Jie Liu}, {and}
  \bibinfo{person}{Wanlei Zhou}.} \bibinfo{year}{2010}\natexlab{}.
\newblock \showarticletitle{Using the k-core decomposition to analyze the
  static structure of large-scale software systems}.
\newblock \bibinfo{journal}{\emph{The Journal of Supercomputing}}
  \bibinfo{volume}{53}, \bibinfo{number}{2} (\bibinfo{year}{2010}),
  \bibinfo{pages}{352--369}.
\newblock


\bibitem[\protect\citeauthoryear{Zschoche, Fluschnik, Molter, and
  Niedermeier}{Zschoche et~al\mbox{.}}{2020}]%
        {zschoche2018complexity}
\bibfield{author}{\bibinfo{person}{Philipp Zschoche}, \bibinfo{person}{Till
  Fluschnik}, \bibinfo{person}{Hendrik Molter}, {and} \bibinfo{person}{Rolf
  Niedermeier}.} \bibinfo{year}{2020}\natexlab{}.
\newblock \showarticletitle{The complexity of finding small separators in
  temporal graphs}.
\newblock \bibinfo{journal}{\emph{Journal of Computer and System Sciences
  {(JCSS)}}}  \bibinfo{volume}{107} (\bibinfo{year}{2020}),
  \bibinfo{pages}{72--92}.
\newblock


\end{thebibliography}



\end{document}